\documentclass[11pt]{article}
\usepackage{amsmath,amssymb,amsfonts,latexsym,graphicx,amsthm}
\usepackage{fullpage,color}
\usepackage{url,hyperref}
\usepackage{comment}
\usepackage[linesnumbered,boxed,ruled,vlined]{algorithm2e}
\usepackage{framed}
\usepackage[shortlabels]{enumitem}
\usepackage{cleveref}
\usepackage{setspace}

\usepackage{tikz}
\usetikzlibrary{backgrounds}
\usetikzlibrary{decorations.pathreplacing,calligraphy}
\usetikzlibrary{decorations.pathmorphing}
\usepackage{subfigure}
\usepackage{pgfplots}
\usetikzlibrary{shapes.misc}
\tikzset{cross/.style={cross out, draw=black, fill=none, minimum size=2*(#1-\pgflinewidth), inner sep=0pt, outer sep=0pt}, cross/.default={10pt}}
\usepackage[margin=1in]{geometry}

\newtheorem{theorem}{Theorem}[section]
\newtheorem{lemma}{Lemma}[section]
\newtheorem{definition}{Definition}[section]

\newtheorem{claim}{Claim}[section]

        {\medskip}

\newcommand{\ceil}[1]{\lceil #1 \rceil}

\newcommand{\dist}{\mathsf{dist}}
\newcommand{\est}{\mathsf{est}}

\newcommand{\wts}{\omega}
\newcommand{\brac}[1]{\left(#1\right)}

\newcommand{\bw}{\mathsf{bw}}

\def\shiri#1{} 

\begin{document}

\begin{titlepage}
	\title{Faster Algorithms for Dual-Failure Replacement Paths}
	\author{Shiri Chechik \thanks{Tel Aviv University, 		\href{}{shiri.chechik@gmail.com}} \and
		Tianyi Zhang \thanks{Tel Aviv University, \href{}{tianyiz21@tauex.tau.ac.il}}}
	\date{}
	
	\maketitle
	\thispagestyle{empty}
	
	\begin{abstract}
		Given a simple weighted directed graph $G = (V, E, \wts)$ on $n$ vertices as well as two designated terminals $s, t\in V$, our goal is to compute the shortest path from $s$ to $t$ avoiding any pair of presumably failed edges $f_1, f_2\in E$, which is a natural generalization of the classical replacement path problem which considers single edge failures only.
		
		This dual failure replacement paths problem was recently studied by Vassilevska Williams, Woldeghebriel and Xu [FOCS 2022] who designed a cubic time algorithm for general weighted digraphs which is conditionally optimal; in the same paper, for unweighted graphs where $\wts \equiv 1$, the authors presented an algebraic algorithm with runtime $\tilde{O}(n^{2.9146})$, as well as a conditional lower bound of $n^{8/3-o(1)}$ against combinatorial algorithms. However, it was unknown in their work whether fast matrix multiplication is necessary for a subcubic runtime in unweighted digraphs.
		
		As our primary result, we present the first truly subcubic combinatorial algorithm for dual failure replacement paths in unweighted digraphs. Our runtime is $\tilde{O}(n^{3-1/18})$. Besides, we also study algebraic algorithms for digraphs with small integer edge weights from $\{-M, -M+1, \cdots, M-1, M\}$. As our secondary result, we obtained a runtime of $\tilde{O}(Mn^{2.8716})$, which is faster than the previous bound of $\tilde{O}(M^{2/3}n^{2.9144} + Mn^{2.8716})$ from [Vassilevska Williams, Woldeghebriela and Xu, 2022].
	\end{abstract}

\end{titlepage}

\begin{spacing}{1.3}
	\tableofcontents
\end{spacing}

\thispagestyle{empty}
\clearpage
\pagenumbering{arabic}
\setcounter{page}{1}

\newpage

\section{Introduction}
In the replacement path problem, we want to understand shortest paths in a directed graph that avoid presumably failed edges. More specifically, let $G = (V, E, \wts)$ be an edge-weighted simple digraph on $n$ vertices and $m$ edges. Fix a pair of source and terminal vertices $s, t\in V$, we want to compute the shortest path from $s$ to $t$ that avoids any designated set $F\subseteq E$ of failed edges.

The most classical setting is when the number of failures is at most one; namely, we want to compute all the values of $\dist(s, t, G\setminus \{f\})$ when $f$ ranges over all edges on the shortest path from $s$ to $t$ in $G$. The complexity of single-failure replacement path is now well-understood. On the hardness side, it was proved that computing all single-failure replacement paths in weighted graphs requires at least $n^{3-o(1)}$ time \cite{williams2010subcubic} assuming the APSP conjecture. To breach the cubic barrier, we need to assume the input digraph has small integer edge weights, or allow approximation errors in the algorithm output. When the edge weights are integers in the range $\{-M, -M+1, \cdots, M-1, M\}$, there is an algorithm with runtime $\tilde{O}(Mn^\omega)$\footnote{$\omega\in [2, 2.371552]$ is the fast matrix multiplication exponent \cite{williams2024new,duan2023faster,alman2021refined,le2014powers,williams2012multiplying}.} \cite{chechik2020simplifying,williams2011faster}. For the special case when the input digraph is unweighted ($\wts\equiv 1$), there is a combinatorial algorithm (algorithms not using fast matrix multiplication) with runtime $\tilde{O}(mn^{1/2})$ \cite{roditty2005replacement}, which is optimal under the hardness of combinatorial boolean matrix multiplication \cite{williams2010subcubic}.

A natural extension is to study replacement paths when there are two edge failures. We are interested in fast algorithms that compute for all pairs of edges $f_1, f_2\in E$ the value of $\dist(s, t, G\setminus\{f_1, f_2\})$. This problem was first studied in \cite{bhosle2004replacement} and recently revisited in \cite{williams2022algorithms}. For general weighted digraphs, the authors of \cite{williams2022algorithms} designed an algorithm with runtime $\tilde{O}(n^3)$, which is the same as the easier single failure replacement paths problem. When the graph has small edge weights from range $\{-M, -M+1, \cdots, M-1, M\}$, in the same paper the authors have shown subcubic runtime upper bound of $\tilde{O}(M^{2/3}n^{2.9144} + Mn^{2.8716})$ using fast matrix multiplication. Finally, as complementary to their algorithms, the authors showed a conditional lower bound of $n^{8/3-o(1)}$ against combinatorial algorithms for unweighted digraphs assuming the hardness of boolean matrix multiplication.

According to the results in \cite{williams2022algorithms}, there is a gap in their understanding about dual-failure replacement paths in unweighted graphs. On the one hand, their algebraic algorithm computes dual-failure replacement paths in $\tilde{O}(n^{2.9144})$ by setting $M = 1$; on the other hand, their conditional lower bound against combinatorial algorithms is also subcubic. So, it is not clear whether combinatorial algorithm can achieve subcubic runtime as well, or the conditional lower bound can be improved to cubic.

\subsection{Our results}
In this paper, we first study fast combinatorial algorithms for unweighted digraphs and show that subcubic runtime can indeed be achieved without using fast matrix multiplication.
\begin{theorem}\label{subcubic}
	Given a simple unweighted directed graph $G = (V, E)$ on $n$ vertices, and fix any pair of vertices $s, t\in V$, the values of all dual-failure replacement path distances $\dist(s, t, G\setminus\{f_1, f_2\}), \forall f_1, f_2\in E$ can be computed in  $\tilde{O}(n^{3-1/18})$ time with high probability; most importantly, the algorithm does not use fast matrix multiplication.
\end{theorem}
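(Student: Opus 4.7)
The plan is to reduce the problem to a structured core and then attack that core with a combinatorial sampling/threshold scheme that avoids fast matrix multiplication. Let $P = (s = v_0, v_1, \dots, v_L = t)$ denote the original shortest $s$-$t$ path, computable by a single BFS in $O(m)$ time; since $G$ is unweighted, $L \le n$. For any failure pair $(f_1, f_2)$ disjoint from $P$, the answer is trivially $L$. When exactly one failure, say $f_1$, lies on $P$, the Roditty--Zwick single-failure algorithm computes the replacement path $P_{f_1}$ and its length in total $\tilde O(mn^{1/2})$ time, and the answer is $|P_{f_1}|$ whenever $f_2 \notin P_{f_1}$. The remaining nontrivial case is when the second failure also lies on the currently used replacement path, which, after relabeling its backbone, reduces to the following core problem: compute $\dist(s, t, G \setminus \{e_1, e_2\})$ for every pair $e_1, e_2$ lying on a common directed path.

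For the core problem I would distinguish two structural cases based on the interaction of the new replacement path with $P$. In the \emph{single-detour} case, the replacement path departs from $P$ before both failures and returns after both, so it suffices to compute, for every interval $[i,j]$ of indices on $P$, the shortest $s$-$t$ path that avoids the entire segment of $P$ between $v_i$ and $v_{j+1}$; this can be handled by sweeping over the endpoints of the interval and amortizing BFS work across nested intervals. In the \emph{double-detour} case, the replacement path uses two separate detours, one for each failure, meeting $P$ again in the segment between them. This second case is the main difficulty because the two detours can in principle be chosen almost independently, giving $\Theta(n^2)$ combinations that naively cost $\Omega(m)$ each.

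To beat the cubic barrier on the double-detour case, I would parameterize by a threshold $\tau$ and bucket pairs according to the length of the ``reconnection'' segment of $P$ between the two detours. For pairs whose reconnection segment has length at most $\tau$, the replacement path is locally determined within a radius-$\tau$ neighborhood of each edge of $P$, so a family of truncated BFS explorations of total cost $\tilde O(n \cdot m \cdot \tau / n) = \tilde O(m\tau)$ per edge of $P$ suffices. For pairs with a long reconnection segment, a random sample of $\tilde O(n/\tau)$ pivot vertices on $P$ hits every such segment with high probability, and for each pivot one can solve, via a pair of precomputed single-failure replacement-path structures rooted at the pivot, all replacement paths passing through it. Balancing these two regimes against the threshold and choosing $\tau$ appropriately yields an exponent strictly below $3$, and tight accounting recovers the stated $\tilde O(n^{3-1/18})$ bound.

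The main obstacle I anticipate is controlling the double-detour case when the individual detours themselves are long or partially overlap with $P$. A naive hitting-set argument counts edges, but here one must hit segments of paths whose lengths and shapes vary; I expect this to require a \emph{congestion} lemma showing that the collection of detour segments across all $(f_1,f_2)$ has bounded total length on any given edge, so that a small random sample suffices simultaneously for all pairs. Coupling this with the local BFS regime for short reconnections, while correctly handling corner cases where a detour re-enters $P$ multiple times, is where the bulk of the technical work will go, and it is what ultimately dictates the precise exponent improvement of $1/18$.
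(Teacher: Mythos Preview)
Your reduction in the second paragraph is broken. When $f_1\in E(P)$ and $f_2\in E(P_{f_1})\setminus E(P)$, relabeling $P_{f_1}$ as the new backbone does \emph{not} put both failures on a common path: by construction $f_1\notin E(P_{f_1})$, so after relabeling you are right back in the ``one failure on the backbone, one off it'' configuration. If instead you mean to solve, for each fixed $f_1$, a single-failure replacement-path instance in $G\setminus\{f_1\}$ with respect to the path $P_{f_1}$, that is correct but costs $\tilde O(mn^{1/2})$ per choice of $f_1$, i.e.\ $\tilde O(n^{3.5})$ total. In the paper this ``$f_1\in E(\pi)$, $f_2$ on the detour $\alpha_i$'' case is treated as a standalone problem and is in fact the bottleneck that produces the $1/18$: it requires partitioning each detour $\alpha_i$ into length-$g$ blocks, proving a disjointness lemma for second-level detours whose failures sit at the same \emph{height} on different $\alpha_i$'s, and building per-block shortcut graphs over a random pivot set of size $\tilde O(n/g)$ (Sections~3--4). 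None of this is captured by ``both failures on a common path''.

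For the case where both failures lie on $\pi$, your single-detour/double-detour split misses the main obstacle. The difficult structure is not two independent detours but a \emph{backward} middle segment: the optimal path may first jump forward past $f_1$ to some $y\in\pi(f_1,f_2)$, then travel backward (possibly re-entering and leaving $\pi(f_1,f_2)$ many times) to some $x$ closer to $f_1$, and only then jump over $f_2$. Your congestion lemma, as stated, concerns detour segments, but the backward sub-path lives in $(G\setminus E(\pi))\cup E(\pi(f_1,f_2))$ and depends on both endpoints simultaneously; a hitting-set on detours does not control it. The paper handles this by precomputing backward distances $\mu(w,z)$ via truncated Dijkstra with pruning plus single-source replacement paths from pivots in graphs $G\setminus E(\gamma_i)$ for each length-$L$ chunk $\gamma_i$, and then assembling small sketch graphs of size $\tilde O(g+n/L)$ per failure pair (Section~5); the resulting exponent there is $3-1/7$, so it is not the bottleneck. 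Your proposal, as written, does not supply the missing ingredient for either case.
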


Here, a digraph is simple if it does not contain two edges between the same pair of vertices with the same direction. Secondly, we also study fast algebraic algorithms for dual-failure replacement paths when the edge weights are from the set $\{-M, -M+1, \cdots, M-1, M\}$.

\begin{theorem}\label{faster-alge}
	Given a simple weighted directed graph $G = (V, E, \wts)$ on $n$ vertices along with integer edge weights $\wts : E\rightarrow \{-M, -M+1, \cdots, M-1, M\}$ without negative cycles, and fix any pair of vertices $s, t\in V$, the values of all dual-failure replacement path distances $\dist(s, t, G\setminus\{f_1, f_2\}), \forall f_1, f_2\in E$ can be computed in time $\tilde{O}(Mn^{2.8716})$.
\end{theorem}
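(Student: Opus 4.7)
My plan is to adapt the algebraic framework of Vassilevska Williams, Woldeghebriel, and Xu~\cite{williams2022algorithms}, whose runtime $\tilde{O}(M^{2/3}n^{2.9144} + Mn^{2.8716})$ decomposes into two largely independent matrix-multiplication subroutines, and to replace the slower $\tilde{O}(M^{2/3}n^{2.9144})$ subroutine by one matching the second term. Fix the shortest $s$-$t$ path $P = p_0 p_1 \cdots p_L$. As in~\cite{williams2022algorithms}, the only non-trivial failure pair is $f_1=(p_a,p_{a+1})$, $f_2=(p_b,p_{b+1})$ with both edges on $P$ and $a<b$, and the dominant sub-case is when the replacement path makes a single combined detour leaving $P$ at some $p_i$ with $i\le a$ and rejoining at some $p_j$ with $j\ge b+1$, while avoiding $\{f_1,f_2\}$; the remaining sub-cases decouple into two independent single-failure replacement problems and are handled exactly as in~\cite{williams2022algorithms}.

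For the dominant sub-case, I will sample a pivot set $B\subseteq V$ of size $\tilde O(n/h)$ so that with high probability every shortest detour of hop length at least $h$ contains a vertex of $B$. Short detours (hop length below $h$) are handled by truncated Bellman-Ford from every vertex of $P$, in total time $\tilde O(n^2 h)$. For long detours, a case analysis on where the pivot $b$ sits along the detour splits $\dist(p_i, p_j, G\setminus\{f_1,f_2\})$ into two single-failure replacement distances, yielding a $(\min,+)$-product over $B$. Precomputing the needed single-failure distances from and to every pivot, avoiding every edge of $P$, can be done with the small-weights single-failure routine of~\cite{chechik2020simplifying,williams2011faster} in time $\tilde O(M n^\omega \cdot |B|/n) = \tilde O(Mn^{\omega}/h)$. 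After subtracting the prefix length along $P$ from each matrix entry, all entries lie in an interval of width $O(Mh)$, so the resulting bounded-weight $(\min,+)$-product between matrices of outer dimension $\Theta(Ln)$ and inner dimension $\Theta(L|B|)$ can be computed via the Alon-Galil-Margalit reduction to rectangular matrix multiplication in time $\tilde O\bigl(Mh \cdot n^{\omega(1,\mu,1)}\bigr)$ for an appropriate exponent $\mu$ determined by $|B|$. Optimizing $h$ across the three terms should yield the target bound $\tilde O(Mn^{2.8716})$, matching the secondary term of~\cite{williams2022algorithms} and absorbing the $M^{2/3}n^{2.9144}$ term.

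The main obstacle I anticipate is the case analysis that reduces each half of the detour to a single-failure replacement distance. Naively, each half must avoid both $f_1$ and $f_2$, which is not something any small-weights subroutine computes directly within the allotted budget. The fix should argue that, for the optimal choice of pivot $b$, exactly one of $\{f_1,f_2\}$ is irrelevant on each side of $b$: if the $p_i$-to-$b$ side used the portion of the graph containing $f_2$, one could rejoin $P$ before $p_b$ and obtain a strictly shorter replacement path, contradicting optimality. Making this exchange argument robust to detours that oscillate around the failed edges, and checking that the pivot hitting-set guarantee still applies after restricting to such well-behaved detours, is the technical crux. A secondary subtlety is keeping matrix entries non-negative and bounded by $O(Mh)$ simultaneously in the presence of negative weights, which I plan to address by a Johnson-style potential reweighting using single-source shortest paths from $s$ before invoking the bounded-weight distance product.
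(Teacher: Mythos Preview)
Your case split is inverted. The ``single combined detour'' that leaves $\pi$ before $f_1$ and rejoins after $f_2$ without touching $\pi(f_1,f_2)$ is the \emph{easy} case: its length is $\min_{i\le a,\, j>c}\bigl\{\dist(s,p_i,G)+\dist(p_i,p_j,G\setminus E(\pi))+\dist(p_j,t,G)\bigr\}$, and after one APSP call in $G\setminus E(\pi)$ plus a 2D range-minimum table this is $O(1)$ per failure pair. The bottleneck in~\cite{williams2022algorithms} (the $M^{2/3}n^{2.9144}$ term you are trying to remove) is the \emph{backward-path} case, where $\rho$ first lands on $\pi(f_1,f_2)$ at some $y$ near $f_2$, then travels ``backward'' to some $x$ near $f_1$ using a mix of forward $\pi(f_1,f_2)$-edges and off-$\pi$ excursions, and only then jumps past $f_2$. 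This middle segment is a shortest path in $G\setminus\bigl(E(\pi[s,x])\cup E(\pi[y,t])\bigr)$, which depends on \emph{both} endpoints and hence on both failures; it does not decouple into two single-failure problems.

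Your pivot-splitting also runs into a correctness issue, which you partially anticipate but underestimate. The sum $\dist(s,q,G\setminus\{f_1\})+\dist(q,t,G\setminus\{f_2\})$ is a \emph{lower} bound on $\dist(s,t,G\setminus\{f_1,f_2\})$, so you must show it is tight for some $q$ on $\rho$. Your exchange sketch does not close this: if the shortest $s\to q$ path avoiding only $f_1$ uses $f_2=(p_c,p_{c+1})$, then its prefix up to $p_c$ avoids both failures, but $p_c$ sits strictly before $f_2$ and you still owe a $\{f_2\}$-avoiding path from $p_c$ to $t$; there is no reason this is at most $|\rho[q,t]|$, particularly when $q$ lies inside the backward segment. (As a side issue, your cost for precomputing single-failure distances from all pivots is missing a factor of $n$, and Johnson reweighting does not by itself bound reweighted edge weights by $O(M)$, so the $O(Mh)$ entry bound for the distance product needs a separate argument.)

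The paper's route is different in kind. It keeps the backward segment as a black box via the precomputed distances $\bw(v,u)=\dist\bigl(v,u,G\setminus(E(\pi[s,u])\cup E(\pi[v,t]))\bigr)$ from~\cite{williams2022algorithms}, and embeds them as edges in a small sketch graph $H_{f_1,f_2}$. The new idea that kills the $n^2L^2$ overhead is a second segmentation layer: chop each length-$L$ block $\gamma_l$ into length-$g$ segments $\alpha^l_h$. When the contact points $a,b$ of $\rho$ with $\pi(f_1,f_2)$ lie in the same segments as $f_1,f_2$, the sketch needs only $O(g)$ vertices, costing $\tilde O(n^2 g^2)$ in total. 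When (say) $b$ escapes its segment, the sketch depends only on $f_1$ and the \emph{segment index} of $f_2$, not on $f_2$ itself, so there are only $O(n^2/g)$ distinct sketches of size $O(L)$, costing $\tilde O(n^2 L^2/g)$. Balancing these against the preprocessing terms (DSO, SSRP per block, $\bw$) and setting $g=L^{2/3}$, $L=n^{3(\omega-1)/7}$ gives the claimed bound.
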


\subsection{Other related works}
The replacement paths problem has also been studied in other settings, including the single-source setting \cite{gu2021faster,chechik2020near,grandoni2019faster,grandoni2012improved} and the approximation setting \cite{chechik2024nearly,bernstein2010nearly,roditty2007k}.

\subsection{Subcubic combinatorial algorithm for unweighted digraphs}
\subsubsection{One failure on a long $st$-path}
Let us first consider the case where one edge failure $f_1$ lies on the shortest $st$-path $\pi$, while the other one $f_2$ does not. This case would be easy when we use fast matrix multiplication as did by \cite{williams2022algorithms}, but it becomes  complicated when we are restricted to purely combinatorial algorithms.

As a preliminary step, we first show how to deal with the case where $|\pi| > L$ for some parameter $L$ slightly larger than $n^{0.5}$, say $L = n^{0.55}$. Partition the $st$-path $\pi$ into sub-paths of length exactly $5L$ as $\pi = \gamma_1\circ\gamma_2\circ\cdots\circ \gamma_h, h\leq \ceil{n/5L}$, and let $s_i, t_i$ be the endpoints of sub-path $\gamma_i$. Assume only one edge failure $f_1$ falls on sub-path $\gamma_i$, and let $\rho$ be the optimal replacement path from $s$ to $t$ avoiding $\{f_1, f_2\}$. 

Take a random set of vertices $U$ of size $O(n\log n / L)$.  If $|\rho\setminus \pi| > L$, then with high probability $\rho\setminus \pi$ would contain a vertex $p$ from $U$. Then, we can compute single-source single-failure replacement paths to and from $p$ in graph $G\setminus E(\pi)$ that takes runtime $\tilde{O}(n^{3.5}/L)$ \cite{chechik2020near}, so that for each vertex $z\in V$, we know the shortest path between $z, p$ in graph $G\setminus (E(\pi)\cup \{f_2\})$. Using this information, we will be able to recover $\rho$.

Now suppose $|\rho\setminus \pi| < L$. Then in this case, we will show that the detour parts of different dual-failure replacement paths $\rho$ are vertex-disjoint from each other when the first edge failure $f_1$ comes from different choice of sub-paths $\gamma_i$. Then, we can partition $G$ into vertex-disjoint subgraphs $G_1, G_2, \cdots, G_h$, such that the dual-failure replacement paths for failures on $\gamma_i$ belong to subgragh $G_i$, and solve the dual-failure replacement paths problem for source-terminal pair $(s_i, t_i)$ in graph $G_i$. See \Cref{overview-1fail-long} for an illustration.

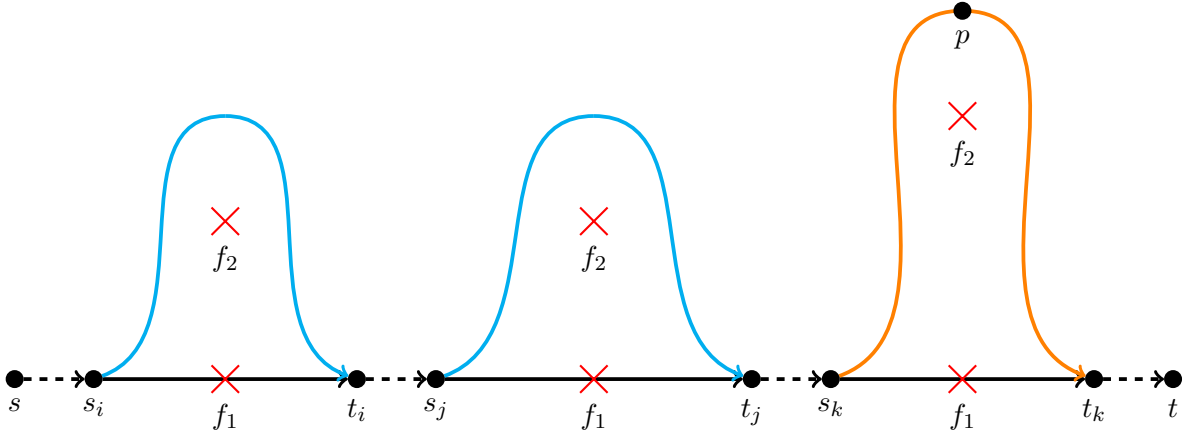
\begin{figure}
	\centering
	\begin{tikzpicture}[thick,scale=0.7]
	\draw (-11, 0) node(1)[circle, draw, fill=black, inner sep=0pt, minimum width=6pt, label=-90: {$s$}] {};
	\draw (11, 0) node(2)[circle, draw, fill=black, inner sep=0pt, minimum width=6pt, label=-90: {$t$}] {};
	
	\draw (-9.5, 0) node(3)[circle, draw, fill=black, inner sep=0pt, minimum width=6pt, label=-90: {$s_i$}] {};
	\draw (-4.5, 0) node(4)[circle, draw, fill=black, inner sep=0pt, minimum width=6pt, label=-90: {$t_i$}] {};
	
	\draw (-3, 0) node(5)[circle, draw, fill=black, inner sep=0pt, minimum width=6pt, label=-90: {$s_j$}] {};
	\draw (3, 0) node(6)[circle, draw, fill=black, inner sep=0pt, minimum width=6pt, label=-90: {$t_j$}] {};
	
	\draw (4.5, 0) node(7)[circle, draw, fill=black, inner sep=0pt, minimum width=6pt, label=-90: {$s_k$}] {};
	\draw (9.5, 0) node(8)[circle, draw, fill=black, inner sep=0pt, minimum width=6pt, label=-90: {$t_k$}] {};
	
	\draw (7, 7) node(9)[circle, draw, fill=black, inner sep=0pt, minimum width=6pt, label=-90: {$p$}] {};
	

	
	\draw (0, 0) node[cross=6, red, label=-90: {$f_1$}] {};
	\draw (0, 3) node[cross=6, red, label=-90: {$f_2$}] {};
	
	\draw (-7, 0) node[cross=6, red, label=-90: {$f_1$}] {};
	\draw (-7, 3) node[cross=6, red, label=-90: {$f_2$}] {};
	
	\draw (7, 0) node[cross=6, red, label=-90: {$f_1$}] {};
	\draw (7, 5) node[cross=6, red, label=-90: {$f_2$}] {};

\begin{scope}[on background layer]
	\draw [->, line width = 0.5mm, dashed] (1) to (3);
	\draw [->, line width = 0.5mm, dashed] (4) to (5);
	\draw [->, line width = 0.5mm, dashed] (6) to (7);
	\draw [->, line width = 0.5mm, dashed] (8) to (2);
	
	\draw [->, line width = 0.5mm] (3) to (4);
	\draw [->, line width = 0.5mm] (5) to (6);
	\draw [->, line width = 0.5mm] (7) to (8);
	
	\draw [line width = 0.5mm, color = cyan] (5) to[out=20, in=180] (0, 5);
	\draw [->, line width = 0.5mm, color = cyan] (0, 5) to[out=0, in=160] (6);
	
	\draw [line width = 0.5mm, color = cyan] (3) to[out=20, in=180] (-7, 5);
	\draw [->, line width = 0.5mm, color = cyan] (-7, 5) to[out=0, in=160] (4);
	
	\draw [line width = 0.5mm, color = orange] (7) to[out=20, in=180] (7, 7);
	\draw [->, line width = 0.5mm, color = orange] (7, 7) to[out=0, in=160] (8);
\end{scope}

\end{tikzpicture}
	\caption{For simplicity, let us assume that when $f_1$ falls on the sub-paths $\pi[s_i, t_i]$, the dual-failure replacement path also passes through $s_i, t_i$. In this picture, the two cyan dual-failure replacement paths have length less than $L$, and we will show that they are vertex-disjoint. The orange dual-failure replacement path has length at least $L$, and so it hits a vertex $p\in U$ with high probability; in this case, we will compute single-source single-failure replacement paths to and from $p$ in graph $G\setminus E(\pi)$ to help us compute dual-failure replacement paths.}
	\label{overview-1fail-long}
\end{figure}

\subsubsection{One failure on a short $st$-path}
By the previous subsection, we have reduced to the case that the $st$-path has length at most $L$. So, for the rest, let us rename the problem instance and assume that $|\pi|\leq L$. For the $i$-th edge $e_i$ on $\pi$ ($0\leq i<L$), we can compute the optimal replacement path from $s$ to $t$ avoiding $e_i$ and let $\alpha_i$ be the corresponding detour whose endpoints are $a_i$ and $b_i$. Again, through some case analysis, we can assume that the span of each detour $\alpha_i$ which is $|\pi[a_i, b_i]|$ is at most $g$ for some parameter $g$ slightly larger than $n^{1/3}$ (say $g = n^{0.35}$). Consider any edge failure $f_i\in E(\alpha_i)$, as a simplification let us assume that the optimal replacement path from $s$ to $t$ avoiding $\{e_i, f_i\}$ is a concatenation: $$\rho = \pi[s, a_i]\circ \alpha_i[a_i, x_i]\circ \gamma_i\circ \alpha_i[y_i, b_i]\circ \pi[b_i, t]$$
where $\gamma_i$ is a detour with respect to $\alpha_i$ in $G\setminus E(\pi)$. Via some case analysis, we can assume that $|\alpha_i|<L$. See \Cref{overview-1fail-short1} for an illustration.

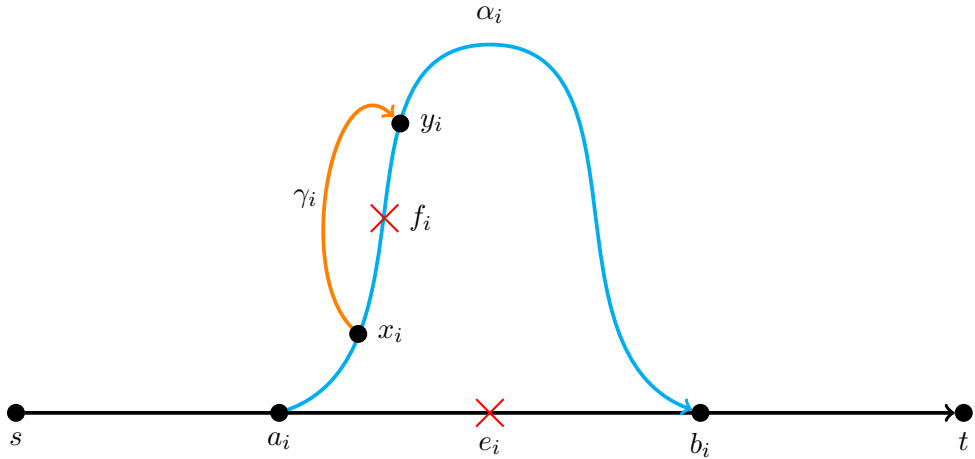
\begin{figure}
	\centering
	\begin{tikzpicture}[thick,scale=0.7]
	\draw (-9, 0) node(1)[circle, draw, fill=black, inner sep=0pt, minimum width=6pt, label=-90: {$s$}] {};
	\draw (9, 0) node(2)[circle, draw, fill=black, inner sep=0pt, minimum width=6pt, label=-90: {$t$}] {};
	
	\draw (-4, 0) node(3)[circle, draw, fill=black, inner sep=0pt, minimum width=6pt, label=-90: {$a_i$}] {};
	\draw (4, 0) node(4)[circle, draw, fill=black, inner sep=0pt, minimum width=6pt, label=-90: {$b_i$}] {};
	
	\draw (-2.5, 1.5) node(5)[circle, draw, fill=black, inner sep=0pt, minimum width=6pt, label=0: {$x_{i}$}] {};
	\draw (-1.7, 5.5) node(6)[circle, draw, fill=black, inner sep=0pt, minimum width=6pt, label=0: {$y_{i}$}] {};
	
	\draw (0, 0) node[cross=6, red, label=-90: {$e_i$}] {};
	\draw (-2, 3.7) node[cross=6, red, label=0: {$f_i$}] {};
		
	\draw (0, 7) node[label=90: {$\alpha_i$}] {};
	\draw (-3.5, 3.5) node[label=90: {$\gamma_i$}] {};
	
	\begin{scope}[on background layer]
		\draw [->, line width = 0.5mm] (1) to (2);
		\draw [line width = 0.5mm, color = cyan] (3) to[out=20, in=180] (0, 7);
		\draw [->, line width = 0.5mm, color = cyan] (0, 7) to[out=0, in=160] (4);
		\draw [->, line width = 0.5mm, color = orange] (5) to[out=135, in=135] (6);
	\end{scope}
\end{tikzpicture}
	\caption{The cyan path is the detour $\alpha_i$ that avoids $e_i$, and the orange path is the detour $\gamma_i$ that avoids $f_i$ which lies on $\alpha_i$. Via some case analysis, we will show the difficult case is that $|\alpha_i| < L$.}
	\label{overview-1fail-short1}
\end{figure}

Let us first consider the case when $|\gamma_i| < g$. A wishful thought is that for two different choice of dual failures $\{e_i, f_i\}$ and $\{e_j, f_j\}$ where $e_i$ and $e_j$ are well-separated on the shortest path $\pi$ ($|\pi(e_i, e_j)| \geq 10g$), we are guaranteed that the two dual-failure detours $\gamma_i$ and $\gamma_j$ are vertex-disjoint; if this is the case, then we can partition the graph $G$ into $O(L/g)$ vertex-disjoint subgraphs and compute dual-failure replacement paths separately. However, this is generally not true. The key observation is that when $f_i$ and $f_j$ are roughly at the same height on the detour, namely $|\alpha_i[a_i, f_i)|\approx |\alpha_j[a_j, f_j)|$ up to an additive error of at most $g$, such a disjointness condition indeed holds. Therefore, our algorithm will further partition each detour $\alpha_i$ into sub-paths of length $g$ as $\alpha_i = \beta^i_1\circ \beta^i_2\circ\cdots\circ \beta^i_l$. Then, fix any height index $1\leq h\leq l$, we will deal with all the dual failures $\{e_i, f_i\}, \forall 1\leq i \leq L/g, \forall f_i\in E(\beta^i_h)$ at the same time. See \Cref{overview-1fail-short2} for an illustration.

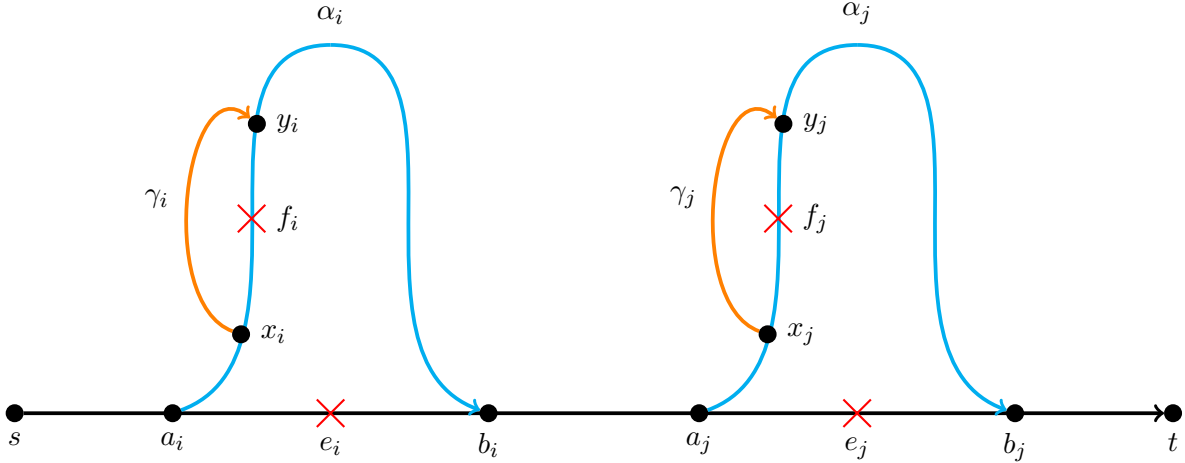
\begin{figure}
	\centering
	\begin{tikzpicture}[thick,scale=0.7]
	\draw (-11, 0) node(1)[circle, draw, fill=black, inner sep=0pt, minimum width=6pt, label=-90: {$s$}] {};
	\draw (11, 0) node(2)[circle, draw, fill=black, inner sep=0pt, minimum width=6pt, label=-90: {$t$}] {};
	
	\draw (-8, 0) node(3)[circle, draw, fill=black, inner sep=0pt, minimum width=6pt, label=-90: {$a_i$}] {};
	\draw (-2, 0) node(4)[circle, draw, fill=black, inner sep=0pt, minimum width=6pt, label=-90: {$b_i$}] {};
	
	\draw (-6.7, 1.5) node(5)[circle, draw, fill=black, inner sep=0pt, minimum width=6pt, label=0: {$x_{i}$}] {};
	\draw (-6.4, 5.5) node(6)[circle, draw, fill=black, inner sep=0pt, minimum width=6pt, label=0: {$y_{i}$}] {};
	
	\draw (-5, 0) node[cross=6, red, label=-90: {$e_i$}] {};
	\draw (-6.5, 3.7) node[cross=6, red, label=0: {$f_i$}] {};
		
	\draw (-5, 7) node[label=90: {$\alpha_i$}] {};
	\draw (-8.3, 3.5) node[label=90: {$\gamma_i$}] {};

	\draw (2, 0) node(7)[circle, draw, fill=black, inner sep=0pt, minimum width=6pt, label=-90: {$a_j$}] {};
	\draw (8, 0) node(8)[circle, draw, fill=black, inner sep=0pt, minimum width=6pt, label=-90: {$b_j$}] {};
	
	\draw (3.3, 1.5) node(9)[circle, draw, fill=black, inner sep=0pt, minimum width=6pt, label=0: {$x_{j}$}] {};
	\draw (3.6, 5.5) node(10)[circle, draw, fill=black, inner sep=0pt, minimum width=6pt, label=0: {$y_{j}$}] {};
	
	\draw (5, 0) node[cross=6, red, label=-90: {$e_j$}] {};
	\draw (3.5, 3.7) node[cross=6, red, label=0: {$f_j$}] {};
	
	\draw (5, 7) node[label=90: {$\alpha_j$}] {};
	\draw (1.7, 3.5) node[label=90: {$\gamma_j$}] {};
	
	\begin{scope}[on background layer]
		\draw [->, line width = 0.5mm] (1) to (2);
		\draw [line width = 0.5mm, color = cyan] (3) to[out=20, in=180] (-5, 7);
		\draw [->, line width = 0.5mm, color = cyan] (-5, 7) to[out=0, in=160] (4);
		\draw [->, line width = 0.5mm, color = orange] (5) to[out=160, in=140] (6);
		
		\draw [line width = 0.5mm, color = cyan] (7) to[out=20, in=180] (5, 7);
		\draw [->, line width = 0.5mm, color = cyan] (5, 7) to[out=0, in=160] (8);
		\draw [->, line width = 0.5mm, color = orange] (9) to[out=160, in=140] (10);
	\end{scope}
\end{tikzpicture}
	\caption{For two well-separated edges $e_i, e_j$ such that $|\pi(e_i, e_j)| \geq 10g$, if both $|\gamma_i|, |\gamma_j|$ are less than $g$, and $f_i, f_j$ are roughly at the same height (i.e., $|\alpha_i[a_i, f_i)|\approx |\alpha_j[a_j, f_j)|$), then we can show that the two dual-failure detours $\gamma_i, \gamma_j$ are vertex-disjoint.}
	\label{overview-1fail-short2}
\end{figure}

Now, what happens if $|\gamma_i| \geq g$? We can take a random sample $U$ of pivot vertices of size $O(n\log n / g)$. Then, with high probability, $\gamma_i$ contains a vertex in $U$. For simplicity, assume both endpoints of the detour $\gamma_i$ are lying within the sub-path $\beta^i_h$ which contains the second edge failure $f_i$, then we could compute single-source shortest paths to and from each vertex $p\in U$ in the subgraph $G\setminus \brac{E(\pi)\cup E(\beta^i_h)}$ which takes time $\tilde{O}(n^3/g)$, then we can compute each detour $\gamma_i$ for each choice of $f_i$ on $\beta^i_h$ in time $\tilde{O}(g^2)$ by guessing the positions of $x_i, y_i$.

Unfortunately, there are $L^2/g$ different choices for the subgraph $G\setminus \brac{E(\pi)\cup E(\beta^i_h)}$, and thus we do not have enough time to compute single-source shortest paths for each vertex in $U$ in all these subgraphs. In fact, we can only afford to compute single-source shortest paths for vertices in $U$ in the subgraph $G_h = G\setminus \brac{E(\pi)\cup \bigcup_{i=0}^{L-1} E(\beta^i_h)}$; that is, for each index $h$, remove all sub-paths $\beta^i_h, \forall 0\leq i < L$ from $G\setminus E(\pi)$ simultaneously (which becomes $G_h$), and compute multi-source shortest paths to and from $U$ in $G_h$. The key observation is that the detour $\gamma_i$ cannot touch vertices on $\beta^j_h$ if $|i-j|> 10g$. Therefore, to compute $\gamma_i$, we can build a small shortcut graph consisting of all vertices in $U\cup\bigcup_{j=i-10g}^{i+10g}V(\beta^j_h)$ which contains all edges in $\bigcup_{j=i-10g}^{i+10g}E(\beta^j_h)\setminus E(\beta^i_h)$ and all shortcut edges to and from $p\in U$ weighted by the single-source distances we have computed in $G_h$. See \Cref{overview-1fail-short3} for an illustration.

\begin{figure}
	\centering
	\begin{tikzpicture}[thick,scale=0.7]
	\draw (-11, 0) node(1)[circle, draw, fill=black, inner sep=0pt, minimum width=6pt, label=180: {$s$}] {};
	\draw (11, 0) node(2)[circle, draw, fill=black, inner sep=0pt, minimum width=6pt, label=0: {$t$}] {};
	
	\draw (-2, 0) node(3)[circle, draw, fill=black, inner sep=0pt, minimum width=6pt, label=-90: {$a_i$}] {};
	\draw (2, 0) node(4)[circle, draw, fill=black, inner sep=0pt, minimum width=6pt, label=-90: {$b_i$}] {};
	
	\draw (-3, 0) node(5)[circle, draw, fill=black, inner sep=0pt, minimum width=6pt, label=-90: {$a_{i-1}$}] {};
	\draw (1, 0) node(6)[circle, draw, fill=black, inner sep=0pt, minimum width=6pt, label=-90: {$b_{i-1}$}] {};
	
	\draw (-1, 0) node(7)[circle, draw, fill=black, inner sep=0pt, minimum width=6pt, label=-90: {$a_{i+1}$}] {};
	\draw (3, 0) node(8)[circle, draw, fill=black, inner sep=0pt, minimum width=6pt, label=-90: {$b_{i+1}$}] {};
	
	\draw (-9, 0) node(9)[circle, draw, fill=black, inner sep=0pt, minimum width=6pt, label=-90: {$a_{i-10g}$}] {};
	\draw (-5, 0) node(10)[circle, draw, fill=black, inner sep=0pt, minimum width=6pt, label=-90: {$b_{i-10g}$}] {};
	
	\draw (5.5, 0) node(11)[circle, draw, fill=black, inner sep=0pt, minimum width=6pt, label=-90: {$a_{i+10g}$}] {};
	\draw (9.5, 0) node(12)[circle, draw, fill=black, inner sep=0pt, minimum width=6pt, label=-90: {$b_{i+10g}$}] {};

	\draw (-4, 9) node(13)[circle, draw, fill=black, inner sep=0pt, minimum width=6pt] {};	
	\draw (-2, 9) node(14)[circle, draw, fill=black, inner sep=0pt, minimum width=6pt] {};
	\draw (0, 9) node(15)[circle, draw, fill=black, inner sep=0pt, minimum width=6pt] {};
	\draw (2, 9) node(16)[circle, draw, fill=black, inner sep=0pt, minimum width=6pt] {};
	\draw (4, 9) node(17)[circle, draw, fill=black, inner sep=0pt, minimum width=6pt] {};

	\draw (-8.7, 4) node(18)[circle, draw, fill=black, inner sep=0pt, minimum width=6pt] {};	
	\draw (5.8, 4) node(20)[circle, draw, fill=black, inner sep=0pt, minimum width=6pt] {};
	
	\draw (0, 0) node[cross=6, red, label=-90: {$e_i$}] {};
	\draw (-1.5, 4.3) node[cross=6, red, label=-90: {$f_i$}] {};
	
	
	
	
	\begin{scope}[on background layer]
		\draw [->, line width = 0.5mm] (1) to (2);
		
		\draw [->, line width = 0.5mm, color = cyan, dotted] (3) to[out=90, in=-90] (-2, 3);
		\draw [->, line width = 0.5mm, color = cyan] (-2, 3) to[out=90, in=180] (0, 5);
		\draw [->, line width = 0.5mm, color = cyan, dotted] (0, 5) to[out=0, in=90] (4);
		
		\draw [->, line width = 0.5mm, color = cyan, dotted] (5) to[out=90, in=-90] (-3, 3);
		\draw [->, line width = 0.5mm, color = cyan] (-3, 3) to[out=90, in=180] (-1, 5);
		\draw [->, line width = 0.5mm, color = cyan, dotted] (-1, 5) to[out=0, in=90] (6);
		
		\draw [->, line width = 0.5mm, color = cyan, dotted] (7) to[out=90, in=-90] (-1, 3);
		\draw [->, line width = 0.5mm, color = cyan] (-1, 3) to[out=90, in=180] (1, 5);
		\draw [->, line width = 0.5mm, color = cyan, dotted] (1, 5) to[out=0, in=90] (8);
		
		\draw [->, line width = 0.5mm, color = cyan, dotted] (9) to[out=90, in=-90] (-9, 3);
		\draw [->, line width = 0.5mm, color = cyan] (-9, 3) to[out=90, in=180] (-7, 5);
		\draw [->, line width = 0.5mm, color = cyan, dotted] (-7, 5) to[out=0, in=90] (10);
		
		\draw [->, line width = 0.5mm, color = cyan, dotted] (11) to[out=90, in=-90] (5.5, 3);
		\draw [->, line width = 0.5mm, color = cyan] (5.5, 3) to[out=90, in=180] (7.5, 5);
		\draw [->, line width = 0.5mm, color = cyan, dotted] (7.5, 5) to[out=0, in=90] (12);
		
		\draw [->, line width = 0.5mm, color = orange] (18) to (13);
		\draw [->, line width = 0.5mm, color = orange] (14) to (20);
		\draw [->, line width = 0.5mm, color = orange] (18) to[out=-30, in=210] (20);
		\draw [->, line width = 0.5mm, color = orange] (14) to[out=60, in=120] (16);
		\draw [->, line width = 0.5mm, color = orange] (13) to[out=60, in=120] (17);
		
	\end{scope}
\end{tikzpicture}
	\caption{A shortcut graph that helps computing the dual-failure detour $\gamma_i$ avoiding $\{f_i, g_i\}$, which consists of vertices in $U\cup\bigcup_{j=i-10g}^{i+10g}V(\beta^j_h)$. This shortcut graph includes all edges in $\bigcup_{j=i-10g}^{i+10g}E(\beta^j_h)\setminus E(\beta^i_h)$, and some shortcut edges representing distances to and from $U$ in $G_h$; actually, it will also contain some shortcut edges between vertices in $\bigcup_{j=i-10g}^{i+10g}V(\beta^j_h)$ which we have not discussed in the overview.}
	\label{overview-1fail-short3}
\end{figure}
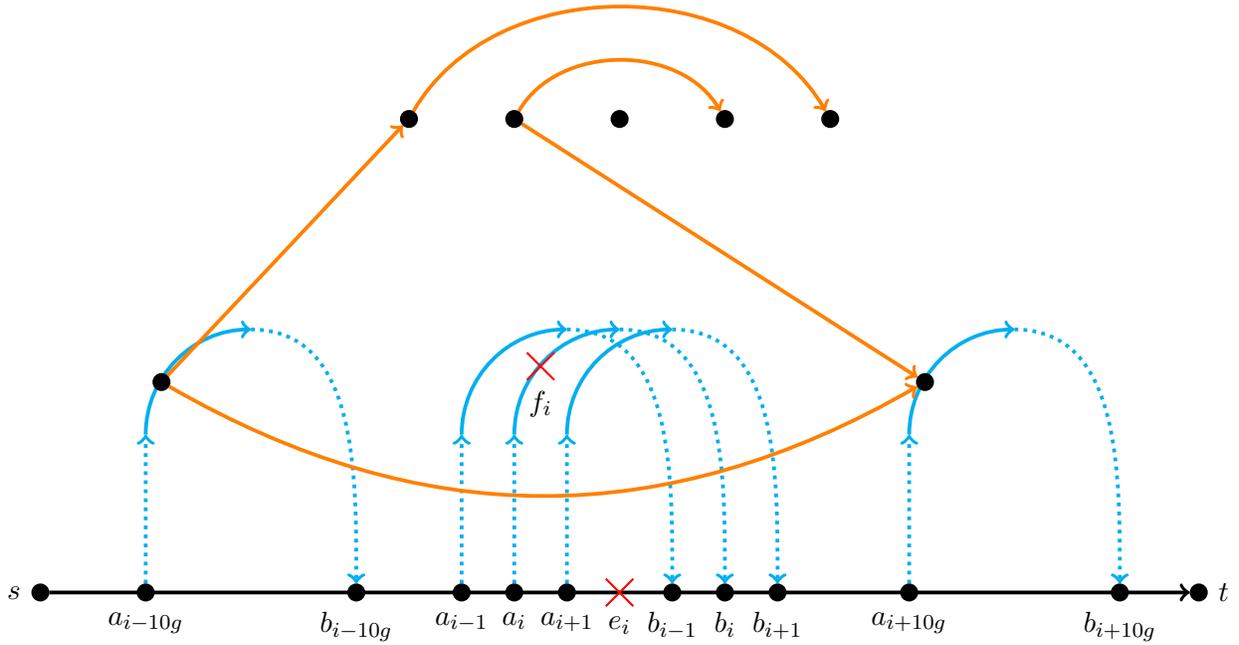

\subsubsection{Both failures on the $st$-path}
Now let us assume both edge failures lie on $\pi$. As the same in \cite{williams2022algorithms}, the main difficulty of dual-failure replacement path for this case comes from the backward paths. More specifically, given two edge failures $\{f_1, f_2\}$, in general the optimal dual replacement path $\rho$ has three parts.
\begin{itemize}[leftmargin=*]
	\item A prefix of $\rho$ that diverges from $\pi$ before the first edge failure $f_1$ on $\pi[s, f_1)$, and then meets somewhere in the middle $y\in V(\pi(f_1, f_2))$ using edges in $G\setminus E(\pi)$.
	
	\item A sub-path of $\rho$ that travels from $y$ to another vertex $x\in V(\pi(f_1, y])$ using edges in $(G\setminus E(\pi))\cup E(\pi(f_1, f_2))$; this sub-path is the so-called \emph{backward} path of $\rho$ which may converge and diverge multiple times with $\pi(f_1, f_2)$.
	
	\item A suffix of $\rho$ that converges with $\pi$ after the second edge failure $f_2$ on $\pi(f_2, t]$ starting from $x$ using edges in $G\setminus E(\pi)$, and then reach $t$ using the rest of $\pi$.
\end{itemize}
To compute the backward path, let us divide $\pi$ into sub-paths of length $L$ for some parameter $L$ as $\pi = \gamma_1\circ \gamma_2\circ\cdots\circ \gamma_{n/L}$. Take a random pivot vertex set $U$ of size $O(\frac{n\log n}{L})$. The main observation is that if $f_1$ and $f_2$ are in different sub-paths $\gamma_i, \gamma_j, j-i > 1$, plus that $x\in V(\gamma_i), y\in V(\gamma_j)$, then the prefix $\rho[s, y]$ must have length at least $L$ and thus contain a pivot $p\in U$ with high probability. Therefore, if we compute single-source replacement paths \cite{chechik2020near} from $p$ in graph $G\setminus E(\gamma_i)$, then it would provide useful information about the backward path of $\rho$ from $y$ to $x$. See \Cref{overview-2fail} for an illustration.

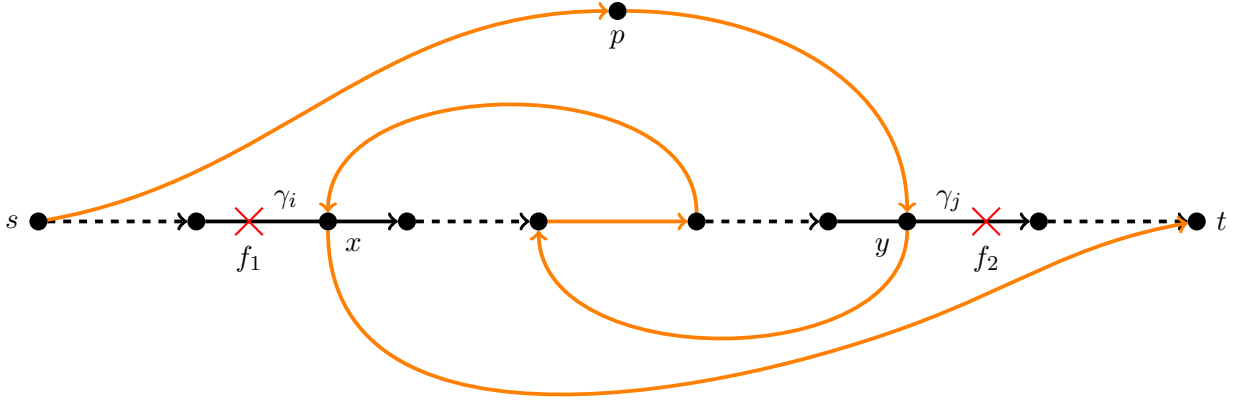
\begin{figure}
	\centering
	\begin{tikzpicture}[thick,scale=0.7]
	\draw (-11, 0) node(1)[circle, draw, fill=black, inner sep=0pt, minimum width=6pt, label=180: {$s$}] {};
	\draw (11, 0) node(2)[circle, draw, fill=black, inner sep=0pt, minimum width=6pt, label=0: {$t$}] {};
	
	\draw (-8, 0) node(3)[circle, draw, fill=black, inner sep=0pt, minimum width=6pt] {};
	\draw (-4, 0) node(4)[circle, draw, fill=black, inner sep=0pt, minimum width=6pt] {};
	
	\draw (-1.5, 0) node(5)[circle, draw, fill=black, inner sep=0pt, minimum width=6pt] {};
	\draw (1.5, 0) node(6)[circle, draw, fill=black, inner sep=0pt, minimum width=6pt] {};
	
	\draw (4, 0) node(7)[circle, draw, fill=black, inner sep=0pt, minimum width=6pt] {};
	\draw (8, 0) node(8)[circle, draw, fill=black, inner sep=0pt, minimum width=6pt] {};
	
	\draw (0, 4) node(9)[circle, draw, fill=black, inner sep=0pt, minimum width=6pt, label=-90: {$p$}] {};
	
	\draw (-5.5, 0) node(10)[circle, draw, fill=black, inner sep=0pt, minimum width=6pt, label=-45:{$x$}] {};
	\draw (5.5, 0) node(11)[circle, draw, fill=black, inner sep=0pt, minimum width=6pt, label=-135:{$y$}] {};
	
	\draw (-7, 0) node[cross=6, red, label=-90: {$f_1$}] {};
	\draw (7, 0) node[cross=6, red, label=-90: {$f_2$}] {};
	
	\draw (-6.3, 1) node[label=-90: {$\gamma_i$}] {};
	\draw (6.3, 1) node[label=-90: {$\gamma_j$}] {};
	
	\begin{scope}[on background layer]
		\draw [->, line width = 0.5mm] (3) to (4);
		\draw [->, line width = 0.5mm] (7) to (8);
		\draw [->, line width = 0.5mm, color=orange] (5) to (6);
		
		\draw [->, line width = 0.5mm, dashed] (1) to (3);
		\draw [->, line width = 0.5mm, dashed] (4) to (5);
		\draw [->, line width = 0.5mm, dashed] (6) to (7);
		\draw [->, line width = 0.5mm, dashed] (8) to (2);
		
		\draw [->, line width = 0.5mm, color=orange] (1) to[out=10, in=180] (9);
		\draw [->, line width = 0.5mm, color=orange] (9) to[out=0, in=90] (11);
		\draw [->, line width = 0.5mm, color=orange] (11) to[out=-90, in=-90] (5);
		\draw [->, line width = 0.5mm, color=orange] (6) to[out=90, in=90] (10);
		\draw [line width = 0.5mm, color=orange] (10) to[out=-90, in=200] (5.5, -2);
		\draw [->, line width = 0.5mm, color=orange] (5.5, -2) to[out=20, in=190] (2);

	\end{scope}
\end{tikzpicture}
	\caption{When $f_1, f_2$ lie in different sub-paths $\gamma_i, \gamma_j$ such that $j-i>1$, we can show that $\rho$ must contain a vertex in $U$ with high probability. Then we can apply the algorithm from \cite{chechik2020near} to compute single-source replacement path from $p$ in graph $G\setminus E(\gamma_i)$ to learn about the sub-path $\rho[p, x]$ which contains the backward path in the middle.}
	\label{overview-2fail}
\end{figure}

\subsection{Faster algebraic algorithm for weighted digraphs}
Let us divide the shortest path $\pi$ from $s$ to $t$ into sub-paths of hops at most $L$; that is, $\pi = \gamma_1\circ\gamma_2\circ\cdots\circ\gamma_{\ceil{n/L}}$. Given a pair of edge failures $\{f_1, f_2\}$, suppose $f_1$ and $f_2$ belong to $\gamma_{l_1}$ and $\gamma_{l_2}$ respectively. In the previous paper \cite{williams2022algorithms}, the difficult case is when $f_1, f_2$ come from different sub-paths, say $l_1 < l_2$. To compute $\dist(s, t, G\setminus \{f_1, f_2\})$ efficiently, their approach was to build a sketch graph $H_{f_1, f_2}$ on vertices $\{s ,t\}\cup V(\gamma_{l_1})\cup V(\gamma_{l_2})$ which encodes an optimal replacement path, and then run $s$-$t$ shortest path in $H_{f_1, f_2}$ which takes $\tilde{O}(L^2)$ time for each $\{f_1, f_2\}$, leading to a total runtime overhead of $\tilde{O}(n^2L^2)$.

Divide each sub-path $\gamma_{l}$ into segments $\gamma_l = \alpha^l_1\circ \alpha^l_2\circ \cdots \circ \alpha^l_{\ceil{L/g}}$ of hops at most $g$ (for some parameter $g<L$). Assume $f_1$ lies on $\alpha^{l_1}_{h_1}$ and $f_2$ lies on $\alpha^{l_2}_{h_2}$, and let $\rho$ be the optimal replacement path from $s$ to $t$ avoiding $\{f_1, f_2\}$. Our main observation is that, if $\rho$ intersects both $\alpha^{l_1}_{h_1}, \alpha^{l_2}_{h_2}$, then we can build a smaller sketch graph $H_{f_1, f_2}$ on $\{s, t\}\cup V\brac{\alpha^{l_1}_{h_1}}\cup V\brac{\alpha^{l_2}_{h_2}}$ which still encodes $\rho$, and so the runtime would be reduced to $\tilde{O}(g^2)$. Otherwise, if $\rho$ skips over $\alpha^{l_2}_{h_2}$ entirely, we will build a sketch graph $H_{f_1, (l_2, h_2)}$ on the vertex set $\{s ,t\}\cup V(\gamma_{l_1})\cup V(\gamma_{l_2})$ which only depends on $f_1$ and $\alpha^{l_2}_{h_2}$ and not on $f_2$. As the number of such graphs $H_{f_1, (l_2, h_2)}$ is at most $O(n^2/g)$, the runtime can be bounded by $\tilde{O}(n^2L^2/g)$. Overall, the runtime would be $\tilde{O}(n^2g^2 + n^2L^2/g)$ which is always better than the previous bound of $\tilde{O}(n^2L^2)$.

\subsection{Organization}
In Section \ref{1fail-short} \ref{1fail-long} \ref{2fail-unweighted} we design fast combinatorial algorithms for unweighted digraphs; in Section \ref{alge} we design fast algebraic algorithms for weighted digraphs.

\section{Preliminaries}
Throughout the paper, logarithm $\log(*)$ will have base $2$, and we assume the number of vertices $n$ in the input graph is an integral power of $2$. In any weighted digraph $G = (V, E, \wts)$, for any vertex $u\in V$, let $\deg(u, G)$ be its vertex degree (counting both in-edges and out-edges); for any pair of vertices $u, v\in V$, let $\dist(u, v, G)$ be the weighted length of the shortest path from $u$ to $v$ in $G$. Throughout the algorithm, we will maintain a distance estimation function $\est(*, *, *)$, such that the value $\est(u, v, G)\geq \dist(u, v, G)$ is always an upper bound. All values of $\est(*, *, *)$ are initially infinity and are non-increasing throughout the algorithms. When we update the value of an estimation $\est(u, v, G)$ with a distance value $D$, we mean $\est(u, v, G)\leftarrow \{D, \est(u, v, G) \}$.

Given any directed path or walk $\rho$ in $G$, let $|\rho|$ be the number of edges on $\rho$, and let $\wts(\rho)$ be its total edge weight. For any two vertices $u, v\in V(\rho)$ where $u$ comes before $v$ on $\rho$, let $\rho[u, v]\subseteq \rho$ be the sub-path of $\rho$ from $u$ to $v$. In addition, let $\rho[*, v]$ and $\rho[u, *]$ be the prefix and suffix sub-path of $\rho$; this notation will be useful when we don't have variable names for the endpoints of path $\rho$.

For any edge $f\in E(\rho)$ and vertex $u\in V(\rho)$ which comes before edge $f$, let $\rho[u, f)$ be the sub-path from $u$ to $f$ (excluding edge $f$); similarly we can define notations $\rho(f, v]$ and $\rho(f_1, f_2)$ in the natural way.

Borrowing a terminology from \cite{williams2022algorithms}, let us define the notion of canonical paths.
\begin{definition}[\cite{williams2022algorithms}]
	Let $s, t\in V$ be two vertices, and let $\pi$ be a shortest path from $s$ to $t$. For any edge set $F\subseteq E$, a path $\rho$ from $s$ to $t$ in $G\setminus F$ is called \emph{canonical} with respect to $\pi$ and $F$, if for any $u, v\in V(\pi)\cap V(\rho)$ such that $u$ appears before $v$ in both $\pi, \rho$ and $E(\pi[u, v])\cap F=\emptyset$, then $\rho[u, v]$ is the same as $\pi[u, v]$.
\end{definition}

\subsection{Unweighted digraphs}
For tie-breaking among shortest paths, we can randomly perturb all unit edge weights slightly so that all replacement shortest paths for at most two edge failures are unique under the perturbed weights. 
We can show that, under the weight perturbation, all replacement shortest paths are canonical. Next, let us state a basic property regarding shortest replacement paths for one edge failure.
\begin{lemma}\label{detour}
	Consider any edge $f\in E(\pi)$, any canonical shortest path $\rho$ from $s$ to $t$ avoiding $f$ can be decomposed as $\rho = \pi[s, a]\circ \alpha\circ\pi[b, t]$, where $a, b\in V(\pi)$ and $\alpha$ is a shortest path from $a$ to $b$ in $G\setminus E(\pi)$. For convenience, $\alpha$ will be called the \emph{detour} of the replacement path, and $a, b$ are called the \emph{divergence} and \emph{convergence} vertex, respectively.
\end{lemma}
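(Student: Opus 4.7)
The plan is to identify the divergence and convergence vertices via the canonicity of $\rho$, and then argue that the middle detour must be a shortest off-$\pi$ path by a standard shortcut argument.

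First I would set up notation. Write $\pi = v_0 v_1 \cdots v_n$ with $s = v_0$, $t = v_n$, and let $f = (v_k, v_{k+1})$. Consider the set $C = V(\pi) \cap V(\rho)$, which contains at least $s$ and $t$. Partition $C$ according to $\pi$-position relative to $f$: the ``left'' vertices are those $v_i$ with $i \leq k$, and the ``right'' vertices are those with $i \geq k+1$. Now I would define $a$ to be the \emph{last} vertex along $\rho$ that lies in $C$ and is a left vertex, and define $b$ to be the \emph{first} vertex along $\rho$ after $a$ that lies in $C$ (such a $b$ exists because $t \in C$). By the maximality of $a$, the vertex $b$ must be a right vertex.

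Next I would apply canonicity twice. Since $a$ appears before the tail of $f$ on $\pi$, the prefix $\pi[s,a]$ does not contain $f$, and both $s,a$ lie in $V(\pi)\cap V(\rho)$ with the right order, so canonicity forces $\rho[s,a] = \pi[s,a]$. The symmetric argument with $b$ and $t$ gives $\rho[b,t] = \pi[b,t]$. For the middle segment, by the choice of $b$ no internal vertex of $\rho[a,b]$ lies on $\pi$; hence every edge of $\rho[a,b]$ has at least one endpoint off $\pi$, and in particular no edge of $\rho[a,b]$ belongs to $E(\pi)$. Setting $\alpha = \rho[a,b]$, we therefore have $\alpha$ as a path from $a$ to $b$ entirely in $G \setminus E(\pi)$, and the concatenation $\rho = \pi[s,a]\circ \alpha \circ \pi[b,t]$ follows.

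Finally I would argue optimality of $\alpha$ by a standard exchange argument. Suppose some $a$-to-$b$ path $\alpha'$ in $G\setminus E(\pi)$ had strictly smaller weight than $\alpha$. Then the walk $\pi[s,a]\circ \alpha' \circ \pi[b,t]$ uses no edge of $E(\pi[s,a])\cup E(\pi[b,t])$ equal to $f$ (since $a$ is before the tail of $f$ and $b$ is at or after the head of $f$), and $\alpha'$ avoids all of $E(\pi)$; so this walk avoids $f$ and has weight strictly less than $\wts(\rho)$. Because the shortest $s$-$t$ walk avoiding $f$ has the same weight as the shortest $s$-$t$ simple path avoiding $f$ (edge weights are nonnegative after perturbation), this contradicts the optimality of $\rho$. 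Hence $\alpha$ is a shortest $a$-$b$ path in $G\setminus E(\pi)$.

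The only place requiring care is verifying that the canonicity hypothesis is applied with its full set of requirements at both $(s,a)$ and $(b,t)$, in particular checking that neither $\pi[s,a]$ nor $\pi[b,t]$ contains $f$; this is exactly why I chose $a$ as the last left-vertex (so $a$'s $\pi$-index is $\leq k$) and forced $b$ to be a right-vertex (so $b$'s $\pi$-index is $\geq k+1$). The rest is bookkeeping.
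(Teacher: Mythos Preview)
The paper states this lemma as a ``basic property'' and does not supply a proof, so there is nothing to compare your argument against. Your proof is correct and is the standard one.

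One small imprecision worth tightening: from ``no internal vertex of $\rho[a,b]$ lies on $\pi$'' you conclude ``every edge of $\rho[a,b]$ has at least one endpoint off $\pi$.'' That implication fails in the degenerate case $|\rho[a,b]|=1$, where the single edge $(a,b)$ has both endpoints on $\pi$. The desired conclusion still holds, however: if $(a,b)\in E(\pi)$ then $a$ and $b$ would be consecutive on $\pi$, and since $a$ is a left vertex and $b$ a right vertex this forces $(a,b)=f$, contradicting $f\notin E(\rho)$. With this one-line patch the argument is complete.
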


It is known that shortest replacement paths for one failure can be computed efficiently.

\begin{lemma}[\cite{roditty2005replacement}]\label{rp}
	Given an unweighted digraph $G = (V, E)$ on $n$ vertices and $m$ edges. Fixing any source-terminal pair $s, t\in V$, we can compute all canonical shortest replacement paths from $s$ to $t$ avoiding $f$ with high probability in time $\tilde{O}(mn^{1/2} + n^2)$, where $f$ ranges over all edges in $E$.
\end{lemma}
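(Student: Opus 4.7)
I plan to follow the classical Roditty--Zwick strategy. First compute the shortest $s$-$t$ path $\pi = (s = v_0, v_1, \ldots, v_k = t)$ in $O(m)$ time via BFS. By \Cref{detour}, every canonical replacement path avoiding a failed edge $e_i = (v_i, v_{i+1})$ decomposes as $\pi[s, v_a] \circ \alpha \circ \pi[v_b, t]$ with $a \leq i < b$ and $\alpha$ a shortest $v_a$-to-$v_b$ path in $G \setminus E(\pi)$. So the task reduces to finding, for each $e_i$, the optimal endpoints $(a, b)$ and the detour $\alpha$ realizing the minimum total length.

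Set $L = n^{1/2}$ and partition $\pi$ into $O(n/L) = O(\sqrt{n})$ consecutive blocks $B_1, \ldots, B_q$ of length at most $L$ each. I handle long-span detours ($b - a \geq L$) and short-span detours ($b - a < L$) separately.

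\textbf{Long detours.} For each block $B_j$, compute a forward BFS tree from $s$ and a reverse BFS tree to $t$ in $G \setminus E(B_j)$; this costs $O(m)$ per block and $O(m\sqrt{n})$ overall. For a failed edge $e_i \in B_j$, any detour with span at least $L$ has both endpoints outside $B_j$ on opposite sides of $e_i$, so the corresponding replacement path is a valid $s$-$t$ path in $G \setminus E(B_j)$ and in fact equals the shortest such path, read directly off the two BFS trees.

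\textbf{Short detours.} Here $b - a < L$, so the endpoints are within $L$ of $e_i$ along $\pi$. Sample a set $R$ of $\tilde{O}(\sqrt{n})$ pivot vertices so that with high probability every path in $G \setminus E(\pi)$ of at least $\sqrt{n}$ edges contains a pivot. Run BFS (both forward and reverse) from each pivot in $G \setminus E(\pi)$, costing $\tilde{O}(m\sqrt{n})$. Combined with pivot-to-$v_b$ and $v_a$-to-pivot distances this recovers $\dist(v_a, v_b, G \setminus E(\pi))$ for every relevant pair whose shortest connection has at least $\sqrt{n}$ edges. For the remaining very short connections, the detour lives entirely in the $\sqrt{n}$-neighborhood of $\pi$, and a truncated BFS from each block of $\pi$, amortized across the $O(\sqrt{n})$ blocks, computes the needed distances in $\tilde{O}(m\sqrt{n})$. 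Finally, for each of the $O(n)$ edges $e_i$, iterate over the $O(L^2) = O(n)$ candidate short-span pairs and the single long-span candidate to select the optimum, contributing $O(n^2)$ overall.

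The main obstacle is the short-detour case: a naive BFS from each vertex of $\pi$ costs $\Omega(mn)$. Reducing this to $\tilde{O}(m\sqrt{n})$ relies on the pivot-hitting argument for medium-length detours together with a careful amortization of localized BFS for very short ones, ensuring that each edge of $G \setminus E(\pi)$ is explored by only $\tilde{O}(\sqrt{n})$ BFS calls in total; the canonical-path structure is what lets us avoid redundant exploration across sources.
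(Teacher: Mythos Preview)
The paper does not prove this lemma; it is quoted as a black-box result from Roditty and Zwick, so there is no in-paper argument to compare against and your sketch must be judged on its own.

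There is a real gap in your long-span case. You claim that when $b - a \ge L$ both endpoints of the detour lie outside the block $B_j$ containing $e_i$, so the replacement path is a path in $G \setminus E(B_j)$. This is false. Take $B_j$ with vertex range $[v_p, v_{p+L}]$, failed edge $e_i = (v_{p+1}, v_{p+2})$, and a detour with $a = p+1$, $b = p+L+1$. The span is $L$, yet $v_a$ is strictly inside the block, and the replacement path $\pi[s,v_a]\circ\alpha\circ\pi[v_b,t]$ uses the edge $(v_p,v_{p+1}) \in E(B_j)$. Your block-BFS therefore returns only an upper bound for such $e_i$, and since your short-span enumeration is restricted to $b-a < L$ it never revisits this candidate; the final minimum can be strictly too large.

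The standard remedy is to classify by detour \emph{length}, not span. If $|\alpha| \ge \sqrt{n}$ then a random pivot $p$ hits $\alpha$ with high probability, and since $\alpha$ lives entirely in $G\setminus E(\pi)$ a single BFS from and to each pivot in $G\setminus E(\pi)$ suffices; for each pivot one then combines $\min_{a\le i}\{a + \dist(v_a,p)\}$ with $\min_{b>i}\{\dist(p,v_b)+(k-b)\}$ via prefix and suffix minima in $\tilde O(n)$ per pivot. This handles all long detours regardless of span and makes the block-removal step unnecessary. Separately, your treatment of the genuinely short case ($|\alpha|<\sqrt n$, hence also $b-a<\sqrt n$) is where the real work in Roditty--Zwick lies, and ``a careful amortization of localized BFS'' is not yet an argument: you need a concrete structural reason why truncated searches from the $\Theta(n)$ sources on $\pi$ touch each edge of $G\setminus E(\pi)$ only $\tilde O(\sqrt n)$ times in total, and you have not stated one.
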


We also need a basic property about replacement paths for dual edge failures where only one edge falls on the $st$-path $\pi$.
\begin{lemma}
	Consider any edge $f_1\in E(\pi)$ and let $\alpha$ be the detour of the shortest replacement path that avoids $f_1$. Consider any edge $f_2\in E(\alpha)$. Then, there is a shortest replacement path $\rho$ avoiding $\{f_1, f_2\}$ such that:
	\begin{itemize}[leftmargin=*]
		\item $\rho$ is a canonical shortest path avoiding $f_1$ in graph $G\setminus\{f_1\}$.
		\item $\rho$ diverges and converges with $\pi$ for once.
	\end{itemize}
\end{lemma}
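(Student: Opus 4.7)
The plan is to start with any shortest replacement path $\rho^*$ avoiding $\{f_1, f_2\}$ and surgically modify it so that its prefix up to the first divergence from $\pi$ and its suffix after the last convergence with $\pi$ both coincide with $\pi$. Write $f_1 = (u_1, v_1)$ so that $\pi = \pi[s, u_1]\circ (u_1, v_1)\circ \pi[v_1, t]$, and note that $f_2\in E(\alpha)\subseteq E(G)\setminus E(\pi)$, so $f_2$ is not an edge of $\pi$.

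I first identify the divergence and convergence vertices. Let $a'$ be the last vertex along $\rho^*$ that lies in $V(\pi[s, u_1])$, and let $b'$ be the first vertex of $V(\pi[v_1, t])$ along $\rho^*$ after $a'$. Both exist because $s\in V(\pi[s, u_1])$ and $t\in V(\pi[v_1, t])$. By the choice of $a'$ and $b'$, the interior of $\rho^*[a', b']$ meets neither $V(\pi[s, u_1])$ nor $V(\pi[v_1, t])$, so it is disjoint from $V(\pi)$. Now define $\rho := \pi[s, a']\circ \rho^*[a', b']\circ \pi[b', t]$. Since $f_1\notin E(\pi[s, a'])\cup E(\pi[b', t])$, $f_2\notin E(\pi)$, and $\rho^*[a', b']$ avoids both failed edges, $\rho$ is a walk in $G\setminus\{f_1, f_2\}$; and because the two sub-paths of $\pi$ are disjoint from each other and from the interior of $\rho^*[a', b']$, $\rho$ is actually a simple path that diverges from $\pi$ exactly once (at $a'$) and converges back exactly once (at $b'$). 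Using that $\pi[s, a']$ and $\pi[b', t]$ are shortest $s$-$a'$ and $b'$-$t$ paths in $G$, I obtain
\begin{align*}
|\rho| &= |\pi[s, a']| + |\rho^*[a', b']| + |\pi[b', t]| \\
&\leq |\rho^*[s, a']| + |\rho^*[a', b']| + |\rho^*[b', t]| = |\rho^*|,
\end{align*}
so $\rho$ is itself a shortest replacement path avoiding $\{f_1, f_2\}$.

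It remains to verify that $\rho$ is canonical with respect to $\pi$ and $\{f_1\}$. By construction $V(\rho)\cap V(\pi) \subseteq V(\pi[s, a'])\cup V(\pi[b', t])$. For any two such vertices $u, v$ with $u$ preceding $v$ on both $\pi$ and $\rho$ and with $f_1\notin E(\pi[u, v])$, either both lie on the prefix $\pi[s, a']$ or both lie on the suffix $\pi[b', t]$, in which case $\rho[u, v] = \pi[u, v]$ by construction. The only remaining case, $u\in V(\pi[s, a'])$ and $v\in V(\pi[b', t])$, would force $f_1 = (u_1, v_1)\in E(\pi[u, v])$, contradicting the hypothesis. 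The main obstacle is making sure the surgery is well-defined, in particular that $\rho^*[a', b']$ has interior disjoint from $V(\pi)$; this is precisely why $a'$ is chosen as the \emph{last} $\pi[s, u_1]$-vertex and $b'$ as the \emph{first} subsequent $\pi[v_1, t]$-vertex along $\rho^*$.
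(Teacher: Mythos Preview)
The paper states this lemma without proof, treating it as a basic structural property, so there is nothing to compare against. Your argument is correct and is the natural one: starting from any shortest $\{f_1,f_2\}$-avoiding path $\rho^*$, you pick the last vertex on $\pi[s,u_1]$ and the first subsequent vertex on $\pi[v_1,t]$ as divergence and convergence points, then replace the prefix and suffix by sub-paths of $\pi$; the key fact that $V(\pi)=V(\pi[s,u_1])\cup V(\pi[v_1,t])$ (since $f_1$ is a single edge) guarantees the middle segment has interior disjoint from $\pi$, giving both the single-divergence property and canonicality.
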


We will be applying the following algorithm for single-source replacement paths algorithm from \cite{chechik2020near} as a black-box.

\begin{lemma}[\cite{chechik2020near}]\label{ssrp}
	Given an unweighted digraph $G = (V, E)$ on $n$ vertices and $m$ edges. Fixing any vertex $s\in V$, we can compute all shortest replacement paths from $s$ to $t$ avoiding $f$ with high probability in time $\tilde{O}(mn^{1/2} + n^2)$, where $t$ ranges over all vertices in $V\setminus \{s\}$ and $f$ ranges over all edges in $E$.
\end{lemma}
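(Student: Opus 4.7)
The plan is to follow the standard long-detour/short-detour dichotomy used in single-failure replacement-path algorithms, pushed through with the pivoting and tree-pruning ideas of Chechik--Magen. First, build a BFS tree (or a shortest path tree under the random unit-weight perturbation) $T$ rooted at $s$, and for each $t\in V$ write $\pi_{s,t}$ for the root-to-$t$ path in $T$. By Lemma \ref{detour}, the shortest replacement path from $s$ to $t$ avoiding any $f\in E(\pi_{s,t})$ decomposes as $\pi_{s,t}[s,a]\circ\alpha\circ\pi_{s,t}[b,t]$, where $\alpha$ is a shortest path from $a$ to $b$ in $G\setminus E(\pi_{s,t})$. I would split into two regimes based on the length of the detour $\alpha$ at threshold $\sqrt n$.

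For the long-detour regime, sample a random pivot set $U\subseteq V$ of size $\Theta(\sqrt n\log n)$; with high probability every simple path of length exceeding $\sqrt n$ contains some vertex of $U$. For each $p\in U$ I would do a BFS in $G$ and in the reverse graph, and then perform a single DFS over $T$ which, at each visited tree edge, maintains the shortest $(p\to\text{current subtree})$ and $(\text{current subtree}\to p)$ distances while blocking the current tree path; the Chechik--Magen insight is that these distances change only along a single path as the DFS proceeds, so that with careful amortization the per-pivot cost is $\tilde O(m + n)$, giving a total of $\tilde O(m\sqrt n + n^{3/2})$ for this phase. Combining these pivot distances with distances from $s$ in $T$ lets me recover $\mathsf{dist}(s,t,G\setminus\{f\})$ for every triple $(t,f)$ whose optimal detour is long.

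For the short-detour regime, I would exploit the fact that a detour of length at most $\sqrt n$ lies entirely within a local ball. Concretely, for each vertex $v$ perform a bounded BFS of radius $\sqrt n$ in $G$ minus the incident tree edge, and use these local exploration trees to fill in $\tilde O(n)$ candidate replacement distances per BFS, with the $\tilde O(m\sqrt n)$ total work coming from summing $\sqrt n\cdot\deg(v)$. Finally, take the minimum of the short-detour and long-detour candidates to output each $\mathsf{dist}(s,t,G\setminus\{f\})$. The main obstacle, and the truly delicate part that Chechik--Magen solve, is the long-detour bookkeeping: naively one would need $\tilde{O}(\sqrt n)$ many BFS computations in each of $n$ different subgraphs, which is far too expensive. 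My plan would therefore stand or fall on implementing the incremental tree-DP so that sliding the ``forbidden path'' $\pi_{s,t}$ along an edge of $T$ costs only amortized polylog work per pivot; granting that piece, the $\tilde O(m\sqrt n + n^2)$ bound falls out.
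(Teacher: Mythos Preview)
This lemma is not proved in the paper; it is quoted as a black-box result from \cite{chechik2020near} and used throughout without reproving. The paper only adds a one-line remark after the lemma explaining how to extract the actual shortest-path trees (not just distances) by tracking the last edge of each replacement path under the weight perturbation. So there is no ``paper's own proof'' to compare your proposal against.

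As for the content of your sketch: the long/short detour split at threshold $\sqrt{n}$ and the random pivot set of size $\tilde{O}(\sqrt{n})$ are indeed the backbone of the Chechik--Magen algorithm. However, you explicitly flag the crux --- amortizing the per-pivot work to $\tilde{O}(m+n)$ as the forbidden tree path slides --- as something you would need to ``grant,'' and your short-detour paragraph is also only a cost estimate rather than an algorithm. So this is a correct high-level outline of the cited result, not a proof; the genuine technical content of \cite{chechik2020near} is precisely the part you leave open.
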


To be consistent with our perturbation-based tie-braking, we should also impose the same edge weight perturbation on the graph on which \Cref{ssrp} is applied; although \Cref{ssrp} is only stated for unweighted digraphs, it also works with edge weight perturbation (for example, all hitting set arguments still work, since perturbation only changes path lengths negligibly).

In the original paper \cite{chechik2020near}, they only claim to compute all the length of shortest replacement paths, but here we need the actual shortest paths tree in each graph $G\setminus \{f\}$. To achieve such an augmentation, during the execution of the algorithm in \cite{chechik2020near}, we can keep track of the last edge of each replacement path, and by uniqueness of shortest paths under weight perturbation, the set of these last edges form the shortest paths tree.

Finally, one of our basic tools is a \emph{truncated} version of Dijkstra's algorithm, which is stated below.
\begin{lemma}[\cite{dijkstra2022note}]
	Given an unweighted digraph $G = (V, E)$ on $n$ vertices. For any vertex $s\in V$ and any threshold $L$, let $U = \{u\mid \dist(s, u, G)\leq L\}$. Then we can compute shortest paths from $s$ to all vertices in $U$ in time $O(\sum_{u\in U}\deg(u, G) + n\log n)$.
\end{lemma}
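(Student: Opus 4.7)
The plan is to run Dijkstra's algorithm from $s$ using a Fibonacci heap, but to terminate early as soon as the extracted vertex has tentative key strictly greater than $L$. A crucial implementation detail is to avoid loading all $n$ vertices into the heap upfront: start with only $s$ inserted at key $0$, and insert any other vertex $v$ lazily the very first time an edge relaxation from an already-extracted vertex discovers it. Each main iteration performs one extract-min, yielding some vertex $u$ with stored key $k$; if $k > L$ we stop, otherwise we scan all out-edges $(u,v)$ and either insert $v$ into the heap with key $k+1$ or perform a decrease-key to $k+1$ if this improves its current key.

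For correctness, the standard Dijkstra invariant ensures that whenever a vertex $u$ is extracted with key $k$, we have $\dist(s,u,G)=k$. Hence the moment we encounter an extracted key exceeding $L$, every still-queued vertex is known to be at distance $>L$; conversely, every vertex in $U$ has been extracted with its true distance, and its predecessor edge (recorded at relaxation time) assembles the shortest-path tree on $U$.

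For the running time, the extracted set consists of $U$ together with at most one extra ``sentinel'' vertex whose key first exceeds $L$, so at most $|U|+1 \leq n+1$ extract-min operations occur, contributing $O(n\log n)$ amortized in a Fibonacci heap. Edge relaxations are performed \emph{only} out of vertices that have been extracted, i.e.\ vertices in $U$ (plus possibly the sentinel, whose cost is dominated), which gives a total of $O\bigl(\sum_{u\in U}\deg^{\mathrm{out}}(u,G)\bigr)=O\bigl(\sum_{u\in U}\deg(u,G)\bigr)$ relaxations; each relaxation triggers at most one insert or decrease-key, costing $O(1)$ amortized in a Fibonacci heap. Summing the two contributions yields the claimed bound.

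The only real subtlety, and the one I would emphasize, is the lazy-insertion discipline: a naive implementation that preloads the heap with all $n$ vertices would pay $\Omega(n)$ work up front, which is fine, but more importantly that version would also touch every vertex $v\notin U$ whose in-neighborhood intersects $U$, and we must be careful that such $v$'s contribute only through edges out of $U$ (not edges into far-away vertices never touched by the search). With the lazy scheme above, a vertex is handled at all only if some edge from $U$ enters it, and the corresponding decrease-key or insert is already charged to that edge, so no extra term appears and the bound $O\bigl(\sum_{u\in U}\deg(u,G)+n\log n\bigr)$ is achieved.
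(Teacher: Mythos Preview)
Your proof is correct. The paper does not actually prove this lemma itself; it is stated as a cited black-box (attributed to Dijkstra) and used as a basic tool, so there is no ``paper's own proof'' to compare against. Your argument---truncated Dijkstra with lazy insertion into a Fibonacci heap, stopping at the first extracted key exceeding $L$, and charging relaxations only to out-edges of extracted vertices---is exactly the standard justification for this bound and is fully adequate.
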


\subsection{Weighted digraphs}
When the input graph contains negative edge weights, we assume it does not contain any negative cycles. For weighted graphs, since fast matrix multiplication algorithms only work with small integer edge weights, we will not assume unique shortest paths by perturbing the edge weights.

Our algorithm relies on fast algorithms which computes shortest paths in digraphs with negative edge weights.
\begin{lemma}[\cite{bernstein2022negative}]\label{sssp-neg}
	Given a weighted digraph $G = (V, E, \wts)$ with edge weights $\wts: E\rightarrow \{-M, -M+1, \cdots, M-1, M\}$ without negative cycles, and fix any source vertex $s\in V$. Then, single-source shortest path from $s$ can be computed in time $\tilde{O}(m\log M)$ with high probability.
\end{lemma}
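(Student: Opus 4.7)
The plan is to apply the negative-weight SSSP algorithm of Bernstein, Nanongkai, and Wulff-Nilsen directly, since the statement of the lemma matches (up to polylogarithmic factors absorbed into the $\tilde{O}(\cdot)$ notation) the main theorem of their paper. I would therefore treat their algorithm as a black box, and my ``proof'' is really a sketch of the structure of their construction. The high-level framework is Johnson-style: reduce negative-weight SSSP to the problem of computing a potential $\varphi : V \to \mathbb{Z}$ satisfying $\wts(u,v) + \varphi(u) - \varphi(v) \geq 0$ for every edge $(u,v) \in E$; once such a $\varphi$ is available, the reweighted graph has non-negative edge weights, a single call to Dijkstra's algorithm from $s$ produces shortest-path distances in the reweighted graph, and we recover the true distances by undoing the shift.

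The potential $\varphi$ is built by a scaling procedure, often called \emph{ScaleDown}. Starting from $\varphi \equiv 0$, the minimum reduced edge weight is at least $-M$. ScaleDown takes a $\varphi$ that certifies reduced weights $\geq -2B$ and produces a new $\varphi$ that certifies reduced weights $\geq -B$, in $\tilde{O}(m)$ time per call with high probability. Applied $O(\log M)$ times, it yields a non-negative reweighting, giving the overall $\tilde{O}(m \log M)$ bound claimed by the lemma. The no-negative-cycle assumption is used exactly to guarantee that each ScaleDown call terminates and that the final $\varphi$ is well-defined.

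The technical core of ScaleDown is a low-diameter decomposition (LDD) of the induced subgraph of negative reduced edges with radius $\Theta(B \log n)$: inside each resulting strongly connected piece the algorithm recurses on a contracted graph that is effectively DAG-like with respect to negative edges, while across pieces it pays at most $\tilde{O}(1)$ Bellman--Ford-style rounds per boundary-crossing negative edge. The randomization in LDD is what produces the high-probability qualifier in the runtime.

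The main obstacle, were one to re-derive the result from scratch, would be designing and analyzing the LDD on graphs whose edge weights can be negative (so that ``balls'' are not monotone in radius) and orchestrating the recursion so that the total work telescopes to $\tilde{O}(m)$ per ScaleDown; but since we are citing an existing theorem, no such obstacle arises here, and the proposal is simply to invoke \cite{bernstein2022negative} as a black box.
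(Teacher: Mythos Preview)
Your proposal is correct and matches the paper's approach: the lemma is stated with a citation to \cite{bernstein2022negative} and no proof is given in the paper itself, so invoking that result as a black box is exactly what the authors do. Your additional sketch of the ScaleDown/LDD machinery is accurate but goes beyond what the paper provides.
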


\begin{lemma}[\cite{zwick2002all}]\label{apsp-neg}
		Given a weighted digraph $G = (V, E, \wts)$ with edge weights $\wts: E\rightarrow \{-M, -M+1, \cdots, M-1, M\}$ without negative cycles, all-pairs shortest paths in $G$ can be computed in time $\tilde{O}(M^{\frac{1}{4-\omega}}n^{2+\frac{1}{4-\omega}})$ with high probability.
\end{lemma}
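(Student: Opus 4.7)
The plan is to combine Johnson's reweighting with Zwick's sampling-plus-distance-product strategy, and then optimize parameters via rectangular matrix multiplication.

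First I would eliminate negative edge weights by Johnson's trick. Adjoin a super-source $s^\ast$ with a zero-weight edge to every vertex and compute the potentials $h(v) = \dist(s^\ast, v, G)$ by one invocation of the negative-weight SSSP algorithm of \Cref{sssp-neg} in $\tilde{O}(m\log M)$ time; since $G$ has no negative cycles, every $h(v)$ is finite. Reweighting each edge by $\wts'(u,v) = \wts(u,v) + h(u) - h(v)$ produces non-negative integer weights bounded by $W = O(nM)$ under which all shortest paths (and hence all distances, after subtracting the potentials) are preserved. It therefore suffices to solve APSP in a non-negatively weighted digraph with weight bound $W$.

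Next I would apply Zwick's bridging recursion. Fix a hop parameter $L$. The ``short-hop'' distance matrix $D_L[u,v]$, defined as the length of the shortest $u$-$v$ walk of at most $L$ edges, is computed by $O(\log L)$ rounds of min-plus matrix squaring; each min-plus product of matrices whose finite entries lie in $\{0, 1, \dots, LW\}$ can be reduced, via the Alon--Galil--Margalit encoding, to an integer matrix product over entries of magnitude $O(n \cdot LW)$, and thus runs in $\tilde{O}(LW\cdot n^\omega)$ time. To extend to long paths, sample a bridging set $B \subseteq V$ uniformly at random of size $\tilde{O}(n/L)$: a standard hitting-set argument shows that every shortest path of more than $L$ hops contains a vertex of $B$ with high probability, so the full distance matrix is obtained by a rectangular min-plus product $\dist(u,v) = \min_{b\in B}\brac{D_L[u,b] + D[b,v]}$, iterated over $O(\log n)$ geometrically growing scales of $L$. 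Using the best rectangular matrix multiplication algorithm to multiply an $n \times (n/L)$ matrix by an $(n/L) \times n$ matrix, the bridging step costs $\tilde{O}(W\cdot n^{2+\mu(L)})$, where $\mu(L)$ is the corresponding rectangular exponent.

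The final step is the parameter balance. Equating the short-hop cost $\tilde{O}(LW\,n^\omega)$ with the bridging cost $\tilde{O}(W\,n^{2+\mu(L)})$ and substituting $W = O(nM)$, the optimum is reached when $L = n^{1 - 1/(4-\omega)}$ and the bridging exponent collapses to $\mu(L) = 1/(4-\omega)$, giving the claimed $\tilde{O}(M^{1/(4-\omega)}\, n^{2 + 1/(4-\omega)})$ bound (the extra $n$ from $W = nM$ is absorbed into the $n^{1/(4-\omega)}$ factor in the exponent). The main obstacle is this last optimization: one must verify that the optimal $L$ falls in the regime where Coppersmith-type rectangular matrix multiplication yields the clean exponent $1/(4-\omega)$, and that intermediate distance entries indeed stay bounded by $nW = O(n^2 M)$ so that the Alon--Galil--Margalit embedding remains cheaper than a naive min-plus product; this balancing is the technical heart of Zwick's original analysis.
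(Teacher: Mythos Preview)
The paper does not prove this lemma; it is quoted as a black-box result from Zwick's APSP paper. So there is no ``paper's own proof'' to compare against beyond the citation. That said, your sketch contains a real error worth fixing.

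The problematic step is the Johnson reweighting. After you replace $\wts$ by $\wts'(u,v)=\wts(u,v)+h(u)-h(v)$, individual edge weights can be as large as $W=\Theta(nM)$, and consequently the entries of your $L$-hop distance matrix are only bounded by $LW=\Theta(LnM)$. Feeding that into the Alon--Galil--Margalit min-plus product gives cost $\tilde O(LnM\cdot n^{\omega})$ per squaring, not $\tilde O(LM\cdot n^{\omega})$. If you carry this through the balance you will \emph{not} land at $M^{1/(4-\omega)}n^{2+1/(4-\omega)}$; your final sentence ``the extra $n$ from $W=nM$ is absorbed into the $n^{1/(4-\omega)}$ factor'' is simply not true (try $\omega=2$: the short-hop term alone is already $\tilde O(LMn^{3})$).

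Zwick's actual argument avoids this by \emph{not} reweighting at all. The min-plus product in the AGM reduction works perfectly well with bounded negative entries, and the point is that an $\ell$-hop walk in the \emph{original} graph has weight in $[-\ell M,\ell M]$ because each of the $\ell$ edges contributes at most $M$ in absolute value. So at hop scale $\ell$ the entry bound is $\ell M$, not $\ell\cdot nM$, and the per-scale cost is $\tilde O(\ell M\cdot n^{2}(n/\ell)^{\omega-2})$ via rectangular products with a bridging set of size $\tilde O(n/\ell)$; beyond the crossover one switches to the naive $\tilde O(n^{3}/\ell)$ product. Balancing these two regimes is what yields the exponent $2+\tfrac{1}{4-\omega}$ with the $M^{1/(4-\omega)}$ prefactor. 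Dropping the Johnson step and bounding entries by $\ell M$ directly is the missing idea.
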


We will apply single-source multi-terminal replacement paths algorithms as a black-box; the currently best known upper bound is stated below.
\begin{lemma}[\cite{gu2021faster}]\label{ssrp-alge}
	Given a weighted digraph $G = (V, E, \wts)$ with edge weights $\wts: E\rightarrow \{-M, -M+1, \cdots, M-1, M\}$  without negative cycles, and fix any source vertex $s\in V$ and a terminal set $T\subseteq V$. Then, the value of all $\dist(s, t, G\setminus \{f\}), \forall t\in T, f\in E$ can be computed in time $\tilde{O}(Mn^\omega + M^{\frac{1}{4-\omega}}n^{1+\frac{1}{4-\omega}}\cdot |T|)$ with high probability.
\end{lemma}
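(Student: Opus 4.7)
The final statement is a black-box citation of the single-source multi-terminal replacement paths algorithm of Gu--Ren, so its proof lives in the cited paper; what I would write as a plan is a reconstruction of the strategy that yields the stated trade-off between an $|T|$-independent preprocessing phase and a per-terminal combination phase.

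The plan is to adapt the single-source single-terminal framework of Vassilevska Williams (which gives $\tilde{O}(Mn^\omega)$ for a single $t$) so that its expensive matrix-multiplication-based preprocessing is amortized across all terminals, while only the cheaper combination step scales with $|T|$. First I would run negative-weight SSSP from $s$ via Lemma \ref{sssp-neg} in time $\tilde{O}(m\log M)$, obtaining the shortest-path tree rooted at $s$ and in particular every $\pi_t$ for $t\in T$. For any edge $f\notin E(\pi_t)$ we have $\dist(s,t,G\setminus\{f\})=\dist(s,t,G)$, so attention reduces to failures $f\in E(\pi_t)$, of which there are at most $n-1$ per terminal. By a canonical-path argument analogous to Lemma \ref{detour}, the replacement path decomposes into $\pi_t[s,a]\circ\sigma\circ\pi_t[b,t]$ for some divergence/convergence vertices $a,b$ on $\pi_t$, where $\sigma$ is a shortest $a$-$b$ path in $G$ avoiding the edges of $\pi_t[a,b]$.

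Next I would pick a hop parameter $L$ and split each $\pi_t$ into $\tilde{O}(n/L)$ consecutive blocks of $L$ tree edges, handling detours by hop-length:

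\textbf{Short detours ($|\sigma|\le L$).} When $a,b$ lie in the block containing $f$ or in an immediately adjacent block, I would brute-force the detour by running truncated negative-weight SSSP in a block-avoiding subgraph. Summed over all failures and terminals, this costs roughly $\tilde{O}(nL\cdot|T|\cdot \log M)$ after using neighborhood-size arguments to bound the total work.

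\textbf{Long detours ($|\sigma|>L$).} Here I would precompute, once and for all, the APSP table of $G$ under edge-weight scaling using Lemma \ref{apsp-neg} at cost $\tilde{O}(M^{1/(4-\omega)}n^{2+1/(4-\omega)})$, and augment it with a $(\min,+)$ product over the $O((n/L)^2)$ ways to pair the block containing $a$ with the block containing $b$. For each terminal $t$, the resulting combination is a single rectangular $(\min,+)$ product whose two sides have dimensions $n$ and $n/L$; using the Zwick/Alon--Galil--Margalit-style trick that turns small-integer $(\min,+)$ products into $O(M)$ rectangular Boolean products, this combination can be executed in $\tilde{O}(M^{1/(4-\omega)}n^{1+1/(4-\omega)})$ per terminal.

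The principal obstacle is the matrix-multiplication bookkeeping: one must set up the tensor of block-level avoidance distances in such a way that the $|T|$ terminals merely index into an already-computed common data structure, rather than triggering a fresh rectangular product each time. Getting this split right, and then choosing $L$ so that the two costs balance with the preprocessing term $\tilde{O}(Mn^\omega)$ coming from the Vassilevska Williams computation of all single-failure APSP, is the delicate step that yields exactly $\tilde{O}(Mn^\omega + M^{1/(4-\omega)}n^{1+1/(4-\omega)}\cdot|T|)$.
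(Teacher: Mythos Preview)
You correctly identify that this lemma is a black-box citation from \cite{gu2021faster} with no proof given in the paper itself, so there is nothing to compare your reconstruction against here. Your sketch of the Gu--Ren strategy is a plausible high-level outline, but since the present paper simply invokes the result as stated, any assessment of the actual argument would have to be made against the cited source rather than this paper.
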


We will use the following fast algorithm for distance sensitivity oracles in weighted digraphs. In a weighted digraph $G = (V, E, \wts)$ with edge weights $\wts: E\rightarrow \{-M, -M+1, \cdots, M-1, M\}$, a distance sensitivity oracle is an efficient data structure that answers $\dist(u, v, G\setminus \{f\})$ for any $u, v\in V, f\in E$. 

\begin{lemma}[\cite{chechik2020distance}]\label{dso}
	Given a weighted digraph $G = (V, E, \wts)$ with edge weights $\wts: E\rightarrow \{-M, -M+1, \cdots, M-1, M\}$  without negative cycles, a distance sensitivity oracle with $\tilde{O}(1)$ query time can be constructed in time $\tilde{O}(Mn^{2.8719})$.
\end{lemma}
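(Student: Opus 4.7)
The plan is to follow the standard distance-sensitivity-oracle template: reduce to non-negative weights, handle replacement paths of large hop-length via a small pivot set, and handle replacement paths of small hop-length via fast (rectangular) matrix multiplication, then balance parameters. Using \Cref{sssp-neg}, compute in $\tilde{O}(mM)$ time a Johnson potential $\phi : V\rightarrow \mathbb{Z}$ such that $\wts'(u,v)\assign \wts(u,v)+\phi(u)-\phi(v)\geq 0$ for every edge $(u,v)$. Since any $u$-$v$ path length in $G$ and in the reweighted $G'$ differ by exactly $\phi(u)-\phi(v)$, this reduces the problem to the non-negative-weight case (with weights still $O(Mn)$) at no asymptotic cost. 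Fix a hop-length threshold $L$ to be optimized and split queries depending on whether the optimal replacement path has at most $L$ hops or more than $L$ hops.

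For the long-path bucket, sample a hitting set $B\subseteq V$ of size $\tilde{O}(n/L)$ that intersects every shortest path of hop-length at least $L$ with high probability. For each pivot $b\in B$, invoke \Cref{ssrp-alge} on $G$ and on the reverse of $G$ with source $b$ and terminal set $T=V$, obtaining all values $\dist(b,v,G\setminus\{f\})$ and $\dist(v,b,G\setminus\{f\})$. The cost per pivot is $\tilde{O}(Mn^\omega + M^{1/(4-\omega)}n^{2+1/(4-\omega)})$, so the total is $\tilde{O}\brac{\brac{n/L}\cdot(Mn^\omega + M^{1/(4-\omega)}n^{2+1/(4-\omega)})}$. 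Any long-path query $(u,v,f)$ is then answered by $\min_{b\in B}\brac{\dist(u,b,G\setminus\{f\}) + \dist(b,v,G\setminus\{f\})}$; batch-precomputing these minima over all $(u,v,f)$ and storing them in a lookup table yields $\tilde{O}(1)$ query time.

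For the short-path bucket, the goal is to tabulate, for every triple $(u,v,f)$, the $L$-hop distance from $u$ to $v$ in $G\setminus\{f\}$. Because hop-length is at most $L$ and edge weights are $O(M)$, every relevant distance is $O(ML)$, so one can use bounded-weight distance products \`a la \Cref{apsp-neg}. The naive cost of recomputing a $(\min,+)$-product for each of the $\Theta(m)$ failures is prohibitive, so the standard trick is a fault-tolerance-through-matrix-multiplication argument: for a fixed $L$, only a sparse, structured set of $(u,v)$-pairs changes when $f$ is deleted, and those changes can be aggregated into a single rectangular matrix multiplication of dimensions roughly $n\times \brac{n/L}\times n$ via a pivot-based encoding of the affected coordinates. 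The resulting short-path preprocessing costs $\tilde{O}\brac{Mn^{2+\mu} + (ML)^{1/(4-\omega)}n^{2+1/(4-\omega)}}$ for a rectangular matrix multiplication exponent $\omega(1,\mu,1)$. Setting $L$ and $\mu$ to equate the long- and short-path costs and plugging in current rectangular matrix multiplication bounds yields the $\tilde{O}(Mn^{2.8719})$ preprocessing claim.

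The main obstacle is the short-path step: honestly accounting for the total work over all edge failures without paying $\Theta(n^2)$ per failure. The key insight is that a single edge deletion invalidates only a small, structured subset of $L$-hop shortest paths, and those invalidations can be batched into one rectangular product; making this precise and engineering the pivot encoding so that the exponent $\omega(1,\mu,1)$ appears cleanly is the technical core of the construction, while the remainder is routine parameter balancing.
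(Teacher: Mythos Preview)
This lemma is not proved in the paper; it is quoted from \cite{chechik2020distance} and used as a black box, so there is no in-paper proof to compare against. I therefore evaluate your sketch on its own merits.

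Your long-path bucket does not meet the stated bound. You propose to obtain $\tilde{O}(1)$ query time by ``batch-precomputing these minima over all $(u,v,f)$ and storing them in a lookup table''. Even restricting $f$ to edges on the $u$--$v$ shortest path leaves $\Theta(n^3)$ relevant triples in the worst case, and for each triple you take a minimum over $|B|=\tilde{\Theta}(n/L)$ pivots, so this precomputation costs $\tilde{\Theta}(n^4/L)\geq \tilde{\Theta}(n^3)$ since $L\leq n$. That alone exceeds the $\tilde{O}(Mn^{2.8719})$ target. Subcubic DSO constructions do \emph{not} tabulate the answer to every query; they use a hierarchy of hitting sets at geometrically spaced scales together with precomputed ``best-pivot'' pointers (or interval structures over the shortest-path tree) so that each query consults only $O(1)$ pivots rather than all of $B$. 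Your outline is missing this ingredient entirely.

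Your short-path bucket is, as you yourself acknowledge, only a promissory note. The sentence ``a single edge deletion invalidates only a small, structured subset of $L$-hop shortest paths'' is false as written: a single edge can lie on $\Theta(n^2)$ pairwise shortest paths. What the cited construction actually exploits is the \emph{detour structure} of replacement paths (a replacement path agrees with the original shortest path outside one contiguous interval) together with bounded-difference $(\min,+)$ products; the rectangular product you allude to arises from encoding divergence and convergence points relative to shortest-path trees, not from any sparsity of affected pairs. You have not supplied that encoding, so the runtime expression $\tilde{O}\brac{Mn^{2+\mu} + (ML)^{1/(4-\omega)}n^{2+1/(4-\omega)}}$ is asserted rather than derived, and the final parameter balancing has nothing to balance against.
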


\section{One failure on a short $st$-path}\label{1fail-short}

We use two parameters $g$ and $L$ which will be determined in the end such that $g < n^{1/2} < L$.  In this section, we study the case where only one edge failure lies on a short $st$-path, plus that the input graph is sparse and $\dist(s, t, G)\leq L$. More specifically, let $G = (V, E)$ be an unweighted digraph with $n$ vertices and $m$ edges, and consider a pair of vertices $s, t$ with a shortest $st$-path $E(\pi)$ of length at most $L$. The task is to compute for any pairs of edges $f_1, f_2$ the value of $\dist(s, t, G\setminus \{f_1, f_2\})$, where $f_1\in E(\pi), f_2\notin E(\pi)$. The purpose of this section is to prove the following lemma.

\begin{lemma}\label{1fail-special}
	Let $G = (V, E)$ be an unweighted digraph with $n$ vertices and $m$ edges. Fix a pair of source and terminal $s, t\in V$ such that $\dist(s, t, G)\leq L$. Then, all values of $\dist(s, t, G\setminus \{f_1, f_2\})$ can be computed with high probability in time:
	$$\tilde{O}\left(\frac{mn^{1.5} + n^3}{L} + mn^{1/2}L/g + n^2L/g + mL^2/g + mnL^2/g^3 + mLg + L^2g^4 + n^2L^2/g^2\right)$$
	when $f_1\in E(\pi)$ while $f_2\notin E(\pi)$; here $g$ is an arbitrary parameter to be determined later.
\end{lemma}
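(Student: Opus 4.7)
The plan is to first apply \Cref{rp} to compute, for every edge $e_i\in E(\pi)$, the single-failure replacement detour $\alpha_i$ in $G\setminus E(\pi)$ with divergence and convergence vertices $a_i,b_i\in V(\pi)$, at a cost of $\tilde{O}(mn^{1/2}+n^2)$. Given these, I would handle each pair $(f_1,f_2)$ with $f_1=e_i\in E(\pi)$ and $f_2\notin E(\pi)$ by a nested case analysis, first on the length of $\alpha_i$, and then on the length of the ``second-level'' detour $\gamma_i$ that the optimal replacement path $\rho$ uses to avoid $f_2$ while traveling along $\alpha_i$.

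\textbf{Case A (long detour $\alpha_i$).} If $|\alpha_i|\ge L$, sample a random pivot set $U_1$ of size $\tilde{O}(n/L)$, so that with high probability $\alpha_i$ is hit by some $p\in U_1$. I would run \Cref{ssrp} from each $p\in U_1$ (and from $p$ in the reverse graph) inside $G\setminus E(\pi)$, and for each pivot $p$ and edge $f_2\notin E(\pi)$ update the estimate $\est(s,t,G\setminus\{f_1,f_2\})$ with $|\pi[s,a_i]|+\dist(a_i,p,(G\setminus E(\pi))\setminus\{f_2\})+\dist(p,b_i,(G\setminus E(\pi))\setminus\{f_2\})+|\pi[b_i,t]|$ for every choice of $a_i,b_i$. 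This accounts for the $\tilde{O}((mn^{1.5}+n^3)/L)$ term.

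\textbf{Case B (short detour $|\alpha_i|<L$).} I would partition each $\alpha_i=\beta^i_1\circ\cdots\circ\beta^i_{\lceil L/g\rceil}$ into segments of length $g$ and process one level $h$ at a time. Within level $h$ I split further on $|\gamma_i|$. In the sub-case $|\gamma_i|<g$, I would prove the disjointness lemma depicted in \Cref{overview-1fail-short2}: if $|\pi(e_i,e_j)|\ge 10g$, both $|\gamma_i|,|\gamma_j|<g$, and $f_i\in E(\beta^i_h),f_j\in E(\beta^j_h)$ sit at the same height on their detours, then $V(\gamma_i)\cap V(\gamma_j)=\emptyset$; this allows partitioning $V\setminus V(\pi)$ across the well-separated index classes $\{i:i\equiv r\pmod{10g}\}$ and running \Cref{ssrp} from the endpoints of $\beta^i_h$ inside each subgraph, contributing the $\tilde{O}(mn^{1/2}L/g+n^2L/g+mL^2/g)$ terms. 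In the sub-case $|\gamma_i|\ge g$, I would sample pivots $U_2$ of size $\tilde{O}(n/g)$, then for each level $h$ run multi-source truncated BFS from and to $U_2$ inside $G_h=G\setminus\brac{E(\pi)\cup\bigcup_i E(\beta^i_h)}$, and for each $(i,h)$ build the shortcut graph of \Cref{overview-1fail-short3} on $U_2\cup\bigcup_{|j-i|\le 10g}V(\beta^j_h)$ and Dijkstra it after enumerating the two entry/exit vertices $x_i,y_i\in V(\alpha_i)$ used by $\gamma_i$. Careful accounting of the multi-source BFS ($\tilde{O}(mnL^2/g^3)$), the Dijkstra runs with the enumeration ($\tilde{O}(L^2g^4+mLg)$), and the final propagation of the $s$-$t$ estimate along $\pi$ ($\tilde{O}(n^2L^2/g^2)$) yields the remaining terms.

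\textbf{Main obstacle.} The most delicate step is the disjointness lemma underlying the short-$\gamma_i$ sub-case: the argument must exchange sub-paths between $\rho_i$ and $\rho_j$ at a putative shared vertex and invoke shortness and canonicality of the first-level detours $\alpha_i,\alpha_j$ to derive a contradiction, which is where the $10g$ separation and the ``same height'' assumption are used quantitatively. A secondary difficulty is the reduction that lets the simplified decomposition $\rho=\pi[s,a_i]\circ\alpha_i[a_i,x_i]\circ\gamma_i\circ\alpha_i[y_i,b_i]\circ\pi[b_i,t]$ cover the general case: when $\rho$ enters or leaves $\alpha_i$ outside of $\beta^i_h$, or when the detour span $|\pi[a_i,b_i]|$ exceeds $g$, I would dispatch to auxiliary single-source replacement-path computations from the endpoints of each $\beta^i_h$ and from $U_1\cup U_2$, which absorb several of the smaller terms above.
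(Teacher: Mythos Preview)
Your plan follows the paper's overview (Section~1.3.2) rather than its actual proof, and two technical points are missed.

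\textbf{Disjointness does not give a partition.} Your disjointness lemma concerns the \emph{optimal} second-level detours $\gamma_i,\gamma_j$ for specific failures; it does not yield an a-priori partition of $V\setminus V(\pi)$, since you do not know which vertices the $\gamma_i$'s use before computing them. The paper never splits on $|\gamma_i|\lessgtr g$ and never partitions. Instead, after four preliminary cases reduce to $x_{i,f},y_{i,f}\in V(\beta^i_l)$ with $|\pi[a_i,u_i]|,|\pi[u_{i+1},b_i]|\le g$ and $|\alpha_i|<L$, it builds a single shortcut graph $H^i_l$ whose short-hop edges $\mu_X,\mu_Y$ come from truncated Dijkstra with \emph{pruning}: process the $\beta^{b+10kg}_l$ in order of $k$, prune the visited set, and justify pruning by a path-swap contradiction (if a needed vertex $z$ was pruned by an earlier search from $w$, concatenate $\alpha_{b+10jg}[*,w]$, the short $w\to z$ path, and a suffix of $\rho_{i,f}$ to beat the optimal single-failure detour for some $e_d$). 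This is the same exchange idea you propose, but used to certify an algorithmic step rather than to define a static partition. Relatedly, the terms $mn^{1/2}L/g+n^2L/g$ and $mL^2/g$ arise from those preliminary Cases~3 and~4 (wide span of $\alpha_i$, or $x_{i,f}/y_{i,f}$ outside $\beta^i_l$), not from your short-$\gamma_i$ branch; running SSRP from the vertices of every $\beta^i_h$ across all offsets and levels would overshoot these bounds.

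\textbf{A single height index is not enough for the pivot graph.} The paper indexes the pivot graph by a \emph{pair} $(p,q)\in[L/g]^2$: it removes only those $\beta^j_l$ with $l_j=l_i$, because Claim~\ref{interval} is asymmetric --- for $j>i+5g$ the forbidden segment is $\beta^j_{\,l_j-l_i+l}$, not $\beta^j_l$. With your single-indexed $G_h$ (remove $\beta^j_h$ for all $j$), a pivot hop $\rho_{i,f}[u,v]$ may legitimately use edges of $\beta^j_h$ for some $j>i+5g$ with $l_j\neq l_i$, making $\dist(u,v,G_h)$ a strict overestimate that nothing else in your shortcut graph repairs. Your quoted $\tilde O(mnL^2/g^3)$ is in fact the cost of the correct $(p,q)$-indexed computation over $(L/g)^2$ graphs, not of the single-indexed one you describe.
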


Let $\pi = \langle s = u_0, u_1, \cdots, u_{|\pi|} =t\rangle$.  First, we use \Cref{rp} to compute for each $(u_i, u_{i+1})\in E(\pi)$ the replacement path from $s$ to $t$ that avoids $e_i = (u_i, u_{i+1})$, and let $\alpha_i$ be the corresponding detour avoiding $e_i$ which starts at $a_i$ and ends at $b_i$.

For each detour $\alpha_i$, divide $\alpha_i$ into sub-paths of length $g$ (except for the last sub-path) and list them as $\beta^i_1, \cdots, \beta^i_{l_i}$, and assume $f\in E(\beta^i_l)$; later on, when we refer to $\beta^i_l$,  if $l\leq 0$ or $l>l_i$, then $\beta^i_l$ would simply be an empty path.

Throughout the algorithm, for each $0\leq i<|\pi|$ and every edge $f\in E(\alpha_i)$, we will maintain a distance estimation $\est(s, t, G\setminus \{e_i, f\})\geq \dist(s, t, G\setminus \{e_i, f\})$, and in the end it will be guaranteed that $\est(s, t, G\setminus \{e_i, f\}) = \dist(s, t, G\setminus \{e_i, f\})$.

Let $\rho_{i, f}$ be a canonical shortest replacement path for $\{e_i, f\}$. Suppose $\rho_{i, f}$ departs from $\pi[s, a_i]\circ \alpha_i\circ \pi[b_i, t]$ at vertex $x_{i, f}$ and converges with $\pi[s, a_i]\circ \alpha_i\circ \pi[b_i, t]$ at vertex $y_{i, f}$. For the rest, we address different cases depending on the properties regarding $\alpha_i, \rho_{i, f}$; note that our algorithm does not need to know which case it is in advance.

\subsection*{Case 1: $x_{i, f}\in V(\pi[s, a_i])$ or $y_{i, f}\in V(\pi[b_i, t])$}
This is an easy case of our algorithm, and the runtime of this part can be bounded by $O(mL)$. See \Cref{1fail-short-case1} for an illustration. 

\paragraph{Algorithm.} Without loss of generality, assume $x_{i, f}\in V(\pi[s, a_i])$; if $y_{i, f}\in V(\pi[b_i, t])$, then we can handle it symmetrically. Compute single-source shortest paths from $s$ in $G\setminus \left(E(\alpha_i)\cup \{e_i\}\right)$. Then, for each $f\in E(\alpha_i)$, update the distance estimation $\est(s, t, G\setminus \{e_i, f\})$ as the minimum between $\dist\left(s, t, G\setminus \left(E(\alpha_i)\cup \{e_i\}\right)\right)$ and the following term:
$$\min_{v\in V(\alpha_i(f, b_i])}\{\dist\left(s, v, G\setminus \left(E(\alpha_i)\cup \{e_i\}\right)\right) + |\alpha_i[v, b_i]| + |\pi[b_i, t]|\}$$

\paragraph{Runtime.} The runtime of the single-source shortest paths computation is bounded by $O(mL)$ summing over all $1\leq i\leq L$. To implement the updates to our distance estimates $\est(s, t, G\setminus \{e_i, f\})$, we can first compute all values:
$$\dist\left(s, v, G\setminus \left(E(\alpha_i)\cup \{e_i\}\right)\right) + |\alpha_i[v, b_i]| + |\pi[b_i, t]|$$
for each $v\in V(\alpha_i)$, and then calculate the minimum clause for each $f\in E(\alpha_i)$ in $O(\log n)$ time using standard interval data structures. The overall runtime is bounded by $O(mL)$.

\paragraph{Correctness.} Since $\rho_{i, f}$ diverges and converges with $\pi[s, a_i]\circ \alpha_i \circ \pi[b_i, t]$ for only once, the prefix-path $\rho_{i, f}[s, y_{i, f}]$ does not use any edge on the sub-path $\alpha_i$. If $\rho_{i, f}$ converges with $\alpha_i$, then this case is covered by the calculation of the minimum clause:
$$\min_{v\in V(\alpha_i(f, b_i])}\{\dist\left(s, v, G\setminus \left(E(\alpha_i)\cup \{e_i\}\right)\right) + |\alpha_i[v, b_i]| + |\pi[b_i, t]|\}$$
Otherwise, it must be $\dist\left(s, t, G\setminus \left(E(\alpha_i)\cup \{e_i\}\right)\right)$.

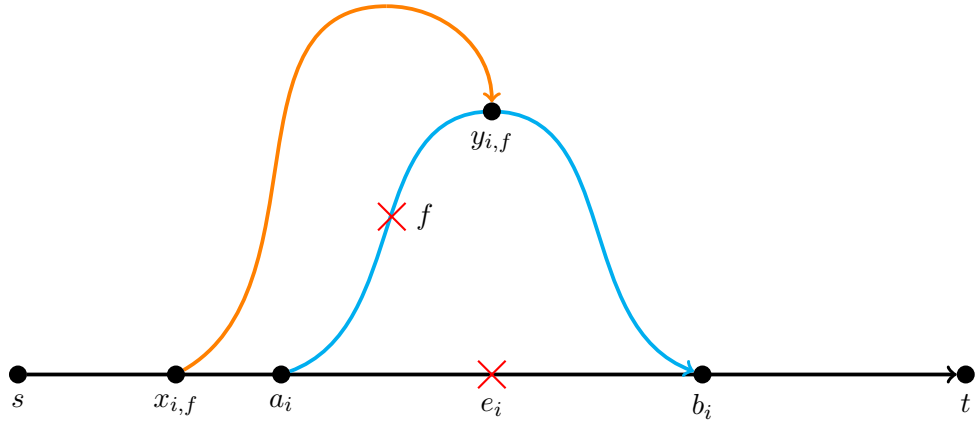
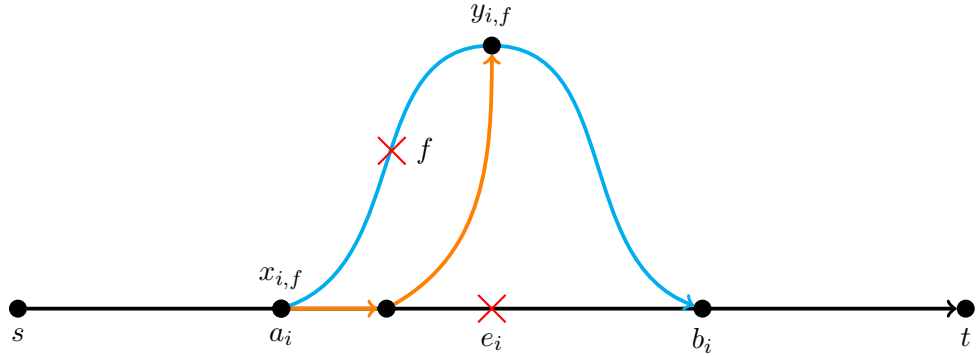
\begin{figure}
	\centering
	\subfigure[In this case, ${x_{i, f}\neq a_i}$ and ${\rho_{i, f}[x_{i, f}, *]}$ diverges from $\pi$ immediately.]{\begin{tikzpicture}[thick,scale=0.7]
	\draw (-9, 0) node(1)[circle, draw, fill=black, inner sep=0pt, minimum width=6pt, label=-90: {$s$}] {};
	\draw (9, 0) node(2)[circle, draw, fill=black, inner sep=0pt, minimum width=6pt, label=-90: {$t$}] {};
	
	\draw (-4, 0) node(3)[circle, draw, fill=black, inner sep=0pt, minimum width=6pt, label=-90: {$a_i$}] {};
	\draw (4, 0) node(4)[circle, draw, fill=black, inner sep=0pt, minimum width=6pt, label=-90: {$b_i$}] {};
	
	\draw (-6, 0) node(5)[circle, draw, fill=black, inner sep=0pt, minimum width=6pt, label=-90: {$x_{i, f}$}] {};
	\draw (0, 5) node(6)[circle, draw, fill=black, inner sep=0pt, minimum width=6pt, label=-90: {$y_{i, f}$}] {};
	
	\draw (0, 0) node[cross=6, red, label=-90: {$e_i$}] {};
	\draw (-1.9, 3) node[cross=6, red, label=0: {$f$}] {};
		
	\begin{scope}[on background layer]
		\draw [->, line width = 0.5mm] (1) to (2);
		\draw [line width = 0.5mm, color = cyan] (3) to[out=20, in=180] (0, 5);
		\draw [->, line width = 0.5mm, color = cyan] (0, 5) to[out=0, in=160] (4);
		\draw [line width = 0.5mm, color = orange] (5) to[out=30, in=180] (-2, 7);
		\draw [->, line width = 0.5mm, color = orange] (-2, 7) to[out=0, in=90] (6);
	\end{scope}
\end{tikzpicture}}
	\vspace{0.3cm}
	\subfigure[In this case, ${x_{i, f} = a_i}$ and ${\rho_{i, f}[x_{i, f}, *]}$ uses some edges on ${\pi[a_i, b_i]}$ before it diverges.]{\begin{tikzpicture}[thick,scale=0.7]
	\draw (-9, 0) node(1)[circle, draw, fill=black, inner sep=0pt, minimum width=6pt, label=-90: {$s$}] {};
	\draw (9, 0) node(2)[circle, draw, fill=black, inner sep=0pt, minimum width=6pt, label=-90: {$t$}] {};
	
	\draw (-4, 0) node(3)[circle, draw, fill=black, inner sep=0pt, minimum width=6pt, label=-90: {$a_i$}] {};
	\draw (4, 0) node(4)[circle, draw, fill=black, inner sep=0pt, minimum width=6pt, label=-90: {$b_i$}] {};
	
	\draw (-4, 0) node(5)[circle, draw, fill=black, inner sep=0pt, minimum width=6pt, label=90: {$x_{i, f}$}] {};
	\draw (0, 5) node(6)[circle, draw, fill=black, inner sep=0pt, minimum width=6pt, label=90: {$y_{i, f}$}] {};
	\draw (-2, 0) node(7)[circle, draw, fill=black, inner sep=0pt, minimum width=6pt] {};
	
	\draw (0, 0) node[cross=6, red, label=-90: {$e_i$}] {};
	\draw (-1.9, 3) node[cross=6, red, label=0: {$f$}] {};
		
	\begin{scope}[on background layer]
		\draw [->, line width = 0.5mm] (1) to (2);
		\draw [line width = 0.5mm, color = cyan] (3) to[out=20, in=180] (0, 5);
		\draw [->, line width = 0.5mm, color = cyan] (0, 5) to[out=0, in=160] (4);
		\draw [->, line width = 0.5mm, color = orange] (7) to[out=30, in=-90] (6);
		\draw [->, line width = 0.5mm, color = orange] (5) to (7);
	\end{scope}
\end{tikzpicture}}
	\caption{In this picture, the cyan path is detour $\alpha_i$, and the orange path is $\rho_{i, f}[x_{i, f}, y_{i, f}]$.}
	\label{1fail-short-case1}
\end{figure}

\subsection*{Case 2: $x_{i, f}, y_{i, f}\in V(\alpha_i)$ and $|\alpha_i|\geq L$}
This is an easy case of our algorithm, and the runtime of this part can be bounded by $\tilde{O}(\frac{mn^{1.5} + n^3}{L})$. See \Cref{1fail-short-case2} for an illustration. \paragraph{Algorithm.} Take a uniformly random vertex subset $P$ of size $O(\frac{n\log n}{L})$. For each $p\in P$, apply \Cref{ssrp} to and from $p$ in graph $G\setminus E(\pi)$. Then, for each index $1\leq i\leq l$ and $f\in E(\alpha_i)$, update $\est(s, t, G\setminus \{e_i, f\})$ as the value of
$$\min_{p\in P}\{\dist(s, a_i, G) + \dist(a_i, p, G\setminus (E(\pi)\cup \{f\})) + \dist(p, b_i, G\setminus (E(\pi)\cup \{f\})) + \dist(b_i, t, G)\}$$

According to \Cref{ssrp}, this procedure takes time $\tilde{O}(\frac{mn^{1.5} + n^3}{L})$.

\paragraph{Correctness.} Since $x_{i, f}, y_{i, f}\in \alpha_i$, we know that $a_i, b_i\in V(\rho_{i, f})$. Consider the sub-path $\rho_{i, f}[a_i, b_i]$. As $|\alpha_i|\geq L$, we know that $|\rho_{i, f}[a_i, b_i]|\geq L$. Since $P$ is a random vertex subset of size $O(\frac{n\log n}{L})$, $P\cap V(\rho_{i, f}[a_i, b_i])\neq \emptyset$. Take any vertex $p\in P\cap V(\rho_{i, f}[a_i, b_i])$, we know that:
$$|\rho_{i, f}[a_i, b_i]| = \dist(a_i, p, G\setminus (E(\pi)\cup \{f\})) + \dist(p, b_i, G\setminus (E(\pi)\cup \{f\}))$$

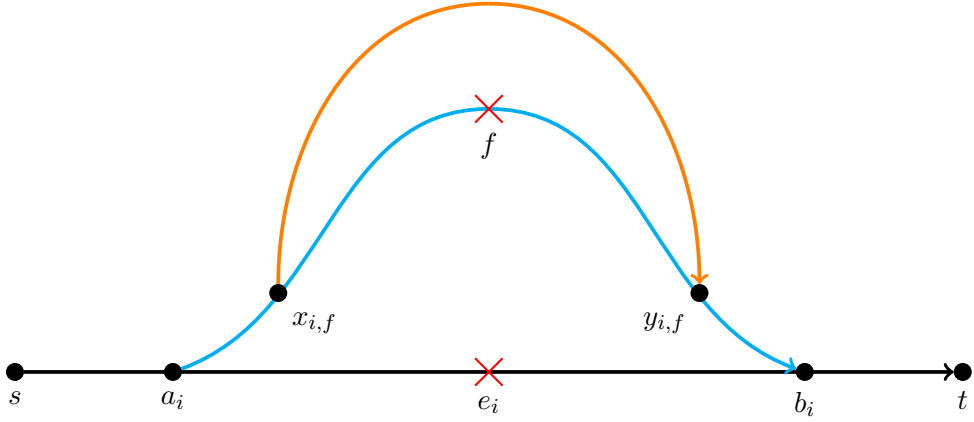
\begin{figure}
	\centering
	\begin{tikzpicture}[thick,scale=0.7]
	\draw (-9, 0) node(1)[circle, draw, fill=black, inner sep=0pt, minimum width=6pt, label=-90: {$s$}] {};
	\draw (9, 0) node(2)[circle, draw, fill=black, inner sep=0pt, minimum width=6pt, label=-90: {$t$}] {};
	
	\draw (-6, 0) node(3)[circle, draw, fill=black, inner sep=0pt, minimum width=6pt, label=-90: {$a_i$}] {};
	\draw (6, 0) node(4)[circle, draw, fill=black, inner sep=0pt, minimum width=6pt, label=-90: {$b_i$}] {};
	
	\draw (-4, 1.5) node(5)[circle, draw, fill=black, inner sep=0pt, minimum width=6pt, label=-70: {$x_{i, f}$}] {};
	\draw (4, 1.5) node(6)[circle, draw, fill=black, inner sep=0pt, minimum width=6pt, label=-110: {$y_{i, f}$}] {};

	
	\draw (0, 0) node[cross=6, red, label=-90: {$e_i$}] {};
	\draw (0, 5) node[cross=6, red, label=-90: {$f$}] {};

	\begin{scope}[on background layer]
		\draw [->, line width = 0.5mm] (1) to (2);
		\draw [line width = 0.5mm, color = cyan] (3) to[out=20, in=180] (0, 5);
		\draw [->, line width = 0.5mm, color = cyan] (0, 5) to[out=0, in=160] (4);
		\draw [line width = 0.5mm, color = orange] (5) to[out=90, in=180] (0, 7);
		\draw [->, line width = 0.5mm, color = orange] (0, 7) to[out=0, in=90] (6);
	\end{scope}
\end{tikzpicture}
	\caption{The sub-path $\rho_{i, f}[a_i, b_i]$ has length greater than $L$.}
	\label{1fail-short-case2}
\end{figure}

\subsection*{Case 3: $x_{i, f}, y_{i, f}\in V(\alpha_i)$, plus that $|\pi[a_i, u_i]| > g$ or $|\pi[u_{i+1}, b_i]| > g$}
This is an easy case of our algorithm, and the runtime of this part can be bounded by $$\tilde{O}\brac{m\sqrt{n}L/g + n^2L/g + L^2}$$ See \Cref{1fail-short-case3} for an illustration.

\paragraph{Algorithm.} Without loss of generality, assume that $|\pi[a_i, u_i]| > g$. Then, there exists an integer $k$ and vertex $v\in \pi[a_i, u_i]$ such that $|\pi[s, v]| = k\cdot g$. Apply \Cref{ssrp} on source $s$ in graph $G\setminus E(\pi[v, t])$. After that, update the estimation value $\est(s, t, G\setminus \{e_i, f\})$ as
$$\est(s, t, G\setminus \{e_i, f\})\leftarrow\min\{\dist(s, b_i, G\setminus \left(E(\pi[v, t])\cup \{f\}\right)) + |\pi[b_i, t]|, \est(s, t, G\setminus \{e_i, f\})\}$$

\paragraph{Runtime.} The runtime of all instances of single-source replacement paths is bounded by: $$\tilde{O}(m\sqrt{n}L/g + n^2L/g)$$
Calculating the values of $\est(s, t, G\setminus \{e_i, f\})$ can be done in total time $\tilde{O}(L^2)$.

\paragraph{Correctness.} Since $a_i$ lies between $s$ and $v$ on $E(\pi)$, plus that $x_{i, f}\in V(\alpha_i)$, we know that the prefix path $\rho_{i, f}[s, b_i]$ does not use any edges on sub-path $\pi[v, t]$. Therefore, the formula calculates the correct length of $\rho_{i, f}$.

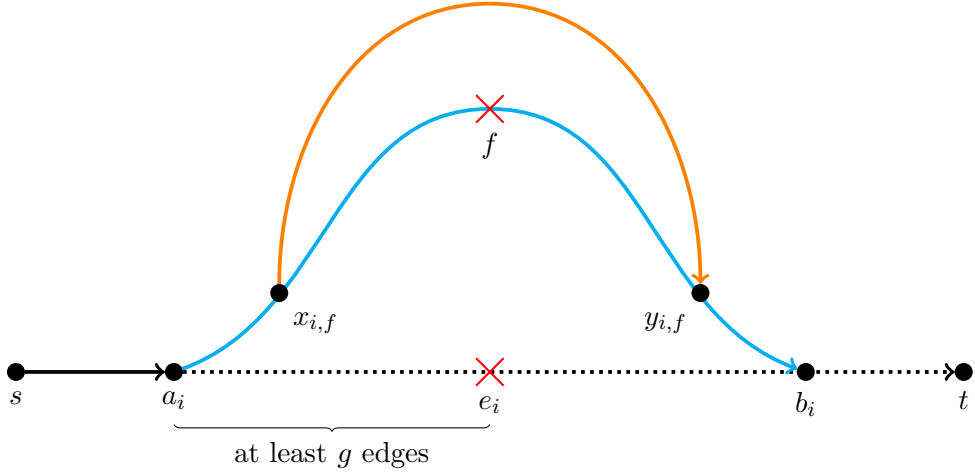
\begin{figure}
	\centering
	\begin{tikzpicture}[thick,scale=0.7]
	\draw (-9, 0) node(1)[circle, draw, fill=black, inner sep=0pt, minimum width=6pt, label=-90: {$s$}] {};
	\draw (9, 0) node(2)[circle, draw, fill=black, inner sep=0pt, minimum width=6pt, label=-90: {$t$}] {};
	
	\draw (-6, 0) node(3)[circle, draw, fill=black, inner sep=0pt, minimum width=6pt, label=-90: {$a_i$}] {};
	\draw (6, 0) node(4)[circle, draw, fill=black, inner sep=0pt, minimum width=6pt, label=-90: {$b_i$}] {};
	
	\draw (-4, 1.5) node(5)[circle, draw, fill=black, inner sep=0pt, minimum width=6pt, label=-70: {$x_{i, f}$}] {};
	\draw (4, 1.5) node(6)[circle, draw, fill=black, inner sep=0pt, minimum width=6pt, label=-110: {$y_{i, f}$}] {};
	
	
	\draw (0, 0) node[cross=6, red, label=-90: {$e_i$}] {};
	\draw (0, 5) node[cross=6, red, label=-90: {$f$}] {};

	\begin{scope}[on background layer]
		\draw [->, line width = 0.5mm] (1) to (3);
		\draw [->, line width = 0.5mm, dotted] (3) to (2);
		\draw [line width = 0.5mm, color = cyan] (3) to[out=20, in=180] (0, 5);
		\draw [->, line width = 0.5mm, color = cyan] (0, 5) to[out=0, in=160] (4);
		\draw [line width = 0.5mm, color = orange] (5) to[out=90, in=180] (0, 7);
		\draw [->, line width = 0.5mm, color = orange] (0, 7) to[out=0, in=90] (6);
		
		\draw [decorate,
		decoration = {brace}] (0,-1) -- (-6,-1);
		\draw (-3, -2.2) node[label={at least $g$ edges}]{};
	\end{scope}
\end{tikzpicture}
	\caption{In this picture, $x_{i, f}, y_{i, f}\in V(\alpha_i)$ plus that $|\pi[a_i, u_i]| > g$.}
	\label{1fail-short-case3}
\end{figure}

\subsection*{Case 4: $x_{i, f}, y_{i, f}\in V(\alpha_i)$, plus that $x_{i, f}\notin V(\beta^i_l)$ or $y_{i, f}\notin V(\beta^i_l)$, and $|\alpha_i| < L$}

This is an easy case of our algorithm, and the runtime of this part can be bounded by $\tilde{O}(mL^2/g)$. See \Cref{1fail-short-case4} for an illustration.

\paragraph{Algorithm.} Without loss of generality, assume $x_{i, f}\notin V(\beta^i_l)$. Compute single-source shortest paths from $s$ in graph $G\setminus \left(E(\beta^i_l)\cup\{e_i\}\right)$. Then, for each $f\in E(\beta^i_l)$, update the estimation $\est(s, t, G\setminus \{e_i, f\})$ with the minimum clause:
$$\min_{v\in V(\alpha_i(f, b_i])}\left\{\dist\left(s, v, G\setminus \left(E(\beta^i_l)\cup\{e_i\}\right)\right) +  |\alpha_i[v, b_i]| + \dist(b_i, t, G)\right\}$$

\paragraph{Runtime.} The runtime of all instances of single-source shortest paths from $s$ takes time $O(mL^2/g)$. To evaluate all the values $\est(s, t, G\setminus \{e_i, f\})$, we can preprocess each sub-path $\beta^i_j$ in $\tilde{O}(g)$ time so that each value of the minimum can be evaluated in $O(\log L)$ time. Hence, the total runtime of this piece of algorithm is $O(mL^2/g + L^2\log L) = \tilde{O}(mL^2/g)$.

\paragraph{Correctness.} Since $x_{i, f}\notin \beta^i_j$, the shortest path $\rho_{i, f}[s, y_{i, f}]$ must have skipped the entire sub-path $\beta^i_j$. Therefore, when we pick the right guess $v = y_{i, f}$, we have:
$$|\rho_{i, f}| = \dist\left(s, v, G\setminus \left(E(\beta^i_j)\cup\{e_i\}\right)\right) +  |\alpha_i[v, b_i]| + \dist(b_i, t, G)$$

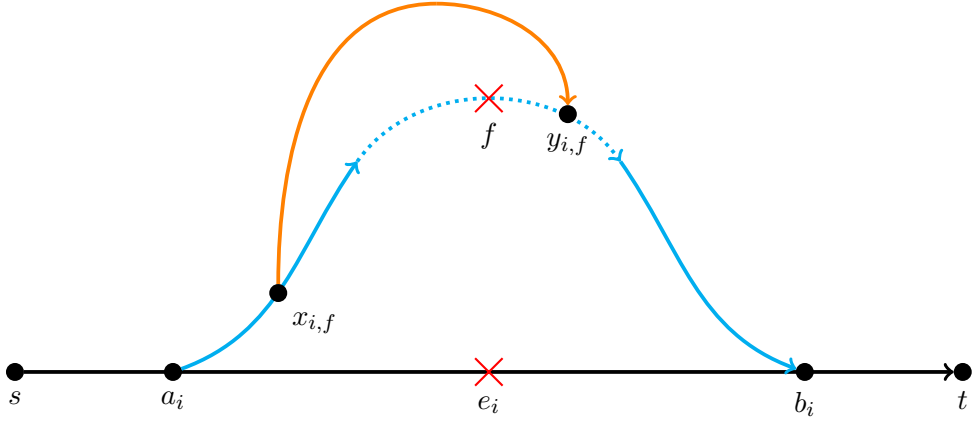
\begin{figure}
	\centering
	\begin{tikzpicture}[thick,scale=0.7]
	\draw (-9, 0) node(1)[circle, draw, fill=black, inner sep=0pt, minimum width=6pt, label=-90: {$s$}] {};
	\draw (9, 0) node(2)[circle, draw, fill=black, inner sep=0pt, minimum width=6pt, label=-90: {$t$}] {};
	
	\draw (-6, 0) node(3)[circle, draw, fill=black, inner sep=0pt, minimum width=6pt, label=-90: {$a_i$}] {};
	\draw (6, 0) node(4)[circle, draw, fill=black, inner sep=0pt, minimum width=6pt, label=-90: {$b_i$}] {};
	
	\draw (-4, 1.5) node(5)[circle, draw, fill=black, inner sep=0pt, minimum width=6pt, label=-70: {$x_{i, f}$}] {};
	\draw (1.5, 4.9) node(6)[circle, draw, fill=black, inner sep=0pt, minimum width=6pt, label=-90: {$y_{i, f}$}] {};
		
	\draw (0, 0) node[cross=6, red, label=-90: {$e_i$}] {};
	\draw (0, 5.2) node[cross=6, red, label=-90: {$f$}] {};
	
	\begin{scope}[on background layer]
		\draw [->, line width = 0.5mm] (1) to (2);
		\draw [->, line width = 0.5mm, color = cyan] (3) to[out=20, in=235] (-2.5, 4);
		\draw [->, line width = 0.5mm, color = cyan] (2.5, 4) to[out=-55, in=160] (4);
		\draw [->, line width = 0.5mm, color = cyan, dotted] (-2.5, 4) to[out=55, in=125] (2.5, 4);
		\draw [line width = 0.5mm, color = orange] (5) to[out=90, in=180] (-1, 7);
		\draw [->, line width = 0.5mm, color = orange] (-1, 7) to[out=0, in=90] (6);
	\end{scope}
\end{tikzpicture}
	\caption{The dotted cyan path is $\beta^i_l$.}
	\label{1fail-short-case4}
\end{figure}

\subsection*{Main case: $|\pi[a_i, u_i]|, |\pi[u_{i+1}, b_i]| \leq g$, and $x_{i, f}, y_{i, f}\in V(\beta^i_l)$, and $|\alpha_i| < L$}
\paragraph{Algorithm.} Let $I\subseteq [L]$ be the set of all indices such that $|\pi[a_i, u_i]|, |\pi[u_{i+1}, b_i]| \leq g$ and $|\alpha_i|< L$. For each pair of indices $(p, q)\in [L/g]\times[L/g]$, define the set of sub-paths:
$$\mathcal{P}_{p, q} = \{\beta^i_l\mid(l, l_i - l+1) =  (p, q), i\in I\}$$

\begin{enumerate}[(1),leftmargin=*]
	\item Let $U\subseteq V$ be the uniformly random subset of size $\frac{10n\log n}{g}$. Then, for each pair of indices $p, q\in [L/g]$, define the graph:
	$$G_{p, q} = G\setminus \left(E(\pi)\cup\bigcup_{\beta \in \mathcal{P}_{p, q}}E(\beta)\right)$$
	Then, for each vertex $u\in U$, compute single-source shortest paths to and from $u$ in $G_{p, q}$.

	\item For any index $1\leq l\leq \ceil{L/g}$, and for any offset $1\leq b \leq 10g$, initialize two sets of vertices $A_{b, l}, B_{b, l}\leftarrow V$. Then, go over the sequence of all sub-paths $\beta^{b}_l, \beta^{b+10g}_l, \cdots, \beta^{b+10hg}_l$, where $h \leq \ceil{\frac{L}{10g}}$.
	
	Next, for each sub-path $\beta^{b+10kg}_l$, define the following two graphs
	$$X^{b+10kg}_l = G\setminus \left(E(\pi)\cup \bigcup_{j\in [b+10kg, b+(10k+5)g]\cap I}E(\beta^{j}_l)\right)$$
	$$Y^{b+10kg}_l = G\setminus \left(E(\pi)\cup \bigcup_{j\in [b+(10k-5)g, b+10kg]\cap I}^{}E(\beta^{j}_l)\right)$$
	See \Cref{1fail-short-case5-2} for an illustration.
	
	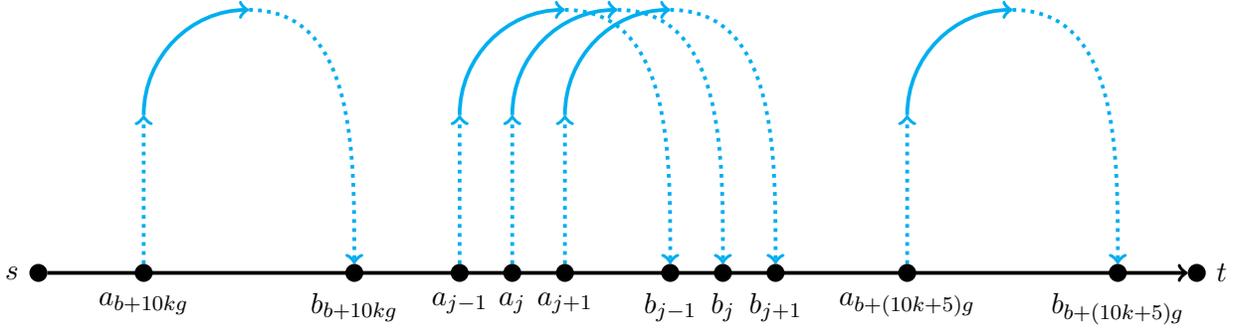
\begin{figure}
		\centering
		\begin{tikzpicture}[thick,scale=0.7]
	\draw (-11, 0) node(1)[circle, draw, fill=black, inner sep=0pt, minimum width=6pt, label=180: {$s$}] {};
	\draw (11, 0) node(2)[circle, draw, fill=black, inner sep=0pt, minimum width=6pt, label=0: {$t$}] {};
	
	\draw (-2, 0) node(3)[circle, draw, fill=black, inner sep=0pt, minimum width=6pt, label=-90: {$a_j$}] {};
	\draw (2, 0) node(4)[circle, draw, fill=black, inner sep=0pt, minimum width=6pt, label=-90: {$b_j$}] {};
		
	\draw (-3, 0) node(5)[circle, draw, fill=black, inner sep=0pt, minimum width=6pt, label=-90: {$a_{j-1}$}] {};
	\draw (1, 0) node(6)[circle, draw, fill=black, inner sep=0pt, minimum width=6pt, label=-90: {$b_{j-1}$}] {};
	
	\draw (-1, 0) node(7)[circle, draw, fill=black, inner sep=0pt, minimum width=6pt, label=-90: {$a_{j+1}$}] {};
	\draw (3, 0) node(8)[circle, draw, fill=black, inner sep=0pt, minimum width=6pt, label=-90: {$b_{j+1}$}] {};
	
	\draw (-9, 0) node(9)[circle, draw, fill=black, inner sep=0pt, minimum width=6pt, label=-90: {$a_{b+10kg}$}] {};
	\draw (-5, 0) node(10)[circle, draw, fill=black, inner sep=0pt, minimum width=6pt, label=-90: {$b_{b+10kg}$}] {};
	
	\draw (5.5, 0) node(11)[circle, draw, fill=black, inner sep=0pt, minimum width=6pt, label=-90: {$a_{b+(10k+5)g}$}] {};
	\draw (9.5, 0) node(12)[circle, draw, fill=black, inner sep=0pt, minimum width=6pt, label=-90: {$b_{b+(10k+5)g}$}] {};
	
	
	
	
	\begin{scope}[on background layer]
		\draw [->, line width = 0.5mm] (1) to (2);
		
		\draw [->, line width = 0.5mm, color = cyan, dotted] (3) to[out=90, in=-90] (-2, 3);
		\draw [->, line width = 0.5mm, color = cyan] (-2, 3) to[out=90, in=180] (0, 5);
		\draw [->, line width = 0.5mm, color = cyan, dotted] (0, 5) to[out=0, in=90] (4);
		
		\draw [->, line width = 0.5mm, color = cyan, dotted] (5) to[out=90, in=-90] (-3, 3);
		\draw [->, line width = 0.5mm, color = cyan] (-3, 3) to[out=90, in=180] (-1, 5);
		\draw [->, line width = 0.5mm, color = cyan, dotted] (-1, 5) to[out=0, in=90] (6);

		\draw [->, line width = 0.5mm, color = cyan, dotted] (7) to[out=90, in=-90] (-1, 3);
		\draw [->, line width = 0.5mm, color = cyan] (-1, 3) to[out=90, in=180] (1, 5);
		\draw [->, line width = 0.5mm, color = cyan, dotted] (1, 5) to[out=0, in=90] (8);
		
		\draw [->, line width = 0.5mm, color = cyan, dotted] (9) to[out=90, in=-90] (-9, 3);
		\draw [->, line width = 0.5mm, color = cyan] (-9, 3) to[out=90, in=180] (-7, 5);
		\draw [->, line width = 0.5mm, color = cyan, dotted] (-7, 5) to[out=0, in=90] (10);
		
		\draw [->, line width = 0.5mm, color = cyan, dotted] (11) to[out=90, in=-90] (5.5, 3);
		\draw [->, line width = 0.5mm, color = cyan] (5.5, 3) to[out=90, in=180] (7.5, 5);
		\draw [->, line width = 0.5mm, color = cyan, dotted] (7.5, 5) to[out=0, in=90] (12);
		
	\end{scope}
\end{tikzpicture}
		\caption{Graph $X_{l}^{b+10kg}$ excludes all edges in $E(\pi)\cup \bigcup_{j=b+10kg, j\in I}^{b+(10k+5)g}E(\beta^{j}_l)$ which are drawn and black and cyan solid paths.}
		\label{1fail-short-case5-2}
	\end{figure}
	
	\begin{enumerate}[(a),leftmargin=*]
		\item Then, for each vertex $v\in V\brac{\beta^{b+10kg}_l}$, perform a truncated Dijkstra at $v$ in the induced subgraph $X^{b+10kg}_l[A_{b, l}]$ up to depth $g$. For each $z\in A_{b, l}$, record the distance value
		$$\mu_X(v, z)\leftarrow \dist\brac{v, z, X^{b+10kg}_l[A_{b, l}]}$$
		if $\dist\brac{v, z, X^{b+10kg}_l[A_{b, l}]}\leq g$.
		
		After we have visited all vertices $v\in V\brac{\beta^{b+10kg}_l}$, let $P^k_{b, l}\subseteq A_{b, l}$ be the set of all vertices searched by truncated Dijkstra of any $v\in V\brac{\beta^{b+10kg}_l}$. Then, prune the vertex set
		$$A_{b, l}\leftarrow A_{b, l}\setminus P^k_{b, l}$$
		
		\item Symmetrically, for each vertex $v\in V\brac{\beta^{b+10kg}_l}$, perform a truncated Dijkstra at $v$ in the induced subgraph $Y^{b+10kg}_l[B_{b, l}]$ up to depth $g$. For each $z\in B_{b, l}$, record the distance value
		$$\mu_Y(v, z)\leftarrow \dist\brac{v, z, Y^{b+10kg}_l[B_{b, l}]}$$
		if $\dist\brac{v, z, Y^{b+10kg}_l[B_{b, l}]}\leq g$.
		
		After we have visited all vertices $v\in V\brac{\beta^{b+10kg}_l}$, let $Q^k_{b, l}\subseteq B_{b, l}$ be the set of all vertices searched by truncated Dijkstra of any $v\in V\brac{\beta^{b+10kg}_l}$. Then, prune the vertex set
		$$B_{b, l}\leftarrow B_{b, l}\setminus Q^k_{b, l}$$
	\end{enumerate}
	
	\item For any index $i\in I$ and index $1\leq l\leq l_i/g$, let us build a shortcut digraph $H^i_l$ with edge weight function $\wts$ in the following manner. See \Cref{1fail-short-case5-3} for an illustration.
	
	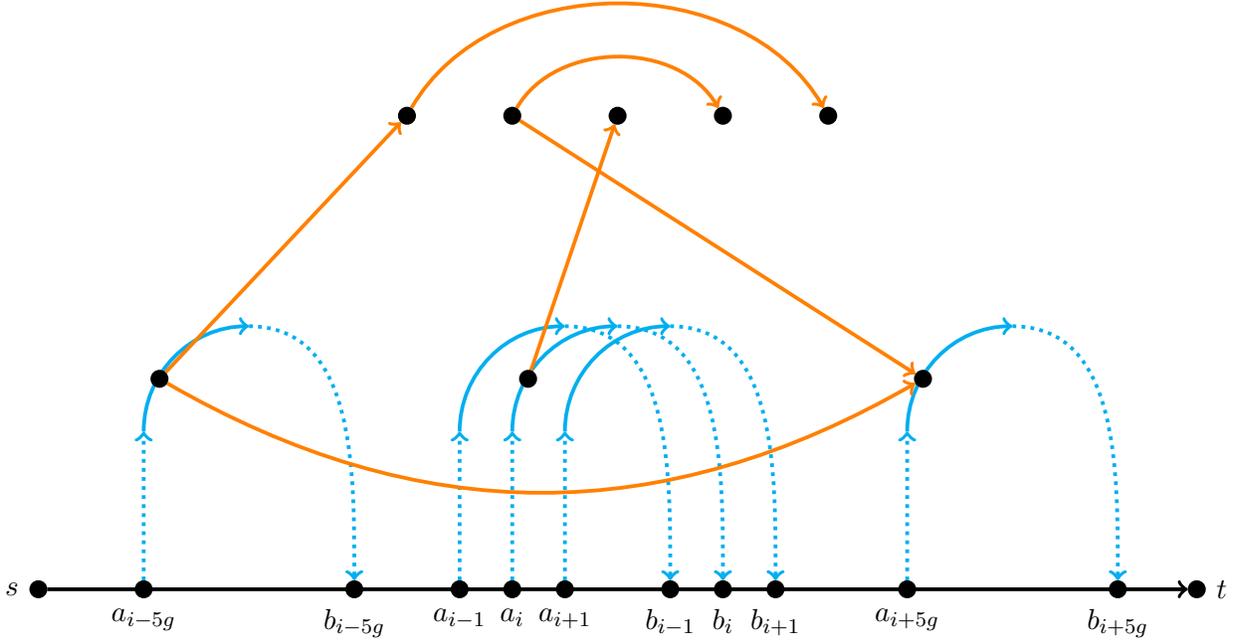
\begin{figure}
		\centering
		\begin{tikzpicture}[thick,scale=0.7]
	\draw (-11, 0) node(1)[circle, draw, fill=black, inner sep=0pt, minimum width=6pt, label=180: {$s$}] {};
	\draw (11, 0) node(2)[circle, draw, fill=black, inner sep=0pt, minimum width=6pt, label=0: {$t$}] {};
	
	\draw (-2, 0) node(3)[circle, draw, fill=black, inner sep=0pt, minimum width=6pt, label=-90: {$a_i$}] {};
	\draw (2, 0) node(4)[circle, draw, fill=black, inner sep=0pt, minimum width=6pt, label=-90: {$b_i$}] {};
	
	\draw (-3, 0) node(5)[circle, draw, fill=black, inner sep=0pt, minimum width=6pt, label=-90: {$a_{i-1}$}] {};
	\draw (1, 0) node(6)[circle, draw, fill=black, inner sep=0pt, minimum width=6pt, label=-90: {$b_{i-1}$}] {};
	
	\draw (-1, 0) node(7)[circle, draw, fill=black, inner sep=0pt, minimum width=6pt, label=-90: {$a_{i+1}$}] {};
	\draw (3, 0) node(8)[circle, draw, fill=black, inner sep=0pt, minimum width=6pt, label=-90: {$b_{i+1}$}] {};
	
	\draw (-9, 0) node(9)[circle, draw, fill=black, inner sep=0pt, minimum width=6pt, label=-90: {$a_{i-5g}$}] {};
	\draw (-5, 0) node(10)[circle, draw, fill=black, inner sep=0pt, minimum width=6pt, label=-90: {$b_{i-5g}$}] {};
	
	\draw (5.5, 0) node(11)[circle, draw, fill=black, inner sep=0pt, minimum width=6pt, label=-90: {$a_{i+5g}$}] {};
	\draw (9.5, 0) node(12)[circle, draw, fill=black, inner sep=0pt, minimum width=6pt, label=-90: {$b_{i+5g}$}] {};

	\draw (-4, 9) node(13)[circle, draw, fill=black, inner sep=0pt, minimum width=6pt] {};	
	\draw (-2, 9) node(14)[circle, draw, fill=black, inner sep=0pt, minimum width=6pt] {};
	\draw (0, 9) node(15)[circle, draw, fill=black, inner sep=0pt, minimum width=6pt] {};
	\draw (2, 9) node(16)[circle, draw, fill=black, inner sep=0pt, minimum width=6pt] {};
	\draw (4, 9) node(17)[circle, draw, fill=black, inner sep=0pt, minimum width=6pt] {};

	\draw (-8.7, 4) node(18)[circle, draw, fill=black, inner sep=0pt, minimum width=6pt] {};	
	\draw (-1.7, 4) node(19)[circle, draw, fill=black, inner sep=0pt, minimum width=6pt] {};
	\draw (5.8, 4) node(20)[circle, draw, fill=black, inner sep=0pt, minimum width=6pt] {};
	
	
	
	
	\begin{scope}[on background layer]
		\draw [->, line width = 0.5mm] (1) to (2);
		
		\draw [->, line width = 0.5mm, color = cyan, dotted] (3) to[out=90, in=-90] (-2, 3);
		\draw [->, line width = 0.5mm, color = cyan] (-2, 3) to[out=90, in=180] (0, 5);
		\draw [->, line width = 0.5mm, color = cyan, dotted] (0, 5) to[out=0, in=90] (4);
		
		\draw [->, line width = 0.5mm, color = cyan, dotted] (5) to[out=90, in=-90] (-3, 3);
		\draw [->, line width = 0.5mm, color = cyan] (-3, 3) to[out=90, in=180] (-1, 5);
		\draw [->, line width = 0.5mm, color = cyan, dotted] (-1, 5) to[out=0, in=90] (6);
		
		\draw [->, line width = 0.5mm, color = cyan, dotted] (7) to[out=90, in=-90] (-1, 3);
		\draw [->, line width = 0.5mm, color = cyan] (-1, 3) to[out=90, in=180] (1, 5);
		\draw [->, line width = 0.5mm, color = cyan, dotted] (1, 5) to[out=0, in=90] (8);
		
		\draw [->, line width = 0.5mm, color = cyan, dotted] (9) to[out=90, in=-90] (-9, 3);
		\draw [->, line width = 0.5mm, color = cyan] (-9, 3) to[out=90, in=180] (-7, 5);
		\draw [->, line width = 0.5mm, color = cyan, dotted] (-7, 5) to[out=0, in=90] (10);
		
		\draw [->, line width = 0.5mm, color = cyan, dotted] (11) to[out=90, in=-90] (5.5, 3);
		\draw [->, line width = 0.5mm, color = cyan] (5.5, 3) to[out=90, in=180] (7.5, 5);
		\draw [->, line width = 0.5mm, color = cyan, dotted] (7.5, 5) to[out=0, in=90] (12);
		
		\draw [->, line width = 0.5mm, color = orange] (18) to (13);
		\draw [->, line width = 0.5mm, color = orange] (14) to (20);
		\draw [->, line width = 0.5mm, color = orange] (19) to (15);
		\draw [->, line width = 0.5mm, color = orange] (18) to[out=-30, in=210] (20);
		\draw [->, line width = 0.5mm, color = orange] (14) to[out=60, in=120] (16);
		\draw [->, line width = 0.5mm, color = orange] (13) to[out=60, in=120] (17);
		
	\end{scope}
\end{tikzpicture}
		\caption{Vertices on the topmost are in $U$, and orange edges represent shortcut edges in $H^i_l$.}
		\label{1fail-short-case5-3}
	\end{figure}

	\begin{itemize}[leftmargin=*]
		\item \textbf{Vertices.} Add all vertices in sub-paths $V(\beta^j_l), \forall j\in [i-5g, i+5g]\cap I$, as well as all pivot vertices in $U$ to $H^i_l$.
		
		\item \textbf{Edges.} First, add to $E(H^i_l)$ all the sub-paths:
		$$\left(\bigcup_{j\in[i-5g, i+5g]\cap I}^{}E\brac{\beta^j_l}\right)\setminus E\brac{\beta^i_l}$$ Then, add the following three types of weighted edges.
		\begin{enumerate}[(a),leftmargin=*]
			\item For any $u\in U$ and any vertex $v\in V(H^i_l)$, add an edge $(u, v)$ with edge weight: $$\wts(u, v) = \dist(u, v, G_{l, l_i-l+1})$$
			and edge $(v, u)$ with edge weight:
			$$\wts(v, u) = \dist(v, u, G_{l, l_i-l+1})$$
			
			\item For any pair of vertices $u, v\in V(H^i_l)$ where $u\in V(\beta^j_l), j\in [i-5g, i]$, add an edge $(u, v)$, and assign a weight:
			$$\wts(u, v) =  \mu_X(u, v)$$
			If $\mu_X(u, v)$ was not assigned in Step (2), then it is infinity by default.
			
			\item For any pair of vertices $u, v\in V(H^i_l)$ where $u\in V(\beta^j_l), j\in [i, i+5g]$, add an edge $(u, v)$, and assign a weight:
			$$\wts(u, v) =  \mu_Y(u, v)$$
			If $\mu_Y(u, v)$ was not assigned in Step (2), then it is infinity by default.
		\end{enumerate}
	\end{itemize}
	After that, for each vertex $z\in V\brac{\beta^i_l}$, apply single-source shortest path on $z$ in $H^i_l$. In this way, for every pair of vertices $x, y\in V\brac{\beta^i_l}$, we have computed a distance $\dist\brac{x, y, H^i_l}$.
	
	Finally, for every edge $f\in E\brac{\beta^i_l}$, update the estimation $\est(s, t, G\setminus \{e_i, f\})$ with:
	$$\min_{x\in V(\beta^i_l[*, f)), y\in V(\beta^i_l(f, *])}\left\{\dist(s, x, G\setminus \{e_i\}) + \dist\brac{x, y, H^i_l} + \dist(y, t, G\setminus \{e_i\})\right\}$$
	
\end{enumerate}

\paragraph{Runtime.} The runtime of Step (1) is bounded by $\tilde{O}(|U|\cdot mL^2/g^2) = \tilde{O}(\frac{mnL^2}{g^3})$. As for the runtime of Step (2), consider any offset $1\leq b\leq 10g$. We argue that the Dijkstra procedure for all vertices on sub-paths $\beta^{b}_l, \beta^{b+10g}_l, \cdots, \beta^{b+10hg}_l$ has runtime at most $O(mg)$; this is because every vertex in $V$ is visited by at most $O(g)$ instances of Dijkstra rooted at vertices from some sub-paths $\beta^{b+10kg}_l$. Therefore, the overall runtime of Step (2) is bounded by $O(mLg)$ summing over all $1\leq b\leq 10g$ and $1\leq l\leq L/g$.

Finally, let us estimate the runtime of Step (3). For each index $i\in I$, there are at most $L/g$ different sub-paths $\beta^i_l$ as $|\alpha_i|\leq L$. By definition, the shortcut digraph $H_i$ contains at most $O(g^2+\frac{n\log n}{g})$ vertices, and so the number of edges within is bounded by $\tilde{O}(g^4 + \frac{n^2}{g^2})$, and hence the runtime of multi-source shortest paths computation for all vertices in $V(\beta^i_l)$ in $H^i_l$ is $\tilde{O}(g^5 + \frac{n^2}{g})$. Then, for each $f\in E(\beta^i_l)$, the time of calculating the formula
$$\min_{x\in V(\beta^i_l[*, f)), y\in V(\beta^i_l(f, *])}\left\{\dist(s, x, G\setminus \{e_i\}) + \dist(x, y, H_i) + \dist(y, t, G\setminus \{e_i\})\right\}$$
is bounded by $O(g^2)$. Summing over all $i, l$ and $f\in E(\beta^i_l)$, the runtime of this part is bounded by $\tilde{O}(L^2g^4 + \frac{n^2L^2}{g^2})$.

Taking a summation, the overall runtime for this case is at most:
$$\tilde{O}\left(mnL^2/g^3 + mLg + L^2g^4 + n^2L^2/g^2\right)$$

\paragraph{Correctness.} To prove that our algorithm computes the correct value for $|\rho_{i, f}|$, it suffices to prove that $\dist\brac{x_{i, f}, y_{i, f}, H^i_l} = |\rho_{i, f}[x_{i, f}, y_{i, f}]|$. First we argue that $\dist\brac{x_{i, f}, y_{i, f}, H^i_l} \geq |\rho_{i, f}[x_{i, f}, y_{i, f}]|$ due to the following reason.

\begin{claim}
	Any weighted edge $(u, v)$ in $H^i_{l}$ corresponds to a path from $u$ to $v$ in $G$ that does not contain any edge in $E(\pi)\cup E(\beta^i_l)$.
\end{claim}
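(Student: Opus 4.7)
The plan is to enumerate the four kinds of edges that populate $H^i_l$ and verify the stated property for each separately; the argument is essentially bookkeeping on the index ranges used in Step (2) and Step (3).

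First, consider the ``plain'' edges inserted in Step (3) from the union $\brac{\bigcup_{j\in [i-5g, i+5g]\cap I} E(\beta^j_l)}\setminus E(\beta^i_l)$. These are single edges of $G$. They trivially avoid $E(\beta^i_l)$ by explicit exclusion, and they avoid $E(\pi)$ because each $\alpha_j$ is a detour lying in $G\setminus E(\pi)$, hence so is every $\beta^j_l\subseteq \alpha_j$. Next, for a type (a) shortcut edge $(u,v)$ (or $(v,u)$) with $u\in U$, the weight equals $\dist(u, v, G_{l, l_i-l+1})$. By the definition of $G_{l, l_i-l+1}$ in Step (1), this graph deletes $E(\pi)$ together with $\bigcup_{\beta\in \mathcal{P}_{l,l_i-l+1}} E(\beta)$. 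Since $i\in I$ and the index pair of $\beta^i_l$ is exactly $(l, l_i-l+1)$, we have $\beta^i_l\in \mathcal{P}_{l, l_i-l+1}$, so $E(\beta^i_l)$ is deleted as well. Hence any realizing path avoids $E(\pi)\cup E(\beta^i_l)$.

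For a type (b) shortcut edge $(u,v)$ with $u\in V(\beta^j_l)$ and $j\in [i-5g, i]$, write $j = b+10kg$ for the unique offset $b\in [1, 10g]$ and integer $k\geq 0$ used in Step (2). The weight $\mu_X(u,v)$ equals $\dist\brac{u,v,X^{j}_l[A_{b,l}]}$, so any realizing path lies in $X^j_l$, which by construction deletes $E(\pi)\cup \bigcup_{j'\in [j, j+5g]\cap I} E(\beta^{j'}_l)$. Because $j\leq i\leq j+5g$ and $i\in I$, the set $E(\beta^i_l)$ is among the deleted edges, so the path avoids $E(\pi)\cup E(\beta^i_l)$ as required. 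Type (c) edges $(u,v)$ with $u\in V(\beta^j_l)$ and $j\in [i, i+5g]$ are handled symmetrically via $Y^j_l$, which deletes $E(\pi)\cup \bigcup_{j'\in [j-5g, j]\cap I} E(\beta^{j'}_l)$; since $j-5g\leq i\leq j$ and $i\in I$, again $E(\beta^i_l)$ is deleted.

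The main (and only) subtlety is the index-containment check in the type (b) and type (c) cases, namely confirming that the subscript $i$ of the current shortcut graph $H^i_l$ lies inside the window of indices whose $\beta^{\cdot}_l$ edges are removed in $X^j_l$ or $Y^j_l$. Once that containment is verified via $j\in [i-5g,i]\Rightarrow i\in [j,j+5g]$ (resp. $j\in [i,i+5g]\Rightarrow i\in [j-5g,j]$) together with $i\in I$, the claim follows by taking cases on the type of edge.
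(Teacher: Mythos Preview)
Your proof is correct and follows essentially the same case analysis as the paper's proof: check each edge type and verify that the graph in which the corresponding path lives excludes $E(\pi)\cup E(\beta^i_l)$. In fact you are slightly more thorough than the paper, since you also treat the ``plain'' edges coming from $\bigcup_{j}E(\beta^j_l)\setminus E(\beta^i_l)$, which the paper's proof omits; your explicit index-containment check $j\in[i-5g,i]\Rightarrow i\in[j,j+5g]$ (and its symmetric version) is exactly the computation the paper leaves implicit.
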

\begin{proof}
	If the weighted edge $(u, v)$ was defined on Step (3)(a), then it corresponds to a shortest path in $G_{l, l_i-l+1}$ which excludes all edges in $E(\pi)\cup E(\beta^i_l)$.
	
	If the weighted edge $(u, v)$ was defined on Step (3)(b), suppose $u\in V(\beta^j_l)$ where $j\in [i-5g, i]$, then by definition of $\wts(u, v) = \mu_X(u, v)$, which is equal to the length of a path in graph $X^j_l$ which excludes all edges in $E(\pi)\cup E(\beta^i_l)$.
	
	Symmetrically, if the weighted edge $(u, v)$ was defined on Step (3)(c), suppose $u\in V(\beta^j_l)$ where $j\in [i, i+5g]$, then by definition of $\wts(u, v) = \mu_Y(u, v)$, it is equal to the length of a path in graph $Y^j_l$ which excludes all edges in $E(\pi)\cup E(\beta^i_l)$.
\end{proof}

For the rest, let us prove that $\dist\brac{x_{i, f}, y_{i, f}, H^i_l}\leq |\rho_{i, f}[x_{i, f}, y_{i, f}]|$. To do this, we will find a path in $H^i_l$ from $x_{i, f}$ to $y_{i, f}$ with weighted length at most $|\rho_{i, f}[x_{i, f}, y_{i, f}]|$.

\begin{claim}\label{interval}
	$\rho_{i, f}[x_{i, f}, y_{i, f}]$ does not contain any vertices in the following vertex subset: $$\bigcup_{j=0}^{i-5g-1}V\brac{\beta^j_l}\cup \bigcup_{j=i+5g+1}^L V\brac{\beta^j_{l_j-l_i+l}}$$
\end{claim}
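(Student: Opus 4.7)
The plan is to argue by contradiction through a swap argument that exploits the optimality of the canonical shortest replacement path $\rho_{i,f}$. Suppose for contradiction that some vertex $v\in V(\rho_{i,f}[x_{i,f},y_{i,f}])$ lies in the forbidden set, and focus first on the case $v\in V(\beta^j_l)$ with $0\le j\le i-5g-1$; the case $j\ge i+5g+1$ will be handled by a mirror construction. The first step is to establish the lower bound $|\rho_{i,f}[s,v]|\ge(i-g)+(l-1)g$. Because $\rho_{i,f}$ is canonical in $G\setminus\{e_i\}$ with respect to $\pi_i=\pi[s,a_i]\circ\alpha_i\circ\pi[b_i,t]$ and $x_{i,f}$ is the first divergence point from $\pi_i$, canonicity yields $\rho_{i,f}[s,x_{i,f}]=\pi[s,a_i]\circ\alpha_i[a_i,x_{i,f}]$; combining $|\pi[s,a_i]|\ge i-g$ (from the main-case assumption $|\pi[a_i,u_i]|\le g$) with $|\alpha_i[a_i,x_{i,f}]|\ge(l-1)g$ (since $x_{i,f}\in V(\beta^i_l)$) gives the bound.

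Next I build the candidate alternative prefix $P=\pi[s,a_j]\circ\alpha_j[a_j,v]$, whose length is at most $|\pi[s,a_j]|+|\alpha_j[a_j,v]|\le j+lg$. The index separation $j\le i-5g-1$ yields $j+lg<(i-g)+(l-1)g\le|\rho_{i,f}[s,v]|$, so $P$ is strictly shorter than $\rho_{i,f}[s,v]$. I then verify $P$ avoids $\{e_i,f\}$: the sub-path $\pi[s,a_j]$ avoids $e_i$ because $a_j$ sits at $\pi$-index at most $j<i$, and avoids $f$ because $f\notin E(\pi)$; the sub-path $\alpha_j[a_j,v]$ avoids $e_i$ because $\alpha_j\subseteq G\setminus E(\pi)$. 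In the main sub-case $f\notin\alpha_j[a_j,v]$, the concatenation $P\circ\rho_{i,f}[v,t]$ is a strictly shorter $s$-$t$ walk avoiding $\{e_i,f\}$, contradicting the optimality of $\rho_{i,f}$.

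In the residual sub-case $f\in\alpha_j[a_j,v]$, I truncate the alternative at the tail $u$ of $f$, using $\pi[s,a_j]\circ\alpha_j[a_j,u]$ (which has length at most $j+lg-1$ and trivially avoids $f$), and then extend from $u$ to $v$ via a short walk avoiding $\{e_i,f\}$. Such a walk can be obtained by detouring around $f$ with overhead $O(g)$, since $u$ is close to $x_{i,f}$ on $\alpha_i$ and one can splice in portions of $\rho_{i,f}[x_{i,f},v]$; the $5g$-separation leaves sufficient slack for the strict length inequality to survive this $O(g)$ overhead. The case $j\ge i+5g+1$ with $v\in V(\beta^j_{l_j-l_i+l})$ is the mirror image: I replace the suffix $\rho_{i,f}[v,t]$ with $\alpha_j[v,b_j]\circ\pi[b_j,t]$ of length at most $(l_i-l+1)g+|\pi[b_j,t]|$, and the index choice $l_j-l_i+l$ ensures $\beta^j_{l_j-l_i+l}$ sits at the same distance from $b_j$ as $\beta^i_l$ sits from $b_i$, so the analogous bound $|\rho_{i,f}[y_{i,f},t]|\ge(l_i-l)g+|\pi|-i-2g$ (using $|\pi[u_{i+1},b_i]|\le g$ and $y_{i,f}\in V(\beta^i_l)$) yields the contradiction. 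The main obstacle throughout is the clean handling of the $f\in\alpha_j[\cdot,\cdot]$ sub-case, where one must carefully bound the cost of routing around the shared edge $f$; the core swap inequality handles everything else.
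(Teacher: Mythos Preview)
Your overall swap argument and the length comparison are essentially the same as the paper's, and the main sub-case where $f\notin E(\alpha_j[a_j,v])$ is handled correctly. However, the residual sub-case $f\in E(\alpha_j[a_j,v])$ contains a genuine gap.

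You propose to truncate at the tail $u$ of $f$ and then ``extend from $u$ to $v$ via a short walk avoiding $\{e_i,f\}$'' by ``detouring around $f$ with overhead $O(g)$'' and ``splicing in portions of $\rho_{i,f}[x_{i,f},v]$.'' This does not work. Recall that $x_{i,f}$ lies at or \emph{before} $u$ on $\alpha_i$ (both are in $V(\beta^i_l)$ and $x_{i,f}$ precedes $f$), so in a digraph there is no forward path from $u$ to $x_{i,f}$ along $\alpha_i$; you cannot splice into $\rho_{i,f}[x_{i,f},v]$ from $u$. More fundamentally, nothing guarantees \emph{any} $u$-to-$v$ walk of length $O(g)$ avoiding $f$: the only known paths through this region are $\alpha_j[u,v]$ (uses $f$ as its first edge), $\alpha_i$ (uses $f$), and $\rho_{i,f}$ (need not pass through $u$). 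Your slack of $3g+1$ is therefore not enough, because the detour cost is simply unbounded.

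The paper resolves this sub-case by a different idea: since $f\in E(\beta^i_l)\subseteq E(\alpha_i)$, having $f\in E(\alpha_j[a_j,v])$ forces a shared vertex $z\in V(\alpha_j[a_j,v])\cap V(\beta^i_l)$. One then constructs the \emph{single}-failure alternative $\pi[s,a_j]\circ\alpha_j[a_j,z]\circ\alpha_i[z,b_i]\circ\pi[b_i,t]$ avoiding only $e_i$, and shows it is strictly shorter than $\pi[s,a_i]\circ\alpha_i\circ\pi[b_i,t]$. This contradicts the optimality of $\alpha_i$ as the detour for $e_i$, so the residual sub-case is in fact impossible. In other words, the missing idea is to drop down to a single-failure contradiction rather than attempt a dual-failure reroute.
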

\begin{proof}
	Suppose otherwise that the sub-path $\rho_{i, f}[x_{i, f}, y_{i, f}]$ contains a vertex $v\in \bigcup_{j=0}^{i-5g-1}V(\beta^j_l)\cup \bigcup_{j=i+5g+1}^L V(\beta^j_{l_j-l_i+l})$. Let us assume that $v\in V(\beta^j_l)$ for some index $0\leq j<i-5g$; if $v\in V(\beta^j_{l_j-l_i+l})$ for some $j>i+5g$, the proof will be similar.
	
	By the assumption that $i\in I$, we know that $|\pi[a_i, u_i]|\leq g$. Since $j < i-5g$, we know that vertex $u_j$ lies between $s, a_i$, and consequently $|\pi[s, a_j]| \leq |\pi[s, u_j]|< |\pi[s, a_i]| - 4g$.
	
	Consider the path $\pi[s, a_j]\circ \alpha_j[a_j, v]$. We first argue that this path does not contain edges from $\{e_i, f\}$. Clearly, $\pi[s, a_j]\circ \alpha_j[a_j, v]$ does not contain the edge $e_i$, since $a_j$ lies between $s$ and $u_i$, and $E(\alpha_j[a_j, v])\cap E(\pi) = \emptyset$. As for the position of $f$, if $f\in E(\alpha_j[a_j, v])$, then there must be a vertex $z\in V(\alpha_j[a_j, v])\cap V(\beta^i_l)$. Then $\pi[s, a_j]\circ\alpha_j[a_j, z]\circ \alpha_i[z, b_i]\circ \alpha_i[b_i, t]$ is a replacement path that avoids edge $e_i$, with length at most:
	$$\begin{aligned}
		&|\pi[s, a_j]| + |\alpha_j[a_j, z] | + |\alpha_i[z, b_i]| +  |\alpha_i[b_i, t]|\\
		&< (|\pi[s, a_i]| - 4g) + |\alpha_j[a_j, v]| + |\alpha_i[z, b_i]| +  |\alpha_i[b_i, t]|\\
		&\leq (|\pi[s, a_i]| - 4g) + l\cdot g + |\alpha_i[z, b_i]| +  |\alpha_i[b_i, t]|\\
		&\leq (|\pi[s, a_i]| - 4g) + (|\alpha_i[a_i, z]| + g) + |\alpha_i[z, b_i]| +  |\alpha_i[b_i, t]|\\
		&\leq |\pi[s, a_i] + |\alpha_i[a_i, z]| + |\alpha_i[z, b_i]| +  |\alpha_i[b_i, t]| - 3g = |\pi[s, a_i]\circ \alpha_i\circ\pi[b_i, t]| - 3g
	\end{aligned}$$
	which is a contradiction that $\pi[s, a_i]\circ \alpha_i\circ\pi[b_i, t]$ is a shortest replacement path avoiding $e_i$.
	
	Next, we argue that $|\pi[s, a_j]\circ \alpha_j[a_j, v]| < |\rho_{i, f}[s, x_{i, f}]| < |\rho_{i, f}[s, v]|$. In fact, by $|\pi[s, a_j]| < |\pi[s, a_i]| - 4g$, we have:
	$$\begin{aligned}
		&|\pi[s, a_j]| + |\alpha_j[a_j, v]| < (|\pi[s, a_i]| - 4g )+ l\cdot g\\&\leq  (|\pi[s, a_i]| - 4g )+ (|\alpha_i[a_i, x_{i, f}]| + g) = |\rho_{i, f}[s, x_{i, f}]| -3g
	\end{aligned}$$
	
	As we have proved, $\pi[s, a_j]\circ \alpha_j[a_j, v]$ is a path avoiding $\{e_i, f\}$ of length less than $|\rho_{i, f}[s, v]|$. So, if we replace the prefix $\rho_{i, f}[s, v]$ with $\pi[s, a_j]\circ \alpha_j[a_j, v]$ and consider a new path: $$\rho^\prime = \pi[s, a_j]\circ \alpha_j[a_j, v]\circ \rho_{i, f}[v, t]$$
	then we have a new replacement path $\rho^\prime$ avoiding $\{e_i, f\}$ from $s$ to $t$ with a strictly better distance, a contradiction. See \Cref{1fail-short-case5-4} for an illustration.
	
	\begin{figure}
		\centering
		\begin{tikzpicture}[thick,scale=0.7]
	\draw (-11, 0) node(1)[circle, draw, fill=black, inner sep=0pt, minimum width=6pt, label=180: {$s$}] {};
	\draw (11, 0) node(2)[circle, draw, fill=black, inner sep=0pt, minimum width=6pt, label=0: {$t$}] {};
	
	\draw (-6, 0) node(3)[circle, draw, fill=black, inner sep=0pt, minimum width=6pt, label=-90: {$a_{j}$}] {};
	\draw (-2, 0) node(4)[circle, draw, fill=black, inner sep=0pt, minimum width=6pt, label=-90: {$b_{j}$}] {};
	
	\draw (2.5, 0) node(5)[circle, draw, fill=black, inner sep=0pt, minimum width=6pt, label=-90: {$a_{i}$}] {};
	\draw (6.5, 0) node(6)[circle, draw, fill=black, inner sep=0pt, minimum width=6pt, label=-90: {$b_{i}$}] {};
	
	\draw (2.5, 2.5) node(7)[circle, draw, fill=black, inner sep=0pt, minimum width=6pt, label=-20: {$x_{i, f}$}] {};
	\draw (3.8, 4.85) node(8)[circle, draw, fill=black, inner sep=0pt, minimum width=6pt, label=-20: {$y_{i, f}$}] {};
	
	\draw (-5.6, 4) node(9)[circle, draw, fill=black, inner sep=0pt, minimum width=6pt, label=0: {$v$}] {};
	
	\draw (4.5, 0) node[cross=6, red, label=-90: {$e_i$}] {};
	\draw (2.9, 4) node[cross=6, red, label=-45: {$f$}] {};
	
	
	
	
	\begin{scope}[on background layer]
		\draw [->, line width = 0.5mm] (1) to (2);
		
		\draw [->, line width = 0.5mm, color = cyan, dotted] (3) to[out=90, in=-90] (-6, 2);
		\draw [->, line width = 0.5mm, color = cyan] (-6, 2) to[out=90, in=180] (-4, 5);
		\draw [->, line width = 0.5mm, color = cyan, dotted] (-4, 5) to[out=0, in=90] (4);
		
		\draw [->, line width = 0.5mm, color = cyan, dotted] (5) to[out=90, in=-90] (2.5, 2);
		\draw [->, line width = 0.5mm, color = cyan] (2.5, 2) to[out=90, in=180] (4.5, 5);
		\draw [->, line width = 0.5mm, color = cyan, dotted] (4.5, 5) to[out=0, in=90] (6);
		
		\draw [->, line width = 0.5mm, color = orange] (7) to[out=180, in=-60] (9);
		\draw [->, line width = 0.5mm, color = orange] (9) to[out=100, in=130] (8);
		
	\end{scope}
\end{tikzpicture}
		\caption{The sub-path $\rho_{i, f}[x_{i, f}, y_{i, f}]$ is drawn as the orange curve. If $\rho_{i, f}[x_{i, f}, y_{i, f}]$ contains a vertex $v\in V\brac{\beta^j_l}$, then the alternate path $\pi[s, a_j]\circ \alpha_j[*, v]\circ \rho_{i, f}[v, y_{i, f}]\circ \alpha_i[y_{i, f}, b_i]\circ\pi[b_i, t]$ be a shorter replacement path than $\rho_{i, f}$ avoiding $\{e_i, f\}$.}
		\label{1fail-short-case5-4}
	\end{figure}
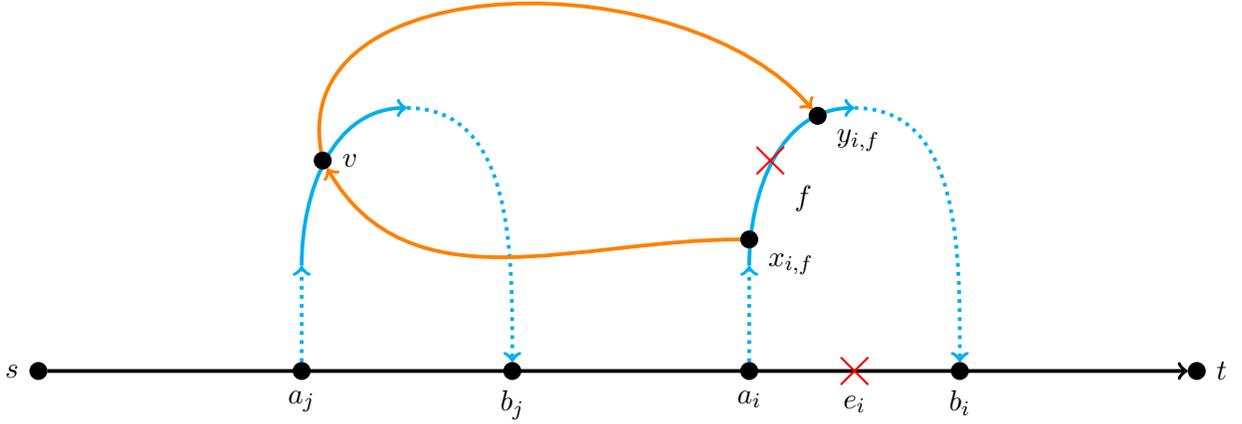
\end{proof}

Decompose the path $\rho_{i, f}[x_{i, f}, y_{i, f}]$ into minimal sub-paths whose endpoints are belonging to $V(H^i_l)$; this is achievable because both $x_{i, f}, y_{i, f}\in V(H^i_l)$. To prove the upper bound:
$$\dist(x_{i, f}, y_{i, f}, H^i_l)\leq |\rho_{i, f}[x_{i, f}, y_{i, f}]|$$
it suffices to show that for any two consecutive vertices $u, v\in V(H^i_l)$ on path $\rho_{i, f}[x_{i, f}, y_{i, f}]$, we have $\wts(u, v)\leq |\rho_{i, f}[u, v]|$. Consider several cases below.
\begin{itemize}[leftmargin=*]
	\item One of $u, v$ belongs to the pivot set $U$.
	
	Without loss of generality, assume that $u\in U$. It suffices to show that the shortest path from $u$ to $v$ in graph $G_{l, l_i-l+1}$ is has the same length as $\rho_{i, f}[u, v]$.
	
	Since $v$ is the next vertex in $V(H^i_l)$ after $u$ on the path $\rho_{i, f}$, the sub-path $\rho_{i, f}[u, v]$ does not contain any vertices from $V(H^i_l)$ except for its endpoints; in other words, $\rho_{i, f}[u, v]$ does not contain vertices from $U\cup \bigcup_{j=i-5g}^{i+5g}V(\beta^j_l)$. Additionally, according to \Cref{interval}, $\rho_{i, f}[u, v]$ does not contain (internally) any vertices in $$\bigcup_{j=0}^{i-5g-1}V\brac{\beta^j_l}\cup \bigcup_{j=i+5g+1}^L V\brac{\beta^j_{l_j-l_i+l}}$$
	Therefore, $\rho_{i, f}[u, v]$ does not contain (internally) any vertices from the set:
	$$U\cup \bigcup_{j\in [i-5g, i+5g]\cap I}^{}V\brac{\beta^j_l}\cup \bigcup_{j=0}^{i-5g-1}V\brac{\beta^j_l}\cup \bigcup_{j=i+5g+1}^L V\brac{\beta^j_{l_j-l_i+l}}\supseteq U\cup \bigcup_{\beta \in \mathcal{P}_{l, l_i-l+1}}V(\beta)$$
	
	By definition of graph $G_{l, l_i-l+1}$, we know that $\dist(u, v, G_{l, l_i-l+1})\leq |\rho_{i, f}[u, v]|$.
	
	\item Both of $u, v$ are not in $U$.
	
	Assume that $u\in V(\beta^c_l)$ for some $c\in [i-5g, i+5g]$, and $v\in V(\beta^d_l)$ for some $d\in [i-5g, i+5g]$. Without loss of generality, assume that $c\leq i$; if $c\geq i$, a symmetric argument would work.
	
	Since $v$ is the next vertex in $V(H^i_l)$ after $u$ on the path $\rho_{i, f}$ and that $U$ is a uniformly random vertex set of size $\frac{10n\log n}{g}$, with high probability over the randomness of $U$, we must have $|\rho_{i, f}[u, v]|\leq g$.
	
	Similar to the previous case, we know that $\rho_{i, f}[u, v]$ does not contain (internally) vertices from $\bigcup_{j=i-5g}^{i+5g}V\brac{\beta^j_l}$. Therefore, by definition of $X^c_l$, we know that $\rho_{i, f}[u, v]\subseteq X^c_l$; namely, the sub-path $\rho_{i, f}[u, v]$ belongs to graph $X^c_l$. Therefore, to prove that $\wts(u, v) = \mu_X(u, v)\leq |\rho_{i, f}[u, v]|$, it suffices to show that the truncated Dijkstra procedure correctly computes the value $\mu_X(u, v) = \dist(u, v, X^c_l)$.
	
	Decompose the integer $c = b+10kg$, where $1\leq b\leq 10g$, $k\geq 0$. To prove that the truncated Dijkstra procedure correctly computes the value $\mu_X(u, v) = \dist(u, v, X^c_l)$, it suffices to show that the vertex set $A_{b, l}$ contains all vertices on $\rho_{i, f}[u, v]$ when $u$ is performing a truncated Dijkstra in the induced subgraph $X^c_l[A_{b, l}]$; in other words, we need to show that any vertex on $\rho_{i, f}[u, v]$ has not been pruned from $A_{b, l}$ by truncated Dijkstraes from vertices on previous sub-paths $\beta^b_l, \beta^{b+10g}_l, \cdots, \beta^{b+10(k-1)g}_l$.
	
	Assume otherwise there is a vertex $z\in V(\rho_{i, f}[u, v])$ which was also visited by the truncated Dijkstra of some vertices $w\in V(\beta^{b + 10jg}_l)$ for some $j<k$. As all Dijkstra searches are truncated up to depth $g$, we know that there is a path $\gamma$ from $w$ to $z$ in $X^{b+10jg}_l$ of length at most $g$. Consider the path $$\theta = \pi[s, a_{b+10jg}]\circ \alpha_{b+10jg}[*, w]\circ \gamma\circ \rho_{i, f}[z, v]\circ \alpha_d[v, b_d]\circ \pi[b_d, t]$$ and claim two properties of it.
	\begin{claim}
		$\theta$ is a path from $s$ to $t$ that avoids the edge $e_d$.
	\end{claim}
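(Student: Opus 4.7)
The plan is to verify the claim in two stages: first, check that the six segments concatenate into a valid $s \to t$ walk, and second, check segment-by-segment that none of them contains the edge $e_d$. The endpoint bookkeeping is immediate by inspection: $\pi[s, a_{b+10jg}]$ ends at $a_{b+10jg}$, $\alpha_{b+10jg}[*, w]$ goes from $a_{b+10jg}$ to $w$, $\gamma$ runs from $w$ to $z$, $\rho_{i, f}[z, v]$ arrives at $v$, $\alpha_d[v, b_d]$ ends at $b_d$, and $\pi[b_d, t]$ finally reaches $t$.

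For the edge-avoidance, I would split the six segments into the two $\pi$-segments and the four segments that will live inside $G \setminus E(\pi)$. For the two $\pi$-segments I need a position estimate on $\pi$: combining $j \leq k-1$ with $c = b + 10kg \leq i$ and $d \in [i - 5g, i + 5g]$, one gets $b + 10jg \leq b + 10(k-1)g = c - 10g \leq i - 10g \leq d - 5g$. By \Cref{detour}, the divergence vertex $a_{b+10jg}$ of $\alpha_{b+10jg}$ lies at or before $u_{b+10jg}$ on $\pi$, so $\pi[s, a_{b+10jg}]$ only uses edges of $\pi$ of index strictly less than $d$ and thus cannot contain $e_d = (u_d, u_{d+1})$. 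Symmetrically, the convergence vertex $b_d$ lies at or after $u_{d+1}$, so $\pi[b_d, t]$ only uses edges of $\pi$ of index strictly larger than $d$ and also avoids $e_d$.

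For the remaining four segments, the key is that each one lies in $G \setminus E(\pi)$. The two detour restrictions $\alpha_{b+10jg}[*, w]$ and $\alpha_d[v, b_d]$ are contained in the respective detours $\alpha_{b+10jg}$ and $\alpha_d$, both of which belong to $G \setminus E(\pi)$ by \Cref{detour}. The path $\gamma$ lies in $X^{b+10jg}_l$, and by construction $X^{b+10jg}_l \subseteq G \setminus E(\pi)$. Finally, $\rho_{i, f}[z, v]$ is a sub-path of $\rho_{i, f}[x_{i, f}, y_{i, f}]$; by the basic property of single-divergence replacement paths applied to $\rho_{i, f}$, the middle detour between the unique divergence vertex $a_i$ and the unique convergence vertex $b_i$ uses no edge of $\pi$, so in particular $\rho_{i, f}[z, v]$ avoids $e_d$. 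Putting these observations together, no segment of $\theta$ uses $e_d$, establishing the claim.

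The main obstacle is really only the arithmetic on indices along $\pi$ for the two $\pi$-segments; the slack $j \leq k - 1$ (ten units in the spacing) must be combined with the $\pm 5g$ window for $d$ to produce strict separation between $a_{b+10jg}$ and $u_d$. Once that slack is in hand, everything else is a direct appeal to the definitions of the detours $\alpha_{*}$, the graph $X^{b+10jg}_l$, and the single-divergence structure of $\rho_{i, f}$.
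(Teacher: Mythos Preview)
Your proof is correct and follows essentially the same approach as the paper: the key step in both is the index arithmetic showing $b+10jg < d$ (the paper writes this as $d \geq i-5g \geq b+(10k-5)g > b+10jg$), from which $a_{b+10jg}$ precedes $u_d$ on $\pi$. Your version is simply more detailed, verifying segment-by-segment that the middle four pieces lie in $G\setminus E(\pi)$, whereas the paper dispatches this with the single observation that $\theta$ departs $\pi$ at $a_{b+10jg}$ and rejoins it only at $b_d$.
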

	\begin{proof}
		It is clear that path $\theta$ departs from $\pi$ at vertex $a_{b+10jg}$ and converges with $\pi$ at vertex $b_d$. So it suffices to show that $a_{b+10jg}$ lies between $s$ and vertex $u_d$. This is straightforward since $a_{b+10jg}$ lies on path $\pi[s, u_{b+10jg}]$ which is strictly a prefix of $\pi[s, u_d]$, as $d\geq i -5g\geq b+(10k-5)g > b+10jg$.
	\end{proof}

	To reach a contradiction, we show that $|\theta|$ is a strictly better replacement path from $s$ to $t$ that avoids $e_d$ than path $\pi[s, a_d]\circ \alpha_d\circ\pi[b_d, t]$. In fact, on the one hand, since $d\in I$, we know that $|\pi[a_d, u_d]|\leq g$. Therefore,
	$$|\pi[s, a_{b+10jg}]| \leq |\pi[s, u_{b+10jg}]| = b+10jg \leq 10(k-1)g \leq i-10g \leq d -5g \leq |\pi[s, a_d]| -4g$$
	On the other hand, since $v\in V(\beta^d_l)$ and $w\in V(\beta^{b+10jg}_l)$, we know that
	$$|\alpha_d[*, v]|\geq (l-1)g+1 > |\alpha_{b+10jg}[*, w]| - g$$
	Finally, as $|\gamma|, |\rho_{i, f}[z, v]|\leq g$, we have:
	$$\begin{aligned}
		|\theta| &\leq (|\pi[s, a_d]| - 4g) + (|\alpha_d[*, v]|+g) + |\gamma| + |\rho_{i, f}[z, v]| + |\alpha_d[v, b_d]| + |\pi[b_d, t]|\\
		&\leq |\pi[s, a_d]| + |\alpha_d[*, v]|+|\alpha_d[v, b_d]| + |\pi[b_d, t]|-g\\
		&= |\pi[s, a_d]\circ \alpha_d\circ\pi[b_d, t]| - g
	\end{aligned}$$
	which contradicts the fact that $\pi[s, a_d]\circ \alpha_d\circ\pi[b_d, t]$ is the shortest replacement path avoiding $e_d$. See \Cref{1fail-short-case5-5} for an illustration.
	
	\begin{figure}
		\centering
		\begin{tikzpicture}[thick,scale=0.7]
	\draw (-11, 0) node(1)[circle, draw, fill=black, inner sep=0pt, minimum width=6pt, label=180: {$s$}] {};
	\draw (11, 0) node(2)[circle, draw, fill=black, inner sep=0pt, minimum width=6pt, label=0: {$t$}] {};
	
	\draw (-9, 0) node(3)[circle, draw, fill=black, inner sep=0pt, minimum width=6pt, label=-90: {$a_{b+10jg}$}] {};
	\draw (-5, 0) node(4)[circle, draw, fill=black, inner sep=0pt, minimum width=6pt, label=-90: {$b_{b+10jg}$}] {};
	
	\draw (-2, 0) node(5)[circle, draw, fill=black, inner sep=0pt, minimum width=6pt, label=-90: {$a_c$}] {};
	\draw (2, 0) node(6)[circle, draw, fill=black, inner sep=0pt, minimum width=6pt, label=-90: {$b_c$}] {};
	
	\draw (5.5, 0) node(7)[circle, draw, fill=black, inner sep=0pt, minimum width=6pt, label=-90: {$a_{d}$}] {};
	\draw (9.5, 0) node(8)[circle, draw, fill=black, inner sep=0pt, minimum width=6pt, label=-90: {$b_{d}$}] {};
	
	\draw (-8.6, 4) node(9)[circle, draw, fill=black, inner sep=0pt, minimum width=6pt, label=-30: {$w$}] {};
	\draw (-1.6, 4) node(10)[circle, draw, fill=black, inner sep=0pt, minimum width=6pt, label=-30: {$u$}] {};
	\draw (5.9, 4) node(11)[circle, draw, fill=black, inner sep=0pt, minimum width=6pt, label=-30: {$v$}] {};
	\draw (0, 8) node(12)[circle, draw, fill=black, inner sep=0pt, minimum width=6pt, label=$z$] {};
	
	\draw (7.5, 0) node[cross=6, red, label=-90: {$e_d$}] {};
	
	\begin{scope}[on background layer]
		\draw [->, line width = 0.5mm] (1) to (2);
		
		\draw [->, line width = 0.5mm, color = cyan, dotted] (3) to[out=90, in=-90] (-9, 2);
		\draw [->, line width = 0.5mm, color = cyan] (-9, 2) to[out=90, in=180] (-7, 5);
		\draw [->, line width = 0.5mm, color = cyan, dotted] (-7, 5) to[out=0, in=90] (4);
				
		\draw [->, line width = 0.5mm, color = cyan, dotted] (5) to[out=90, in=-90] (-2, 2);
		\draw [->, line width = 0.5mm, color = cyan] (-2, 2) to[out=90, in=180] (0, 5);
		\draw [->, line width = 0.5mm, color = cyan, dotted] (0, 5) to[out=0, in=90] (6);
				
		\draw [->, line width = 0.5mm, color = cyan, dotted] (7) to[out=90, in=-90] (5.5, 2);
		\draw [->, line width = 0.5mm, color = cyan] (5.5, 2) to[out=90, in=180] (7.5, 5);
		\draw [->, line width = 0.5mm, color = cyan, dotted] (7.5, 5) to[out=0, in=90] (8);
		
		\draw [->, line width = 0.5mm, color = red] (9) to[out=120, in=180] (12);
		\draw [->, line width = 0.5mm, color = orange] (10) to[out=90, in=260] (12);
		\draw [->, line width = 0.5mm, color = orange] (12) to[out=0, in=130] (11);
	\end{scope}
\end{tikzpicture}
		\caption{If $\mu_X(u, v)$ does not capture the orange path $\rho_{i, f}[u, v]$, then a previous Dijkstra search from vertex $w$ must have intercepted $\rho_{i, f}[u, v]$ at a vertex $z$ through a path $\gamma$ drawn as the red curve. In this case, $\alpha_{b+10jg}[*, w]\circ \gamma\circ \rho_{i, f}[z, v]\circ \alpha_d[v, b_d]$ would be a better detour than $\alpha_d$ for avoiding $e_d$.}
		\label{1fail-short-case5-5}
	\end{figure}
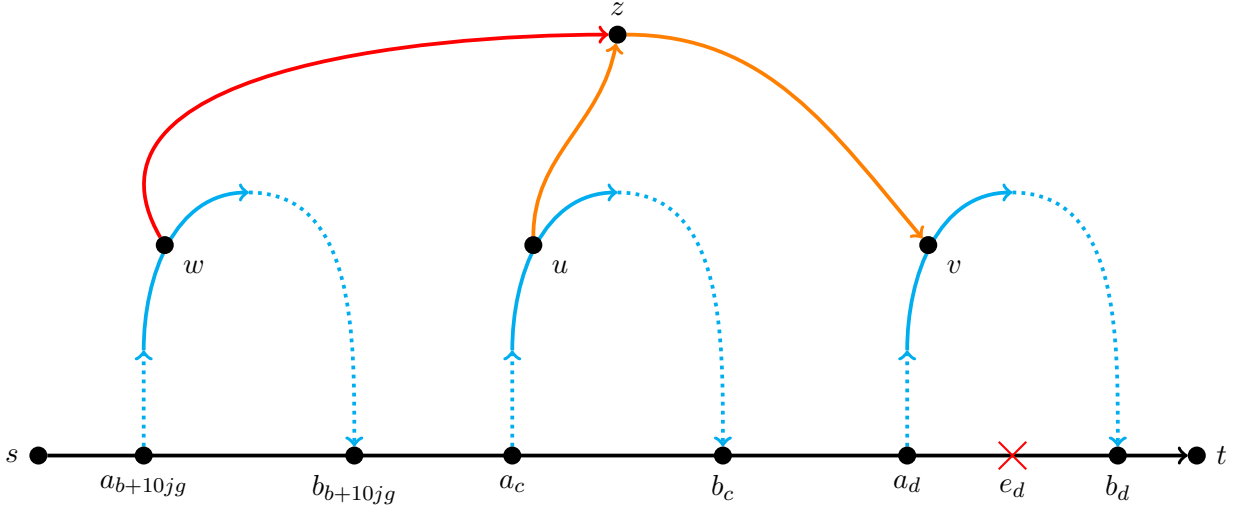
\end{itemize}

\subsection*{Proof of \Cref{1fail-special}}
Summarizing the total runtime of all five cases, the overall runtime is bounded by as following:
$$\tilde{O}\left(\frac{mn^{1.5} + n^3}{L} + mn^{1/2}L/g + n^2L/g + mL^2/g + mnL^2/g^3 + mLg + L^2g^4 + n^2L^2/g^2\right)$$

\section{One failure on a long $st$-path}\label{1fail-long}

In this section, we study the case where only one edge failure lies on the shortest path, plus that the input graph is dense and $\dist(s, t, G)$ could be as large as $O(n)$. Let $G = (V, E)$ be a digraph with $n$ vertices, and consider a pair of vertices $s, t$ as well as an $st$-shortest path $\pi = \langle s=u_0, u_1, u_2, \cdots, u_{|\pi|} = t\rangle$. The task is to compute for any pairs of edges $f_1, f_2$, the value of $\dist(s, t, G\setminus \{f_1, f_2\})$ where $f_1\in E(\pi), f_2\notin E(\pi)$. We will prove the following statement.

\begin{lemma}\label{1fail-exact}
	All values of $\dist(s, t, G\setminus \{f_1, f_2\})$ can be computed with high probability in time $\tilde{O}\left(n^{3-1/18}\right)$ when $f_1\in E(\pi)$ while $f_2\notin E(\pi)$.
\end{lemma}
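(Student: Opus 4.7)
The plan is to handle the easy case $|\pi|\leq L$ by invoking \Cref{1fail-special} directly, and to reduce the long-path case $|\pi|>L$ to it via the partition sketched in Section 1.5.1. I would break $\pi$ into contiguous sub-paths $\pi=\gamma_1\circ\cdots\circ\gamma_h$ of length exactly $5L$ with endpoints $s_i,t_i$ (so $h=O(n/L)$), fix $f_1\in E(\gamma_i)$, and split on whether the dual-failure detour $\rho\setminus\pi$ has length at least $L$ or strictly less than $L$.

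In the \emph{long-detour case} I sample a uniformly random pivot set $U\subseteq V$ of size $\Theta((n/L)\log n)$ so that every such detour contains some $p\in U$ with high probability, then invoke \Cref{ssrp} once per $p$ both to and from $p$ inside $G\setminus E(\pi)$, at total cost $\tilde{O}(|U|(mn^{1/2}+n^2))=\tilde{O}(n^{3.5}/L)$. This gives, for every $f_2$ on the shortest-path tree or reverse tree of some pivot and every $a,b\in V$, the values $\dist(a,p,G\setminus(E(\pi)\cup\{f_2\}))$ and $\dist(p,b,G\setminus(E(\pi)\cup\{f_2\}))$; other $f_2$'s simply keep their base distances in $G\setminus E(\pi)$. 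For each relevant $(p,f_2)$ a prefix/suffix-min sweep along $\pi$ of the sum
\[
|\pi[s,a]|+\dist(a,p,G\setminus(E(\pi)\cup\{f_2\}))+\dist(p,b,G\setminus(E(\pi)\cup\{f_2\}))+|\pi[b,t]|
\]
over $a$ before $f_1$ and $b$ after $f_1$ computes the optimal detour length for all $f_1$ in $O(|\pi|)$ time, and since only $O(n^2/L)$ such pairs are interesting the aggregation contributes $\tilde{O}(n^3/L)$.

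In the \emph{short-detour case} I first establish the structural claim that whenever $|\rho\setminus\pi|<L$ the detour lies in an $L$-neighborhood of $\gamma_i$ in $G\setminus E(\pi)$, and that for $|i-j|\geq 2$ the short detours associated with $\gamma_i$ and $\gamma_j$ are vertex-disjoint; the proof is a short exchange argument showing that a shared vertex would splice into a strictly shorter single-failure replacement path for the edge of $\pi$ closer to that vertex, contradicting optimality of the canonical detour supplied by \Cref{rp}. Using this I greedily grow vertex-disjoint subgraphs $G_1,\ldots,G_h$ of $G\setminus E(\pi)$ around each $\gamma_i$ that contain all of its short detours, and recursively invoke \Cref{1fail-special} on each $(G_i,s_i,t_i)$ with inner parameter $L'=5L$ and a shared inner $g$. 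Because $\sum_i|V(G_i)|\leq n$, every term of \Cref{1fail-special}'s runtime that is linear or superlinear in $n_i$ sums to its analogue with $n$ in place of $n_i$; only the $n_i$-free contribution $L^2g^4$ is inflated by $h\leq n/L$ into $nLg^4$.

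The main obstacle is the parameter balance. Collecting all contributions, the aggregate runtime has the form
\[
\tilde{O}\!\left(\tfrac{n^{3.5}+n^3}{L}+\tfrac{n^{2.5}L}{g}+\tfrac{n^2L^2}{g}+\tfrac{n^3L^2}{g^3}+n^2Lg+nLg^4+\tfrac{n^2L^2}{g^2}\right),
\]
and I need to choose $L$ slightly above $n^{1/2}$ and $g$ slightly above $n^{1/3}$ so that every term fits into $\tilde{O}(n^{3-1/18})$. The tightest constraints simultaneously bind $n^{3.5}/L$, $n^3L^2/g^3$, and $n^2Lg$ and pin down the $1/18$; keeping the inflated $nLg^4$ term under this threshold is the most delicate point, and is where the bulk of the technical balancing (possibly via a mild amortization of the shortcut-graph computations $H^i_l$ across the $h$ sub-problems) lies.
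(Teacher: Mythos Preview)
Your plan coincides with the paper's: split on detour length, handle long detours via random pivots and \Cref{ssrp} on $G\setminus E(\pi)$, and reduce short detours to \Cref{1fail-special} on near-disjoint subgraphs built around each $\gamma_i$. Two points are underspecified. First, your two cases miss the situation where $|\rho\setminus\pi|<L$ but the divergence or convergence point of $\rho$ lies outside $\gamma_i$; this happens whenever $f_1$ sits within $L$ edges of an endpoint of your $5L$-length $\gamma_i$, and then the recursive call on $(G_i,s_i,t_i)$ does not see $\rho$ at all. The paper treats this as a separate case (its Case~1) by one application of \Cref{ssrp} from $s$ in $G\setminus E(\gamma_l)$ per subpath, at cost $\tilde O(n^{3.5}/L+n^2L)$, and you need something equivalent. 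Second, ``greedily grow vertex-disjoint subgraphs'' is not yet an algorithm: short detours for adjacent $\gamma_i,\gamma_{i+1}$ can share vertices, so a single greedy pass may strip vertices that $G_{i+1}$ genuinely needs. The paper's concrete device is truncated Dijkstra to depth $L$ from every vertex of $\gamma_l$, run as an offset-and-prune sweep (for each residue $b\in\{1,\dots,5\}$ process $\gamma_b,\gamma_{b+5},\dots$, pruning visited vertices after each block); this simultaneously constructs the $G_l$'s, certifies $\sum_l n_l=O(n)$, and runs in $O(n^2L)$ time. The correctness proof there is a distance contradiction (a shared vertex would give a $\le 2L$ walk between points that are $\ge 4L$ apart on $\pi$), not an exchange argument on single-failure detours.

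You are right to flag the $nLg^4$ term: with $L=n^{5/9}$, $g=n^{7/18}$ it equals $n^{56/18}>n^{53/18}$, and the paper's written summation is sloppy at exactly this spot (it drops the factor $h$). The amortization you gesture at does rescue the bound. The finite-weight shortcut edges of types~(b) and~(c) in each $H^i_l$ are precisely the recorded values $\mu_X,\mu_Y$ from Step~(2) of \Cref{1fail-special}; by the pruning there, each vertex of the instance $G_l$ is recorded by at most $O(g)$ sources per offset, so the total number of recorded pairs per instance is $O(n_lLg)$, and each such pair is reused in only $O(g)$ of the graphs $H^i_l$. Hence the non-pivot edge count summed over all $H^i_l$ in one instance is $O(n_lLg^2)$ rather than the crude $O(L^2g^4/g\cdot g)$, and after multiplying by the $g$ Dijkstra sources per $H^i_l$ and summing over instances one gets $\tilde O(nLg^3)=\tilde O(n^{49/18})$, comfortably under the target $n^{53/18}$.
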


Let $g<n^{1/2} < L$ be two integer parameters to be chosen later. Divide the shortest path $\pi$ into sub-paths of length equal to $L$ (expect for the last sub-path) as $\pi = \gamma_1\circ\gamma_2\circ\cdots \circ\gamma_h, h \leq \ceil{n/L}$; for each $1\leq i\leq h$, let $s_i, t_i$ be the source and terminal of the sub-path $\gamma_i$.

Similar to the previous section, for any $0\leq i <|\pi|$, define $e_i = (u_i, u_{i+1})$, and let $\alpha_i$ be the detour for $e_i$ with divergence and convergence vertex $a_i, b_i$ respectively; all the detours $\alpha_i$ can be computed in time $\tilde{O}(n^{2.5})$ by \Cref{rp}. Throughout the algorithm, for each $0\leq i<|\pi|$ and every edge $f\in E(\alpha_i)$, we will maintain a distance estimation $\est(s, t, G\setminus \{e_i, f\})\geq \dist(s, t, G\setminus \{e_i, f\})$, and in the end it will be guaranteed that $\est(s, t, G\setminus \{e_i, f\}) = \dist(s, t, G\setminus \{e_i, f\})$.

For each $f\in E(\alpha_i)$, let $\rho_{i, f}$ be a canonical shortest replacement path avoiding $\{e_i, f\}$ from $s$ to $t$. Let $z_{i, f}, w_{i, f}$ be the divergence and convergence vertex on $\pi$ of $\rho_{i, f}$. Consider any index $0\leq i<|\pi|$, and assume $e_i\in E(\gamma_l)$. There are several cases to study.

\subsection*{Case 1: $z_{i, f}\notin V(\gamma_l)$ or $w_{i, f}\notin V(\gamma_l)$}
\paragraph{Algorithm.} Without loss of generality, assume that $z_{i, f}\notin V(\gamma_l)$. Apply \Cref{ssrp} on graph $H_l = G\setminus E(\gamma_l)$ at source vertex $s$, and initialize $\est(s, t, G\setminus \{e_i, f\})\leftarrow \dist(s, t, H_l\setminus \{f\})$ for any $f$. Then, update the estimate:
$$\est(s, t, G\setminus \{e_i, f\})\leftarrow \min_{v\in V(\gamma_l(f, *])}\{\est(s, t, G\setminus \{e_i, f\}), \dist(s, v, H_l\setminus \{f\}) + |\pi[v, t]|\}$$

\paragraph{Runtime.} The runtime of applying \Cref{ssrp} is $\tilde{O}(\frac{n^{3.5}}{L})$. As for the value $\est(s, t, G\setminus \{e_i, f\})$, if $f\in E(\alpha_i)$, then calculating the value $\min_{v\in V(\gamma_l(f, *])}\{\dist(s, v, H_l\setminus \{f\}) + |\pi[v, t]|\}$ takes time $O(L)$, and thus the total runtime over all $i$ and $f\in E(\alpha_i)$ is bounded by $O(n^2L)$.

\paragraph{Correctness.} Since $\rho_{i, f}$ is canonical and that $z_{i, f}\notin V(\gamma_l)$, $\rho_{i, f}$ diverges from $\pi$ before the failed edge $f$ and then converges with $\pi$ at a vertex $v$ after $f$. If $v\notin V(\gamma_l)$, the the replacement path $\rho_{i, f}$ skips over the entire sub-path $\gamma_l$, and so $|\rho_{i, f}| = \dist(s, t, H_l\setminus \{f\})$. Otherwise, the value of $|\rho_{i, f}| = \dist(s, v, H_l\setminus \{f\}) + |\pi[v, t]|$ must have been calculated in by the minimum formula.

\subsection*{Case 2: $z_{i, f}, w_{i, f}\in V(\gamma_l)$ and $|\rho_{i, f}[z_{i, f}, w_{i, f}]|>L$}
\paragraph{Algorithm.} Take a uniformly random vertex subset $U$ of size $\frac{10n\log n}{L}$, and for each vertex $v\in U$, apply \Cref{ssrp} to compute single-source replacement paths to and from $v$ in graph $H = G\setminus E(\pi)$. Then update the value $\est(s, t, G\setminus \{e_i, f\})$ with the minimum clause:
$$\min_{z\in V(\gamma_l[*, u_i]), w\in V(\gamma_l[u_{i+1}, *]), v\in U}\{|\pi[s, z]| + |\dist(z, v, H\setminus \{f\})| + |\dist(v, w, H\setminus \{f\})| + |\pi[w, t]| \}$$

The correctness of this procedure is straightforward, since the value of $|\rho_{i, f}|$ is witnessed by the above formula when $z = z_{i, f}, w = w_{i, f}$.

\paragraph{Runtime.} Applying \Cref{ssrp} for each $v\in U$ takes runtime $\tilde{O}(\frac{n^{3.5}}{L})$. To implement the computations of the minimum formula  in subcubic time, we need to utilize some data structures. 

For each vertex $v\in U$ and for each edge $f$ belong to the single-source shortest paths tree rooted at $v$ in $H$, store all the values $\{|\pi[s, a]| + \dist(a, v, H\setminus \{f\})\mid \forall a\in V(\pi)\}$ in an array. Using standard data structures, for any sub-path $\gamma$ of $\pi$, we can query the value $\min_{a\in V(\gamma)}\{|\pi[s, a]| + \dist(a, v, H\setminus \{f\})\}$ in $O(\log n)$ time. Symmetrically, we can also query the value $\min_{b\in V(\gamma)}\{|\pi[s, b]| + \dist(v, b, H\setminus \{f\})\}$ in $O(\log n)$ time. The total time of building these arrays is bounded by $\tilde{O}(n^3/L)$.

Now, to calculate the value of $\min_{z\in V(\gamma_l[*, u_i]), w\in V(\gamma_l[u_{i+1}, *]), v\in U}\{|\pi[s, z]| + |\dist(z, v, H\setminus \{f\})| + |\dist(v, w, H\setminus \{f\})| + |\pi[w, t]| \}$, we can go over each vertex $v\in U$ and query the values of $\min_{z\in V(\gamma_l[*, u_i])}\{|\pi[s, z]| + |\dist(z, v, H\setminus \{f\})| \}$ and $\min_{w\in V(\gamma_l[u_{i+1}, *])}\{ |\dist(v, w, H\setminus \{f\})| + |\pi[w, t]| \}$ separately, which takes time $\tilde{O}(n^3/L)$ for all $i$ and $f\in E(\alpha_i)$.

\subsection*{Case 3: $z_{i, f}, w_{i, f}\in V(\gamma_l)$ and $|\rho_{i, f}[z_{i, f}, w_{i, f}]|\leq L$}
\paragraph{Algorithm.} For each index $1\leq l\leq h$, we will build a subgraph $G_l\subseteq G$, such that $G_l$ contains all paths $\rho_{i, f}$ if $z_{i, f}, w_{i, f}\in V(\gamma_l)$ and $|\rho_{i, f}[z_{i, f}, w_{i, f}]|\leq L$. This is done by the following truncated Djikstra procedure.

As before, let $H = G\setminus E(\pi)$. For any offset $1\leq b\leq 5$, initialize a vertex set $A_b \leftarrow V$, then go over the sequence of sub-paths $\gamma_{b}, \gamma_{b+5}, \cdots, \gamma_{b+5k}, k =\ceil{h/5}$. For each sub-path $\gamma_{b+5j}, 0\leq j\leq k$, enumerate all vertices $u\in V(\gamma_{b+5j})$ and perform a Djikstra tree truncated at depth $L$ in the induced subgraph $H[A_b]$. After all vertices in $V(\gamma_{b+5j})$ have been processed, let $P_{b+5j}\subseteq A_b$ be the set of vertices visited by any truncated Djikstra tree rooted at some vertices in $V(\gamma_{b+5j})$, and then define the graph $G_{b+5j} = G[P_{b+5j}]$, and update $A_b\leftarrow A_b\setminus P_{b+5j}$.

After all graphs $G_l$ have been constructed, for each $1\leq l\leq h$, apply \Cref{1fail-special} on $G_l$ with source and terminal being $(s_l, t_l)$ and the same parameter $L$, which computes values $\est(s_l, t_l, e_i, f)$ for each $f\in E(\alpha_i)$. Finally, for each $i$ and $f\in E(\alpha_i)$, update the estimation:
$$\est(s, t, G\setminus \{e_i, f\})\leftarrow \min\{\est(s, t, G\setminus \{e_i, f\}), |\pi[s, s_l]| +  \est(s_l, t_l, e_i, f) + |\pi[t_l, t]|\}$$

\paragraph{Runtime.} For each offset $b$, due to the pruning procedure $A_b\leftarrow A_b\setminus P_{b+5j}$, every vertex in $G$ is explored by the truncated Djikstra for at most $L$ times. Therefore, the total time of truncated Djikstraes is bounded by $O(n^2L)$.

Next, we bound the runtime of applying \Cref{1fail-special} on $G_l$. Suppose each graph $G_l$ has $m_l$ edges and $n_l$ vertices, then by the pruning procedure, we have $\sum_{l=1}^h m_l \leq O(n^2)$, and $\sum_{l=1}^h n_l = O(n)$. By the runtime bound of \Cref{1fail-special}, the runtime of the instance on graph $G_l$ is bounded by:
$$\tilde{O}\left(\frac{m_ln_l^{1.5} + n_l^3}{L} + m_ln_l^{1/2}L/g + n^2_lL/g + m_lL^2/g + m_ln_lL^2/g^3 + m_lLg + L^2g^4 + n^2_lL^2/g^2\right)$$
By convexity, the total sum of the above formula over all $1\leq l \leq h$ is bounded by:
$$\tilde{O}\left(\frac{n^{3.5}}{L} + n^3L^2/g^3 + n^2Lg + L^2g^4\right)$$

Setting $L = \ceil{n^{5/9}}, g = \ceil{n^{7/18}}$, taking a summation over all $1\leq l\leq \ceil{n/L}$, and by looking only at the dominant terms, the total runtime is at most: 
$$\tilde{O}\left(n^{3.5} /L + n^3L^2/g^3 + n^2Lg + L^2g^4\right) = \tilde{O}(n^{3-1/18})$$

\paragraph{Correctness.} Consider any $0\leq i < |\pi|$ and $f\in E(\alpha_i)$ such that $z_{i, f}, w_{i, f}\in V(\gamma_l)$ and $|\rho_{i, f}[z_{i, f}, w_{i, f}]|\leq L$, and assume $l = b+5c, 1\leq b\leq 5$. It suffices to prove that the entire path $\rho_{i, f}[z_{i, f}, w_{i, f}]$ belongs to the induced subgraph $G_l$. Suppose otherwise, then there must be a vertex $v\in V(\rho_{i, f}[z_{i, f}, w_{i, f}])$ which has been visited by some previous truncated Djikstra search at vertices $u\in V(\gamma_{b+5d}), d<c$. Let $\gamma$ be the shortest path from $u$ to $v$ in $G\setminus E(\pi)$ of length at most $L$. Then, consider the path $\gamma\circ \rho_{i, f}[v, w_{i, f}]$. On the one hand, as $|\gamma|, |\rho_{i, f}[z_{i, f}, w_{i, f}]|\leq L$, we know that the length of $\gamma\circ \rho_{i, f}[v, w_{i, f}]$ is at most $2L$. On the other hand, since $u\in V(\gamma_{b+5d}), v\in V(\gamma_{b+5c})$ and $\pi$ is a shortest path, $\dist(u, v, G)$ is at least $L(b+5c) - L(b+5d) - L\geq 4L$, which is a contradiction.

\section{Both failures on the $st$-path}\label{2fail-unweighted}

In this section, we study the case where both edge failures $f_1, f_2$ are lying the shortest path. Let $G = (V, E)$ be a digraph with $n$ vertices, and consider a pair of vertices $s, t$ as well as an $st$-shortest path $\pi = \langle s=u_0, u_1, u_2, \cdots, u_{|\pi|} = t\rangle$.  For convenience, define $H = G\setminus E(\pi)$. The task is to compute for any pairs of edges $f_1, f_2\in V(\pi)$, the value of $\dist(s, t, G\setminus \{f_1, f_2\})$. We will prove the following statement.
\begin{lemma}\label{2fail-exact}
	All values of $\dist(s, t, G\setminus \{f_1, f_2\})$ can be computed in time $\tilde{O}(n^{3-1/7})$ when both edges $f_1, f_2$ are on $\pi$.
\end{lemma}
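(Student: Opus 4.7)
Partition $\pi = \gamma_1 \circ \cdots \circ \gamma_k$ into sub-paths of length $L$ with endpoints $s_i, t_i$ and $k = \lceil |\pi|/L \rceil$, and sample a uniformly random pivot set $U \subseteq V$ of size $\Theta(n \log n / L)$. I dispatch each pair $(f_1, f_2) \in E(\pi)^2$ with $f_1 \in E(\gamma_i), f_2 \in E(\gamma_j)$ according to $|j - i|$.

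\textbf{Main case ($j - i > 1$).} Let $\rho$ be the canonical dual-failure replacement path, decomposed as in the overview into a prefix from $s$ to some $y \in V(\pi(f_1, f_2))$ via $G \setminus E(\pi)$, a backward walk from $y$ to some $x \in V(\pi(f_1, y])$ via $(G \setminus E(\pi)) \cup E(\pi(f_1, f_2))$, and a suffix from $x$ to some $w \in V(\pi(f_2, t])$ via $G \setminus E(\pi)$, followed by $\pi[w, t]$. In the principal sub-case where $x \in V(\gamma_i)$ and $y \in V(\gamma_j)$, the prefix $\rho[s, y]$ has length at least $(j - i - 1) L \geq L$, so with high probability some $p \in U$ lies on it. I precompute, for each $(p, i) \in U \times [k]$, the outgoing and incoming single-source-replacement-path trees from and into $p$ in $G \setminus E(\gamma_i)$ via Lemma \ref{ssrp}, yielding all values $\dist(s, p, (G \setminus E(\gamma_i)) \setminus \{f\})$ and $\dist(p, z, (G \setminus E(\gamma_i)) \setminus \{f\})$ for $z \in V$, $f \in E$. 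Each call costs $\tilde{O}(n^{2.5})$ and there are $\tilde{O}(n^2/L^2)$ calls, giving $\tilde{O}(n^{4.5}/L^2)$ total. For each triple $(f_1, f_2, p)$ I evaluate
\[
\dist\bigl(s, p, (G \setminus E(\gamma_i)) \setminus \{f_2\}\bigr) + \min_{w \in V(\pi(f_2, t])}\!\Bigl[\dist\bigl(p, w, (G \setminus E(\gamma_i)) \setminus \{f_2\}\bigr) + |\pi[w, t]|\Bigr],
\]
where the inner minimum is cached per $(p, i, f_2)$ by a range-minimum structure on the relevant single-source tree. Minimizing over $p \in U$ costs $\tilde{O}(n/L)$ per pair, totaling $\tilde{O}(n^3/L)$. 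Sub-cases where $x$ or $y$ falls in an intermediate sub-path $\gamma_{i'}$ with $i < i' < j$ are handled by repeating the argument with $i$ replaced by $i'$, reusing the precomputed tables.

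\textbf{Boundary case and balance.} When $|j - i| \leq 1$, there are only $O(nL)$ such pairs. For each two-window region $\gamma_i \cup \gamma_{i+1}$ I build a sketch graph on $\{s, t\} \cup V(\gamma_i \cup \gamma_{i+1})$ whose shortcut edges are weighted by distances in $G \setminus E(\pi)$, obtained by multi-source BFS from the $O(L)$ window vertices plus $s, t$; each pair is then answered by an $s$-$t$ shortest-path computation in the sketch after deleting $f_1, f_2$. Choosing $L$ to balance $\tilde{O}(n^{4.5}/L^2)$ against $\tilde{O}(n^3/L)$ and the boundary cost yields the claimed $\tilde{O}(n^{3 - 1/7})$ runtime.

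\textbf{Main obstacle.} The subtle step is verifying that distances in $G \setminus E(\gamma_i)$ faithfully recover $|\rho[p, w]|$, even though the backward walk of $\rho$ may traverse forward $\pi$-edges inside $\gamma_i(f_1, x)$ that were deleted along with $E(\gamma_i)$. An exchange argument in the spirit of the canonicalization reasoning of Section \ref{1fail-short} shows that a canonical $\rho$ whose $x$ lies in $V(\gamma_i)$ can always be rerouted to avoid $E(\gamma_i)$ after entering $x$ without increasing its length; conversely, the shortest path returned by Lemma \ref{ssrp} in $(G \setminus E(\gamma_i)) \setminus \{f_2\}$ automatically avoids both $f_1$ (since $f_1 \in E(\gamma_i)$) and $f_2$ (by explicit exclusion), so it is a valid replacement-path candidate and therefore upper-bounds $|\rho|$. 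Extending this exchange to the intermediate-sub-path sub-cases is the main technical work.
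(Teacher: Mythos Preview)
Your proposal has two substantive gaps, one in correctness and one in the runtime balance.

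\textbf{Correctness.} Your formula $\dist\bigl(s,p,(G\setminus E(\gamma_i))\setminus\{f_2\}\bigr) + \dist\bigl(p,w,(G\setminus E(\gamma_i))\setminus\{f_2\}\bigr) + |\pi[w,t]|$ is only an upper bound on $|\rho|$; it strictly exceeds $|\rho|$ whenever $\rho$ uses an edge of $E(\gamma_i)\setminus\{f_1\}$. This happens in two places. First, the prefix $\rho[s,a]$ coincides with $\pi[s,a]$, and when the divergence point $a$ lies in $V(\gamma_i)$ this already uses forward $\gamma_i$-edges before $f_1$. Second---the case you flag---the backward walk from $y$ to $x$ uses forward $\pi$-edges in $\pi[x,y]$, and when $x\in V(\gamma_i)$ some of those lie in $E(\gamma_i)$. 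Your exchange claim (``rerouted to avoid $E(\gamma_i)$ after entering $x$'') targets the wrong segment: the portion of $\rho$ \emph{after} $x$ already lives in $G\setminus E(\pi)$ and needs no rerouting; the problematic portion is the backward walk \emph{before} reaching $x$, and there is no reason a shortest backward walk admits an equally short alternative avoiding those $\gamma_i$-edges. The paper does not use any exchange argument here. It explicitly computes backward shortcut values $\mu(v,u)$ approximating $\dist(v,u,H\cup E(\pi[u,v]))$ via the replacement-path trees $T^h_{p,f}$ from \Cref{ssrp}, truncated Dijkstra with cross-window pruning, and a dynamic program, and then assembles these into sketch graphs $H_{i,j}$, $H^{(l_i,h_i)}_j$, $H^{(l_i,h_i),(l_j,h_j)}$ tailored to several sub-cases depending on where $a,b,x,y$ fall relative to the segments.

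\textbf{Runtime.} With a single parameter $L$, the boundary case $|j-i|\le 1$ cannot be made subcubic. You have $O(nL)$ such pairs, each answered by Dijkstra in a sketch on $O(L)$ vertices with $O(L^2)$ shortcut edges, so $\tilde{O}(nL^3)$ total; moreover, building those shortcut weights by unrestricted BFS from every window vertex costs $\tilde{O}(n^3)$ (you would need truncated Dijkstra with pruning to get $\tilde{O}(n^2L)$, which you do not invoke). Balancing your precomputation $\tilde{O}(n^{4.5}/L^2)$ against $\tilde{O}(nL^3)$ forces $L=n^{7/10}$ and a final cost of $\tilde{O}(n^{3.1})$, which is supercubic; no choice of $L$ makes both terms at most $n^{3-1/7}$. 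The paper avoids this by introducing a \emph{second} scale $g<L$ and subdividing each $\gamma_l$ into segments of length $g$: when both failures share a segment it spends $\tilde{O}(L^2)$ per pair but only on $O(ng)$ pairs; otherwise it either uses an $O(g)$-sized sketch, or shares one $O(L)$-sized sketch across all $f_1$ in a fixed segment (only $O(n^2/g)$ such graphs). This two-parameter balance $L=n^{5/7}$, $g=n^{3/7}$ is what yields $\tilde{O}(n^{20/7})$.
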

\begin{proof}[Proof of \Cref{subcubic}]
	This is a direct combination of \Cref{1fail-exact} and \Cref{2fail-exact}.
\end{proof}

Let $g<n^{1/2} < L$ be two parameters to be chosen later; note that $L, g$ do not have to be the same as in the previous section. Divide the shortest path from $s$ to $t$ into sub-paths of length $L$; that is, $\pi = \gamma_1\circ\gamma_2\circ\cdots\circ\gamma_{\ceil{n/L}}$, and let $s_i, t_i$ be the head and tail of sub-path $\gamma_i$. For each sub-path $\gamma_l$, subdivide it into segments of length $g$ as $\gamma_l = \alpha^l_1\circ \alpha^l_2\circ\cdots\circ \alpha^l_{\ceil{L/g}}$, where $\alpha^l_k$ goes from vertex $s^l_k$ to $t^l_k$.

For any $0\leq i<j<|\pi|$, let $\rho_{i, j}$ refer to the shortest replacement path under weight perturbation with respect to dual edge failures $\{e_i, e_j\}$, and assume $e_i\in E(\gamma_{l_i}), e_j\in E(\gamma_{l_j})$; further assume $e_i\in E\brac{\alpha^{l_i}_{h_i}}$ and $e_j\in E\brac{\alpha^{l_j}_{h_j}}$. Since $\rho_{i, j}$ is canonical, we can assume that $\rho_{i, j}$ diverges from $\pi[s, u_i]$ at vertex $a_{i, j}$ and converges with $\pi[u_{j+1}, t]$ at vertex $b_{i, j}$. If $\rho_{i, j}$ intersects with $\pi[u_{i+1}, u_j]$, let $x_{i, j}\in V(\rho_{i, j})\cap V(\pi[u_{i+1}, u_j])$ be the vertex closest to $u_{i+1}$, and $y_{i, j}$ the closest one to $u_j$.

\subsection*{Case 1: $E(\rho_{i, j})\cap E(\pi[u_{i+1}, t_{l_i}]) = \emptyset$ or $E(\rho_{i, j})\cap E(\pi[s_{l_j}, u_j]) = \emptyset$}
\paragraph{Algorithm.} Without loss of generality, assume $E(\rho_{i, j})\cap E(\pi[s_{l_j}, u_j]) = \emptyset$. Then, apply \Cref{ssrp} on $s$ in graph $H_j = G\setminus E(\gamma_{l_j})$. To compute $|\rho_{i, j}|$, update $\est(s, t, G\setminus\{e_i, e_j\})$ with the following minimum clause:
$$\min_{v\in V(\pi[u_{j+1}, t_{l_j}])}\{\dist(s, t, H_j\setminus \{e_i\}), \dist(s, v, H_j\setminus \{e_i\}) + |\pi[v, t]|\}$$
The total runtime of apply \Cref{ssrp} is bounded by $\tilde{O}(n^{3.5}/L)$, and the runtime of calculating all distance estimates is bounded by $O(n^2L)$.

\paragraph{Correctness.} If $\rho_{i, j}$ skips over the entire sub-path $\gamma_{l_j}$, then $|\rho_{i, j}|$ would be equal to $\dist(s, t, H_j\setminus\{e_i\})$. Otherwise, as $E(\rho_{i, j})\cap E(\pi[s_{l_j}, u_j]) = \emptyset$, $\rho_{i, j}$ must contain a vertex $v\in V(\pi[u_{j+1}, t_{l_j}])$, and in this case we have $|\rho_{i, j}| = \dist(s, v, H_j\setminus \{e_i\}) + |\pi[v, t]|$.

\subsection*{Case 2: $E(\rho_{i, j})\cap E(\pi[u_{i+1}, t_{l_i}]) \neq \emptyset$, $E(\rho_{i, j})\cap E(\pi[s_{l_j}, u_j]) \neq \emptyset$, and $l_j - l_i \geq 2$}
\paragraph{Preparation.} As a preparatory step, compute the following shortest paths information.
\begin{itemize}[leftmargin=*]
	\item Uniformly sample a vertex subset $U$ of size $\frac{10n\log n}{L}$. Then, for each vertex $v\in U$ and every segment $\alpha^l_h\subseteq \gamma_l$, compute a single-source shortest paths tree to and from $v$ in graphs $H$, $H\cup E(\pi[s, s^l_h])$ and $H\cup E(\pi[t^l_h, t])$. Then, apply \Cref{ssrp} on each $p\in U$ in graph $G\setminus E(\pi[s, t_l])$ for each $l$.
	
	\item Then, for each index $h$, apply \Cref{ssrp} on $t_h$ in graph $G\setminus E(\pi[s, t_h])$, and on $s_h$ in graph $G\setminus E(\pi[s_h, t])$.
	
	\item Consider any sub-path $\gamma_l$ and one of its segments $\alpha^l_h$. For each $p\in U$, compute single-source shortest paths to and from $p$ in graphs $H\cup E(\pi[s_l, s^l_h])$ and $H\cup E(\pi[t^l_h, t_l])$.
\end{itemize}

\paragraph{Sub-path shortcuts for each $V(\gamma_l)$.} We need to compute all backward distances: $$\mu(w, z) \overset{\text{def}}{=} \dist(w, z, H\cup E(\pi[w, t_l]))$$ for any $w, z\in V(\gamma_l)$ such that $z$ lies between $s$ and $w$ on $\pi$, for all $1\leq l\leq \ceil{n/L}$.

We will perform truncated Dijkstra up to depth $L$ on vertices in $\pi$, which is similar to what we did in previous sections.

For any offset $1\leq b\leq 10$, initialize a vertex subset $A_b\leftarrow V$, and go over all the sub-paths $\gamma_b, \gamma_{b+10}, \cdots, \gamma_{b+10k}, k \leq \ceil{n/10L}$ sequentially. For each sub-paths $\gamma_{b+10j}$, enumerate all vertices $w\in V(\gamma_{b+10j})$ and perform a truncated Dijkstra up to depth $2L$ in the graph $H[A_b]\cup E(\pi[w, t_{b+10j}])$. Then, for each $z\in V(\pi[s_{b+10j}, w])$, if $z$ was visited by the truncated Dijkstra, then store a distance value $\mu(w, z)\leftarrow \dist(w, z, H[A_b]\cup E(\pi[w, t_{b+10j}]))$; otherwise if $z$ was not visited, then go over all vertices $v\in V(\pi[w, t_{b+10j}]), p\in U$, and store a distance value: $$\mu(w, z)\leftarrow \min_{v\in V(\pi[w, t_{b+10j}]), p\in U}\{ |\pi[w, v]| + \dist(v, p, H) + \dist(p, z, H) \}$$
After that, let $P_{b+10j}$ collect all the vertices visited by any truncated Dijkstra of vertices in $\gamma_{b+10j}$, and prune $A_{b}\leftarrow A_{b}\setminus P_{b+10j}$. See \Cref{2fail-case2-shortcut1} for an illustration.

\begin{figure}
	\centering
	\begin{tikzpicture}[thick,scale=0.7]
	\draw (-11, 0) node(1)[circle, draw, fill=black, inner sep=0pt, minimum width=6pt, label=180: {$s$}] {};
	\draw (11, 0) node(2)[circle, draw, fill=black, inner sep=0pt, minimum width=6pt, label=0: {$t$}] {};
	
	\draw (-6, 0) node(3)[circle, draw, fill=black, inner sep=0pt, minimum width=6pt, label=-90: {$s_l$}] {};
	\draw (6, 0) node(4)[circle, draw, fill=black, inner sep=0pt, minimum width=6pt, label=-90: {$t_l$}] {};
	
	\draw (-3, 0) node(5)[circle, draw, fill=black, inner sep=0pt, minimum width=6pt, label=-90: {$z$}] {};
	\draw (0, 0) node(6)[circle, draw, fill=black, inner sep=0pt, minimum width=6pt, label=-90: {$w$}] {};
	\draw (3, 0) node(7)[circle, draw, fill=black, inner sep=0pt, minimum width=6pt, label=-90: {$v$}] {};
	\draw (0, 4) node(8)[circle, draw, fill=black, inner sep=0pt, minimum width=6pt, label=-90: {$p$}] {};
	
	\begin{scope}[on background layer]
		\draw [->, line width = 0.5mm, dotted] (1) to (3);
		\draw [->, line width = 0.5mm, dotted] (4) to (2);
		\draw [->, line width = 0.5mm] (3) to (4);
		\draw [->, line width = 0.5mm, color=orange] (6) to (7);
		\draw [->, line width = 0.5mm, color=orange] (7) to[out=90, in=0] (8);
		\draw [->, line width = 0.5mm, color=orange] (8) to[out=180, in=90] (5);

	\end{scope}
\end{tikzpicture}
	\caption{A typical sub-path shortcut (drawn as the orange curve) from $w$ to $z$ uses some edges on $\pi[w, t_l]$, and then travels to $z$ in $H$.}
	\label{2fail-case2-shortcut1}
\end{figure}
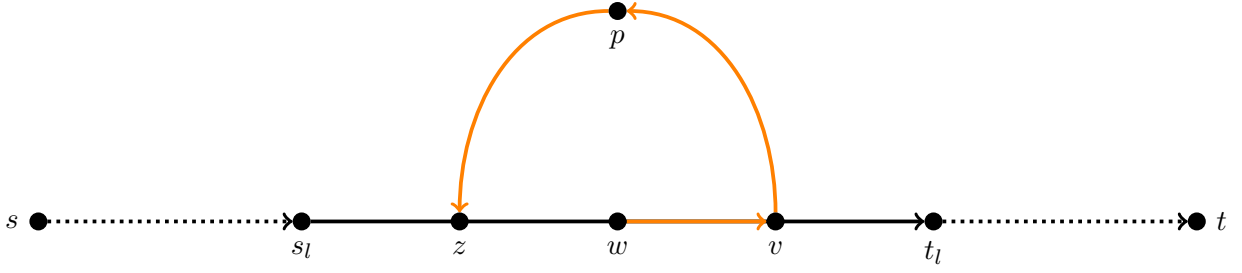

\paragraph{Backward shortcuts.} Next, we need to add some backward shortcuts from each vertex $u\in V(\gamma_l)$ to $v\in V(\gamma_h)$ which are weighted edges $(u, v)$ with weight $\mu_0(u, v)$, where $l-h\geq 2$. Each of the shortcut edge $(u, v)$, if added, will correspond to a path from $u$ to $v$ in graph $H\cup E(\pi[t_{h}, u])$. See \Cref{2fail-case2-shortcut2} for an illustration.

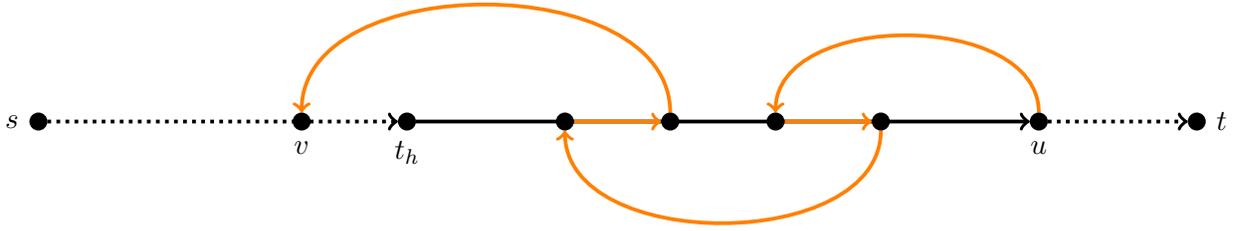
\begin{figure}
	\centering
	\begin{tikzpicture}[thick,scale=0.7]
	\draw (-11, 0) node(1)[circle, draw, fill=black, inner sep=0pt, minimum width=6pt, label=180: {$s$}] {};
	\draw (11, 0) node(2)[circle, draw, fill=black, inner sep=0pt, minimum width=6pt, label=0: {$t$}] {};
	
	\draw (-4, 0) node(3)[circle, draw, fill=black, inner sep=0pt, minimum width=6pt, label=-90: {$t_{h}$}] {};
	\draw (8, 0) node(4)[circle, draw, fill=black, inner sep=0pt, minimum width=6pt, label=-90: {$u$}] {};
	\draw (-6, 0) node(5)[circle, draw, fill=black, inner sep=0pt, minimum width=6pt, label=-90: {$v$}] {};
	
	\draw (-1, 0) node(6)[circle, draw, fill=black, inner sep=0pt, minimum width=6pt] {};
	\draw (1, 0) node(7)[circle, draw, fill=black, inner sep=0pt, minimum width=6pt] {};
	\draw (3, 0) node(8)[circle, draw, fill=black, inner sep=0pt, minimum width=6pt] {};
	\draw (5, 0) node(9)[circle, draw, fill=black, inner sep=0pt, minimum width=6pt] {};
	
	\begin{scope}[on background layer]
		\draw [->, line width = 0.5mm, dotted] (1) to (3);
		\draw [->, line width = 0.5mm, dotted] (4) to (2);
		\draw [->, line width = 0.5mm] (3) to (4);

		\draw [->, line width = 0.5mm, color = orange] (4) to[out=90, in=90] (8);
		\draw [->, line width = 0.5mm, color = orange] (8) to (9);
		\draw [->, line width = 0.5mm, color = orange] (9) to[out=-90, in=-90] (6);
		\draw [->, line width = 0.5mm, color = orange] (6) to (7);
		\draw [->, line width = 0.5mm, color = orange] (7) to[out=90, in=90] (5);
	\end{scope}
\end{tikzpicture}
	\caption{A typical backward shortcut (drawn as the orange curve) corresponds to a path in $H\cup E(\pi[t_{h}, u])$.}
	\label{2fail-case2-shortcut2}
\end{figure}

These shortcuts are computed as following. Initially all values of $\mu_0$ are infinity. Then, for each vertex $p\in U$ and every pair of $h, l$ with $l - h\geq 2$, and every edge $f\in E(\gamma_l)$, let $T^h_{p, f}$ be the shortest path tree rooted at $p$ in graph $G\setminus \brac{E(\pi[s, t_h])\cup \{f\}}$; recall that we have computed all the shortest paths trees by \Cref{ssrp}.

Then, for each vertex $v\in V(\gamma_h)$, find its ancestor $w$ in tree $T^h_{p, f}$ such that $w\in V(\pi)$ and $|\pi[w, t]|$ is minimized; this operation can be done in $O(\log^2n)$ time using classical tree data structures. If $w$ lies between $s_l$ and $f$ on $\pi$, then update a shortcut edge $(w, v)$ with edge weight $\mu_0(w, v)$ equal to the tree distance in $T^h_{p, f}$; namely, assign: $$\mu_0(w, v)\leftarrow\min\left\{\mu_0(w, v), \dist\brac{w, v, T^h_{p, f}} \right\}$$
See \Cref{2fail-case2-shortcut3} for an illustration.

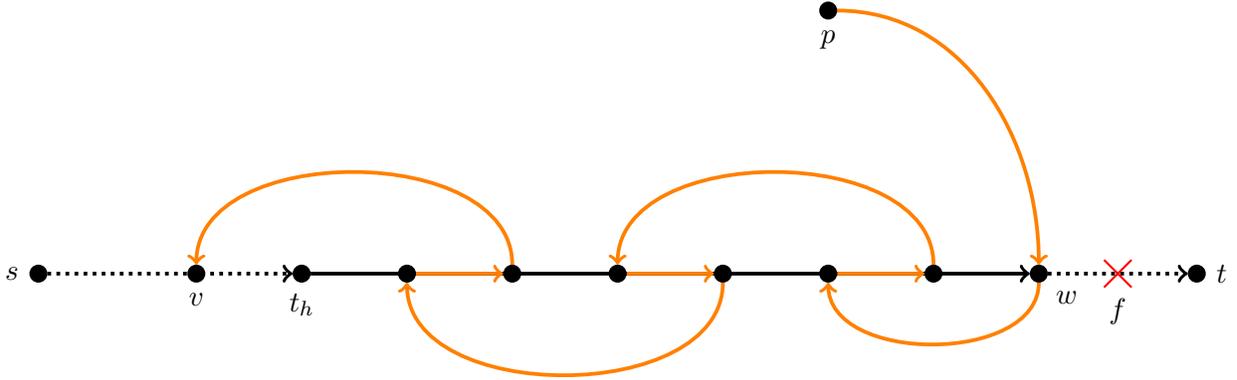
\begin{figure}
	\centering
	\begin{tikzpicture}[thick,scale=0.7]
	\draw (-11, 0) node(1)[circle, draw, fill=black, inner sep=0pt, minimum width=6pt, label=180: {$s$}] {};
	\draw (11, 0) node(2)[circle, draw, fill=black, inner sep=0pt, minimum width=6pt, label=0: {$t$}] {};
	
	\draw (4, 5) node(3)[circle, draw, fill=black, inner sep=0pt, minimum width=6pt, label=-90: {$p$}] {};
	\draw (-8, 0) node(4)[circle, draw, fill=black, inner sep=0pt, minimum width=6pt, label=-90: {$v$}] {};
	\draw (-6, 0) node(5)[circle, draw, fill=black, inner sep=0pt, minimum width=6pt, label=-90: {$t_h$}] {};
	\draw (8, 0) node(6)[circle, draw, fill=black, inner sep=0pt, minimum width=6pt, label=-45: {$w$}] {};

	\draw (-4, 0) node(7)[circle, draw, fill=black, inner sep=0pt, minimum width=6pt] {};	
	\draw (-2, 0) node(8)[circle, draw, fill=black, inner sep=0pt, minimum width=6pt] {};
	\draw (0, 0) node(9)[circle, draw, fill=black, inner sep=0pt, minimum width=6pt] {};
	\draw (2, 0) node(10)[circle, draw, fill=black, inner sep=0pt, minimum width=6pt] {};
	\draw (4, 0) node(11)[circle, draw, fill=black, inner sep=0pt, minimum width=6pt] {};
	\draw (6, 0) node(12)[circle, draw, fill=black, inner sep=0pt, minimum width=6pt] {};
		
	\draw (9.5, 0) node[cross=6, red, label=-90: {$f$}] {};
	
	\begin{scope}[on background layer]
		\draw [->, line width = 0.5mm, dotted] (1) to (5);
		\draw [->, line width = 0.5mm] (5) to (6);
		\draw [->, line width = 0.5mm, dotted] (6) to (2);

		\draw [->, line width = 0.5mm, color = orange] (3) to[out=0, in=90] (6);
		
		\draw [->, line width = 0.5mm, color = orange] (7) to (8);
		\draw [->, line width = 0.5mm, color = orange] (9) to (10);
		\draw [->, line width = 0.5mm, color = orange] (11) to (12);

		\draw [->, line width = 0.5mm, color = orange] (6) to[out=-90, in=-90] (11);
		\draw [->, line width = 0.5mm, color = orange] (12) to[out=90, in=90] (9);
		\draw [->, line width = 0.5mm, color = orange] (10) to[out=-90, in=-90] (7);
		\draw [->, line width = 0.5mm, color = orange] (8) to[out=90, in=90] (4);
		
	\end{scope}
\end{tikzpicture}
	\caption{The value of $\mu_0(w, v)$ corresponds to a tree path of $T^h_{p, f}$ drawn as the orange curve.}
	\label{2fail-case2-shortcut3}
\end{figure}

For every pair of $h, l$ with $l - h\geq 2$, and every edge $f\in E(\gamma_l)$, let $T^h_f$ be the shortest path tree rooted at $t_h$ in graph $G\setminus \brac{E(\pi[s, t_h])\cup \{f\}}$. Then, go over each vertex $v\in V(\gamma_h)$, find its ancestor $w$ in tree $T^h_f$ such that $w\in V(\pi)$ and $|\pi[w, t]|$ is minimized. If $w$ lies between $s$ and $f$ on $\pi$, then for each vertex $u\in V(\pi[w, f))\cap V(\gamma_l)$, store a weight value: $$\mu_0(t_h, v \mid u) \leftarrow \min\left\{\mu_0(t_h, v \mid u), \dist\brac{t_h, v, T^h_f} \right\}$$
Intuitively, $\mu_0(t_h, v \mid u)$ corresponds to a path from $t_h$ to $v$ in $G\setminus E(\pi[s, t_h])$ that does not use any edges on $\pi[u, t]$. See \Cref{2fail-case2-shortcut4} for an illustration.

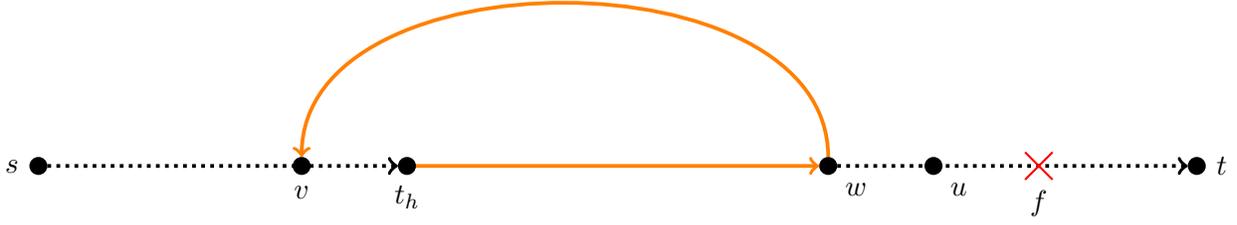
\begin{figure}
	\centering
	\begin{tikzpicture}[thick,scale=0.7]
	\draw (-11, 0) node(1)[circle, draw, fill=black, inner sep=0pt, minimum width=6pt, label=180: {$s$}] {};
	\draw (11, 0) node(2)[circle, draw, fill=black, inner sep=0pt, minimum width=6pt, label=0: {$t$}] {};
	
	\draw (-6, 0) node(4)[circle, draw, fill=black, inner sep=0pt, minimum width=6pt, label=-90: {$v$}] {};
	\draw (-4, 0) node(5)[circle, draw, fill=black, inner sep=0pt, minimum width=6pt, label=-90: {$t_h$}] {};
	\draw (4, 0) node(6)[circle, draw, fill=black, inner sep=0pt, minimum width=6pt, label=-45: {$w$}] {};
	\draw (6, 0) node(7)[circle, draw, fill=black, inner sep=0pt, minimum width=6pt, label=-45: {$u$}] {};
		
	\draw (8, 0) node[cross=6, red, label=-90: {$f$}] {};
	
	\begin{scope}[on background layer]
		\draw [->, line width = 0.5mm, dotted] (1) to (5);
		\draw [->, line width = 0.5mm, dotted] (6) to (2);

		\draw [->, line width = 0.5mm, color=orange] (5) to (6);
		\draw [->, line width = 0.5mm, color=orange] (6) to[out=90, in=90] (4);
	\end{scope}
\end{tikzpicture}
	\caption{The value of $\mu_0(t_h, v\mid u)$ corresponds to a tree path in $T^h_f$ drawn as the orange curve.}
	\label{2fail-case2-shortcut4}
\end{figure}

Once we have computed all the edge weights $\mu_0$, we will augment these shortcuts with sub-path shortcuts which were computed previously. Intuitively, for each pair of $w\in V(\gamma_l), v\in V(\gamma_h)$, we want a shortcut of weight $\mu(w, v)$ from $w$ to $v$ which corresponds to a path in graph $H\cup E(\pi[v, w])$, rather than the more restrictive graph $H\cup E(\pi[t_h, w])$. These values of $\mu(v, w)$ is computed by the following dynamic programming. Fix any vertex $w\in V(\gamma_l)$ and go over all vertices in $v\in V(\gamma_h)$ in the reverse order from $t_h$ to $s_h$, and update the value:
$$\mu(w, v) \leftarrow \min_{z\in V(\pi(v, t_h])}\{\mu_0(w, v), \mu(w, z) + \mu(z, v), \mu(w, z) + |\pi[z, t_h]| + \mu_0(t_h, v \mid w) \}$$

Finally, for each pair of indices $h\leq l-2$ and vertices $u\in V(\gamma_h), v\in V(\gamma_l)$, we need to compute one more kind of backward shortcut value $\mu(s_l, t_h\mid u, v)$ which is the length of a path from $s_l$ to $t_h$ in graph $H\cup \pi[u, v]$ using shortcuts $\mu(*, *)$. This can be calculated by the formula:
$$\mu(s_l, t_h\mid u, v) = \min_{x\in V(\pi[u, t_h], y\in V(\pi[s_l, v]))}\{|\pi[s_l, y]| + \mu(y, x) + |\pi[x, s_l]| \}$$

\paragraph{Sub-case 2(a): $x_{i, j}\in V(\rho_{i, j}[s, y_{i, j}])$.} In this case, $\rho_{i, j}$ contains the entire sub-path $\pi[t_{l_i}, s_{l_j}]$. Apply \Cref{ssrp} on $s$ in graph $H\cup E(\pi[s, t_{l_i}])$ and on $t$ in graph $H\cup E(\pi[s_{l_j}, t])$. Then, update the estimation $\est(s, t, G\setminus\{e_i, e_j\})$ with the following quantity:
$$\min\{\dist\left(s, t_{l_i}, H\cup E(\pi[s, t_{l_i}])\setminus \{e_i\}\right) + \dist(t_{l_i}, s_{l_j}, G) + \dist\left(s_{l_j}, t, H\cup E(\pi[s_{l_j}, t])\setminus \{e_j\}\right) \}$$

For the rest three sub-cases, we assume $x_{i, j}\notin V(\rho_{i, j}[s, y_{i, j}])$; that is, $y_{i, j}$ lies between $s$ and $x_{i, j}$ on $\rho_{i, j}$.

\paragraph{Sub-case 2(b): $x_{i, j}\in V\brac{\alpha^{l_i}_{h_i}}$ and $y_{i, j}\in V\brac{\alpha^{l_j}_{h_j}}$.} To compute $|\rho_{i, j}|$, let us build the following shortcut graph $H_{i, j}$ with edge weight $\wts(*, *)$.
\begin{itemize}[leftmargin=*]
	\item \textbf{Vertices.} Add vertices $\{s, t\}\cup\{t_{l_i}, s_{l_j}\}\cup U\cup V\brac{\alpha^{l_i}_{h_i}}\cup V\brac{\alpha^{l_j}_{h_j}}$ to $H_{i, j}$.
	
	\item \textbf{Edges.} Add the following types of edges.
	\begin{enumerate}[(i),leftmargin=*]
		\item Edges on the edges $E\brac{\alpha^{l_i}_{h_i}}\cup E\brac{\alpha^{l_j}_{h_j}}\setminus \{e_i, e_j\}$, and shortcuts $\brac{s, s^{l_i}_{h_i}}, \brac{t^{l_j}_{h_j}, t}$ with weight $|\pi[s, s^{l_i}_{h_i}]|$ and $|\pi[t^{l_j}_{h_j}, t]|$, respectively.

		\item Add edge $(s, s_{l_j})$ with weight $\wts(s, s_{l_j}) = \dist(s, s_{l_j}, G\setminus (E(\pi[s_{l_j}, t])\cup \{e_i\}))$, and edge $(t_{l_i}, t)$ with weight $\wts(t_{l_i}, t) = \dist(t_{l_i}, t, G\setminus (E(\pi[s, t_{l_i}])\cup \{e_j\}))$.
		
		Also, add edge $\brac{s_{l_j}, s^{l_j}_{h_j}}$ with weight $\wts\brac{s_{l_j}, s^{l_j}_{h_j}} = |\pi[s_{l_j}, s^{l_j}_{h_j}]|$, and edge $\brac{t^{l_i}_{h_i}, t_{l_i}}$ with weight $\wts\brac{t^{l_i}_{h_i}, t_{l_i}} = |\pi[t^{l_i}_{h_i}, t_{l_i}]|$.
		
		Furthermore, for each $p\in U$, add edge $\brac{p, s^{l_j}_{h_j}}$ with weight $$\wts\brac{p, s^{l_j}_{h_j}} = \dist\left(p, s^{l_j}_{h_j}, H\cup E(\pi[s_{l_j}, s^{l_j}_{h_j}])\right)$$
		and edge $\brac{t^{l_i}_{h_i}, p}$ with weight $$\wts\brac{t^{l_i}_{h_i}, p} = \dist\left(t^{l_i}_{h_i}, p, H\cup E(\pi[t^{l_i}_{h_i}, p])\right)$$
		
		\item For each pivot vertex $p\in U$ and vertices $u\in \{s\}\cup V\brac{\alpha^{l_i}_{h_i}}, v\in \{t\}\cup V\brac{\alpha^{l_j}_{h_j}}$, add edge $(u, p)$ with weight $$\wts(u, p) = \dist\left(u, p, H\cup E(\pi[s, s^{l_i}_{h_i}])\right)$$
		and edge $(p, v)$ with weight $$\wts(p, v) = \dist\left(p, v, H\cup E(\pi[t^{l_j}_{h_j}, t])\right)$$
		
		\item For each $u\in V(\pi[u_{i+1}, t^{l_i}_{h_i}])$ and $v\in V(\pi[s^{l_j}_{h_j}, u_j])$, add edge $(v, u)$ with weight $\wts(v, u) = \mu(v, u)$.
	\end{enumerate}	
	
\end{itemize}

Then, apply Dijkstra's algorithm in $H_{i, j}$ to compute the shortest path from $s$ to $t$, and update $$\est(s, t, G\setminus\{e_i, e_j\})\leftarrow \min\{\est(s, t, G\setminus\{e_i, e_j\}), \dist(s, t, H_{i, j}) \}$$
See \Cref{2fail-case2-shortcut5} for an illustration.

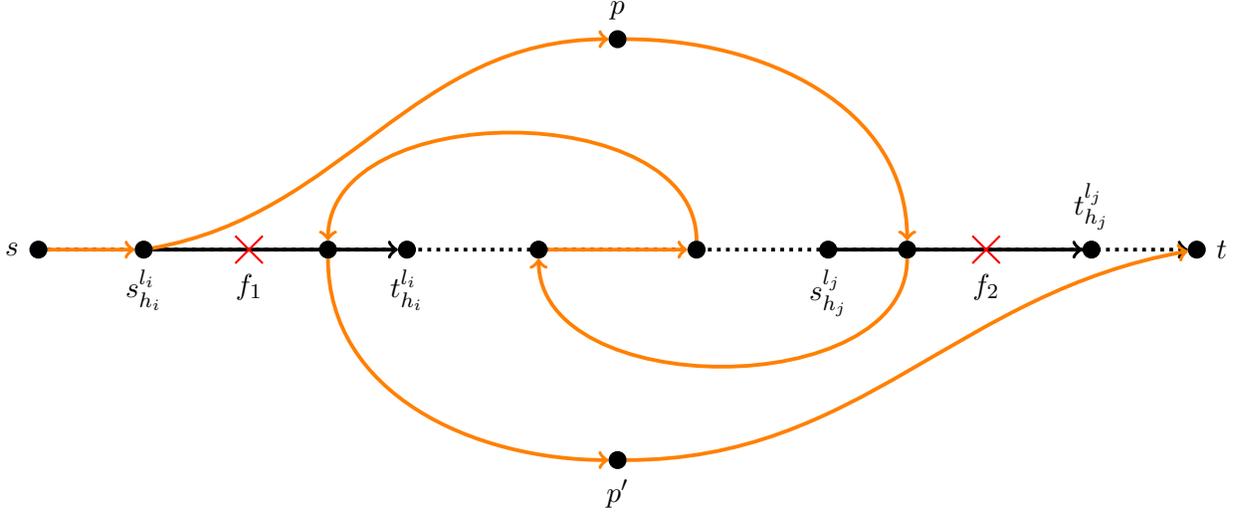
\begin{figure}
	\centering
	\begin{tikzpicture}[thick,scale=0.7]
	\draw (-11, 0) node(1)[circle, draw, fill=black, inner sep=0pt, minimum width=6pt, label=180: {$s$}] {};
	\draw (11, 0) node(2)[circle, draw, fill=black, inner sep=0pt, minimum width=6pt, label=0: {$t$}] {};
	
	\draw (-9, 0) node(3)[circle, draw, fill=black, inner sep=0pt, minimum width=6pt, label=-90: {$s_{h_i}^{l_i}$}] {};
	\draw (-4, 0) node(4)[circle, draw, fill=black, inner sep=0pt, minimum width=6pt, label=-90: {$t^{l_i}_{h_i}$}] {};
	\draw (4, 0) node(5)[circle, draw, fill=black, inner sep=0pt, minimum width=6pt, label=-90: {$s^{l_j}_{h_j}$}] {};
	\draw (9, 0) node(6)[circle, draw, fill=black, inner sep=0pt, minimum width=6pt, label=90: {$t^{l_j}_{h_j}$}] {};
	\draw (-5.5, 0) node(7)[circle, draw, fill=black, inner sep=0pt, minimum width=6pt] {};
	\draw (5.5, 0) node(8)[circle, draw, fill=black, inner sep=0pt, minimum width=6pt] {};
	\draw (-1.5, 0) node(9)[circle, draw, fill=black, inner sep=0pt, minimum width=6pt] {};
	\draw (1.5, 0) node(10)[circle, draw, fill=black, inner sep=0pt, minimum width=6pt] {};
	\draw (0, 4) node(11)[circle, draw, fill=black, inner sep=0pt, minimum width=6pt, label=90: {$p$}] {};
	\draw (0, -4) node(12)[circle, draw, fill=black, inner sep=0pt, minimum width=6pt, label=-90: {$p^\prime$}] {};
	
	\draw (-7, 0) node[cross=6, red, label=-90: {$f_1$}] {};
	\draw (7, 0) node[cross=6, red, label=-90: {$f_2$}] {};
	
	\begin{scope}[on background layer]
		\draw [->, line width = 0.5mm, dotted] (1) to (2);
		\draw [->, line width = 0.5mm] (3) to (4);
		\draw [->, line width = 0.5mm] (5) to (6);

		\draw [->, line width = 0.5mm, color=orange] (1) to (3);
		\draw [->, line width = 0.5mm, color=orange] (3) to[out=10, in=180] (11);
		\draw [->, line width = 0.5mm, color=orange] (11) to[out=0, in=90] (8);
		\draw [->, line width = 0.5mm, color=orange] (8) to[out=-90, in=-90] (9);
		\draw [->, line width = 0.5mm, color=orange] (9) to (10);
		\draw [->, line width = 0.5mm, color=orange] (10) to[out=90, in=90] (7);
		\draw [->, line width = 0.5mm, color=orange] (7) to[out=-90, in=180] (12);
		\draw [->, line width = 0.5mm, color=orange] (12) to[out=0, in=-170] (2);

	\end{scope}
\end{tikzpicture}
	\caption{The shortcut edges in $H_{i, j}$ captures the shortest replacement path $\rho_{i, j}$ drawn as the orange curve.}
	\label{2fail-case2-shortcut5}
\end{figure}

\paragraph{Sub-case 2(c): $x_{i, j}\notin V\brac{\alpha^{l_i}_{h_i}}$ and $y_{i, j}\in V\brac{\alpha^{l_j}_{h_j}}$.} The other symmetric case where $x_{i, j}\in V\brac{\alpha^{l_i}_{h_i}}$ and $y_{i, j}\notin V\brac{\alpha^{l_j}_{h_k}}$ will be handled in a similar manner. In this case, we need to compute some common information that will be useful for all presumable edge failures $e_i$ on segment $\alpha^{l_i}_{h_i}$. Build the following shortcut graph $H^{(l_i, h_i)}_j$ with edge weight $\wts(*, *)$.
\begin{itemize}[leftmargin=*]
	\item \textbf{Vertices.} Add $\{s_{l_j}, t\}\cup U\cup V(\gamma_{l_i})\cup V\brac{\alpha^{l_j}_{h_j}}$ to $H^{(l_i, h_i)}_j$.
	
	\item \textbf{Edges.} Add the following types of edges.
	\begin{enumerate}[(i),leftmargin=*]
		\item Add edges $E\brac{\gamma_{l_i}\setminus \alpha^{l_i}_{h_i}}\cup E\brac{\alpha^{l_j}_{h_j}}\setminus \{e_j\}$ to $H^{(l_i, h_i)}_j$, and shortcuts $\brac{s, s^{l_i}_{h_i}}, \brac{t^{l_j}_{h_j}, t}$ with weight $|\pi[s, s^{l_i}_{h_i}]|$ and $|\pi[t^{l_j}_{h_j}, t]|$, respectively.
		
		\item Add edge $\brac{s_{l_j}, s^{l_j}_{h_i}}$ with weight $\wts\brac{s_{l_j}, s^{l_j}_{h_i}} = |\pi[s_{l_j}, s^{l_j}_{h_i}]|$. Add edge $(t_{l_i}, t)$ with weight $\wts(t_{l_i}, t) = \dist(t_{l_i}, t, G\setminus (E(\pi[s, t_{l_i}])\cup \{e_j\}))$. 
		
		Also, add edge $\brac{s_{l_j}, s^{l_j}_{h_j}}$ with weight $\wts\brac{s_{l_j}, s^{l_j}_{h_j}} = |\pi[s_{l_j}, s^{l_j}_{h_j}]|$.
		
		Furthermore, for each $p\in U$, add edge $\brac{p, s^{l_j}_{h_j}}$ with weight $$\wts\brac{p, s^{l_j}_{h_j}} = \dist\left(p, s^{l_j}_{h_j}, H\cup E(\pi[s_{l_j}, s^{l_j}_{h_j}])\right)$$
		
		\item For each pivot vertex $p\in U$ and vertices $u\in V(\gamma_{l_i}), v\in \{t\}\cup V\brac{\alpha^{l_j}_{h_j}}$, add edge $(u, p)$ with weight $$\wts(u, p) = \dist\left(u, p, H\cup E\brac{\pi[s, s^{l_i}_{h_i}]}\right)$$
		and edge $(p, v)$ with weight $$\wts(p, v) = \dist\left(p, v, H\cup E\brac{\pi[t^{l_j}_{h_j}, t]}\right)$$
		
		\item For each $u\in V\brac{\pi[t^{l_i}_{h_i}, t_{l_i}]}$ and $v\in V\brac{\pi[s^{l_j}_{h_j}, u_j]}$, add edge $(v, u)$ with weight $\wts(v, u) = \mu(v, u)$.
	\end{enumerate}
\end{itemize}

Then, apply Dijkstra's algorithm in $H^{(l_i, h_i)}_j$ to compute shortest paths from $\{s_{l_j}\}\cup U$ to $t$. After that, for each edge $e_i\in E(\alpha^{l_i}_{h_i})$, and update $\est(s, t, G\setminus\{e_i, e_j\})$ with minimum of the following two quantities:
$$\dist\brac{s, s_{l_j}, G\setminus\brac{E(\pi[s_{l_j}, t]\cup\{e_i\})}} + \dist\brac{s_{l_j}, t, H^{(l_i, h_i)}_j}$$
$$\min_{p\in U, v\in V\brac{\pi[s^{l_i}_{h_i}, u_i]}}\left\{\min\left\{|\pi[s, v]| + \dist(v, p, H), \dist(s, p, G\setminus (E(\pi[s^{l_i}_{h_i}, t])))\right\} + \dist\brac{p, t, H^{(l_i, h_i)}_j}\right\}$$
See \Cref{2fail-case2-shortcut6} for an illustration.

\begin{figure}
	\centering
	\begin{tikzpicture}[thick,scale=0.7]
	\draw (-11, 0) node(1)[circle, draw, fill=black, inner sep=0pt, minimum width=6pt, label=180: {$s$}] {};
	\draw (11, 0) node(2)[circle, draw, fill=black, inner sep=0pt, minimum width=6pt, label=0: {$t$}] {};
	
	\draw (-9, 0) node(3)[circle, draw, fill=black, inner sep=0pt, minimum width=6pt, label=-90: {$s_{h_i}^{l_i}$}] {};
	\draw (-6, 0) node(4)[circle, draw, fill=black, inner sep=0pt, minimum width=6pt, label=-90: {$t^{l_i}_{h_i}$}] {};
	\draw (4, 0) node(5)[circle, draw, fill=black, inner sep=0pt, minimum width=6pt, label=-90: {$s^{l_j}_{h_j}$}] {};
	\draw (9, 0) node(6)[circle, draw, fill=black, inner sep=0pt, minimum width=6pt, label=90: {$t^{l_j}_{h_j}$}] {};
	\draw (-4.5, 0) node(7)[circle, draw, fill=black, inner sep=0pt, minimum width=6pt] {};
	\draw (5.5, 0) node(8)[circle, draw, fill=black, inner sep=0pt, minimum width=6pt] {};
	\draw (-1.5, 0) node(9)[circle, draw, fill=black, inner sep=0pt, minimum width=6pt] {};
	\draw (1.5, 0) node(10)[circle, draw, fill=black, inner sep=0pt, minimum width=6pt] {};
	\draw (0, 4) node(11)[circle, draw, fill=black, inner sep=0pt, minimum width=6pt, label=90: {$p$}] {};
	\draw (0, -4) node(12)[circle, draw, fill=black, inner sep=0pt, minimum width=6pt, label=-90: {$p^\prime$}] {};
	
	\draw (-3, 0) node(13)[circle, draw, fill=black, inner sep=0pt, minimum width=6pt, label=-90: {$t_{l_i}$}] {};
	\draw (7, 0) node[cross=6, red, label=-90: {$f_2$}] {};
	
	\begin{scope}[on background layer]
		\draw [->, line width = 0.5mm, dotted] (1) to (2);
		\draw [->, line width = 0.5mm, color=red] (3) to (4);
		\draw [->, line width = 0.5mm] (4) to (13);
		\draw [->, line width = 0.5mm] (5) to (6);

		\draw [->, line width = 0.5mm, color=orange, dotted] (1) to[out=10, in=180] (11);
		\draw [->, line width = 0.5mm, color=orange] (11) to[out=0, in=90] (8);
		\draw [->, line width = 0.5mm, color=orange] (8) to[out=-90, in=-90] (9);
		\draw [->, line width = 0.5mm, color=orange] (9) to (10);
		\draw [->, line width = 0.5mm, color=orange] (10) to[out=90, in=90] (7);
		\draw [->, line width = 0.5mm, color=orange] (7) to[out=-90, in=180] (12);
		\draw [->, line width = 0.5mm, color=orange] (12) to[out=0, in=-170] (2);

	\end{scope}
\end{tikzpicture}
	\caption{As a typical example, the shortcut edges in $H^{(l_i, h_i)}_{j}$ captures the suffix of the shortest replacement path $\rho_{i, j}[p, t]$ drawn as the orange curve. The dotted orange curve will be computed once we are given $e_i$.}
	\label{2fail-case2-shortcut6}
\end{figure}

\paragraph{Sub-case 2(d): $x_{i, j}\notin V(\alpha^{l_i}_{h_i})$ and $y_{i, j}\notin V(\alpha^{l_j}_{h_j})$}
Construct the following shortcut graph $H^{(l_i, h_i), (l_j, h_j)}$ with edge weight $\wts(*, *)$.
\begin{itemize}[leftmargin=*]
	\item \textbf{Vertices.} Add $\{s, t\}\cup U\cup V(\gamma_{l_i})\cup V(\gamma_{l_j})$ to $H^{(l_i, h_i), (l_j, h_j)}$.
	\item \textbf{Edges.} Add the following types of edges.
	\begin{enumerate}[(i),leftmargin=*]
		\item Add edges $E\brac{\gamma_{l_i}\setminus \alpha^{l_i}_{h_i}}\cup E\brac{\gamma_{l_j}\setminus \alpha^{l_j}_{h_j}}$ to $H^{(l_i, h_i), (l_j, h_j)}$.
		
		\item For each pivot vertex $p\in U$ and vertices $u\in \{s\}\cup V(\gamma_{l_i}), v\in \{t\}\cup V(\gamma_{l_j})$, add edge $(u, p)$ with weight $$\wts(u, p) = \dist\left(u, p, H\cup E(\pi[s, s^{l_i}_{h_i}])\right)$$
		and edge $(p, v)$ with weight $$\wts(p, v) = \dist\left(p, v, H\cup E(\pi[t^{l_j}_{h_j}, t])\right)$$
		
		\item For each $u\in V\brac{\pi[t^{l_i}_{h_i}, t_{l_i}]}$ and $v\in V\brac{\pi[s_{l_j}, s^{l_j}_{h_j}]}$, add an edge $(v, u)$ with weight $\wts(v, u) = \mu(v, u)$.
	\end{enumerate}
\end{itemize}

For each vertex $p\in \{s, t\}\cup \{t_{l_i}, s_{l_j}\}\cup U$, compute single-source shortest paths to and from $p$ in $H^{(l_i, h_i), (l_j, h_j)}$. See \Cref{2fail-case2-shortcut7} for an illustration.

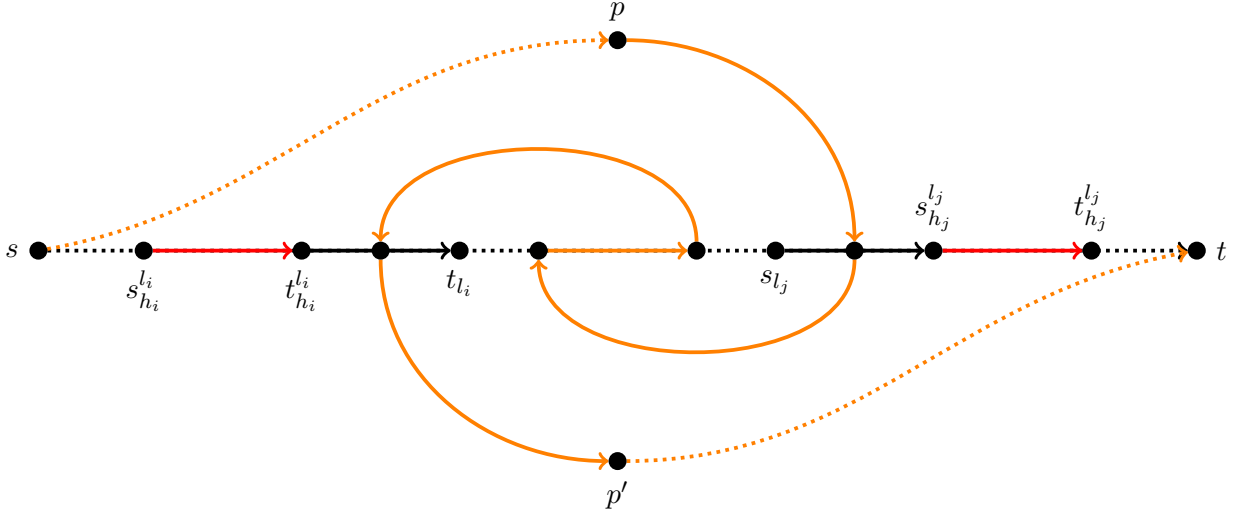
\begin{figure}
	\centering
	\begin{tikzpicture}[thick,scale=0.7]
	\draw (-11, 0) node(1)[circle, draw, fill=black, inner sep=0pt, minimum width=6pt, label=180: {$s$}] {};
	\draw (11, 0) node(2)[circle, draw, fill=black, inner sep=0pt, minimum width=6pt, label=0: {$t$}] {};
	
	\draw (-9, 0) node(3)[circle, draw, fill=black, inner sep=0pt, minimum width=6pt, label=-90: {$s_{h_i}^{l_i}$}] {};
	\draw (-6, 0) node(4)[circle, draw, fill=black, inner sep=0pt, minimum width=6pt, label=-90: {$t^{l_i}_{h_i}$}] {};
	\draw (6, 0) node(5)[circle, draw, fill=black, inner sep=0pt, minimum width=6pt, label=90: {$s^{l_j}_{h_j}$}] {};
	\draw (9, 0) node(6)[circle, draw, fill=black, inner sep=0pt, minimum width=6pt, label=90: {$t^{l_j}_{h_j}$}] {};
	\draw (-4.5, 0) node(7)[circle, draw, fill=black, inner sep=0pt, minimum width=6pt] {};
	\draw (4.5, 0) node(8)[circle, draw, fill=black, inner sep=0pt, minimum width=6pt] {};
	\draw (-1.5, 0) node(9)[circle, draw, fill=black, inner sep=0pt, minimum width=6pt] {};
	\draw (1.5, 0) node(10)[circle, draw, fill=black, inner sep=0pt, minimum width=6pt] {};
	\draw (0, 4) node(11)[circle, draw, fill=black, inner sep=0pt, minimum width=6pt, label=90: {$p$}] {};
	\draw (0, -4) node(12)[circle, draw, fill=black, inner sep=0pt, minimum width=6pt, label=-90: {$p^\prime$}] {};
	
	\draw (-3, 0) node(13)[circle, draw, fill=black, inner sep=0pt, minimum width=6pt, label=-90: {$t_{l_i}$}] {};
	\draw (3, 0) node(14)[circle, draw, fill=black, inner sep=0pt, minimum width=6pt, label=-90: {$s_{l_j}$}] {};
	
	\begin{scope}[on background layer]
		\draw [->, line width = 0.5mm, dotted] (1) to (2);
		\draw [->, line width = 0.5mm, color=red] (3) to (4);
		\draw [->, line width = 0.5mm, color=red] (5) to (6);
		\draw [->, line width = 0.5mm] (4) to (13);
		\draw [->, line width = 0.5mm] (14) to (5);

		\draw [->, line width = 0.5mm, color=orange, dotted] (1) to[out=10, in=180] (11);
		\draw [->, line width = 0.5mm, color=orange] (11) to[out=0, in=90] (8);
		\draw [->, line width = 0.5mm, color=orange] (8) to[out=-90, in=-90] (9);
		\draw [->, line width = 0.5mm, color=orange] (9) to (10);
		\draw [->, line width = 0.5mm, color=orange] (10) to[out=90, in=90] (7);
		\draw [->, line width = 0.5mm, color=orange] (7) to[out=-90, in=180] (12);
		\draw [->, line width = 0.5mm, color=orange, dotted] (12) to[out=0, in=-170] (2);

	\end{scope}
\end{tikzpicture}
	\caption{As a typical example, the shortcut edges in $H^{(l_i, h_i), (l_j, h_j)}$ captures a middle sub-path of  shortest replacement path $\rho_{i, j}[p, p^\prime]$ drawn as the orange curve. The dotted orange curves will be computed using graph $H^{(l_i, h_i), (l_j, h_j)}_{i, j}$ once we are given $e_i, e_j$.}
	\label{2fail-case2-shortcut7}
\end{figure}

After that, for every pair of edge failures $e_i, e_j$, build a shortcut graph based on $H^{(l_i, h_i), (l_j, h_j)}$ with edge weight $\wts(*, *)$.
\begin{itemize}[leftmargin=*]
	\item \textbf{Vertices.} Add vertices $\{s_{l_j}, t_{l_i}\}\cup\{s, t\}\cup U\cup V\brac{\alpha^{l_i}_{h_i}}\cup V\brac{\alpha^{l_j}_{h_j}}$ to $H^{(l_i, h_i), (l_j, h_j)}_{i, j}$.
	
	\item \textbf{Edges.} Add the following edges.
	
	\begin{enumerate}[(i),leftmargin=*]
		\item Add on the paths $E\brac{\pi[s^{l_i}_{h_i}, u_i]}\cup E\brac{\pi[u_{j+1}, t^{l_j}_{h_j}]}$, and shortcuts $\brac{s, s^{l_i}_{h_i}}, \brac{t^{l_j}_{h_j}, t}$ with weight $|\pi[s, s^{l_i}_{h_i}]|$ and $|\pi[t^{l_j}_{h_j}, t]|$, respectively.
		
		\item Add edge $(s, s_{l_j})$ with weight $\wts(s, s_{l_j}) = \dist(s, s_{l_j}, G\setminus (E(\pi[s_{l_j}, t])\cup \{e_i\}))$, and edge $(t_{l_i}, t)$ with weight $\wts(t_{l_i}, t) = \dist(t_{l_i}, t, G\setminus (E(\pi[s, t_{l_i}])\cup \{e_j\}))$. In addition, add a shortcut edge $(s_{l_j}, t_{l_i})$ with weight $\wts(s_{l_j}, t_{l_i}) = \mu(s_{l_j}, t_{l_i}\mid u_{i+1}, u_j)$.

		\item For each pivot vertex $p\in U$ and vertices $u\in \{s\}\cup V\brac{\alpha^{l_i}_{h_i}}, v\in \{t\}\cup V\brac{\alpha^{l_j}_{h_j}}$, add edge $(u, p)$ with weight:
		$$\wts(u, p) = \dist\left(u, p, H\right)$$
		and edge $(p, v)$ with weight:
		$$\wts(p, v) = \dist\left(p, v, H\right)$$
		
		Next, for every pair of vertices $x, y\in \{s_{l_j}, t_{l_i}\}\cup\{s, t\}\cup U$, add a shortcut $(x, y)$ with weight:
		$$\wts(x, y) = \dist\brac{x, y, H^{(l_i, h_i), (l_j, h_j)}}$$
	\end{enumerate} 
\end{itemize}

Then, compute shortest path from $s$ to $t$ in $H^{(l_i, h_i), (l_j, h_j)}_{i, j}$ and update:
$$\est(s, t, G\setminus\{e_i, e_j\})\leftarrow \min\left\{\est(s, t, G\setminus\{e_i, e_j\}), \dist\left(s, t, H^{(l_i, h_i), (l_j, h_j)}_{i, j}\right)\right\}$$

\paragraph{Runtime.} Let us list all the runtime complexities below.
\begin{itemize}[leftmargin=*]
	\item \textbf{Preparation.} By the algorithm description, the runtime is bounded by $\tilde{O}(\frac{n^4}{Lg} + \frac{n^{3.5}}{L})$.
	
	\item \textbf{Sub-path shortcuts.} By the pruning procedure, each vertex is searched by at most $O(L)$ times throughout the truncated Dijkstra procedures. Hence, the total runtime is $\tilde{O}(n^2L)$.
	
	\item \textbf{Backward shortcuts.} Computing all the backward shortcuts $\mu_0(w, v)$ takes time $\tilde{O}(\frac{n^3}{L})$, and computing all the values of $\mu_0(t_h, v\mid u)$ takes time $\tilde{O}(\frac{n^3}{L})$ as well. Computing all entries of $\mu(s_l, t_h\mid u, v)$ can be done in $O(n^2)$ time using dynamic programming.
	
	\item \textbf{Sub-case 2(a).} The runtime of applying \Cref{ssrp} is bounded by $\tilde{O}(\frac{n^{3.5}}{L})$.
	
	\item \textbf{Sub-case 2(b).} For each pair of $e_i, e_j$, the graph $H^{i, j}$ contains at most $\tilde{O}(\frac{ng}{L} + g^2)$ edges, so Dijkstra's algorithm on all $H_{i, j}$ takes time $\tilde{O}(\frac{n^3g}{L} + n^2g^2)$.
	
	\item \textbf{Sub-case 2(c).} Each graph $H^{(l_i, h_i)}_j$ contains at most $\tilde{O}(\frac{ng}{L} + Lg + n)$ edges, so the total runtime of Dijkstra's algorithm is bounded by $\tilde{O}(\frac{n^3}{L} + n^2L + \frac{n^3}{g})$. After that, updating all the entries $\est(s, t, G\setminus\{e_i, e_j\})$ takes time $\tilde{O}(\frac{n^3g}{L})$.
	
	\item \textbf{Sub-case 2(d).} Each graph $H^{(l_i, h_i), (l_j, h_j)}$ has at most $\tilde{O}(\frac{ng}{L} + L^2 + n)$ edges. So the runtime of all Dijkstra instances is bounded by $\tilde{O}(\frac{n^3L}{g^2} + \frac{n^4}{L^2g})$. After that, running Dijkstra's algorithm on $H^{(l_i, h_i), (l_j, h_j)}_{i, j}$ takes total time $\tilde{O}(\frac{n^3g}{L} + \frac{n^4}{L^2})$.
\end{itemize}
To summarize, the overall runtime is bounded by $$\tilde{O}\left(\frac{n^4}{Lg} + \frac{n^{3.5}}{L} + n^2L + \frac{n^3g}{L} + n^2g^2 + \frac{n^3g}{L}\right)$$

\paragraph{Correctness.} First, we we analyze some properties of the shortcuts for each sub-path. 
\begin{claim}
	The algorithm correctly computes the values $$\mu(w, z) = \dist(w, z, H\cup E(\pi[w, t_{b+10j}]))$$ of sub-path shortcuts for each $V(\gamma_{b+10j})$, where $w, z\in V(\gamma_{b+10j})$ and $z$ lies between $s$ and $w$.
\end{claim}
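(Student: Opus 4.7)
The plan is to prove $\mu(w, z) = \dist(w, z, H \cup E(\pi[w, t_{b+10j}]))$ by matching upper and lower bounds, writing $D := \dist(w, z, H \cup E(\pi[w, t_{b+10j}]))$ for brevity. The upper bound $\mu(w, z) \ge D$ is immediate, since both assignments to $\mu(w, z)$ realize concrete walks in the supergraph: the truncated Dijkstra operates in the subgraph $H[A_b] \cup E(\pi[w, t_{b+10j}])$, and each candidate summand $|\pi[w, v]| + \dist(v, p, H) + \dist(p, z, H)$ realizes the walk $\pi[w, v] \circ (v \leadsto p) \circ (p \leadsto z)$.

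For the reverse inequality, I first normalize a shortest path $P$ from $w$ to $z$ in $H \cup E(\pi[w, t_{b+10j}])$. Let $v$ be the last vertex of $V(\pi[w, t_{b+10j}])$ on $P$, so that $P[v, z]$ uses only $H$-edges. Because $\pi$ is a shortest $s$-$t$ path, $|\pi[w, v]| = \dist(w, v, G) \le |P[w, v]|$, so I may assume without loss of generality that $P[w, v] = \pi[w, v]$; then $|\pi[w, v]| \le L$, $P[v, z]$ is a shortest $v$-$z$ path in $H$, and $D = |\pi[w, v]| + \dist(v, z, H)$.

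The key ingredient is a \emph{no-pruning lemma}: if $|P| \le 2L$, then no internal vertex of $P[v, z]$ has been pruned from $A_b$ before $w$ is processed. I prove this by contradiction. Suppose some $u \in V(P[v, z])$ had been visited within depth $2L$ by a prior search from $w' \in V(\gamma_{b + 10j'})$ with $j' < j$. Concatenating the witnessing $w' \leadsto u$ walk with $P[u, z]$ yields a walk in $G$ from $w'$ to $z$ of length at most $4L$. But $w' \in \gamma_{b+10j'}$ and $z \in \gamma_{b+10j}$ together with $\pi$ being a shortest path force $\dist(w', z, G) = |\pi[w', z]| \ge 9L$, a contradiction. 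With this lemma, the two cases collapse cleanly: if $D \le 2L$, then $P$ fits inside the truncated search graph, so $z$ is visited and $\mu(w, z)$ is set to the Dijkstra distance, which equals $D$. If $D > 2L$, then any $w$-$z$ walk in $H[A_b] \cup E(\pi[w, t_{b+10j}]) \subseteq H \cup E(\pi[w, t_{b+10j}])$ has length at least $D > 2L$, so $z$ is not visited and $\mu(w, z)$ is assigned the pivot formula; moreover $\dist(v, z, H) > L$ (since $D > 2L$ and $|\pi[w, v]| \le L$), and a standard union-bounded hitting argument over the random sample $U$ of size $\Theta((n\log n)/L)$ produces, with high probability, a pivot $p \in U$ lying on some shortest $v$-$z$ path in $H$, whence the summand with this $v, p$ equals $|\pi[w, v]| + \dist(v, p, H) + \dist(p, z, H) = D$.

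The main obstacle is the no-pruning lemma; everything else is bookkeeping. It relies crucially on the offset-$10$ spacing between processed sub-paths, which forces the $\pi$-separation between $w'$ and $z$ to exceed $9L$, letting shortest-path optimality of $\pi$ defeat the $\le 4L$ walk constructed from the hypothetical pruning.
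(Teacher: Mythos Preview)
Your proof is correct and follows essentially the same approach as the paper: normalize the shortest path as $\pi[w,v]$ followed by an $H$-segment, split on whether $D \le 2L$, use a pruning-contradiction via the $\pi$-distance lower bound (the paper states $8L$, you compute $9L$; either suffices against $4L$) in the short case, and a hitting-set argument on $U$ in the long case. Your write-up is in fact slightly more explicit than the paper's (e.g., justifying the normalization step and why $z$ is unvisited when $D>2L$); the only cosmetic mismatch is that your lemma statement says ``internal vertex'' while your proof (correctly) handles all vertices of $P[v,z]$.
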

\begin{proof}
	For any $w, z\in V(\gamma_{b+10j})$, first consider the case that $\dist(w, z, H\cup E(\pi[w, t_{b+10j}])) >2L$. Since this shortest path $\beta$ from $w$ to $z$ is unique under weight perturbations, it should first follow the sub-path $\pi[w, t_{b+10j}]$ until a certain vertex $v\in V(\pi[w, t_{b+10j}])$, and then travels to $z$ in graph $H$. As $|\beta| > 2L$, we know that $|\beta[v, z]| > L$. Hence, with high probability, there exists a vertex $p\in U\cap V(\beta[v, z])$. Therefore, we can compute the value of $|\dist(w, z, H\cup E(\pi[w, t_{b+10j}]))|$ via the formula:
	$$\mu(w, z)\leftarrow \min_{v\in V(\pi[w, t_{b+10j}]), p\in U}\{ |\pi[w, v]| + \dist(v, p, H) + \dist(p, z, H) \}$$
	
	Otherwise if $\dist(w, z, H\cup E(\pi[w, t_{b+10j}])) \leq 2L$, then consider the moment when a truncated Dijkstra was executed at $w$ in graph $H[A_b]\cup E(\pi[w, t_{b+10j}])$. If all vertices on the shortest path are in $A_b$, then it would be discovered by the truncated Dijkstra algorithm. Otherwise, there would be a vertex $v\in V(\beta)$ which was visited already by some previous execution of truncated Dijkstra, say a vertex $x$ on sub-path $\gamma_{b+10k}, k<j$. Then, $8L\leq \dist(x, z, G)\leq \dist(x, v, G) + |\beta[v, z]| \leq 4L$, which is a contradiction.
\end{proof}

Next, we analyze some properties of the backward shortcuts.
\begin{claim}
	Let $l - h \geq 2$ be two indices. For any $v\in V(\gamma_h), w\in V(\gamma_l)$, we have $$\mu(w, v)\geq \dist(w, v, H\cup E(\pi[v, w]))$$
\end{claim}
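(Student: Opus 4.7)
The plan is to prove the inequality by induction on the position of $v$ along $\gamma_h$, processing vertices in the reverse order from $t_h$ back to $s_h$ in which the dynamic program populates $\mu(w,v)$. For the inductive step, it suffices to show that each of the three candidate expressions inside the $\min$ of the update rule corresponds to an actual walk from $w$ to $v$ whose edges all lie inside $H\cup E(\pi[v,w])$; since the length of any walk in a graph upper bounds the shortest distance, taking the minimum then witnesses $\mu(w,v)\geq \dist(w,v,H\cup E(\pi[v,w]))$. The base case $v=t_h$ is immediate: the min over $z\in V(\pi(t_h,t_h])$ is vacuous, so $\mu(w,t_h)=\mu_0(w,t_h)$, which by construction is already the length of a path in $H\cup E(\pi[t_h,w])=H\cup E(\pi[v,w])$.

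I would then dispatch the three clauses as follows. For the term $\mu_0(w,v)$, the earlier preparatory step certifies it as the length of a path inside $H\cup E(\pi[t_h,w])$; since $v$ precedes $t_h$ on $\pi$, this is a subgraph of $H\cup E(\pi[v,w])$. For $\mu(w,z)+\mu(z,v)$ with $z\in V(\pi(v,t_h])$, the inductive hypothesis (applicable since $z$ is processed strictly before $v$) gives a walk $w\to z$ in $H\cup E(\pi[z,w])$, while the sub-path shortcut satisfies $\mu(z,v)=\dist(z,v,H\cup E(\pi[z,t_h]))$ and hence corresponds to a walk $z\to v$ in $H\cup E(\pi[z,t_h])$; concatenating them produces a $w\to v$ walk in $H\cup E(\pi[z,w])\cup E(\pi[z,t_h])\subseteq H\cup E(\pi[v,w])$, with the containment holding because $v<z\leq t_h\leq w$ along $\pi$. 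For the third clause $\mu(w,z)+|\pi[z,t_h]|+\mu_0(t_h,v\mid w)$, I would concatenate the inductive walk $w\to z$, the forward edges $\pi[z,t_h]$, and the path realized by $\mu_0(t_h,v\mid w)$, again verifying that the union of their edge sets collapses into $H\cup E(\pi[v,w])$.

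The step I expect to require the most care is the semantic claim underpinning clause three, namely that $\mu_0(t_h,v\mid w)$, whenever finite, is actually realized by a path inside $H\cup E(\pi[t_h,w])$. Tracing the construction, the value originates from a tree path $t_h\to v$ in some $T^h_f$ whose nearest $\pi$-ancestor of $v$, call it $w'$, is the one minimizing $|\pi[\cdot,t]|$, while the DP input $w$ was chosen in $V(\pi[w',f))$. Any edge of the tree path lying on $\pi[w,t]$ would be a forward $\pi$-edge whose head is a $\pi$-ancestor of $v$ strictly closer to $t$ than $w'$, contradicting the minimality of $w'$; combined with the tree's avoidance of $E(\pi[s,t_h])$, this pins the tree path inside $H\cup E(\pi[t_h,w])$. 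Once this containment is verified, the three-way bound above closes the induction and establishes the claim.
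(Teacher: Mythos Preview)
Your proposal is correct and follows essentially the same approach as the paper's proof: show that every candidate value in the dynamic-programming update corresponds to an actual walk in $H\cup E(\pi[v,w])$, and hence upper-bounds the true distance. The paper dispatches this in three terse sentences, while you carry out the induction explicitly and in particular justify carefully why $\mu_0(t_h,v\mid w)$ realizes a path inside $H\cup E(\pi[t_h,w])$ via the minimality of the $\pi$-ancestor $w'$; this is exactly the content the paper compresses into the phrase ``each value of $\mu_0(t_h, v \mid u)$ corresponds to a path from $t_h$ to $v$ in $G\setminus (E(\pi[s, t_h])\cup E(\pi[u, t]))$.''
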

\begin{proof}
	By the algorithm description, each value of $\mu_0(t_h, v \mid u)$ corresponds to a path from $t_h$ to $v$ in $G\setminus \left(E(\pi[s, t_h])\cup E(\pi[u, t])\right)$. Then, by the description of the dynamic programming procedure, $\mu(w, v)$ is equal to some values of  $\mu_0(w, z) + \mu(z, v)$ or $\mu_0(w, t_h) + \mu_0(t_h, v\mid w)$, which always equals the length of some paths from $w$ to $v$ not using any edge on $\pi[w, t]$. Hence, we are guaranteed that $\mu(w, v)\geq \dist(w, v, H\cup E(\pi[v, w]))$.
\end{proof}

\begin{claim}\label{backpath}
	For each replacement path $\rho_{i, j}$ such that $x_{i, j}\notin V(\rho_{i, j}[s, y_{i, j}])$, we have: $$\mu(y_{i, j}, x_{i, j}) = |\rho_{i, j}[y_{i, j}, x_{i, j}]|$$
\end{claim}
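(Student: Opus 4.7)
The inequality $\mu(y_{i,j}, x_{i,j}) \geq |\rho_{i,j}[y_{i,j}, x_{i,j}]|$ is immediate from the previous claim: every value of $\mu$ is witnessed by a concrete path in $H \cup E(\pi[x_{i,j}, y_{i,j}]) \subseteq G\setminus\{e_i, e_j\}$, and $\rho_{i,j}[y_{i,j}, x_{i,j}]$ is the unique shortest path between these two endpoints in $G\setminus\{e_i, e_j\}$ under our weight perturbation. The substance of the claim is therefore the reverse inequality $\mu(y_{i,j}, x_{i,j}) \leq |\sigma|$, where $\sigma := \rho_{i,j}[y_{i,j}, x_{i,j}]$.

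As a first step I would argue that $\sigma \subseteq H \cup E(\pi[x_{i,j}, y_{i,j}])$, so that the DP for $\mu$ has a chance to capture $\sigma$. By the choice of $x_{i,j}$ and $y_{i,j}$ as the $\pi$-closest intersections of $\rho_{i,j}$ with $\pi[u_{i+1}, u_j]$ to $u_{i+1}$ and $u_j$ respectively, any vertex of $\sigma$ lying on $\pi[u_{i+1}, u_j]$ is in $V(\pi[x_{i,j}, y_{i,j}])$. If $\sigma$ were to visit a vertex $z \in V(\pi[s, u_i])$, then canonicity of $\rho_{i,j}$ on the failure-free sub-path $\pi[z, a_{i,j}]$ would let me splice $\rho_{i,j}[s, z]$ together with $\pi[z, a_{i,j}]$ (or an analogous rerouting) and contradict minimality of $\rho_{i,j}$; the symmetric argument rules out $\pi[u_{j+1}, t]$. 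Hence every $\pi$-edge used by $\sigma$ lies in $E(\pi[x_{i,j}, y_{i,j}])$, and all other edges lie in $H$.

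Next, I would match the DP recursion for $\mu$ against the natural decomposition of $\sigma$ obtained by walking from $y_{i,j}$ and recording each time $\sigma$ crosses from one $\gamma_l$ into another (backward on $\pi$). Starting at $y_{i,j} \in V(\gamma_l)$, follow $\sigma$ until the first vertex $z \in V(\gamma_l)$ from which $\sigma$ makes a backward jump landing in some $\gamma_{h'}$ with $h' \leq l-2$; the sub-path $\sigma[y_{i,j}, z]$ lies in $H \cup E(\pi[z, y_{i,j}])$, so its length is captured by the sub-path shortcut $\mu(y_{i,j}, z)$ — this step uses the preceding claim (long components hit a pivot in $U$; short components survive the truncated-Dijkstra with pruning). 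The backward jump itself in $H$ is then captured by a $\mu_0$ value: the replacement-paths tree rooted at a pivot $p \in U \cap \sigma[z, \cdot]$ (respectively at $t_{h'}$ for the $\mu_0(t_{h'}, v \mid w)$ variant) recovers the correct divergence vertex $w$ via the ancestor-minimizing-$|\pi[w, t]|$ rule, because uniqueness of shortest paths under the perturbation forces the tree path to coincide in length with the relevant prefix of $\sigma$. After this jump $\sigma$ continues backward from some $v$ strictly earlier on $\pi$; inductively, the DP update
\begin{equation*}
\mu(w, v) \leftarrow \min_{z}\bigl\{\mu_0(w, v),\; \mu(w, z) + \mu(z, v),\; \mu(w, z) + |\pi[z, t_h]| + \mu_0(t_h, v \mid w)\bigr\}
\end{equation*}
reproduces this decomposition and gives $\mu(y_{i,j}, x_{i,j}) \leq |\sigma|$.

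The main obstacle I anticipate is the ancestor analysis in the $\mu_0$ step: verifying that the vertex $w$ selected as ``the ancestor of $v$ in $T^h_{p, f}$ (or $T^h_f$) with $|\pi[w, t]|$ minimized'' is exactly the vertex at which $\sigma$ last leaves $\pi$ before the relevant backward jump, and that the failed edge $f$ used when building that tree can be chosen consistently (ensuring the tree genuinely excludes the right $\pi$-edges while preserving $\sigma$'s jump). Handling this carefully, together with the confinement argument for $\sigma$ in Step~1, constitutes the bulk of the work; everything else is an inductive unrolling of the DP.
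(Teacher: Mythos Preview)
Your plan misreads the recursion for the backward shortcut $\mu(w,v)$ and, as a consequence, places both the decomposition and the pivot in the wrong place.

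First, the dynamic program iterates over $v\in V(\gamma_h)=V(\gamma_{l_i})$: the intermediate vertex $z$ in
\[
\mu(w,v)\leftarrow\min_{z\in V(\pi(v,t_h])}\bigl\{\mu_0(w,v),\ \mu(w,z)+\mu(z,v),\ \mu(w,z)+|\pi[z,t_h]|+\mu_0(t_h,v\mid w)\bigr\}
\]
lives in $\gamma_{l_i}$, and the term $\mu(z,v)$ is the \emph{sub-path} shortcut inside $\gamma_{l_i}$, while $\mu(w,z)$ is the \emph{backward} shortcut for a later $z$. There is no term of the form ``sub-path shortcut inside $\gamma_{l_j}$ followed by a jump''. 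So the correct unrolling is to list the successive landing points $z_1,z_2,\dots,z_\tau=x_{i,j}$ of $\sigma$ on $\pi[u_{i+1},t_{l_i}]$ and induct on those, not on departure points inside $\gamma_{l_j}$. Your proposed segment $\sigma[y_{i,j},z]$ with $z\in V(\gamma_{l_j})$ is not something the DP ever evaluates; in particular the sub-path shortcut $\mu(y_{i,j},z)$ you invoke equals $\dist(y_{i,j},z,H\cup E(\pi[y_{i,j},t_{l_j}]))$, which allows only $\pi$-edges \emph{after} $y_{i,j}$ and says nothing about $H\cup E(\pi[z,y_{i,j}])$.

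Second, and this is the key missing idea, the pivot $p\in U$ is \emph{not} on $\sigma$. It is on the prefix $\rho_{i,j}[a_{i,j},y_{i,j}]$, which has length at least $L$ because $a_{i,j}\in V(\pi[s,u_i])$, $y_{i,j}\in V(\pi[s_{l_j},u_j])$, and $l_j-l_i\geq 2$. Individual segments of $\sigma$ need not be long, so ``a pivot $p\in U\cap\sigma[z,\cdot]$'' is not available in general. Once you have $p$ on this prefix, the single-source replacement-paths tree $T^{l_i}_{p,e_j}$ in $G\setminus(E(\pi[s,t_{l_i}])\cup\{e_j\})$ contains $\rho_{i,j}[p,v]$ exactly (by uniqueness under perturbation), where $v$ is the \emph{first} vertex of $\sigma$ on $\pi[u_{i+1},t_{l_i}]$; the ancestor-with-minimal-$|\pi[w,t]|$ rule then selects $w=y_{i,j}$, giving $\mu_0(y_{i,j},v)\leq|\sigma[y_{i,j},v]|$ in one shot. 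This handles the entire initial descent of $\sigma$ into $\gamma_{l_i}$, not just a single hop in $H$. The subsequent pieces $\sigma[z_{k-1},z_k]$ are within-$\gamma_{l_i}$ and are captured by the sub-path shortcut (if they avoid $t_{l_i}$) or by $|\pi[z_{k-1},t_{l_i}]|+\mu_0(t_{l_i},z_k\mid y_{i,j})$ (if they pass through $t_{l_i}$). Your confinement step (``$\sigma\subseteq H\cup E(\pi[x_{i,j},y_{i,j}])$'') is not needed once the argument is organised this way.
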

\begin{proof}
	Consider the sub-path $\rho_{i, j}[a_{i, j}, y_{i, j}]$. Since $a_{i, j}\in V(\pi[s, u_i])$, $y_{i, j}\in V(\pi[s_{l_j}, u_j])$ and $l_j - l_i \geq 2$, it must be $|\rho_{i, j}[a_{i, j}, y_{i, j}]|\geq L$. Then, with high probability, there exists a vertex $p\in U\cap V(\rho_{i, j}[a_{i, j}, y_{i, j}])\neq \emptyset$.
	
	Let $v$ be the first vertex along $\rho_{i, j}[y_{i, j}, x_{i, j}]$ that touches $\pi[u_{i+1}, t_{l_i}]$. Since $\rho_{i, j}[p, v]$ is the shortest path (under weight perturbation) in the graph $G\setminus \left(E(\pi[s, t_{l_i}])\cup \{e_j\}\right)$, by uniqueness of shortest path under weight perturbation and the fact that \Cref{ssrp} finds the unique shortest paths under weight perturbation as well, it must be that the tree path $T^{l_i}_{p, e_j}[p, v]$ is the same as $\rho_{i, j}[p, v]$. Therefore, by the algorithm for backward shortcuts, we have:
	$$\mu_0(y_{i, j}, v)\leq |\rho_{i, j}[y_{i, j}, v]|$$
	
	We will prove that $\mu(y_{i, j}, x_{i, j}) = |\rho_{i, j}[y_{i, j}, x_{i, j}]|$ by an induction. List all the vertices $(v=)z_1, z_2, \cdots, z_\tau (= x_{i, j})$ in $\rho_{i, j}[y_{i, j}, x_{i, j}]\cap \pi[u_{i+1}, t_{l_i}]$ (in a right-to-left order) such that sub-path $\rho_{i, j}[z_k, z_{k+1}]\cap \pi$ only contains edges in $E(\pi[z_k, y_{i, j}])$. Let us prove by an induction on $k$ such that $\mu(y_{i, j}, z_k) = |\rho_{i, j}[y_{i, j}, z_k]|$.
	\begin{itemize}[leftmargin=*]
		\item \textbf{Basis.} When $k = 1$, $z_1 = v$ is the first vertex on $\rho_{i, j}[y_{i, j}, x_{i, j}]$ that sits on $\gamma_{l_i}$. Therefore, $\mu(y_{i, j}, z_1)\leq \mu_0(y_{i, j}, z_1) = |\rho_{i, j}[y_{i, j}, v]|$.
		
		\item \textbf{Induction.} For a general $k\geq 2$, consider the sub-path $\rho_{i, j}[z_{k-1}, z_k]$. If $\rho_{i, j}[z_{k-1}, z_k]$ does not pass through vertex $t_{l_i}$, then since $\rho_{i, j}$ is canonical, $\rho_{i, j}[z_{k-1}, z_k]\cap \pi$ only contains edges from $E(\pi[z_{k-1}, t_{l_i}])$. Therefore, $|\rho_{i, j}[z_{k-1}, z_k]| = \mu_0(z_{k-1}, z_k)$, and thus by inductive hypothesis we have:
		$$\mu(y_{i, j}, z_k)\leq \mu_0(z_{k-1}, z_k) + \mu(y_{i, j}, z_{k-1})\leq |\rho_{i, j}[y_{i, j}, z_k]|$$
		
		Otherwise if $\rho_{i, j}[z_{k-1}, z_k]$ passes through vertex $t_{l_i}$, then $\rho_{i, j}[t_{l_i}, z_k]$ is a shortest path that does not use any edges in $E(\pi[s, t_{l_i}])\cup E(\pi[y_{i, j}, t])$. Therefore, $\mu_0(t_{l_i}, z_k \mid y_{i, j})\leq |\rho_{i, j}[t_{l_i}, z_k]|$. By the inductive hypothesis, we have:
		$$\mu(y_{i, j}, z_k)\leq \mu(y_{i, j}, z_{k-1}) + |\pi[z_{k-1}, t_{l_i}] + \mu_0(t_{l_i}, z_k\mid y_{i, j})|\leq |\rho_{i, j}[y_{i, j}, z_k]|$$
	\end{itemize}
	
\end{proof}

For the rest, let us analyze the three sub-cases 2(a)(b)(c)(d).
\begin{claim}
	In sub-case 2(a), the algorithm correctly computes the value of $|\rho_{i, j}|$.
\end{claim}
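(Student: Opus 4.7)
The plan is to verify that the algorithmic quantity
\[
A := \dist\bigl(s, t_{l_i}, H\cup E(\pi[s, t_{l_i}])\setminus\{e_i\}\bigr) + \dist(t_{l_i}, s_{l_j}, G) + \dist\bigl(s_{l_j}, t, H\cup E(\pi[s_{l_j}, t])\setminus\{e_j\}\bigr)
\]
equals $|\rho_{i,j}|$. The first step is a structural decomposition of $\rho_{i,j}$. The sub-case 2(a) hypothesis $x_{i,j}\in V(\rho_{i,j}[s, y_{i,j}])$ places $x_{i,j}$ before $y_{i,j}$ on $\rho_{i,j}$, while the Case~2 preamble forces $x_{i,j}\in V(\pi[u_{i+1}, t_{l_i}])$ and $y_{i,j}\in V(\pi[s_{l_j}, u_j])$, so $x_{i,j}$ also precedes $y_{i,j}$ on $\pi$. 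Since $\pi[x_{i,j}, y_{i,j}]$ lies strictly between $e_i$ and $e_j$, canonicality forces $\rho_{i,j}[x_{i,j}, y_{i,j}] = \pi[x_{i,j}, y_{i,j}]$; using $l_j - l_i \geq 2$, this sub-path contains the entire $\pi[t_{l_i}, s_{l_j}]$, yielding the decomposition $\rho_{i,j} = \rho_{i,j}[s, t_{l_i}] \circ \pi[t_{l_i}, s_{l_j}] \circ \rho_{i,j}[s_{l_j}, t]$.

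The direction $A \geq |\rho_{i,j}|$ is immediate: concatenating the three shortest paths witnessing $A$ produces an $st$-walk avoiding $\{e_i, e_j\}$, because the first and third restricted graphs explicitly exclude $e_i$ and $e_j$ respectively, and $\pi[t_{l_i}, s_{l_j}]$ uses no edges of $\gamma_{l_i}$ or $\gamma_{l_j}$. Hence its length is at least $\dist(s, t, G\setminus\{e_i, e_j\}) = |\rho_{i,j}|$.

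For the reverse direction $A \leq |\rho_{i,j}|$, I would use the decomposition to bound each of the three terms of $A$ from above by the length of the corresponding piece of $\rho_{i,j}$. The middle term is exactly $|\pi[t_{l_i}, s_{l_j}]|$. The crux is to certify that $\rho_{i,j}[s, t_{l_i}]$ is a valid path inside $H\cup E(\pi[s, t_{l_i}])\setminus\{e_i\}$, and symmetrically for the suffix. Edge $e_i$ is avoided automatically; the real task is to show that $\rho_{i,j}[s, t_{l_i})$ does not visit any vertex of $V(\pi[t_{l_i}, t])$ other than the endpoint $t_{l_i}$ itself. For vertices in $V(\pi(t_{l_i}, s_{l_j}]) \cup V(\pi[s_{l_j}, y_{i,j}]) \cup V(\pi[b_{i,j}, t])$, simplicity of $\rho_{i,j}$ (which is unique and acyclic under the perturbation tie-breaking) gives a direct contradiction, since each such vertex is already visited during the middle or suffix piece of the decomposition.

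The main obstacle is ruling out the leftover range $V(\pi(y_{i,j}, b_{i,j}))$, which is not obviously protected by simplicity. I would handle it by applying canonicality to any hypothetical $v\in V(\pi(y_{i,j}, b_{i,j})) \cap V(\rho_{i,j}[s, t_{l_i}))$ together with $b_{i,j}$: if $v$ lies strictly between $y_{i,j}$ and $u_j$ on $\pi$, then $v\in V(\pi[u_{i+1}, u_j])\cap V(\rho_{i,j})$ contradicts the definition of $y_{i,j}$ as the closest such vertex to $u_j$; otherwise $v$ lies in $V(\pi[u_{j+1}, b_{i,j}))$, and $E(\pi[v, b_{i,j}])\cap\{e_i, e_j\}=\emptyset$, so canonicality forces $\rho_{i,j}[v, b_{i,j}] = \pi[v, b_{i,j}]$. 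Combined with $\rho_{i,j}[b_{i,j}, t] = \pi[b_{i,j}, t]$, this means $\rho_{i,j}$ proceeds from $v$ to $t$ entirely along $\pi$ without ever revisiting $t_{l_i}$, contradicting that $t_{l_i}$ lies on $\rho_{i,j}$ strictly after $v$. This completes the feasibility check and closes the argument.
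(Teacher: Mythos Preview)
Your proof is correct and follows essentially the same approach as the paper: decompose $\rho_{i,j}$ via canonicality as $\rho_{i,j}[s, t_{l_i}]\circ \pi[t_{l_i}, s_{l_j}]\circ \rho_{i,j}[s_{l_j}, t]$, then argue that each piece matches the corresponding term in the algorithm's formula. The paper simply asserts that $\rho_{i,j}[s, t_{l_i}]$ uses no edge of $\pi[t_{l_i}, t]$, while you carefully justify this via simplicity, the extremal definition of $y_{i,j}$, and a canonicality contradiction for the range $\pi[u_{j+1}, b_{i,j})$; this is extra rigor rather than a different argument.
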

\begin{proof}
	In this case, as $x_{i, j}\in V(\rho_{i, j}[s, y_{i, j}])$ and $\rho_{i, j}$ is canonical, the entire sub-path $\pi[x_{i, j}, y_{i, j}]$ should belong to $\rho_{i, j}$. Therefore, we can decompose $\rho_{i, j}$ as $\rho_{i, j} = \rho_{i, j}[s, t_{l_i}]\circ \pi[t_{l_i}, s_{l_j}]\circ \rho_{i, j}[s_{l_j}, t]$. Since $\rho_{i, j}[s, t_{l_i}]$ does not use any edge on $\pi[t_{l_i}, t]$, $|\rho_{i, j}[s, t_{l_i}]|$ is equal to $\dist\left(s, t_{l_i}, H\cup E(\pi[s, t_{l_i}])\setminus \{e_i\}\right)$; similarly, $|\rho_{i, j}[s_{l_j}, t]|$ is equal to $\dist\left(s_{l_j}, t, H\cup E(\pi[s_{l_j}, t])\setminus \{e_j\}\right)$. Therefore, by the algorithm, $\est(s, t, G\setminus\{e_i, e_j\})$ should be equal to $|\rho_{i, j}|$.
\end{proof}

\begin{claim}\label{2b}
	In sub-case 2(b), the algorithm correctly computes the value of $|\rho_{i, j}|$.
\end{claim}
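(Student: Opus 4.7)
The plan is to show $\dist(s,t,H_{i,j}) = |\rho_{i,j}|$, which forces the algorithm's update to record the correct value. For the lower bound $\dist(s,t,H_{i,j}) \geq |\rho_{i,j}|$, I verify edge-by-edge that every weighted arc $(u,v)$ of $H_{i,j}$ corresponds to an actual walk in $G\setminus\{e_i,e_j\}$ from $u$ to $v$ of length at most $\wts(u,v)$: the pure $\pi$-shortcuts such as $(s, s^{l_i}_{h_i})$, $(s_{l_j}, s^{l_j}_{h_j})$, $(t^{l_i}_{h_i}, t_{l_i})$, and $(t^{l_j}_{h_j}, t)$ are literal $\pi$ sub-paths that miss both $e_i, e_j$; the SSRP-derived edges $(s, s_{l_j})$ and $(t_{l_i}, t)$ are paths in the specified restricted subgraphs; the pivot shortcuts $(u,p)$ and $(p,v)$ are shortest paths in $H\cup E(\pi[s, s^{l_i}_{h_i}])$ and $H\cup E(\pi[t^{l_j}_{h_j}, t])$ respectively; and the backward shortcut $(y_{i,j}, x_{i,j})$ of weight $\mu(y_{i,j}, x_{i,j})$ corresponds to a walk in $H\cup E(\pi[x_{i,j}, y_{i,j}])$ by the sub-path and backward shortcut constructions.

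For the upper bound, I decompose $\rho_{i,j} = \rho_{i,j}[s, y_{i,j}] \circ \rho_{i,j}[y_{i,j}, x_{i,j}] \circ \rho_{i,j}[x_{i,j}, t]$, which is valid since in sub-case 2(b) the vertex $x_{i,j}$ appears after $y_{i,j}$ on $\rho_{i,j}$. The middle backward piece is captured exactly by the single edge $(y_{i,j}, x_{i,j})$ of weight $\mu(y_{i,j}, x_{i,j}) = |\rho_{i,j}[y_{i,j}, x_{i,j}]|$ via \Cref{backpath}. The prefix and suffix are handled symmetrically, so I focus on $\rho_{i,j}[s, y_{i,j}]$. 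Using canonicality of $\rho_{i,j}$ together with $x_{i,j}\notin V(\rho_{i,j}[s, y_{i,j}])$, I will argue that the prefix uses $\pi$-edges only in $\pi[s, a_{i,j}]$ and, if it re-enters $\pi[u_{i+1}, u_j]$ before reaching $y_{i,j}$, only along a contiguous tail $\pi[v, y_{i,j}]$ for some $v \in (x_{i,j}, y_{i,j}]$.

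I then construct an $s \to y_{i,j}$ walk in $H_{i,j}$ by a case analysis on the location of $v$. When $v$ lies in $\alpha^{l_i}_{h_i}$ past $e_i$ or in the gap between $\gamma_{l_i}$ and $\gamma_{l_j}$, the path $\rho_{i,j}$ passes through $s_{l_j}$, and $\rho_{i,j}[s, s_{l_j}]$ lies in $G\setminus(E(\pi[s_{l_j},t])\cup\{e_i\})$ since the detour before $v$ never touches $\pi[u_{i+1}, u_j]$ nor $\pi[u_{j+1},t]$; the walk $s \to s_{l_j} \to s^{l_j}_{h_j} \to y_{i,j}$ built from $(s, s_{l_j})$, $(s_{l_j}, s^{l_j}_{h_j})$, and $\alpha^{l_j}_{h_j}$-edges then has total weight at most $|\rho_{i,j}[s, y_{i,j}]|$. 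Otherwise $v \in V(\alpha^{l_j}_{h_j})$ or there is no re-entry; in this regime the detour $\rho_{i,j}[a_{i,j}, v]$ (or $\rho_{i,j}[a_{i,j}, y_{i,j}]$) has $\pi$-distance endpoints separated by at least $L$ because $l_j - l_i \geq 2$, so with high probability some pivot $p\in U$ lies on the detour. I then stitch together a walk $s \to p \to y_{i,j}$ using either the direct shortcut $(s,p)$ when $a_{i,j}\in V(\pi[s, s^{l_i}_{h_i}])$, or the combination of $(s, s^{l_i}_{h_i})$, edges along $\alpha^{l_i}_{h_i}\setminus\{e_i\}$ to $a_{i,j}$, and $(a_{i,j}, p)$; this is followed by either $(p, y_{i,j})$ directly or $(p, v)$ plus $\alpha^{l_j}_{h_j}\setminus\{e_j\}$-edges to $y_{i,j}$.

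The main obstacle is verifying, in every branch of the case analysis, that the sub-path of $\rho_{i,j}$ being bounded by each shortcut lies inside the restricted subgraph defining that shortcut's weight. This reduces to canonicality plus the sub-case 2(b) geometry (forcing the detour to avoid the forbidden $\pi$-edges and forcing the $\pi$-tail, when present, to be carried by pure $\pi$- and $\alpha$-edges already in $H_{i,j}$ rather than by a pivot shortcut); in particular, one must check that $\rho_{i,j}[s,p]$ does not touch $\pi$ outside $\pi[s, s^{l_i}_{h_i}]$ before reaching $p$, and that $\rho_{i,j}[p, y_{i,j}]$ (or $\rho_{i,j}[p,v]$) does not touch $\pi$ outside the segment $\alpha^{l_j}_{h_j}$ until the final $\alpha^{l_j}_{h_j}$-tail.
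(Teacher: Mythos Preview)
Your approach is essentially the paper's: decompose $\rho_{i,j}$ into prefix, backward piece (handled by \Cref{backpath}), and suffix, then case-split on whether $\rho_{i,j}[a_{i,j},y_{i,j}]$ passes through $s_{l_j}$, using the hitting-set pivot $p\in U$ in the negative branch.

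There is one concrete gap in your case enumeration. Your Case~A lists only ``$v$ in $\alpha^{l_i}_{h_i}$ past $e_i$'' or ``$v$ in the gap between $\gamma_{l_i}$ and $\gamma_{l_j}$'', and Case~B lists only ``$v\in V(\alpha^{l_j}_{h_j})$'' or ``no re-entry''. This omits the possibilities $v\in V(\pi(t^{l_i}_{h_i},t_{l_i}])$ (harmless: your Case~A argument still applies since $\rho_{i,j}$ then passes through $s_{l_j}$) and, more importantly, $v\in V(\pi(s_{l_j},s^{l_j}_{h_j}))$. In that last situation $\rho_{i,j}$ does \emph{not} pass through $s_{l_j}$, so you are in the pivot regime, but your proposed stitch ``$(p,v)$ plus $\alpha^{l_j}_{h_j}$-edges'' fails because $H_{i,j}$ contains no edge $(p,v)$ when $v\notin V(\alpha^{l_j}_{h_j})$, and ``$(p,y_{i,j})$ directly'' need not work either since $\rho_{i,j}[p,y_{i,j}]$ uses edges of $\pi[v,s^{l_j}_{h_j}]$ which are absent from $H\cup E(\pi[t^{l_j}_{h_j},t])$. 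The paper closes this gap with the dedicated shortcut $(p,s^{l_j}_{h_j})$ of weight $\dist\bigl(p,s^{l_j}_{h_j},H\cup E(\pi[s_{l_j},s^{l_j}_{h_j}])\bigr)$ from construction~(ii), which exactly captures $\rho_{i,j}[p,v]\circ\pi[v,s^{l_j}_{h_j}]$; you should invoke this edge (and its symmetric twin $(t^{l_i}_{h_i},p)$ on the suffix side) to complete the argument.
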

\begin{proof}
	Decompose the path $\rho_{i, j} = \rho_{i, j}[s, y_{i, j}]\circ \rho_{i, j}[y_{i, j}, x_{i, j}]\circ \rho_{i, j}[x_{i, j}, t]$. By \Cref{backpath} and construction of graph $H_{i, j}$, we have $\wts(y_{i, j}, x_{i, j}) = |\rho_{i, j}[y_{i, j}, x_{i, j}]|$. For the rest, it suffices to show $|\rho_{i, j}[s, y_{i, j}]| = \dist(s, y_{i, j}, H_{i, j})$ and $|\rho[x_{i, j}, t]| = \dist(x_{i, j}, t, H_{i, j})$. By symmetry, let us only focus on $|\rho_{i, j}[s, y_{i, j}]|$. There are two different cases.
	\begin{itemize}[leftmargin=*]
		\item $\rho_{i, j}[a_{i, j}, y_{i, j}]$ passes through $s_{l_j}$.
		
		In this case, $|\rho_{i, j}[s, s_{l_j}]|$ is equal to $\wts(s, s_{l_j}) = \dist(s, s_{l_j}, G\setminus (E(\pi[s_{l_j}, t])\cup \{e_i\}))$. Therefore, the path $(s, s_{l_j})\circ (s_{l_j}, s^{l_j}_{h_j})\circ \pi[s^{l_j}_{h_j}, y_{i, j}]$ in $H_{i, j}$ has weight equal to $|\rho_{i, j}[s, y_{i, j}]|$.
		
		\item $\rho_{i, j}[a_{i, j}, y_{i, j}]$ does not pass through $s_{l_j}$.
		
		Let $z$ be the first vertex of $\rho_{i, j}(a_{i, j}, y_{i, j}]$ that appears on $\pi$. Since $\rho_{i, j}(a_{i, j}, y_{i, j}]$ does not pass through $s_{l_j}$, $z$ belongs to $\pi[s^{l_j}_{h_j}, y_{i, j}]$, and thus $|\rho_{i, j}(a_{i, j}, z]|\geq L$. Therefore, with high probability over $U$, $\rho_{i, j}(a_{i, j}, y_{i, j}]$ should go through a vertex $p\in U$ before it touches on $\pi$.
		
		We first argue that $|\rho_{i, j}[s, p]| = \dist(s, p, H_{i, j})$. In fact, if $a_{i, j}\notin V(\alpha^{l_i}_{h_i})$, then by (\romannumeral3) we know that $\wts(s, p) = |\rho_{i, j}[s, p]|$. Otherwise if $a_{i, j}\in V(\alpha^{l_i}_{h_i})$, the path $(s, s^{l_i}_{h_i})\circ \pi[s^{l_i}_{h_i}, a_{i, j}]\circ (a_{i, j}, p)$ in $H_{i, j}$ would have total weight equal to $|\rho_{i, j}[s, p]|$.
		
		Secondly, let us argue that $|\rho_{i, j}[p, y_{i, j}]| = \dist(p, y_{i, j}, H_{i, j})$. If $\rho_{i, j}[p, y_{i, j}]$ goes through vertex $s^{l_j}_{h_j}$, then the path $(p, s^{l_j}_{h_j})\circ \pi[s^{l_j}_{h_j}, y_{i, j}]$ has weight equal to $|\rho_{i, j}[p, y_{i, j}]|$. Otherwise, the path $(p, z)\circ \pi[z, y_{i, j}]$ has weight equal to $|\rho_{i, j}[p, y_{i, j}]|$.
	\end{itemize}
\end{proof}

\begin{claim}\label{2c}
	In sub-case 2(c), the algorithm correctly computes the value of $|\rho_{i, j}|$.
\end{claim}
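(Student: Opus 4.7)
The plan is to mirror the proof of \Cref{2b}, leveraging the structural parallels between the shortcut graphs $H_{i,j}$ and $H^{(l_i, h_i)}_j$. First, decompose
\[ \rho_{i,j} = \rho_{i,j}[s, y_{i,j}] \circ \rho_{i,j}[y_{i,j}, x_{i,j}] \circ \rho_{i,j}[x_{i,j}, t]. \]
The case hypothesis $x_{i,j}\notin V(\alpha^{l_i}_{h_i})$ together with the Case~2 condition $E(\rho_{i,j}) \cap E(\pi[u_{i+1}, t_{l_i}]) \neq \emptyset$ (which places $x_{i,j}$ inside $V(\pi[u_{i+1}, t_{l_i}])$ by its definition as the intersection closest to $u_{i+1}$) forces $x_{i,j} \in V(\pi[t^{l_i}_{h_i}, t_{l_i}])$; and since $y_{i,j}\in V(\alpha^{l_j}_{h_j})$ lies strictly before $e_j$ on $\pi$, also $y_{i,j} \in V(\pi[s^{l_j}_{h_j}, u_j])$. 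Hence the backward shortcut edge $(y_{i,j}, x_{i,j})$ exists in $H^{(l_i, h_i)}_j$ with $\wts(y_{i,j}, x_{i,j}) = \mu(y_{i,j}, x_{i,j}) = |\rho_{i,j}[y_{i,j}, x_{i,j}]|$ by \Cref{backpath}.

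The prefix $\rho_{i,j}[s, y_{i,j}]$ is controlled by the same dichotomy as in \Cref{2b}. If $\rho_{i,j}[a_{i,j}, y_{i,j}]$ passes through $s_{l_j}$, the first option $\dist(s, s_{l_j}, G\setminus(E(\pi[s_{l_j}, t]) \cup \{e_i\}))$ combined with the internal route $(s_{l_j}, s^{l_j}_{h_j}) \circ \pi[s^{l_j}_{h_j}, y_{i,j}]$ in $H^{(l_i, h_i)}_j$ matches $|\rho_{i,j}[s, y_{i,j}]|$. Otherwise $|\rho_{i,j}[a_{i,j}, y_{i,j}]| \geq L$, so with high probability some pivot $p \in U$ lies on this detour; according to whether $a_{i,j} \in V(\pi[s^{l_i}_{h_i}, u_i])$ or lies strictly before $s^{l_i}_{h_i}$, one of the two external sub-options $|\pi[s, v]| + \dist(v, p, H)$ (with $v = a_{i,j}$) and $\dist(s, p, G\setminus E(\pi[s^{l_i}_{h_i}, t]))$ recovers $|\rho_{i,j}[s, p]|$, while $\dist(p, y_{i,j}, H^{(l_i, h_i)}_j)$ is bounded by either the $(p, s^{l_j}_{h_j})$ shortcut composed with $\pi[s^{l_j}_{h_j}, y_{i,j}]$ or the direct $(p, y_{i,j})$ edge, exactly as in the prefix analysis of \Cref{2b}.

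The suffix $\rho_{i,j}[x_{i,j}, t]$ is the novel part and splits into two sub-cases. If $\rho_{i,j}[x_{i,j}, t]$ passes through $t_{l_i}$, canonicity forces $\rho_{i,j}[x_{i,j}, t_{l_i}] = \pi[x_{i,j}, t_{l_i}]$ (walkable in $H^{(l_i, h_i)}_j$ along $\gamma_{l_i} \setminus \alpha^{l_i}_{h_i}$ edges); a simple-path swap argument (replacing a prefix containing any offending edge $(u_k, u_{k+1}) \in E(\pi[s, t_{l_i}])$ on $\rho_{i,j}[t_{l_i}, t]$ by $\pi[s, u_k]$, which is strictly shorter and avoids $\{e_i, e_j\}$) shows $\rho_{i,j}[t_{l_i}, t]$ avoids $E(\pi[s, t_{l_i}]) \cup \{e_j\}$, hence $|\rho_{i,j}[t_{l_i}, t]| \geq \wts(t_{l_i}, t)$. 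If $\rho_{i,j}[x_{i,j}, t]$ skips $t_{l_i}$, let $d$ be the last vertex of $V(\pi[x_{i,j}, t_{l_i}])$ on $\rho_{i,j}[x_{i,j}, t]$; then $d \in V(\pi[x_{i,j}, t_{l_i}))$, $\rho_{i,j}[x_{i,j}, d] = \pi[x_{i,j}, d]$ by canonicity, and since $\rho_{i,j}[d, b_{i,j}]$ avoids $V(\pi[x_{i,j}, t_{l_i}])$ after $d$ while still bridging the $\pi$-distance $|\pi[t_{l_i}, u_{j+1}]| \geq L$, with high probability a pivot $p \in U$ lies on it; then the shortcut $(d, p)$ chained with either $(p, t)$ directly, or $(p, b_{i,j})$ followed by the $\alpha^{l_j}_{h_j}$-walk and the $(t^{l_j}_{h_j}, t)$ shortcut when $b_{i,j} \in V(\alpha^{l_j}_{h_j})$, produces a route in $H^{(l_i, h_i)}_j$ of weight at most $|\rho_{i,j}[d, t]|$.

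The main obstacle will be verifying the weight bound in the second suffix sub-case, since $\wts(d, p) = \dist(d, p, H \cup E(\pi[s, s^{l_i}_{h_i}]))$ forbids $\pi$-edges past $s^{l_i}_{h_i}$: $p$ must be chosen as the first pivot along an off-$\pi$ segment of $\rho_{i,j}[d, b_{i,j}]$, so that canonicity together with the definition of $d$ forces $\rho_{i,j}[d, p] \subseteq H$ and the shortcut weight upper bounds $|\rho_{i,j}[d, p]|$. The remaining composition of pieces is a routine verification along the lines of the conclusion of \Cref{2b}.
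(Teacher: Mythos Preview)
Your proposal is correct and follows essentially the same approach as the paper's own proof: decompose $\rho_{i,j}$ into prefix, backward, and suffix pieces; invoke \Cref{backpath} for the middle; reuse the prefix analysis of \Cref{2b} verbatim; and treat the suffix $\rho_{i,j}[x_{i,j},t]$ by the $t_{l_i}$-dichotomy. One small remark: the ``main obstacle'' you flag in the second suffix sub-case dissolves once you observe that your vertex $d$ is in fact the \emph{last} $\pi$-vertex on $\rho_{i,j}[x_{i,j},b_{i,j})$ altogether (not merely the last one in $V(\pi[x_{i,j},t_{l_i}])$), since any later $\pi$-intersection $w$ would satisfy $\rho_{i,j}[d,w]=\pi[d,w]$ by canonicity and hence force $t_{l_i}\in\rho_{i,j}$ or $w\in V(\pi[x_{i,j},t_{l_i}])$, both contradictions; thus the entire segment $\rho_{i,j}(d,b_{i,j})$ lies in $H$, and any pivot on it works. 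This is exactly the paper's vertex $z$, and your $(d,p)$ shortcut then matches the paper's $(z,p)$ edge. Also, where you write ``the direct $(p,y_{i,j})$ edge'', the paper (and \Cref{2b}) actually uses $(p,z')\circ\pi[z',y_{i,j}]$ with $z'$ the first $\pi$-landing, but this is a minor slip.
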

\begin{proof}
	Decompose the path $\rho_{i, j} = \rho_{i, j}[s, y_{i, j}]\circ \rho_{i, j}[y_{i, j}, x_{i, j}]\circ \rho_{i, j}[x_{i, j}, t]$. By \Cref{backpath} and construction of graph $H^{(l_i, h_i)}_{j}$, we have $\wts(y_{i, j}, x_{i, j}) = |\rho_{i, j}[y_{i, j}, x_{i, j}]|$. For the rest, let us study the values of $|\rho_{i, j}[x_{i, j}, t]|$ and $|\rho_{i, j}[s, y_{i, j}]|$.
	
	We first show $\dist\left(x_{i, j}, t, H^{(l_i, h_i)}_j\right) = |\rho_{i, j}[x_{i, j}, t]|$. Similar to the proof in \Cref{2b}, there are two different cases.
	\begin{itemize}[leftmargin=*]
		\item $\rho_{i, j}[x_{i, j}, t]$ passes through $t_{l_i}$.
		
		In this case, $|\rho_{i, j}[t_{l_i}, t]|$ is equal to $\wts(t_{l_i}, t) = \dist(t_{l_i}, t, G\setminus (E(\pi[s, t_{l_i}])\cup\{e_j\}))$. There the path $\pi[x_{i, j}, t_{l_i}]\circ (t_{l_i}, t)$ in $H^{(l_i, h_i)}_j$ has weight equal to $|\rho_{i, j}[s, y_{i, j}]|$.
		
		\item $\rho_{i, j}[x_{i, j}, t]$ does not pass through $t_{l_i}$.
		
		Let $z$ be the last vertex of $\rho_{i, j}[x_{i, j}, b_{i, j})$ that appears on $\pi$. Since $\rho_{i, j}[x_{i, j}, b_{i, j})$ does not pass through $s_{l_i}$, $z$ belongs to $\pi[x_{i, j}, t^{l_i}_{h_i}]$, and thus $|\rho_{i, j}[z, b_{i, j})| \geq L$. Hence, with high probability over $U$, $\rho_{i, j}[x_{i, j}, b_{i, j})$ should go through a vertex $p\in U$ before it touches on $\pi$.
		
		We first argue that $|\rho_{i, j}[p, t]| = \dist(p, t, H^{(l_i, h_i)}_j)$. In fact, if $b_{i, j}\notin V(\alpha^{l_j}_{h_j})$, then by (\romannumeral3), we know that $\wts(p, t) = |\rho_{i, j}[p, t]|$. Otherwise if $b_{i, j}\in V(\alpha^{l_j}_{h_j})$, then the path $(p, b_{i, j})\circ \pi[b_{i, j}, t^{l_j}_{h_j}\circ (t^{l_j}_{h_j}, t)]$ in $H^{(l_i, h_i)}_j$ would have total weight equal to $|\rho_{i, j}[p, t]|$.
		
		Secondly, let us argue that $|\rho_{i, j}[x_{i, j}, p]| = \dist\left(x_{i, j}, p , H^{(l_i, h_i)}_j\right)$. If $\rho_{i, j}[x_{i, j}, p]$ goes through vertex $t^{l_i}_{h_i}$, then the path $\pi[x_{i, j}, t^{l_i}_{h_i}]\circ (s^{l_i}_{h_i}, p)$ has weight equal to $|\rho_{i, j}[x_{i, j}, p]|$. Otherwise, the path $\pi[x_{i, j}, z]\circ (p, z)$ has weight equal to $|\rho_{i, j}[x_{i, j}, p]|$.
	\end{itemize}

	Next, let us look at the value of $|\rho_{i, j}[s, y_{i, j}]|$. The proof is almost the same as \Cref{2b} with some changes.
	\begin{itemize}[leftmargin=*]
		\item $\rho_{i, j}[a_{i, j}, y_{i, j}]$ passes through $s_{l_j}$.
		
		In this case, $|\rho_{i, j}[s, s_{l_j}]|$ is equal to $\dist(s, s_{l_j}, G\setminus (E(\pi[s_{l_j}, t])\cup \{e_i\}))$. Therefore, the path $(s, s_{l_j})\circ \pi[s_{l_j}, t]$ in $H^{(l_i, h_i)}_j$ has weight equal to $ \dist(s, s_{l_j}, G\setminus (E(\pi[s_{l_j}, t])\cup \{e_i\})) + \dist(s_{l_j}, t, H^{(l_i, h_i)}_j)$ which is captured by the algorithm.
		
		\item $\rho_{i, j}[a_{i, j}, y_{i, j}]$ does not pass through $s_{l_j}$.
		
		Let $z$ be the first vertex of $\rho_{i, j}(a_{i, j}, y_{i, j}]$ that appears on $\pi$. Since $\rho_{i, j}(a_{i, j}, y_{i, j}]$ does not pass through $s_{l_j}$, $z$ belongs to $\pi[s^{l_j}_{h_j}, y_{i, j}]$, and thus $|\rho_{i, j}(a_{i, j}, z]|\geq L$. Therefore, with high probability over $U$, $\rho_{i, j}(a_{i, j}, y_{i, j}]$ should go through a vertex $p\in U$ before it touches on $\pi$.
		
		First, let us argue that $|\rho_{i, j}[p, y_{i, j}]| = \dist\left(p, y_{i, j}, H^{(l_i, h_i)}_j\right)$. If $\rho_{i, j}[p, y_{i, j}]$ goes through vertex $s^{l_j}_{h_j}$, then the path $(p, s^{l_j}_{h_j})\circ \pi[s^{l_j}_{h_j}, y_{i, j}]$ in $H^{(l_i, h_i)}_j$ has weight equal to $|\rho_{i, j}[p, y_{i, j}]|$. Otherwise, the path $(p, z)\circ \pi[z, y_{i, j}]$ in $H^{(l_i, h_i)}_j$ has weight equal to $|\rho_{i, j}[p, y_{i, j}]|$.
		
		Next, let us look at the value of $|\rho_{i, j}[s, p]|$. If $a_{i, j}\notin V(\alpha^{l_i}_{h_i})$, then by (\romannumeral3) we know that $\wts(s, p) = |\rho_{i, j}[s, p]|$. Otherwise if $a_{i, j}\in V(\alpha^{l_i}_{h_i})$, then we can decompose the path as following
		$$\begin{aligned}
			|\rho_{i, j}[s, p]| = |\pi[s, a_{i, j}]| + |\rho_{i, j}[a_{i, j}, p]| 
			= |\pi[s, a_{i, j}]| + \dist(a_{i, j}, p, H) 
		\end{aligned}$$
		Therefore, 
		$$\begin{aligned}
			|\rho_{i, j}[s, t]| &= |\rho_{i, j}[s, a_{i, j}]| + |\rho_{i, j}[a_{i, j}, p]|+ |\rho_{i, j}[p, y_{i, j}]|+ |\rho_{i, j}[y_{i, j}, x_{i, j}]| + |\rho_{i, j}[x_{i, j}, t]|\\			
			&= |\rho_{i, j}[s, a_{i, j}]| + \dist(a_{i, j}, p, H) + \dist\left(p, y_{i, j}, H^{(l_i, h_i)}_j\right)\\&+ \dist\left(y_{i, j}, x_{i, j}, H^{(l_i, h_i)}_j\right) + \dist\left(x_{i, j}, t, H^{(l_i, h_i)}_j\right)\\
			&= |\rho_{i, j}[s, a_{i, j}]| + \dist(a_{i, j}, p, H)  + \dist\left(p, t, H^{(l_i, h_i)}_j\right)
		\end{aligned}$$
		which is captured by the algorithm.
	\end{itemize}
\end{proof}

\begin{claim}
	In sub-case 2(d), the algorithm correctly computes the value of $|\rho_{i, j}|$.
\end{claim}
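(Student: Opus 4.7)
I would prove the identity $\dist\brac{s, t, H^{(l_i, h_i), (l_j, h_j)}_{i, j}} = |\rho_{i,j}|$ by establishing both inequalities, mirroring the pattern of Claims \ref{2b} and \ref{2c}.

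For the lower bound $\dist\brac{s, t, H^{(l_i, h_i), (l_j, h_j)}_{i, j}} \geq |\rho_{i,j}|$, I would verify that every weighted edge of $H^{(l_i, h_i), (l_j, h_j)}_{i, j}$ corresponds to an actual walk in $G\setminus\{e_i, e_j\}$ of the same weighted length. Types (i) and (ii) edges correspond to explicit paths or distances avoiding $e_i, e_j$ by construction (the shortcut $(s_{l_j}, t_{l_i})$ uses $\mu(s_{l_j}, t_{l_i}\mid u_{i+1}, u_j)$, which is a concatenation of sub-path pieces and backward shortcuts that by the previous claims correspond to genuine paths avoiding $\{e_i, e_j\}$). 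Type (iii) edges are distances in the auxiliary graph $H^{(l_i, h_i), (l_j, h_j)}$, which itself excludes $E(\alpha^{l_i}_{h_i})\cup E(\alpha^{l_j}_{h_j})$ and hence $e_i, e_j$ as well.

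For the upper bound, decompose $\rho_{i,j} = \rho_{i,j}[s, y_{i,j}] \circ \rho_{i,j}[y_{i,j}, x_{i,j}] \circ \rho_{i,j}[x_{i,j}, t]$. First I would observe that under the Case 2 assumption, $y_{i,j}\in V(\pi[s_{l_j}, u_j])$ and $x_{i,j}\in V(\pi[u_{i+1}, t_{l_i}])$, and in sub-case 2(d) these are further restricted to $V(\pi[s_{l_j}, s^{l_j}_{h_j}))$ and $V(\pi(t^{l_i}_{h_i}, t_{l_i}])$ respectively. By Claim \ref{backpath}, $\mu(y_{i,j}, x_{i,j}) = |\rho_{i,j}[y_{i,j}, x_{i,j}]|$, so the shortcut $(s_{l_j}, t_{l_i})$ has weight at most $|\pi[s_{l_j}, y_{i,j}]| + |\rho_{i,j}[y_{i,j}, x_{i,j}]| + |\pi[x_{i,j}, t_{l_i}]|$. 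It remains to route $s\to s_{l_j}$ and $t_{l_i}\to t$ in $H^{(l_i,h_i),(l_j,h_j)}_{i,j}$ at cost bounded by the prefix and suffix of $\rho_{i,j}$ minus the absorbed sub-paths $\pi[s_{l_j}, y_{i,j}]$ and $\pi[x_{i,j}, t_{l_i}]$.

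For the prefix (the suffix is symmetric), I would split based on whether $\rho_{i,j}[s, y_{i,j}]$ passes through $s_{l_j}$. If it does, canonicity forces $\rho_{i,j}[s_{l_j}, y_{i,j}]=\pi[s_{l_j}, y_{i,j}]$, so $|\rho_{i,j}[s, s_{l_j}]| = |\rho_{i,j}[s, y_{i,j}]| - |\pi[s_{l_j}, y_{i,j}]|$, and the direct shortcut $(s, s_{l_j})$ in (ii) has weight $\dist(s, s_{l_j}, G\setminus(E(\pi[s_{l_j},t])\cup\{e_i\}))\leq |\rho_{i,j}[s, s_{l_j}]|$. Otherwise, since $l_j - l_i \geq 2$ and $\pi$ is a shortest path, I would argue $|\rho_{i,j}[a_{i,j}, y_{i,j}]|\geq |\pi[a_{i,j}, y_{i,j}]|\geq L$ (edge deletions cannot decrease distances), so with high probability some $p\in U$ lies on this sub-path. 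Then I would route $s\to p\to s_{l_j}$ via the shortcut distances in $H^{(l_i,h_i),(l_j,h_j)}$, verifying that $\dist(s, p, H^{(l_i,h_i),(l_j,h_j)}) \leq |\rho_{i,j}[s,p]|$ and $\dist(p, s_{l_j}, H^{(l_i,h_i),(l_j,h_j)})\leq |\rho_{i,j}[p, s_{l_j}]|$ by tracing each relevant $\pi$-edge used along these sub-paths and matching them with the appropriate type (i)--(iii) edges of the outer graph.

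The main obstacle will be the last step: carefully certifying that $H^{(l_i,h_i),(l_j,h_j)}$ contains enough structure to realize the paths $\rho_{i,j}[s, p]$ and $\rho_{i,j}[p, s_{l_j}]$ at the right cost. Specifically, $\rho_{i,j}[s, p]$ may use $\pi$-edges only in $E(\pi[s, s^{l_i}_{h_i}])$ (plus possibly $E(\pi[s^{l_i}_{h_i}, u_i])\subseteq E(\alpha^{l_i}_{h_i})$ if $a_{i,j}\in V(\alpha^{l_i}_{h_i})$), which matches the pivot-edge weight $\dist(u, p, H\cup E(\pi[s, s^{l_i}_{h_i}]))$ in type (ii), with a branch through $V(\alpha^{l_i}_{h_i})$-vertices handled via the type (i) edges $E(\pi[s^{l_i}_{h_i}, u_i])$ and the fact that $V(\gamma_{l_i})\subseteq V(H^{(l_i,h_i),(l_j,h_j)})$ permits entering $a_{i,j}$ from $s^{l_i}_{h_i}$ along the outer graph. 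The segment $\rho_{i,j}[p, s_{l_j}]$, by the definition of $y_{i,j}$, uses no edges of $E(\pi[u_{i+1}, u_j])$ and does not revisit $V(\pi[u_{j+1}, t])$ before reaching $s_{l_j}$, which aligns with the pivot-edge weight $\dist(p, v, H\cup E(\pi[t^{l_j}_{h_j}, t]))$.
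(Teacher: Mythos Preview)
Your decomposition and the lower-bound direction are fine, and the case where $\rho_{i,j}[s, y_{i,j}]$ passes through $s_{l_j}$ is handled correctly. The gap is in the other branch.

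When $\rho_{i,j}[s, y_{i,j}]$ does \emph{not} pass through $s_{l_j}$, you propose to route $s\to p\to s_{l_j}$ and bound $\dist(p, s_{l_j}, H^{(l_i,h_i),(l_j,h_j)})$ by $|\rho_{i,j}[p, s_{l_j}]|$. But in this branch $s_{l_j}\notin V(\rho_{i,j}[s, y_{i,j}])$, so $\rho_{i,j}[p, s_{l_j}]$ is not a sub-path of $\rho_{i,j}$ at all; there is nothing to trace. The only way to reach $s_{l_j}$ would be to go to $y_{i,j}$ and then backwards along $\pi$ to $s_{l_j}$, but the $\pi$-edges are directed forward. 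So the plan of funneling everything through the single outer shortcut $(s_{l_j}, t_{l_i})$ cannot recover the correct cost here.

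The paper's proof handles this branch differently: it finds pivots $p, q\in U$ on \emph{both} sides of the backward segment and routes $s\to p\to q\to t$, using the type-(iii) shortcut $\wts(p,q)=\dist(p,q,H^{(l_i,h_i),(l_j,h_j)})$. The point is that $y_{i,j}$ and $x_{i,j}$ \emph{are} vertices of the inner graph $H^{(l_i,h_i),(l_j,h_j)}$ (since all of $V(\gamma_{l_i})\cup V(\gamma_{l_j})$ is included there), and the backward edge $(y_{i,j}, x_{i,j})$ with weight $\mu(y_{i,j}, x_{i,j})$ is present there as well. So the entire middle portion $\rho_{i,j}[p,q]$, including the backward path, is captured inside $H^{(l_i,h_i),(l_j,h_j)}$; the outer graph only needs the single pivot-to-pivot hop. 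The outer shortcut $(s_{l_j}, t_{l_i})$ is used only in the specific sub-case where both $s_{l_j}$ and $t_{l_i}$ actually lie on $\rho_{i,j}$.
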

\begin{proof}
	Consider three different cases of $\rho_{i, j}$.
	\begin{itemize}[leftmargin=*]
		\item $s_{l_j}\notin V\brac{\rho_{i, j}[s, y_{i, j}]}$ and $t_{l_i}\notin V\brac{\rho_{i, j}[x_{i, j}, t]}$.
		
		Similar to the proofs in \Cref{2b} and \Cref{2c}, $\rho_{i, j}(a_{i, j}, y_{i, j}]$ passes through a vertex $p\in U$ before it touches on $\pi$, and $\rho[x_{i, j}, b_{i, j})$ passes through a vertex $q\in U$ after its last visit on any vertices on $\pi$. As $\rho_{i, j}[p, y_{i, j}]$ does not contain any edge in $E\brac{\pi[s^{l_i}_{h_i}, t]}$, due to edges and shortcuts in (\romannumeral1)(\romannumeral2), we have $|\rho_{i, j}[p, y_{i, j}]| = \dist\left(p, y_{i, j}, H^{(l_i, h_i), (l_j, h_j)}\right)$. Symmetrically, we can also prove $|\rho_{i, j}[x_{i, j}, q]| = \dist\left(x_{i, j}, q, H^{(l_i, h_i), (l_j, h_j)}\right)$. Since $\wts(y_{i, j}, x_{i, j}) = |\rho_{i, j}[y_{i, j}, x_{i, j}]|$ by \Cref{backpath}, we have:
		$$|\rho_{i, j}[p, q]| = \dist\brac{p, q, H^{(l_i, h_i), (l_j, h_j)}}$$
		
		If $a_{i, j}\in V\brac{\alpha^{l_i}_{h_i}}$, then there is a path $\brac{s, s^{l_i}_{h_i}}\circ \pi[s^{l_i}_{h_i}, a_{i, j}]\circ (a_{i, j}, p)$ in $H^{(l_i, h_i), (l_j, h_j)}_{i, j}$ with weight equal to $|\rho_{i, j}[s, p]|$; otherwise, the edge $(s, p)$ in $H^{(l_i, h_i), (l_j, h_j)}_{i, j}$ has weight equal to $|\rho_{i, j}[s, p]|$. Hence, we always have $|\rho_{i, j}[s, p]| = \dist\brac{s, p, H^{(l_i, h_i), (l_j, h_j)}_{i, j}}$. Symmetrically, we also have $|\rho_{i, j}[q, t]| = \dist\brac{q, t, H^{(l_i, h_i), (l_j, h_j)}_{i, j}}$. As the shortcut edge $(p, q)$ in $H^{(l_i, h_i), (l_j, h_j)}_{i, j}$ has weight $$\wts(p, q) = \dist\brac{p, q, H^{(l_i, h_i), (l_j, h_j)}_{i, j}}$$
		we can conclude that $|\rho_{i, j}| = \dist\brac{s, t, H^{(l_i, h_i), (l_j, h_j)}_{i, j}}$.
		
		\item $s_{l_j}\in V\brac{\rho_{i, j}[s, y_{i, j}]}$ and $t_{l_i}\notin V\brac{\rho_{i, j}[x_{i, j}, t]}$.
		
		Similar to the previous case, we can still find the vertex $q\in U$ on $\rho_{i, j}[x_{i, j}, t]$. Notice that we have added all edges $E\brac{\pi[s^{l_j}_{h_j}], y_{i, j}}$ to $H^{(l_i, h_i), (l_j, h_j)}$, we have $$|\rho_{i, j}[s^{l_j}_{h_j}]| = \dist\brac{s^{l_j}_{h_j}, t, H^{(l_i, h_i), (l_j, h_j)}_{i, j}}$$
		Therefore, we have added a shortcut $\brac{s^{l_j}_{h_j}, t}$ with weight $|\rho_{i, j}[s^{l_j}_{h_j}, t]|$ to $H^{(l_i, h_i), (l_j, h_j)}_{i, j}$.
		
		Then, in graph $H^{(l_i, h_i), (l_j, h_j)}_{i, j}$, as we have a shortcut $\brac{s, s^{l_j}_{h_j}}$ with weight equal to $\wts\brac{s, s^{l_j}_{h_j}} = |\rho_{i, j}[s, s^{l_j}_{h_j}]|$, we know that $|\rho_{i, j}| = \dist\brac{s, t, H^{(l_i, h_i), (l_j, h_j)}_{i, j}}$.
		
		\item $s_{l_j}\in V\brac{\rho_{i, j}[s, y_{i, j}]}$ and $t_{l_i}\in V\brac{\rho_{i, j}[x_{i, j}, t]}$.
		
		In graph $H^{(l_i, h_i), (l_j, h_j)}_{i, j}$, as we have a shortcut $(s, s^{l_j}_{h_j})$ with weight equal to $\wts\brac{s, s^{l_j}_{h_j}} = |\rho_{i, j}[s, s^{l_j}_{h_j}]|$, a shortcut $\brac{t^{l_i}_{h_i}, t}$ with weight equal to $\wts\brac{t^{l_i}_{h_i}, t} = |\rho_{i, j}[t^{l_i}_{h_i}, t]|$, and a backward shortcut $\brac{s^{l_j}_{h_j}, t^{l_i}_{h_i}}$ with weight $|\rho_{i, j}[s^{l_j}_{h_j}, t^{l_i}_{h_i}]|$ we know that $|\rho_{i, j}| = \dist\brac{s, t, H^{(l_i, h_i), (l_j, h_j)}_{i, j}}$.
	\end{itemize}
\end{proof}

\subsection*{Case 3: $0\leq l_j - l_i \leq 1$}
\paragraph{Preparation.} 
For notational convenience, for each $l$, define $\beta_l = \gamma_l\circ \gamma_{l+1}$ which is the concatenation of two consecutive sub-paths. Then, as $l_j - l_i\leq 1$, both edges $e_i, e_j$ are on sub-path $\beta_{l_i}$. Similar to the previous case, we also need single source shortest paths from a random pivot set. For clarity let us rewrite some of these steps.
\begin{itemize}[leftmargin=*]
	\item Uniformly sample a vertex subset $U$ of size $\frac{10n\log n}{L}$. Then, for each vertex $v\in U$, compute a single-source shortest paths tree to and from $v$ in graph $H = G\setminus E(\pi)$.
	
	\item For each pair of indices $l, h$, compute single-source shortest paths from $s$ and to $t$ in graphs $G\setminus E(\beta_l)$, $H\cup E(\pi[s, s^l_h])$ and $H\cup E(\pi[t^l_h, t])$.
\end{itemize}

\paragraph{Sub-path shortcuts for each $V(\beta_l)$.} We need to compute all distances: 
$$\mu_1(w, z) \overset{\text{def}}{=} \dist(w, z, H)$$
for any $w, z\in V(\beta_l)$; different from the previous case, $z$ does not have to lie between $s$ and $w$ on $\pi$. We will perform truncated Dijkstra up to depth $L$ on vertices in $\pi$, which is similar to what we did in previous sections.

For any offset $1\leq b\leq 10$, initialize a vertex subset $A_b\leftarrow V$, and go over all the sub-paths $\beta_b, \beta_{b+10}, \cdots, \beta_{b+10k}, k \leq \ceil{n/10L}$ sequentially. For each sub-paths $\beta_{b+10j}$, enumerate all vertices $w\in V(\gamma_{b+10j})$ and perform a truncated Dijkstra up to depth $2L$ in the graph $H[A_b]$. Then, for each $z\in V(\beta_{b+10j})$, if $z$ was visited by the truncated Dijkstra, then store a distance value $\mu(w, z)\leftarrow \dist(w, z, H[A_b])$; otherwise if $z$ was not visited, then go over all vertices $p\in U$, and store a distance value: $$\mu_1(w, z)\leftarrow \min_{p\in U}\{\dist(w, p, H) + \dist(p, z, H) \}$$
After that, let $P_{b+10j}$ collect all the vertices visited by any truncated Dijkstra of vertices in $\beta_{b+10j}$, and prune $A_{b}\leftarrow A_{b}\setminus P_{b+10j}$. 

\paragraph{Backward shortcuts for each $V(\beta_l)$.} Using the information of $\mu_1(\cdot, \cdot)$, we are able to compute the following distances:
$$\mu_2(w, z) = \dist(w, z, H\cup E(\pi[z, w]))$$
for any $w, z\in V(\beta_l)$ such that $z$ lies between $s$ and $w$ on $\pi$. To compute these values, for each $w$, apply a dynamic programming procedure which enumerates all vertices $z\in V(\pi[s_l, w])$ in the reverse order and updates the table:$$\mu_2(w, z)\leftarrow \min_{v_1, v_2\in V(\pi(z, w)), v_2\in V(\pi[v_1, w))}\{\mu_2(w, v_1) + |\pi[v_1, v_2]| + \mu_1(v_2, z)\}$$
To calculate this minimum formula efficiently, note that minimizing the sum $|\pi[v_1, v_2]| + \mu_1(v_2, z)$ is equivalent to minimizing $|\pi[s, v_2]| + \mu(v_2, z)$ which is independent of $v_1$. Therefore, the minimizer of $|\pi[v_1, v_2]| + \mu_1(v_2, z)$ can be computed in $O(\log n)$ time using interval data structures. Hence, the minimum formula can be calculated in $O(L\log n)$ time.

\paragraph{Sub-case 3(a): $\alpha^{l_i}_{h_i} = \alpha^{l_j}_{h_j}$.} In this case, build the following weighted graph $Z_{i, j}$ with edge weight $\wts(*, *)$.
\begin{itemize}[leftmargin=*]
	\item \textbf{Vertices.} Add all vertices $\{s, t\}\cup V(\beta_{l_i})$ to $Z_{i, j}$.
	
	\item \textbf{Edges.} Add edges $E(\beta_{l_i})\setminus \{e_i, e_j\}$ to $Z_{i, j}$. Then,for any $w, z\in V(\beta_{l_i})$, add a shortcut edge $(w, z)$ with edge weight $\wts(w, z) = \mu_1(w, z)$. Also, for each $v\in\{s, t\}\cup V(\beta_{l_i})$, add edges $(s, v), (v, t)$ with weights $\wts(s, v) = \dist\brac{s, v, G\setminus E(\beta_{l_i})}$ and $\wts(v, t) = \dist\brac{v, t, G\setminus E(\beta_{l_i})}$.
\end{itemize}
After that, assign $\est(s, t, G\setminus\{e_i, e_j\})\leftarrow \min\{\est(s, t, G\setminus\{e_i, e_j\}), \dist(s, t, Z_{i, j})\}$.

\paragraph{Sub-case 3(b): $\alpha^{l_i}_{h_i} \neq \alpha^{l_j}_{h_j}$, and $x_{i, j}\in V\brac{\alpha^{l_i}_{h_i}}$ and $y_{i, j}\in V\brac{\alpha^{l_j}_{h_j}}$.} In this case, build the following weighted graph $Z_{i, j}$ with edge weight $\wts(*, *)$.
\begin{itemize}[leftmargin=*]
	\item \textbf{Vertices.} Add vertices $\{s, t\}\cup V\brac{\alpha^{l_i}_{h_i}}\cup V\brac{\alpha^{l_j}_{h_j}}$ to $Z_{i, j}$.
	
	\item \textbf{Edges.} Add the following types of edges to $Z_{i, j}$.
	\begin{enumerate}[(i),leftmargin=*]
		\item Add all edges in $E\brac{\alpha^{l_i}_{h_i}\cup \alpha^{l_j}_{h_j}}\setminus \{e_i, e_j\}$.
		
		\item For each $u\in V\brac{\pi[u_{i+1}, t^{l_i}_{h_i}]}$ and $v\in V\brac{\pi[s^{l_j}_{h_j}, u_j]}$, add shortcut edge $(u, v)$ with weight $\wts(u, v) = |\pi[u, v]|$, and shortcut edge $(v, u)$ with weight $\wts(v, u) = \mu_2(v, u)$.
		
		\item For each $u\in V\brac{\pi[s^{l_i}_{h_i}, u_i]}$, add a shortcut $\brac{u, s^{l_j}_{h_j}}$ with weight
		$$\wts\brac{u, s^{l_j}_{h_j}} = \min_{z\in V\brac{\pi[t^{l_i}_{h_i}, s^{l_j}_{h_j}]}}\left\{ \mu_1(u, z) + |\pi[z, s^{l_j}_{h_j}]| \right\}$$
		
		Symmetrically, for each $v\in V\brac{\pi[u_{j+1}, t^{l_j}_{h_j}]}$, add a shortcut $\brac{t^{l_i}_{h_i}, v}$ with weight 
		$$\wts\brac{t^{l_i}_{h_i}, v} = \min_{z\in V\brac{\pi[t^{l_i}_{h_i}, s^{l_j}_{h_j}]}}\left\{ |\pi[t^{l_i}_{h_i}, z]| + \mu_1(z, v) \right\}$$
		
		Also, add a shortcut $\brac{s, s^{l_j}_{h_j}}$ with weight
		$$\wts\brac{s, s^{l_j}_{h_j}} = \min_{z\in V\brac{\pi[t^{l_i}_{h_i}, s^{l_j}_{h_j}]}}\left\{ \dist(s, z, H\cup E(\pi[s, s_{l_i}])) + |\pi[z, s^{l_j}_{h_j}]| \right\}$$
		
		As well as a shortcut $\brac{t^{l_i}_{h_i}, t}$ with weight
		$$\wts\brac{t^{l_i}_{h_i}, t} = \min_{z\in V\brac{\pi[t^{l_i}_{h_i}, s^{l_j}_{h_j}]}}\left\{ |\pi[t^{l_i}_{h_i}, z]| + \dist(z, t, H\cup E(\pi[t_{l_j}, t])) \right\}$$
		
		These quantities can be evaluated in $O(\log n)$ time using standard interval data structures.
		
		\item For each $v\in V\brac{\alpha^{l_i}_{h_i}}\cup V\brac{\alpha^{l_j}_{h_j}}$, add shortcut edge $(s, v)$ with weight $$\wts(s, v) = \dist\brac{s, v, H\cup E\brac{\pi[s, s^{l_i}_{h_i}]}}$$
		as well as shortcut edge $(v, t)$ with weight
		$$\wts(v, t) = \dist\brac{v, t, H\cup E\brac{\pi[t^{l_j}_{h_k}, t]}}$$
	\end{enumerate}
\end{itemize}
After that, assign $\est(s, t, G\setminus\{e_i, e_j\})\leftarrow \min\{\est(s, t, G\setminus\{e_i, e_j\}), \dist(s, t, Z_{i, j})\}$. See \Cref{2fail-case2-shortcut8} for an illustration.

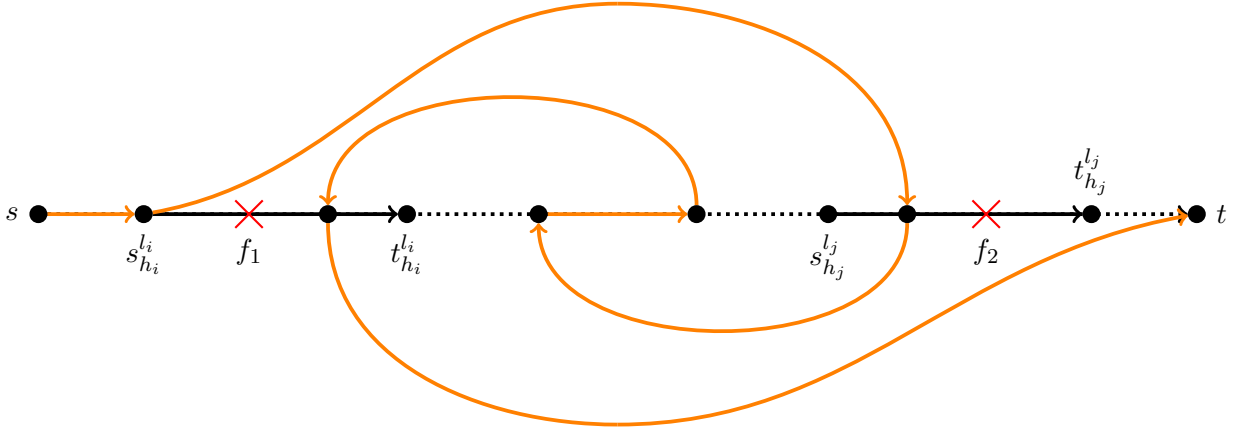
\begin{figure}
	\centering
	\begin{tikzpicture}[thick,scale=0.7]
	\draw (-11, 0) node(1)[circle, draw, fill=black, inner sep=0pt, minimum width=6pt, label=180: {$s$}] {};
	\draw (11, 0) node(2)[circle, draw, fill=black, inner sep=0pt, minimum width=6pt, label=0: {$t$}] {};
	
	\draw (-9, 0) node(3)[circle, draw, fill=black, inner sep=0pt, minimum width=6pt, label=-90: {$s_{h_i}^{l_i}$}] {};
	\draw (-4, 0) node(4)[circle, draw, fill=black, inner sep=0pt, minimum width=6pt, label=-90: {$t^{l_i}_{h_i}$}] {};
	\draw (4, 0) node(5)[circle, draw, fill=black, inner sep=0pt, minimum width=6pt, label=-90: {$s^{l_j}_{h_j}$}] {};
	\draw (9, 0) node(6)[circle, draw, fill=black, inner sep=0pt, minimum width=6pt, label=90: {$t^{l_j}_{h_j}$}] {};
	\draw (-5.5, 0) node(7)[circle, draw, fill=black, inner sep=0pt, minimum width=6pt] {};
	\draw (5.5, 0) node(8)[circle, draw, fill=black, inner sep=0pt, minimum width=6pt] {};
	\draw (-1.5, 0) node(9)[circle, draw, fill=black, inner sep=0pt, minimum width=6pt] {};
	\draw (1.5, 0) node(10)[circle, draw, fill=black, inner sep=0pt, minimum width=6pt] {};	
	\draw (-7, 0) node[cross=6, red, label=-90: {$f_1$}] {};
	\draw (7, 0) node[cross=6, red, label=-90: {$f_2$}] {};
	
	\begin{scope}[on background layer]
		\draw [->, line width = 0.5mm, dotted] (1) to (2);
		\draw [->, line width = 0.5mm] (3) to (4);
		\draw [->, line width = 0.5mm] (5) to (6);

		\draw [->, line width = 0.5mm, color=orange] (1) to (3);
		\draw [line width = 0.5mm, color=orange] (3) to[out=10, in=180] (0, 4);
		\draw [->, line width = 0.5mm, color=orange] (0, 4) to[out=0, in=90] (8);
		\draw [->, line width = 0.5mm, color=orange] (8) to[out=-90, in=-90] (9);
		\draw [->, line width = 0.5mm, color=orange] (9) to (10);
		\draw [->, line width = 0.5mm, color=orange] (10) to[out=90, in=90] (7);
		\draw [line width = 0.5mm, color=orange] (7) to[out=-90, in=180] (0, -4);
		\draw [->, line width = 0.5mm, color=orange] (0, -4) to[out=0, in=-170] (2);

	\end{scope}
\end{tikzpicture}
	\caption{The shortcut edges in $Z_{i, j}$ captures the shortest replacement path $\rho_{i, j}$ drawn as the orange curve.}
	\label{2fail-case2-shortcut8}
\end{figure}

\paragraph{Sub-case 3(c): $\alpha^{l_i}_{h_i} \neq \alpha^{l_j}_{h_j}$, and $x_{i, j}\notin V\brac{\alpha^{l_i}_{h_i}}$ or $y_{i, j}\notin V\brac{\alpha^{l_j}_{h_j}}$.} Without loss of generality, assume $x_{i, j}\notin V\brac{\alpha^{l_i}_{h_i}}$; this also includes the case where $x_{i, j} = \bot$ does not exist. Build the following weighted graph $Z^{(l_i, h_i)}_j$ with edge weight $\wts(*, *)$.
\begin{itemize}[leftmargin=*]
	\item \textbf{Vertices.} Add vertices $\{s, t\}\cup V(\beta_{l_i})$ to $Z_{i, j}$.
	
	\item \textbf{Edges.} Add the following types of edges to $Z_{i, j}$.
	
	\begin{enumerate}[(i),leftmargin=*]
		\item Add all edges in $E(\beta_{l_i})\setminus \brac{V\brac{\alpha^{l_i}_{h_i}}\cup \{e_j\}}$.
		
		\item Then,for any $w, z\in V(\beta_{l_i})$, add a shortcut edge $(w, z)$ with edge weight $\wts(w, z) = \mu_1(w, z)$.
		
		\item For each $v\in V(\beta_{l_i})$, add shortcut edge $(s, v)$ with weight $$\wts(s, v) = \dist\brac{s, v, H\cup E\brac{\pi[s, s^{l_i}_{h_i}]}}$$
		as well as shortcut edge $(v, t)$ with weight
		$$\wts(v, t) = \dist\brac{v, t, H\cup E\brac{\pi[t^{l_j}_{h_k}, t]}}$$
	\end{enumerate}
\end{itemize}
Computing single-source shortest paths to $t$ in graph $Z^{(l_i, h_i)}_j$. After that, update $\est(s, t, G\setminus\{e_i, e_j\})$ with the minimum clause:
$$\min_{v\in V\brac{\pi[s^{l_i}_{h_i}, u_i]}}\left\{\dist\brac{s, t, Z^{(l_i, h_i)}_j}, |\pi[s, v]| + \dist\brac{v, t, Z^{(l_i, h_i)}_j}\right\}$$
See \Cref{2fail-case2-shortcut9} for an illustration.

\begin{figure}
	\centering
	\begin{tikzpicture}[thick,scale=0.7]
	\draw (-11, 0) node(1)[circle, draw, fill=black, inner sep=0pt, minimum width=6pt, label=180: {$s$}] {};
	\draw (11, 0) node(2)[circle, draw, fill=black, inner sep=0pt, minimum width=6pt, label=0: {$t$}] {};
	
	\draw (-9, 0) node(3)[circle, draw, fill=black, inner sep=0pt, minimum width=6pt, label=-90: {$s_{h_i}^{l_i}$}] {};
	\draw (-6, 0) node(4)[circle, draw, fill=black, inner sep=0pt, minimum width=6pt, label=-90: {$t^{l_i}_{h_i}$}] {};
	\draw (4, 0) node(5)[circle, draw, fill=black, inner sep=0pt, minimum width=6pt, label=-90: {$s^{l_j}_{h_j}$}] {};
	\draw (9, 0) node(6)[circle, draw, fill=black, inner sep=0pt, minimum width=6pt, label=90: {$t^{l_j}_{h_j}$}] {};
	\draw (-4.5, 0) node(7)[circle, draw, fill=black, inner sep=0pt, minimum width=6pt] {};
	\draw (5.5, 0) node(8)[circle, draw, fill=black, inner sep=0pt, minimum width=6pt] {};
	\draw (-1.5, 0) node(9)[circle, draw, fill=black, inner sep=0pt, minimum width=6pt] {};
	\draw (1.5, 0) node(10)[circle, draw, fill=black, inner sep=0pt, minimum width=6pt] {};
	\draw (-7.5, 0) node(11)[circle, draw, fill=black, inner sep=0pt, minimum width=6pt, label=-90: {$v$}] {};
	
	\draw (-3, 0) node(13)[circle, draw, fill=black, inner sep=0pt, minimum width=6pt, label=-90: {$t_{l_i}$}] {};
	\draw (7, 0) node[cross=6, red, label=-90: {$f_2$}] {};
	
	\begin{scope}[on background layer]
		\draw [->, line width = 0.5mm, dotted] (1) to (2);
		\draw [->, line width = 0.5mm, color=red] (11) to (4);
		\draw [->, line width = 0.5mm] (4) to (13);
		\draw [->, line width = 0.5mm] (5) to (6);

		\draw [->, line width = 0.5mm, dotted, color=orange] (1) to (11);
		\draw [line width = 0.5mm, color=orange] (11) to[out=30, in=180] (0, 4);
		\draw [->, line width = 0.5mm, color=orange] (0, 4) to[out=0, in=90] (8);
		\draw [->, line width = 0.5mm, color=orange] (8) to[out=-90, in=-90] (9);
		\draw [->, line width = 0.5mm, color=orange] (9) to (10);
		\draw [->, line width = 0.5mm, color=orange] (10) to[out=90, in=90] (7);
		\draw [line width = 0.5mm, color=orange] (7) to[out=-90, in=180] (0, -4);
		\draw [->, line width = 0.5mm, color=orange] (0, -4) to[out=0, in=-170] (2);

	\end{scope}
\end{tikzpicture}
	\caption{As a typical example, the shortcut edges in $Z^{(l_i, h_i)}_{j}$ captures the suffix of the shortest replacement path $\rho_{i, j}[v, t], v\in V\brac{\alpha^{l_i}_{h_i}}$ drawn as the orange curve. The dotted orange segment will be computed once we are given $e_i$.}
	\label{2fail-case2-shortcut9}
\end{figure}

\paragraph{Runtime.} Let us list all the runtime complexities below.
\begin{itemize}[leftmargin=*]
	\item \textbf{Preparation.} The runtime is bounded by $\tilde{O}(\frac{n^3}{g})$.
	
	\item \textbf{Sub-path shortcuts.} By the pruning procedure, each vertex is searched by at most $O(L)$ times throughout the truncated Dijkstra procedures. Hence, the total runtime is $\tilde{O}(n^2L)$.
	
	\item \textbf{Backward shortcuts.} According to the algorithm, the dynamic programming takes time $\tilde{O}(nL^2)$.
	
	\item \textbf{Sub-case 3(a).} The graph $Z_{i, j}$ has size $O(L^2)$, so the runtime is bounded by $\tilde{O}(ngL^2)$.
	
	\item \textbf{Sub-case 3(b).} The graph $Z_{i, j}$ has size $O(g^2)$, so the runtime is bounded by $\tilde{nLg^2}$.
	
	\item \textbf{Sub-case 3(c).} The graph $Z^{(l_i, h_i)}_j$ has size $O(L^2)$, so applying Dijkstra's algorithm on all graphs $Z^{(l_i, h_i)}_j$ takes time $\tilde{O}(nL^3/g)$. Then, updating the values of $\est(s, t, G\setminus\{e_i, e_j\})$ takes time $O(g)$ for each pair of $(e_i, e_j)$, and thus $O(nLg)$ in total.
\end{itemize}
To summarize, the overall runtime is bounded by $$\tilde{O}\brac{\frac{n^3}{g} + n^2L + ngL^2 + \frac{nL^3}{g}}$$

\paragraph{Correctness.} Similar to the previous case, we analyze some properties of the shortcuts for each sub-path. 
\begin{claim}
	The algorithm correctly computes the values $$\mu_1(w, z) = \dist(w, z, H)$$ of sub-path shortcuts for each $V(\beta_{b+10j})$, where $w, z\in V(\beta_{b+10j})$.
\end{claim}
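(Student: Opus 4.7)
The plan is to mirror the structure of the analogous Claim for the Case 2 sub-path shortcuts, but adapted to the new setting where $z$ can lie on either side of $w$ on $\pi$ and we search inside $H$ rather than inside $H\cup E(\pi[w,t_{b+10j}])$. I would split on whether the true distance $\dist(w,z,H)$ is large or small relative to the Dijkstra truncation depth $2L$.

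First, if $\dist(w,z,H) > 2L$, then the (unique, after perturbation) shortest path $\beta$ from $w$ to $z$ in $H$ contains more than $2L$ internal vertices. Because $U$ is a uniformly random subset of size $10n\log n/L$, a standard hitting-set argument shows that $V(\beta)\cap U\neq\emptyset$ with high probability, and for any such pivot $p$ we have $|\beta|=\dist(w,p,H)+\dist(p,z,H)$. Hence the pivot formula
\[
\mu_1(w,z)\leftarrow \min_{p\in U}\{\dist(w,p,H)+\dist(p,z,H)\}
\]
evaluates to the correct value, and it is clearly an upper bound on $\dist(w,z,H)$, so equality holds.

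Second, if $\dist(w,z,H)\leq 2L$, the goal is to show that when the truncated Dijkstra is launched from $w$ inside $H[A_b]$, every vertex of the shortest path $\beta$ from $w$ to $z$ in $H$ is still present in $A_b$; this immediately implies $\mu_1(w,z)=\dist(w,z,H[A_b])=\dist(w,z,H)$. Suppose for contradiction that some internal vertex $v\in V(\beta)$ has already been pruned from $A_b$ during an earlier iteration while processing $\beta_{b+10k}$ for some $k<j$, rooted at some $x\in V(\beta_{b+10k})$. By the truncation depth, $\dist(x,v,H)\leq 2L$, and since $v\in V(\beta)$ we have $\dist(v,z,H)\leq\dist(w,z,H)\leq 2L$. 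Concatenating and bounding $\dist(x,z,G)\leq\dist(x,z,H)\leq 4L$ yields a contradiction with the fact that $x,z\in V(\pi)$ lie on a shortest path at index distance at least $10(j-k)L-2L\geq 8L$ apart, forcing $\dist(x,z,G)\geq 8L$.

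The main obstacle is really only the accounting with the indices: because $\beta_l=\gamma_l\circ\gamma_{l+1}$ spans two consecutive length-$L$ sub-paths, I need to double-check that the offset $10$ between successive values of $k$ still leaves a gap of at least $8L$ on $\pi$ after subtracting the width $2L$ of each $\beta$. Once that numeric slack is verified, the rest is a direct translation of the argument used for the Case 2 sub-path shortcuts, so no new ideas are needed beyond the two-case split above.
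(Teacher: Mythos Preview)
Your proposal is correct and follows essentially the same two-case split and contradiction argument as the paper's own proof. The only difference is cosmetic: you derive the sharper lower bound $\dist(x,z,G)\geq 8L$ from the index accounting (offset $10$ minus width $2$), whereas the paper simply asserts $7L$; either way the contradiction with $4L$ goes through.
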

\begin{proof}
	For any $w, z\in V(\beta_{b+10j})$, let $\theta$ be the shortest path from $w$ to $z$ in $H$. First consider the case that $\dist(w, z, H) >2L$. In this case, with high probability, there exists a vertex $p\in U$ on this path. Therefore, we can compute the value of $|\dist(w, z, H)|$ via the formula:
	$$\mu(w, z)\leftarrow \min_{p\in U}\{ \dist(w, p, H) + \dist(p, z, H) \}$$
	
	Otherwise if $\dist(w, z, H) \leq 2L$, then consider the moment when a truncated Dijkstra was executed at $w$ in graph $H[A_b]$. If all vertices on the shortest path are in $A_b$, then it would be discovered by the truncated Dijkstra algorithm. Otherwise, there would be a vertex $v\in V(\theta)$ which was visited already by some previous execution of truncated Dijkstra, say a vertex $x$ on sub-path $\beta_{b+10h}, h<j$. Then, $7L\leq \dist(x, z, G)\leq \dist(x, v, G) + |\theta[v, z]| \leq 4L$, which is a contradiction.
\end{proof}

\begin{claim}
	The algorithm correctly computes the values $$\mu_2(w, z) = \dist(w, z, H\cup E(\pi[z, w]))$$ of sub-path shortcuts for each $V(\beta_{l})$, for any $w, z\in V(\beta_l)$ such that $z$ lies between $s$ and $w$ on $\pi$.
\end{claim}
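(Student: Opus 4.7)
The plan is to verify the dynamic programming by proving both inequalities $\mu_2(w,z) \leq \dist(w,z,H\cup E(\pi[z,w]))$ and $\mu_2(w,z) \geq \dist(w,z,H\cup E(\pi[z,w]))$, proceeding by induction on $|\pi[z,w]|$ (equivalently, on the DP order, which visits pairs $(w,z)$ with $z$ going backward from $w$ toward $s_l$).

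For the direction $\mu_2(w,z) \geq \dist(w,z,H\cup E(\pi[z,w]))$, I would argue that every term $\mu_2(w,v_1) + |\pi[v_1,v_2]| + \mu_1(v_2,z)$ in the minimum corresponds to a genuine walk from $w$ to $z$ inside $H\cup E(\pi[z,w])$. By the inductive hypothesis applied to the pair $(w,v_1)$, the value $\mu_2(w,v_1)$ realizes a walk from $w$ to $v_1$ using only edges of $H\cup E(\pi[v_1,w])$. Concatenating this walk with the forward sub-path $\pi[v_1,v_2]$ and with a walk in $H$ of length $\mu_1(v_2,z) = \dist(v_2,z,H)$ produces a walk from $w$ to $z$. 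Since $v_1,v_2\in V(\pi(z,w))$ forces $\pi[v_1,v_2]\subseteq \pi[z,w]$ and the other two pieces lie in $H$, the entire walk lies in $H\cup E(\pi[z,w])$, and its length is at least $\dist(w,z,H\cup E(\pi[z,w]))$.

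For the matching direction, let $\rho$ be a shortest path from $w$ to $z$ in $H\cup E(\pi[z,w])$. The forward $\pi$-edges available to $\rho$ leave each vertex of $\pi[z,w]$ in the direction away from $z$, so the edge of $\rho$ arriving at $z$ must belong to $H$. Let $v_2$ be the earliest vertex along $\rho$ such that the suffix $\rho[v_2,z]$ lies entirely in $H$; assuming $v_2\neq w$ (the degenerate case is absorbed by a direct comparison $\mu_2(w,z)\leq \mu_1(w,z)$, which is an implicit base case of the DP), let $v_1$ be the start of the maximal $\pi$-segment of $\rho$ ending at $v_2$. Because $\rho$ is canonical and only forward $\pi$-edges are used, $v_1$ precedes $v_2$ on $\pi$, and both lie strictly between $z$ and $w$. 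The prefix $\rho[w,v_1]$ is a walk in $H\cup E(\pi[v_1,w])$, so the inductive hypothesis yields $\mu_2(w,v_1)\leq |\rho[w,v_1]|$; meanwhile $\mu_1(v_2,z)\leq |\rho[v_2,z]|$ because $\rho[v_2,z]\subseteq H$. Summing gives that the corresponding DP term for $(v_1,v_2)$ is at most $|\rho|$, hence $\mu_2(w,z)\leq |\rho|$.

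The main obstacle is the handling of the two degenerate situations: the base case where $\rho$ uses no $\pi$-edge of $\pi[z,w]$ (so $v_2=w$), which must be taken care of by initializing $\mu_2(w,z)\leftarrow \mu_1(w,z)$ before running the DP, and the verification that $v_1\leq v_2$ on $\pi$ in every decomposition, which rests on the fact that edges of $\pi[z,w]$ are only usable in the forward direction. A minor bookkeeping point is that the efficient evaluation of the inner minimum via interval data structures, as described in the algorithm, does not lose any term: since $|\pi[v_1,v_2]|+\mu_1(v_2,z) = |\pi[s,v_2]|+\mu_1(v_2,z) - |\pi[s,v_1]|$, minimizing over $v_2\in V(\pi[v_1,w))$ reduces to querying a prefix minimum of the sequence $|\pi[s,v]|+\mu_1(v,z)$, which is exactly what the interval data structure supports.
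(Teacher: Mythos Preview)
Your inductive scheme and the two-sided inequality argument are correct and match the paper's approach. The one step that is asserted but not justified is the claim that ``the prefix $\rho[w,v_1]$ is a walk in $H\cup E(\pi[v_1,w])$.'' With your choice of $v_1$ (the start of the last maximal $\pi$-segment of $\rho$), this is not automatic: a priori $\rho[w,v_1]$ could contain an earlier $\pi$-segment $\pi[a,b]$ with $a<v_1$ on $\pi$, in which case the edge $(a,a{+}1)$ lies in $E(\pi[z,v_1])$ and not in $E(\pi[v_1,w])$. Your remark that $\rho$ is ``canonical'' gestures at the right fix but needs to be made precise, since the paper's canonical notion is defined only for $st$-replacement paths. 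The missing argument is: if $\rho$ visited such a vertex $a<v_1$ before $v_1$, then since $a<v_2$ on $\pi$ and $a$ precedes $v_2$ along $\rho$, uniqueness of shortest paths under perturbation forces $\rho[a,v_2]=\pi[a,v_2]$; but then the maximal $\pi$-segment of $\rho$ ending at $v_2$ would start at $a$ (or earlier), contradicting the definition of $v_1$. Once you add this, your decomposition goes through.

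For comparison, the paper takes $v_1$ to be the vertex of $V(\pi(z,w))\cap V(\rho)$ that minimizes $|\pi[z,v_1]|$. With that choice the containment $\rho[w,v_1]\subseteq H\cup E(\pi[v_1,w])$ is immediate (no vertex of $\pi(z,v_1)$ is ever visited), and the inductive hypothesis applies directly to the prefix; the shortest-path property of $\pi$ is then used implicitly on the suffix side instead, to bound $|\pi[v_1,v_2]|\le|\rho[v_1,v_2]|$ for your same $v_2$. Either decomposition works, but in both the crucial ingredient you should spell out is that $\pi$ is the unique shortest path under the perturbation.
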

\begin{proof}
	This is proved by a straightforward induction. Basically, for any $z$, assume the shortest path $\theta$ from $w$ to $z$ in $H\cup E(\pi[z, w])$ passes through a vertex $v_1\in V(\pi(z, w))$ which minimizes $|\pi[z, v_1]|$. Then, the suffix sub-path $\theta[v_1, z]$ only uses edges in $H\cup E(\pi[v_1, w])$, which is captured by the minimum formula.
\end{proof}

\begin{claim}
	In sub-case 3(a), the algorithm correctly computes the value of $|\rho_{i, j}|$.
\end{claim}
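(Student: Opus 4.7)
The goal is to show $\dist(s, t, Z_{i,j}) = |\rho_{i,j}|$, which I would establish by two inequalities.

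For the direction $\dist(s, t, Z_{i,j}) \ge |\rho_{i,j}|$, I would observe that every edge of $Z_{i,j}$ is witnessed by an actual walk in $G \setminus \{e_i, e_j\}$ of the same weight: the base edges from $E(\beta_{l_i}) \setminus \{e_i, e_j\}$ are genuine unit edges of $G \setminus \{e_i, e_j\}$; each shortcut $(w, z)$ of weight $\mu_1(w, z) = \dist(w, z, H)$ corresponds to a walk in $H \subseteq G \setminus \{e_i, e_j\}$ (since $H = G \setminus E(\pi)$ already excludes $e_i, e_j$); and each $(s, v)$ or $(v, t)$ shortcut corresponds to a walk in $G \setminus E(\beta_{l_i}) \subseteq G \setminus \{e_i, e_j\}$. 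Concatenating these realisations along any $s$-$t$ path in $Z_{i,j}$ produces an $s$-$t$ walk in $G \setminus \{e_i, e_j\}$ of equal length, so $\dist(s, t, Z_{i,j}) \ge \dist(s, t, G \setminus \{e_i, e_j\}) = |\rho_{i,j}|$.

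For the reverse direction, I would exhibit an $s$-$t$ walk in $Z_{i,j}$ of weight at most $|\rho_{i,j}|$. If $V(\rho_{i,j}) \cap V(\beta_{l_i}) = \emptyset$ then $\rho_{i,j}$ uses no $E(\beta_{l_i})$ edge and the single shortcut $(s, t)$ of weight $\dist(s, t, G \setminus E(\beta_{l_i}))$ is already enough. Otherwise, let $w_1, w_2, \ldots, w_m$ be the vertices of $V(\rho_{i,j}) \cap V(\beta_{l_i})$ listed in the order they appear on $\rho_{i,j}$, and split
\[ \rho_{i,j} = \rho_{i,j}[s, w_1] \circ \rho_{i,j}[w_1, w_2] \circ \cdots \circ \rho_{i,j}[w_{m-1}, w_m] \circ \rho_{i,j}[w_m, t]. \]
The prefix $\rho_{i,j}[s, w_1]$ has all internal vertices outside $V(\beta_{l_i})$, so it uses no edge of $E(\beta_{l_i})$ (each such edge has both endpoints in $V(\beta_{l_i})$); hence it lies in $G \setminus E(\beta_{l_i})$ and is dominated by the weight of the edge $(s, w_1)$ in $Z_{i,j}$. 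The suffix is symmetric. For a middle sub-walk $\rho_{i,j}[w_a, w_{a+1}]$, if it is a single edge it belongs to $E(\beta_{l_i}) \setminus \{e_i, e_j\}$ or to $H$, and either the corresponding base edge or the unit shortcut $\mu_1(w_a, w_{a+1}) \le 1$ suffices; if it has a non-trivial interior, I would show the sub-walk lies entirely in $H$, so that the shortcut $\mu_1(w_a, w_{a+1})$ dominates its length.

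The crux is showing this last fact: that a middle excursion uses no edge of $E(\pi)$. Edges of $E(\beta_{l_i})$ are ruled out by internal vertices lying outside $V(\beta_{l_i})$. The boundary $\pi$-edges of $E(\pi) \setminus E(\beta_{l_i})$ incident to $V(\beta_{l_i})$, by directed orientation, only go \emph{into} $V(\beta_{l_i})$ at $s_{l_i}$ and \emph{out of} $V(\beta_{l_i})$ at $t_{l_i+1}$, so they can only appear as the first or last edge of the excursion, but using either one immediately lands at a $V(\beta_{l_i})$-vertex, contradicting the maximality in our listing of $w_1, \ldots, w_m$. For an internal $\pi$-edge $(x, y) \in E(\pi) \setminus E(\beta_{l_i})$ with $x, y \in V(\pi) \setminus V(\beta_{l_i})$, canonicity of $\rho_{i,j}$ applied to $(x, y)$ and to the neighbouring $\pi$-vertices visited by $\rho_{i,j}$, together with the fact that $e_i, e_j \in E(\beta_{l_i})$ separate $x$ and $y$ from $V(\beta_{l_i})$ by no failed edges in the relevant direction, would force $\rho_{i,j}$ to follow $\pi$ into $V(\beta_{l_i})$, again contradicting the internal position. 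Summing the piecewise bounds yields a walk in $Z_{i,j}$ of total weight at most $|\rho_{i,j}|$, completing the claim. \textbf{The main obstacle} I expect is precisely this last step: formalising the canonicity-plus-directedness argument cleanly enough to rule out every way a shortest canonical excursion could silently use $\pi$-edges outside $\beta_{l_i}$ that are not captured by $\mu_1$.
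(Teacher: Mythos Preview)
Your approach is correct and matches the paper's proof, which also splits on whether $\rho_{i,j}$ meets $V(\beta_{l_i})$ and otherwise takes the first and last $V(\beta_{l_i})$-vertices $u,v$ on $\rho_{i,j}$, bounding the prefix and suffix by the $(s,u)$ and $(v,t)$ shortcuts and the middle by the $E(\beta_{l_i})\setminus\{e_i,e_j\}$ edges together with the $\mu_1$-shortcuts in $H$.

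Regarding your ``main obstacle'': there is a one-line resolution that avoids the edge-by-edge casework. The paper's setup already records that $\rho_{i,j}$ has the form $\pi[s,a_{i,j}]\circ\rho_{i,j}[a_{i,j},b_{i,j}]\circ\pi[b_{i,j},t]$ with $a_{i,j}\in V(\pi[s,u_i])$ and $b_{i,j}\in V(\pi[u_{j+1},t])$, and canonicity (plus simplicity) forces $\rho_{i,j}(a_{i,j},b_{i,j})$ to avoid every vertex of $\pi[s,u_i]\cup\pi[u_{j+1},t]$; hence $\rho_{i,j}[a_{i,j},b_{i,j}]$ uses only edges of $H\cup E(\pi(e_i,e_j))\subseteq H\cup E(\beta_{l_i})$. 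Any excursion between consecutive $V(\beta_{l_i})$-vertices with interior outside $V(\beta_{l_i})$ therefore lies entirely in $H$, which is exactly what you need for the $\mu_1$-shortcut to dominate it. The paper's proof silently relies on this structural fact rather than arguing it out locally.
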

\begin{proof}
	If $\rho_{i, j}$ skips over the entire sub-path $\beta_{l_i}$, then graph $Z_{i, j}$ contains a direct edge $(s, t)$ with weight $\dist(s, t, G\setminus E(\beta_{l_i})) = |\rho_{i, j}|$. Otherwise, suppose the first and the last vertices of $\rho_{i, j}$ in $V(\beta_{l_i})$ are $u, v$ respectively. Then by construction of $Z_{i, j}$, it contains all edges in $E(\beta_{l_i})\setminus \{e_i, e_j\}$ and all shortcuts in $H$. Therefore, $\dist(u, v, Z_{i, j}) = |\rho_{i, j}[u, v]|$ and $\dist(s, t, Z_{i, j}) = |\rho_{i, j}|$.
\end{proof}

\begin{claim}
	In sub-case 3(b), the algorithm correctly computes the value of $|\rho_{i, j}|$.
\end{claim}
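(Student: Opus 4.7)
The plan is to follow the template of \Cref{2b}, decomposing
\[
\rho_{i,j} = \rho_{i,j}[s, y_{i,j}] \circ \rho_{i,j}[y_{i,j}, x_{i,j}] \circ \rho_{i,j}[x_{i,j}, t]
\]
and exhibiting an $s$-to-$t$ walk in $Z_{i,j}$ of matching weight. The reverse inequality $\dist(s, t, Z_{i,j}) \geq |\rho_{i,j}|$ follows routinely, since every edge of $Z_{i,j}$ listed in (i)--(iv) corresponds to a genuine walk in $G \setminus \{e_i, e_j\}$: edges of type (i) are real edges of $G$, the shortcuts of type (iii) and (iv) come from shortest paths in subgraphs that omit $e_i, e_j$, and the backward shortcuts in (ii) are bounded below by the correctness of $\mu_2$ established earlier.

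For the prefix $\rho_{i,j}[s, y_{i,j}]$, which uses only edges in $H \cup E(\pi[s, u_i))$, I would split on whether it visits the inter-segment sub-path $\pi[t^{l_i}_{h_i}, s^{l_j}_{h_j}]$. If it does not, then the entire prefix stays inside $H \cup E(\pi[s, s^{l_i}_{h_i}])$, and its length is bounded by the direct shortcut $\wts(s, y_{i,j})$ from (iv). If it does, let $z$ be the last such vertex before $y_{i,j}$; then either the shortcut $\brac{s, s^{l_j}_{h_j}}$ from (iii), or a composition of $\wts(s, u)$ from (iv) with the shortcut $\brac{u, s^{l_j}_{h_j}}$ from (iii) for some $u \in V(\alpha^{l_i}_{h_i})$ that the prefix visits, followed by the edges of $\alpha^{l_j}_{h_j}$ from $s^{l_j}_{h_j}$ to $y_{i,j}$ in (i), reproduce the total weight. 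The suffix $\rho_{i,j}[x_{i,j}, t]$ is handled symmetrically via the shortcuts $\brac{t^{l_i}_{h_i}, v}$, $\brac{t^{l_i}_{h_i}, t}$, and $(v, t)$ from (iii)--(iv).

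For the backward middle $\rho_{i,j}[y_{i,j}, x_{i,j}]$, which uses $H \cup E(\pi[u_{i+1}, u_j])$, I would list in order the vertices $y_{i,j} = w_0, w_1, \ldots, w_r = x_{i,j}$ of this sub-path lying in $V(\alpha^{l_i}_{h_i}) \cup V(\alpha^{l_j}_{h_j})$, so that no internal vertex of $\rho_{i,j}[w_k, w_{k+1}]$ belongs to that set. Consecutive pairs inside a common segment are joined by edges of (i). A pair with $w_k \in V(\alpha^{l_j}_{h_j})$ and $w_{k+1} \in V(\alpha^{l_i}_{h_i})$ lies entirely in $H \cup E(\pi[w_{k+1}, w_k])$, so by correctness of $\mu_2$ its length equals the backward shortcut weight in (ii). A pair with $w_k \in V(\alpha^{l_i}_{h_i})$ and $w_{k+1} \in V(\alpha^{l_j}_{h_j})$ must in fact be $(t^{l_i}_{h_i}, s^{l_j}_{h_j})$ with $\rho_{i,j}[w_k, w_{k+1}] = \pi[t^{l_i}_{h_i}, s^{l_j}_{h_j}]$, since any strictly shorter $H$-assisted route between these segment boundaries would contradict $\pi$ being a shortest $s$-$t$ path in $G$; this matches the forward shortcut of weight $|\pi[w_k, w_{k+1}]|$ in (ii).

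The main obstacle I anticipate is the careful bookkeeping for the inter-segment portion $\pi[t^{l_i}_{h_i}, s^{l_j}_{h_j}]$, which is shared by both the prefix and the backward middle but by design lies outside $V(Z_{i,j})$; the case analysis above must ensure that every such excursion is accounted for by exactly one of the shortcuts in (ii) or (iii), which is why those shortcuts are indexed via the minimum over intermediate vertices $z\in V\brac{\pi[t^{l_i}_{h_i}, s^{l_j}_{h_j}]}$. Once this is verified, summing the equalities piece-by-piece along the exhibited walk gives $\dist(s,t,Z_{i,j}) \leq |\rho_{i,j}|$ and closes the proof.
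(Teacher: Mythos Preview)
Your decomposition $\rho_{i,j} = \rho_{i,j}[s, y_{i,j}] \circ \rho_{i,j}[y_{i,j}, x_{i,j}] \circ \rho_{i,j}[x_{i,j}, t]$ presupposes that $y_{i,j}$ precedes $x_{i,j}$ along $\rho_{i,j}$. In Case~2 this ordering was guaranteed because Sub-case~2(a) separately disposed of the situation $x_{i,j}\in V(\rho_{i,j}[s, y_{i,j}])$, and Sub-cases~2(b)--(d) then explicitly assumed the other ordering. In Case~3 there is no analogous split: Sub-case~3(b) is defined only by the positions of $x_{i,j}, y_{i,j}$ on $\pi$, not by their order on $\rho_{i,j}$. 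The forward ordering --- where $\rho_{i,j}$ first reaches $x_{i,j}$, then traverses $\pi[x_{i,j}, y_{i,j}]$ (so that $\rho_{i,j}[x_{i,j}, y_{i,j}] = \pi[x_{i,j}, y_{i,j}]$ by canonicity), then continues to $t$ --- is a genuine possibility here, and it is exactly what the forward shortcuts $\wts(u,v)=|\pi[u,v]|$ in item~(ii) are there to cover. The paper's proof treats this case first before turning to the backward ordering; your proposal omits it entirely, and your prefix/suffix analysis does not recover it.

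A secondary point: your handling of the backward middle is both more elaborate than needed and not obviously sound. Since $x_{i,j}\in V(\pi[u_{i+1}, t^{l_i}_{h_i}])$ and $y_{i,j}\in V(\pi[s^{l_j}_{h_j}, u_j])$, item~(ii) already supplies a \emph{single} edge $(y_{i,j}, x_{i,j})$ of weight $\mu_2(y_{i,j}, x_{i,j}) = \dist(y_{i,j}, x_{i,j}, H\cup E(\pi[x_{i,j}, y_{i,j}]))$, which by the established correctness of $\mu_2$ equals $|\rho_{i,j}[y_{i,j}, x_{i,j}]|$ exactly; no further decomposition is required. Your piece-by-piece argument is fragile because $Z_{i,j}$ contains no $\mu_1$-type shortcuts between two vertices of the \emph{same} segment $\alpha^{l_*}_{h_*}$, so consecutive $w_k, w_{k+1}$ lying in a common segment but connected by an $H$-excursion are not ``joined by edges of~(i)'' as you claim; and the assertion that a forward-going piece must be $(t^{l_i}_{h_i}, s^{l_j}_{h_j})$ along $\pi$ is not justified.
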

\begin{proof}
		First, consider the case where $x_{i, j}$ lies between $s$ and $y_{i, j}$ on $\rho_{i, j}$. Then, since $\rho_{i, j}$ is canonical, $\rho_{i, j}[x_{i, j}, y_{i, j}] = \pi[x_{i, j}, y_{i, j}]$. If $a_{i, j}\notin V(\alpha^{l_i}_{h_i})$, then $\rho_{i, j}[s, x_{i, j}]$ is a path in $H\cup E(\pi[s, s^{l_i}_{h_i}])$. Therefore, the edge $(s, x_{i, j})$ in $Z_{i, j}$ has weight $$\wts(s, x_{i, j}) = \dist\brac{s, x_{i, j}, H\cup E(\pi[s, s^{l_i}_{h_i}])} = |\rho_{i, j}[s, x_{i, j}]|$$
		Otherwise, in graph $Z_{i, j}$ the path $(s, s^{l_i}_{h_i})\circ \pi[s^{l_i}_{h_i}, a_{i, j}]\circ (x_{i, j}, a_{i, j})$ has weight equal to $|\rho_{i, j}[s, x_{i, j}]|$. In both cases, we have $|\rho_{i, j}[s, x_{i, j}]| = \dist(s, x_{i, j}, Z_{i, j})$. Symmetrically, we can also argue $|\rho_{i, j}[y_{i, j}, t]| = \dist(y_{i, j}, t, Z_{i, j})$. This proves $|\rho_{i, j}| = \dist(s, t, Z_{i, j})$.
		
		For the rest, assume $y_{i, j}$ lies between $s$ and $x_{i, j}$ on $\rho_{i, j}$. Let us argue that $|\rho_{i, j}[s, y_{i, j}]| = \dist(s, y_{i, j}, Z_{i, j})$. There are two cases below.
		\begin{itemize}[leftmargin=*]
			\item $\rho_{i, j}[s, y_{i, j}]$ does not pass through $s^{l_j}_{h_j}$.
			
			In this case, $\rho_{i, j}(a_{i, j}, y_{i, j}]$ first lands on $\pi$ at a vertex $v$ in $\alpha^{l_j}_{h_j}$. If $a_{i, j}\notin V\brac{\alpha^{l_i}_{h_i}}$, Then, in graph $Z_{i, j}$ the path $(s, v)\circ \pi[v, y_{i, j}]$ has weight $|\rho_{i, j}[s, y_{i, j}]|$. Otherwise, if $a_{i, j}\in V\brac{\alpha^{l_i}_{h_i}}$, then in graph $Z_{i, j}$ the path $(s, s^{l_i}_{h_i})\circ \pi[s^{l_i}_{h_i}, a_{i, j}]\circ (a_{i, j}, v)\circ \pi[v, y_{i, j}]$ has weight $|\rho_{i, j}[s, y_{i, j}]|$.
			
			\item $\rho_{i, j}[s, y_{i, j}]$ passes through $s^{l_j}_{h_j}$.
			
			If $a_{i, j}\notin V\brac{\alpha^{l_i}_{h_i}}$, Then, in graph $Z_{i, j}$ the path $(s, s^{l_j}_{h_j})\circ \pi[s^{l_j}_{h_j}, y_{i, j}]$ has weight $|\rho_{i, j}[s, y_{i, j}]|$. Otherwise, if $a_{i, j}\in V\brac{\alpha^{l_i}_{h_i}}$, then in graph $Z_{i, j}$ the path $(s, s^{l_i}_{h_i})\circ \pi[s^{l_i}_{h_i}, a_{i, j}]\circ (a_{i, j}, s^{l_j}_{h_j})\circ \pi[s^{l_j}_{h_j}, y_{i, j}]$ has weight $|\rho_{i, j}[s, y_{i, j}]|$.
		\end{itemize}
	
		Symmetrically, we can also argue $|\rho_{i, j}[x_{i, j}, t]| = \dist(x_{i,j}, t, Z_{i,j})$. Finally, as there is a shortcut edge $(y_{i, j}, x_{i, j})$ with weigh $\wts(y_{i, j}, x_{i, j}) = \dist(y_{i, j}, x_{i, j}, H\cup \pi[x_{i, j}, y_{i, j}]) = |\rho_{i, j}[y_{i, j}, x_{i, j}]|$, we can conclude that $|\rho_{i, j}| = \dist(s, t, Z_{i, j})$.
\end{proof}

\begin{claim}
	In sub-case 3(c), the algorithm correctly computes the value of $|\rho_{i, j}|$.
\end{claim}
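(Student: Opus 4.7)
The plan is to verify correctness in sub-case 3(c) by adapting the template used for sub-case 3(b), exploiting the structural implication of $x_{i,j}\notin V(\alpha^{l_i}_{h_i})$ (assumed without loss of generality). The key structural observation is that since $a_{i,j}$ is the last intersection of $\rho_{i,j}$ with $\pi[s, u_i]$ and $x_{i,j}$ (if it exists) lies strictly after $t^{l_i}_{h_i}$ on $\pi$, the middle detour $\rho_{i,j}(a_{i,j}, b_{i,j})$ avoids $V(\pi[s, t^{l_i}_{h_i}])\setminus\{a_{i,j}\}$; in particular, every $\pi$-vertex it visits internally lies in $V(\pi(t^{l_i}_{h_i}, u_j])\subseteq V(\beta_{l_i})$, so no piece of $\rho_{i,j}$ touches $\alpha^{l_i}_{h_i}$ (beyond possibly $a_{i,j}$ itself), justifying the removal of $E(\alpha^{l_i}_{h_i})$ from $Z^{(l_i,h_i)}_j$.

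Soundness of the formula (it is at least $|\rho_{i,j}|$) is immediate: the path edges of $Z^{(l_i,h_i)}_j$ omit $E(\alpha^{l_i}_{h_i})\cup\{e_j\}$, and the shortcut weights $\mu_1(w, z)$, $\wts(s, v)$, $\wts(v, t)$ are distances in subgraphs of $G\setminus\{e_i, e_j\}$, so every walk in $Z^{(l_i,h_i)}_j$ corresponds to a real walk of the same weight avoiding both failures; prepending $|\pi[s, v]|$ for $v \in V(\pi[s^{l_i}_{h_i}, u_i])$ also gives a valid walk.

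For the matching upper bound, I would exhibit a witness walk of weight $|\rho_{i,j}|$ by splitting on the position of $a_{i,j}$. If $a_{i,j}\in V(\pi[s^{l_i}_{h_i}, u_i])$, take $v = a_{i,j}$ in the outer minimum; the term $|\pi[s, v]|$ contributes exactly $|\pi[s, a_{i,j}]|$, and the walk from $v$ to $t$ in $Z^{(l_i,h_i)}_j$ is assembled by decomposing $\rho_{i,j}(a_{i,j}, b_{i,j})\circ\pi[b_{i,j}, t]$ into maximal $H$-sub-paths (captured by $\mu_1$, both endpoints lying in $V(\beta_{l_i})$ by the structural observation) and maximal forward $\pi$-sub-paths (captured by the retained path edges of $E(\beta_{l_i})\setminus(E(\alpha^{l_i}_{h_i})\cup\{e_j\})$, since no such piece touches $\alpha^{l_i}_{h_i}$ and $e_j$ is skipped); the final leg from the last $V(\beta_{l_i})$-vertex $q$ to $t$, whose underlying walk lies in $H\cup E(\pi[t^{l_j}_{h_j}, t])$, is absorbed by the single shortcut $(q, t)$. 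If instead $a_{i,j}\in V(\pi[s, s^{l_i}_{h_i}))$, we use $\dist(s, t, Z^{(l_i,h_i)}_j)$ directly, folding the prefix $\rho_{i,j}[s, p]$ (with $p$ the first $V(\beta_{l_i})$-vertex of $\rho_{i,j}$) into the $(s, p)$ shortcut; this is valid because $\rho_{i,j}[s, p]$ consists of $\pi[s, a_{i,j}]\subseteq E(\pi[s, s^{l_i}_{h_i}])$ followed by an $H$-sub-path.

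The main obstacle I expect is the bookkeeping at boundary cases where $a_{i,j}$ or $b_{i,j}$ lies outside $V(\beta_{l_i})$, since such endpoints are not vertices of $Z^{(l_i,h_i)}_j$. The resolution is to fold these excursions into the boundary shortcuts $(s, \cdot)$ and $(\cdot, t)$, whose weights are distances in $H\cup E(\pi[s, s^{l_i}_{h_i}])$ and $H\cup E(\pi[t^{l_j}_{h_j}, t])$ respectively; the structural observation guarantees that the relevant prefix or suffix of $\rho_{i,j}$ uses no $\pi$-edge outside these two allowed subsets, so the shortcut weights recover the true lengths exactly. A symmetric treatment handles the case $y_{i,j}\notin V(\alpha^{l_j}_{h_j})$ by reversing the roles of $s$ and $t$.
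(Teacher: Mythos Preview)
Your proposal is correct and follows essentially the same approach as the paper: both split on whether $a_{i,j}\in V(\alpha^{l_i}_{h_i})$, use the structural observation that (under $x_{i,j}\notin V(\alpha^{l_i}_{h_i})$) the portion of $\rho_{i,j}$ after $a_{i,j}$ never revisits $\alpha^{l_i}_{h_i}$, and conclude that $Z^{(l_i,h_i)}_j$ encodes $\rho_{i,j}[a_{i,j},t]$ (respectively $\rho_{i,j}$). The paper's proof is much terser---it simply asserts that since $Z^{(l_i,h_i)}_j$ contains the $\mu_1$ shortcuts and the retained $\beta_{l_i}$-edges one can ``argue'' the equality---whereas you spell out the alternating $H$/$\pi$-segment decomposition and the role of the boundary shortcuts $(s,\cdot)$ and $(\cdot,t)$ when $a_{i,j}$ or $b_{i,j}$ falls outside $V(\beta_{l_i})$; this extra detail is sound and does not deviate from the paper's argument.
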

\begin{proof}
	If $a_{i,j}\notin V\brac{\alpha^{l_i}_{h_i}}$, then $\rho_{i, j}$ skips over the entire sub-path $\alpha^{l_i}_{h_i}$. As $Z^{(l_i, h_i)}_j$ contains all shortcuts among vertices in $V(\beta_{l_i})$ as well as all edges in $E(\beta_{l_i})\setminus \brac{V(\alpha^{l_i}_{h_i})\cup \{e_j\}}$, we can argue that $|\rho_{i, j}| = \dist\brac{s, t, Z^{(l_i, h_i)}_j}$.
	
	Otherwise, let us assume $a_{i,j}\in V\brac{\alpha^{l_i}_{h_i}}$. Then, $\rho_{i, j}[a_{i, j}, t]$ skips over the entire sub-path $\alpha^{l_i}_{h_i}$. For the same reason, as $Z^{(l_i, h_i)}_j$ contains all shortcuts among vertices in $V(\beta_{l_i})$ as well as all edges in $E(\beta_{l_i})\setminus \brac{V(\alpha^{l_i}_{h_i})\cup \{e_j\}}$, we can argue that $|\rho_{i, j}[a_{i, j}, t]| = \dist\brac{a_{i, j}, t, Z^{(l_i, h_i)}_j}$. Finally, by enumerating all possible choices of $a_{i, j}\in V\brac{\pi[s^{l_i}_{h_i}, u_i]}$, the value of $|\rho_{i, j}|$ can be retrieved by the minimum formula
	$$|\rho_{i, j}| = \min_{v\in V\brac{\pi[s^{l_i}_{h_i}, u_i]}}\left\{|\pi[s, v]| + \dist\brac{v, t, Z^{(l_i, h_i)}_j}\right\}$$
\end{proof}

\paragraph{Proof of \Cref{2fail-exact}}
To summarize, choosing $L = \ceil{n^{5/7}}$ and $g = \ceil{n^{3/7}}$ yields the overall runtime of $\tilde{O}(n^{3-1/7})$.

\section{Algebraic algorithm for weighted digraphs}\label{alge}
For any two vertices $u, v\in V(\pi)$ such that $u\in V(\pi[s, v])$, the backward distance is defined as:
$$\bw(v, u) = \dist(v, u, G\setminus (E(\pi[s, u])\cup E(\pi[v, t])))$$
It was shown in \cite{williams2022algorithms} that all backward distances can be computed in subcubic time using fast matrix multiplication.
\begin{lemma}[\cite{williams2022algorithms}]\label{bw-alge}
	There is a randomized algorithm that computes the values of all $\bw(*, *)$ in time $\tilde{O}(M^{\frac{1}{3}}n^{2+\frac{\omega}{3}})$.
\end{lemma}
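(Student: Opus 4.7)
My plan is to reformulate $\bw(v,u)$ as a shortest-path computation in a structured auxiliary graph on $V(\pi)$, and then invoke a Zwick-style sampling + rectangular matrix-multiplication scheme. First, I would precompute the pairwise matrix $W$ with $W[a][b] = \dist(u_a,u_b,H)$ where $H = G\setminus E(\pi)$. By Lemma~\ref{apsp-neg}, this costs $\tilde{O}(M^{1/(4-\omega)}n^{2+1/(4-\omega)})$, which is dominated by the target bound. Observe that any path realizing $\bw(u_j,u_i)$ alternates between (i) forward sub-paths along $\pi$ using only edges in $E(\pi[u_i,u_j])$, and (ii) detours in $H$. So $\bw(u_j,u_i)$ equals the shortest path from $u_j$ to $u_i$ in the auxiliary graph on $V(\pi)$ whose edges are all pairwise $W$-shortcuts plus the forward $\pi$-edges lying in the interval $[i,j]$; the pair-dependence enters only through the admissible forward edges.

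Next, I would handle the parametric family of auxiliary graphs as follows. Partition $\pi$ into sub-paths $\gamma_1,\dots,\gamma_{\lceil n/L\rceil}$ of length $L$, and for each pair $(a,b)$ with $a\le b$ compute the block of backward distances with sources in $\gamma_b$ and targets in $\gamma_a$. For pairs inside a single segment (or in nearby segments), I would build the auxiliary graph on $O(L)$ vertices directly and run a single Dijkstra/SSSP per source, paying $\tilde{O}(L\cdot L^2) = \tilde{O}(L^3)$ per segment and $\tilde{O}(nL^2)$ overall. For pairs far apart, any backward path must make many ``shortcut hops''; sample $R\subseteq V(\pi)$ of size $\tilde{O}(n/\ell)$, so that (w.h.p.) every long backward path passes through a pivot in $R$. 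For each pivot $r\in R$, precompute distances to and from $r$ in the family of auxiliary graphs by breaking the dependence on $(i,j)$: the distances $\bw(u_j,r)$ and $\bw(r,u_i)$ can be assembled from $W$-shortcuts together with forward $\pi$-edges in $[r,j]$ or $[i,r]$, allowing a clean stitching $\bw(u_j,u_i) = \min_{r\in R}\{\bw(u_j,r)+\bw(r,u_i)\}$ when the optimal path traverses at least $\ell$ shortcut hops.

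For the remaining ``short-hop'' backward paths ($\le\ell$ shortcut edges), I would use the small-weight $(\min,+)$ product. The key numerical observation is that after chunking the forward $\pi$-steps into segments of length at most $L$, the individual edge weights in the chunked auxiliary graph can be written as sums of an $H$-shortcut (which, after scaling by the number of hops, is small) and a forward $\pi$-segment of weight at most $LM$. Computing the relevant $(\min,+)$ products of matrices with entries bounded by $O(\ell LM)$ costs $\tilde{O}(\ell LM\cdot n^\omega)$ per product, using the standard embedding of $(\min,+)$ over bounded entries into $(+,\times)$ via FMM. Raising this to the power $\tilde{O}(\log n)$ by repeated squaring gives all short-hop backward distances in $\tilde{O}(\ell LM\cdot n^\omega)$ time. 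Finally, I would choose $L$ and $\ell$ to balance the costs $\tilde{O}(nL^2)+\tilde{O}((n/\ell)\cdot\text{pivot-work})+\tilde{O}(\ell LM n^\omega)$ against each other; the balance that produces an $M^{1/3}$ dependence comes from setting $\ell L M \asymp M^{1/3}n^{\omega/3}$ with the pivot-work term matched, yielding total time $\tilde{O}(M^{1/3}n^{2+\omega/3})$.

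\paragraph{Main obstacle.} The technical difficulty is that the auxiliary graph depends on the query pair $(i,j)$ through the admissible forward $\pi$-edges, so naive ``one-shot'' APSP on a single graph does not apply. The sampling step is designed precisely to decouple this dependence: conditioned on a pivot $r$ being hit, the two sub-problems split into queries whose restriction intervals are monotone in $r$, so each pivot's contribution can be computed within one $(\min,+)$-product pass rather than an independent one per $(i,j)$. Making this decoupling work cleanly, together with verifying that the bounded-entry assumption survives the forward-$\pi$ chunking with the claimed $M^{1/3}$ exponent, is the step I expect to occupy most of the care in a full proof.
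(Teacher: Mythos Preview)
The paper does not prove this lemma; it is quoted verbatim as a black-box result from \cite{williams2022algorithms}, so there is no in-paper argument to compare your proposal against.

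Assessing your sketch on its own merits, the high-level shape (hop-bounded $(\min,+)$ products for short paths, sampled pivots for long ones) is reasonable, but two of the steps are not sound as written. First, the ``bounded entries'' claim for the short-hop case is unjustified: an $H$-shortcut $W[a][b]=\dist(u_a,u_b,H)$ can have magnitude $\Theta(nM)$, so the $(\min,+)$ products you describe would cost $\tilde O(nM\cdot n^\omega)$, not $\tilde O(\ell L M\cdot n^\omega)$. The phrase ``after scaling by the number of hops, is small'' does not correspond to any actual operation that shrinks entry sizes; you would need to work with hop-bounded $H$-distances (so that each shortcut weight is genuinely $\le \ell M$) or to exploit additional structure of $W$ that you have not identified.

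Second, the pivot stitching $\bw(u_j,u_i)=\min_{r\in R}\{\bw(u_j,r)+\bw(r,u_i)\}$ is only the easy inequality $\le$. If $r=u_k$ lies on an optimal backward path from $u_j$ to $u_i$, the prefix $u_j\rightsquigarrow r$ is a path in $H\cup E(\pi[u_i,u_j])$, whereas $\bw(u_j,u_k)$ only permits $\pi$-edges in $E(\pi[u_k,u_j])$; the prefix may legitimately use $\pi$-edges in $E(\pi[u_i,u_k])$ and hence fail to be a valid $\bw(u_j,u_k)$ path. Symmetrically the suffix need not realize $\bw(u_k,u_i)$. Thus you cannot split at an arbitrary sampled pivot, and no monotonicity argument is supplied to repair this. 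Without fixing both points, the parameter balance that is supposed to yield $M^{1/3}n^{2+\omega/3}$ does not go through.
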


Let $g<L$ be two parameters to be chosen later; note that $L, g$ do not have to be the same as in the previous section. Divide the shortest path from $s$ to $t$ into sub-paths of hops (number of edges) $L$; that is, $\pi = \gamma_1\circ\gamma_2\circ\cdots\circ\gamma_{\ceil{n/L}}$, and let $s_i, t_i$ be the head and tail of sub-path $\gamma_i$. For each sub-path $\gamma_l$, subdivide it into segments of hops $g$ as $\gamma_l = \alpha^l_1\circ \alpha^l_2\circ\cdots\circ \alpha^l_{\ceil{L/g}}$, where $\alpha^l_k$ goes from vertex $s^l_k$ to $t^l_k$. As preparation, perform the following steps.
\begin{itemize}[leftmargin=*]
	\item For each $\alpha^l_h$, compute single-source shortest paths from $s$ in $G\setminus E\brac{\pi[s^l_h, t]}$ which takes runtime $\tilde{O}(n^3/g)$ using \Cref{sssp-neg}; symmetrically, compute single-source shortest paths to $t$ in $G\setminus E\brac{\pi[s, t^l_h]}$ as well.
	\item Apply \Cref{dso} in graph $G\setminus E(\pi)$ which takes runtime $\tilde{O}(Mn^{2.8719})$.
	\item For each sub-path $\gamma_l$, compute the single-source replacement paths algorithm from \Cref{ssrp-alge} on graph $G\setminus E(\gamma_l)$ from source $s$ and terminal $V(\gamma_l)\cup \{t\}$. Symmetrically, apply \Cref{ssrp-alge} on (the reversed version of) $G\setminus E(\gamma_l)$ to source $t$ and terminal $V(\gamma_l)\cup \{s\}$. This step takes time $\tilde{O}(Mn^{\omega+1}/L + M^{\frac{1}{4-\omega}}n^{2+\frac{1}{4-\omega}})$.
	\item Compute all-pairs shortest paths in graph $G\setminus E(\pi)$ which takes time $\tilde{O}(M^{\frac{1}{4-\omega}}n^{2+\frac{1}{4-\omega}})$ by \Cref{apsp-neg}.
\end{itemize}

\subsection{One failure on the $st$-path}
In this subsection, we study the case where only one edge failure lies on the shortest path $\pi$ from $s$ to $t$. We prove the following statement.
\begin{lemma}\label{1fail-alge-alg}
	All values of $\dist(s, t, G\setminus \{f_1, f_2\})$ can be computed in the following amount of time, where $f_1\in E(\pi), f_2\notin E(\pi)$:
	$$\tilde{O}\brac{Mn^{\omega+0.5} + Mn^{\omega+1}/L + M^{\frac{1}{4-\omega}}n^{2+\frac{1}{4-\omega}} + n^2g^2 + n^2L}$$
\end{lemma}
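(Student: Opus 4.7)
The plan is to exploit the structural fact that, since $f_2 \notin E(\pi)$, a canonical shortest replacement path $\rho$ avoiding $\{f_1, f_2\}$ detours around $f_1$ exactly once. Writing $f_1 = (u_{i_1}, u_{i_1+1})$ with $f_1 \in E(\alpha^{l_1}_{h_1}) \subseteq E(\gamma_{l_1})$, I will show that the optimal $\rho$ decomposes as $\rho = \pi[s, a] \circ \beta \circ \pi[b, t]$ where $a \in V(\pi[s, u_{i_1}])$, $b \in V(\pi[u_{i_1+1}, t])$, and $\beta$ is a shortest $a$-to-$b$ path in $G \setminus \{f_1, f_2\}$ whose interior does not revisit $V(\pi[s, a]) \cup V(\pi[b, t])$. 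This follows from canonicality since every sub-path of $\pi$ either contains $f_1$ or is $\{f_1, f_2\}$-free. The subsequent case analysis splits on the location of $(a, b)$ with respect to $\gamma_{l_1}$ and $\alpha^{l_1}_{h_1}$.

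Case A handles the situation where at least one of $a, b$ lies outside $V(\gamma_{l_1})$, so that $\beta$ spans beyond the sub-path containing $f_1$. Here the required distance $\dist(s, b, G \setminus (E(\gamma_{l_1}) \cup \{f_2\}))$ (or its symmetric counterpart from $t$) is exactly what the preparatory single-source replacement-path step of \Cref{ssrp-alge} produces for terminal set $V(\gamma_l) \cup \{t\}$. For each $f_2$ and each $\gamma_{l_1}$, I build a one-dimensional range-minimum structure of size $O(L)$ over the entries $\{\dist(s, v, G \setminus (E(\gamma_{l_1}) \cup \{f_2\})) + |\pi[v, t]|\}_{v \in V(\gamma_{l_1}) \cup \{t\}}$ and its symmetric counterpart; every $f_1 \in E(\gamma_{l_1})$ is then answered by two $O(\log n)$ queries, contributing $\tilde O(n^2 L)$ across all $f_2 \in E$.

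Case B is the main case, in which both $a, b \in V(\gamma_{l_1})$. For each triple $(l_1, h_1, f_2)$ I will construct a sketch graph $H$ on $\{s, t\} \cup V(\alpha^{l_1}_{h_1})$ of $O(g)$ vertices whose edges are: (i) the segment edges $E(\alpha^{l_1}_{h_1})$; (ii) for each pair $u, v \in V(\alpha^{l_1}_{h_1})$, a weighted edge $\wts(u, v) = \dist(u, v, G \setminus (E(\pi) \cup \{f_2\}))$ queried in $\tilde O(1)$ from the DSO of \Cref{dso}; and (iii) boundary shortcuts from $s$ to each $v \in V(\alpha^{l_1}_{h_1})$ (respectively $v$ to $t$) whose weights are extracted from the preparatory SSSP in $G \setminus E(\pi[s^{l_1}_{h_1}, t])$ (respectively $G \setminus E(\pi[s, t^{l_1}_{h_1}])$), corrected by a further DSO lookup to account for $f_2$. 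A Dijkstra on $H$ returns $\dist(s, t, H) = |\rho|$ whenever Case B holds, with the position of $f_1$ inside $\alpha^{l_1}_{h_1}$ handled by restricting the intra-segment edges in $E(\alpha^{l_1}_{h_1})$ to those on the correct side of $f_1$ and scanning via range-minima. Each sketch costs $\tilde O(g^2)$ to build and solve, and careful batching over $f_2$ and segments brings the total to $\tilde O(n^2 g^2)$.

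The five terms of the claimed runtime arise as: $\tilde O(Mn^{2.8719}) = \tilde O(Mn^{\omega+0.5})$ from the DSO (dominant for $\omega \ge 2.371$); $\tilde O(Mn^{\omega+1}/L + M^{1/(4-\omega)}n^{2+1/(4-\omega)})$ from the single-source replacement paths and APSP preparation; Case A contributing $\tilde O(n^2 L)$; and Case B contributing $\tilde O(n^2 g^2)$. The hard part will be Case B: one must argue that the boundary shortcuts in (iii) together with the DSO edges in (ii) jointly witness every optimal detour $\beta$ when $(a, b) \in V(\alpha^{l_1}_{h_1})^2$, including those in which $\beta$ temporarily leaves $G \setminus E(\pi)$ via vertices in $V(\gamma_{l_1}) \setminus V(\alpha^{l_1}_{h_1})$ (absorbed by the $G \setminus E(\pi[s^{l_1}_{h_1}, t])$-distances), and to show that the amortized cost across $O(n^2)$ choices of $f_2$ and $O(n/g)$ segments indeed collapses to $\tilde O(n^2 g^2)$ rather than the naive $\tilde O(n^3 g)$, by sharing DSO queries across $f_1 \in E(\alpha^{l_1}_{h_1})$ within a single segment.
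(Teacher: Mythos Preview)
Your overall decomposition and Case~A are essentially the paper's approach (Case~A is the paper's ``Case~3''), and the core of your Case~B with the $O(g)$-vertex sketch graph and DSO edges~(ii) is exactly the paper's ``Case~1''. However, there is a genuine gap in Case~B.

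You define Case~B as $a,b\in V(\gamma_{l_1})$, yet your sketch graph lives only on $\{s,t\}\cup V(\alpha^{l_1}_{h_1})$, and in your ``hard part'' paragraph you restrict to $(a,b)\in V(\alpha^{l_1}_{h_1})^2$. The sub-case $a\in V(\gamma_{l_1})\setminus V(\alpha^{l_1}_{h_1})$ (or symmetrically for $b$) is not covered: here the entire detour $\beta$ may miss $V(\alpha^{l_1}_{h_1})$ altogether, so no edge in your sketch graph can witness $\rho$. Your boundary shortcuts~(iii) do not fix this. The phrase ``corrected by a further DSO lookup to account for $f_2$'' is not a valid operation: the DSO is built on $G\setminus E(\pi)$ and returns $\dist(u,v,G\setminus(E(\pi)\cup\{f\}))$, whereas you need $\dist(s,v,G\setminus(E(\pi[s^{l_1}_{h_1},t])\cup\{f_2\}))$, a distance in a different base graph that still contains the prefix of $\pi$. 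A single lookup cannot produce this.

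The paper isolates exactly this sub-case as its ``Case~2'' and handles it \emph{outside} any sketch graph: for each relevant pair it evaluates
\[
\min_{u\in V(\gamma_l[*,s^l_h])}\ \min_{v\in V(\pi(f_1,t_l])}\bigl\{\dist(s,u,G)+\dist(u,v,G\setminus(E(\pi)\cup\{f_2\}))+\dist(v,t,G)\bigr\},
\]
an $O(L)$-term minimum over DSO queries, and this is precisely what supplies the $\tilde O(n^2L)$ term in the bound (not just Case~A). Once you add this separate case, your proposal aligns with the paper; without it, your sketch graph misses optimal paths and your attribution of runtimes is off.
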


Given a pair of edge failures $\{f_1, f_2\}$ where $f_1\in E(\pi), f_2\notin E(\pi)$, assume $f_1\in E(\alpha^l_h)$. Let $\rho$ be an optimal canonical replacement path from $s$ to $t$ avoiding $\{f_1, f_2\}$, and let $x, y\in V(\pi)$ be the vertices where $\rho$ diverges and converges with $\pi$. For the rest, let us consider several cases and deal with them separately.

\subsubsection*{Case 1: $x, y\in V(\alpha^l_h)$}
For each possible choice of edge failure $f_1$, build the following sketch graph $H_{f_1, f_2}$ with edge weight function $\mu$.
\begin{itemize}[leftmargin=*]
	\item \textbf{Vertices.} Add vertices $\{s, t\}\cup V(\alpha^l_h)$ to $H_{f_1, f_2}$.
	\item \textbf{Edges.} Add edges $(s, s^l_h)$ and $(t^l_h, t)$ with weights $\mu(s, s^l_h) = \dist(s, s^l_h, G)$ and $\mu(t^l_h, t) = \dist(t^l_h, t, G)$ respectively. Then, add edges $E(\alpha^l_h)\setminus \{f_1\}$ with the same weights under $\wts(*,*)$. After that, for each pair of vertices $u, v\in V(\alpha^l_h)$, add a shortcut edge $(u, v)$ with edge weight $\mu(u, v) = \dist(u, v, G\setminus (E(\pi)\cup \{f_2\}))$ which can be queried in $\tilde{O}(1)$ time using the distance sensitivity oracle on $G\setminus E(\pi)$.
\end{itemize}
By definition of $H_{f_1, f_2}$, we know that this sketch graph encodes the replacement path $\rho$. Therefore, we can apply \Cref{sssp-neg} on $H_{f_1, f_2}$ to compute the value of $\wts(\rho) = \dist(s, t, H_{f_1, f_2})$, and we update $\est(s, t, G\setminus \{f_1, f_2\})$ with this value. The runtime of this algorithm is bounded by $\tilde{O}(g^2)$ for each choice of $f_1$, and overall it takes time $\tilde{O}(n^2g^2)$.

\subsubsection*{Case 2: $x, y\in V(\gamma_l)$, and one of $x, y$ is not in $V(\alpha^l_h)$}
Without loss of generality, assume $x\notin V(\alpha^l_h)$. Fix any vertex $v\in V(\pi[s^l_h, t_l])$ and any choice of $f_2$, calculate the value which takes $\tilde{O}(L)$ time:
$$\mu(s, v) = \min_{u\in V\brac{\gamma_l[*, s^l_h]}}\{\dist(s, u, G) + \dist(u, v, G\setminus(E(\pi)\cup \{f_2\})) \}$$
After that, to compute $\dist(s, t, G\setminus \{f_1, f_2\})$, update the estimation $\est(s, t, G\setminus \{f_1, f_2\})$ with the following quantity which takes $\tilde{O}(L)$ time:
$$\min_{v\in V(\pi(f_1, t_l])}\{\mu(s, v) + \dist(v, t, G)\}$$
Overall, the runtime would be bounded by $\tilde{O}(n^2L)$. 

For the proof of correctness, clearly by the definition of $\mu(s, v)$, it encodes the weighted length of a path from $s$ to $v$ that does not use edges from $E(\gamma_l[*, f_1))\cup \{f_2\}$, and so $\est(s, t, G\setminus \{f_1, f_2\})$ never gives an underestimation. On the other hand, if we take $u = x, v = y$, then as $\rho$ is canonical, we know that $\rho = \rho[s, y] + \dist(y, t, G)\leq \mu(s, y) + \dist(y, t, G)$. Therefore, our algorithm gives the correct answer for this case.

\subsubsection*{Case 3: one of $x, y$ is not in $V(\gamma_l)$}
Without loss of generality, assume that $x\notin V(\gamma_l)$. If $y\notin V(\gamma_l)$, then $\rho$ skips over the entire sub-path $\gamma_l$. In this case, we have already computed the value of $\dist(s, t, G\setminus (E(\gamma_l)\cup \{f_2\}))$ during the preparatory steps. Otherwise, if $y\in V(\gamma_l)$, we can compute the value of $\wts(\rho)$ by the following minimum clause:
$$\min_{v\in V\brac{\pi(f_1, t_l]}}\{\dist(s, v, G\setminus (E(\gamma_l)\cup \{f_2\})) + \dist(v, t, G)\}$$
Overall, the runtime of this case is bounded by $\tilde{O}(n^2L)$.


\subsection{Both failures on the $st$-path}
In this subsection, we study the case where both edge failures lie on the shortest path $\pi$ from $s$ to $t$. We prove the following statement.
\begin{lemma}\label{2fail-alge-alg}
	All values of $\dist(s, t, G\setminus \{f_1, f_2\})$ can be computed in the following amount of time, where $f_1, f_2\in E(\pi)$:
	$$\tilde{O}\brac{Mn^{\omega+0.5} + Mn^{\omega+1}/L + M^{\frac{1}{4-\omega}}n^{2+\frac{1}{4-\omega}} + n^2g^2 + n^3/g + n^2L}$$
\end{lemma}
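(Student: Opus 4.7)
The plan is to extend the one-failure algebraic construction of Lemma \ref{1fail-alge-alg} to handle the richer structure when both failures lie on $\pi$: an optimal canonical $s$-$t$ path $\rho$ in $G \setminus \{f_1, f_2\}$ then decomposes as a prefix to some $y \in V(\pi(f_1, f_2])$, a backward sub-path to some $x \in V(\pi(f_1, y])$, and a suffix from $x$ to $t$. Assume $f_1 \in E(\alpha^{l_1}_{h_1})$ and $f_2 \in E(\alpha^{l_2}_{h_2})$ with $l_1 \le l_2$. I split on whether $\rho$ visits both of $V(\alpha^{l_1}_{h_1})$ and $V(\alpha^{l_2}_{h_2})$ or skips at least one of them; the degenerate cases $l_1 = l_2$ or $l_2 \le l_1 + 1$ are handled by running the ``visits both'' construction on the concatenation $\gamma_{l_1} \circ \gamma_{l_2}$, analogously to sub-cases 3(a)--(c) of Section \ref{2fail-unweighted}.

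When $\rho$ visits both segments, for each pair $(f_1, f_2)$ I build a sketch $H_{f_1, f_2}$ on the $O(g)$-vertex set $\{s, t\} \cup V(\alpha^{l_1}_{h_1}) \cup V(\alpha^{l_2}_{h_2})$ whose weighted edges encode: the segment edges minus $\{f_1, f_2\}$; bridges $(s, s^{l_1}_{h_1})$ and $(t^{l_2}_{h_2}, t)$ weighted by $\pi$-distances; shortcut edges among segment vertices queried in $\tilde{O}(1)$ time from the distance-sensitivity oracle of Lemma \ref{dso} on $G \setminus E(\pi)$; and backward shortcut edges $(v, u)$ of weight $\bw(v, u)$ for $v \in V(\alpha^{l_2}_{h_2})$, $u \in V(\alpha^{l_1}_{h_1})$ precomputed via Lemma \ref{bw-alge}. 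Running Dijkstra on $O(g^2)$ edges costs $\tilde{O}(g^2)$ per pair, totalling $\tilde{O}(n^2 g^2)$. When $\rho$ skips $V(\alpha^{l_2}_{h_2})$, its length depends only on $f_1$ and the segment identity $(l_2, h_2)$ rather than on the individual $f_2$; I therefore build a single larger sketch $H_{f_1, (l_2, h_2)}$ on $\{s, t\} \cup V(\gamma_{l_1}) \cup V(\gamma_{l_2})$ using the SSRP tables prepared in $G \setminus E(\gamma_{l_2})$, the APSP of $G \setminus E(\pi)$ from Lemma \ref{apsp-neg}, and the relevant backward shortcuts $\bw$. Dijkstra on $O(L)$ vertices costs $\tilde{O}(L^2)$ per sketch, and with $O(n^2/g)$ sketches the total is $\tilde{O}(n^2 L^2 / g)$, which together with the symmetric skipped-$\alpha^{l_1}_{h_1}$ case contributes $\tilde{O}(n^3/g + n^2 L)$.

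The main technical obstacle is certifying correctness of the ``visits both'' sketch: the backward portion of $\rho$ can weave arbitrarily between $V(\alpha^{l_1}_{h_1})$, $V(\alpha^{l_2}_{h_2})$, interior stretches of $\pi(f_1, f_2)$, and $V \setminus V(\pi)$. To show $\dist(s, t, H_{f_1, f_2}) = \wts(\rho)$, I will partition $\rho$ at its consecutive entrances into $V(\alpha^{l_1}_{h_1}) \cup V(\alpha^{l_2}_{h_2})$ and certify each maximal arc either by a $\bw$-edge (when the arc backtracks through $\pi(f_1, f_2)$) or by a DSO shortcut (when the arc lies entirely in $G \setminus E(\pi)$), using canonicality of $\rho$ to rule out arcs that silently bypass a sketch vertex. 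Adding the per-case costs to the preparation bounds from the opening of Section \ref{alge} — namely $\tilde{O}(Mn^{\omega+0.5})$ for backward distances via Lemma \ref{bw-alge}, $\tilde{O}(Mn^{\omega+1}/L)$ for per-sub-path SSRP, $\tilde{O}(M^{1/(4-\omega)} n^{2+1/(4-\omega)})$ for APSP with negative weights, and $\tilde{O}(n^3/g)$ SSSP queries on the $(l,h)$-indexed subgraphs — yields the stated runtime.
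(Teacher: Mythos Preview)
Your high-level case split and the two-tier sketch idea are exactly the paper's approach, but the small sketch you describe for the ``visits both segments'' case is incomplete, and your certification argument breaks precisely where the missing edges would be needed.

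You propose to certify each maximal arc of $\rho$ between consecutive entrances into $V(\alpha^{l_1}_{h_1})\cup V(\alpha^{l_2}_{h_2})$ by either a $\bw$-edge or a DSO shortcut in $G\setminus E(\pi)$. But a forward arc can use edges of $\pi$ that lie strictly between the two segments: the prefix $\rho[s,b]$ may leave $\pi$ at some $u\in V(\alpha^{l_1}_{h_1})$, land on $\pi$ at an interior vertex $z\in V(\pi[t^{l_1}_{h_1},s^{l_2}_{h_2}])$, and then follow $\pi$ to $s^{l_2}_{h_2}$. This arc is neither a backward path nor contained in $G\setminus E(\pi)$, so your sketch edge $(u,s^{l_2}_{h_2})$ of weight $\dist(u,s^{l_2}_{h_2},G\setminus E(\pi))$ can strictly over-estimate its length. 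The paper's Case~3 repairs this by adding, for every $u\in V(\alpha^{l_1}_{h_1}[*,f_1))$, an edge $(u,s^{l_2}_{h_2})$ of weight $\min_{z}\{\dist(u,z,G\setminus E(\pi))+\wts(\pi[z,s^{l_2}_{h_2}])\}$, together with the symmetric edges out of $t^{l_1}_{h_1}$, and also edges $(s,v)$ of weight $\dist(s,v,G\setminus E(\pi[s^{l_1}_{h_1},t]))$ and $(u,t)$ of weight $\dist(u,t,G\setminus E(\pi[s,t^{l_2}_{h_2}]))$ to cover the situations $x\notin V(\alpha^{l_1}_{h_1})$ or $y\notin V(\alpha^{l_2}_{h_2})$. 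None of these are in your sketch, and canonicality alone does not let you ``rule out'' such arcs---it only tells you they terminate at $s^{l_2}_{h_2}$ or begin at $t^{l_1}_{h_1}$.

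Two smaller points. First, since both failures lie on $\pi$, the DSO on $G\setminus E(\pi)$ is unnecessary for the $u\to v$ shortcuts; plain APSP in $G\setminus E(\pi)$ via Lemma~\ref{apsp-neg} already gives $\dist(u,v,G\setminus E(\pi))$. Second, your runtime accounting for the large-sketch case is off: $\tilde{O}(n^2L^2/g)$ is not bounded by $\tilde{O}(n^3/g+n^2L)$ in general. The paper also incurs $\tilde{O}(n^2L^2/g)$ here; it matches the lemma's $n^2g^2$ term only under the final choice $g=L^{2/3}$.
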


Given a pair of edge failures $\{f_1, f_2\}$ where $f_1, f_2\in E(\pi)$ ($f_1$ precedes $f_2$ on $\pi$.), assume $f_1\in E(\alpha^{l_1}_{h_1}), f_2\in E(\alpha^{l_2}_{h_2})$. Let $\rho$ be an optimal canonical replacement path from $s$ to $t$ avoiding $\{f_1, f_2\}$. Define $x\in V(\pi[s, f_1))$ to be the vertex where $\rho$ diverges from $\pi$, and define $y\in V(\pi(f_2, t])$ to be the vertex where $\rho$ converges with $\pi$. Take the vertex $a\in V(\rho)\cap V(\pi(f_1, t_{l_1}])$ which is the closest to $f_1$ on $\pi$, and symmetrically the vertex $b\in V(\rho)\cap V(\pi[s_{l_2}, f_2))$ which is the closest to $f_2$ on $\pi$; we will discuss the case where $a, b$ might not exist.

We can first assume that $l_1 < l_2$; otherwise if $l_1 = l_2$, this case was already studied in \cite{williams2022algorithms} (Section 5.4), and the total runtime was bounded by at most $\tilde{O}(nL^3)$. For the rest, let us assume $l_1 < l_2$.

\subsubsection*{Case 1: $a$ or $b$ does not exist}
Without loss of generality, assume $b$ does not exist; in other words, $\rho$ does not intersect the path $\pi[s_{l_2}, f_2)$. If $\rho$ does not intersect $\pi(f_2, t_{l_2}]$ as well, then $\rho$ avoids the entire path $\gamma_{l_2}$. Hence, $\wts(\rho)$ would be equal to $\dist(s, t, G\setminus(E(\gamma_{l_2})\cup \{f_1\}))$, which was already computed during our preparation steps. Otherwise, if $\rho$ intersects with $\pi(f_2, t_{l_2}]$, then since $\rho$ is canonical, its weighted length is equal to the following minimum clause:
$$\min_{v\in V\brac{\pi(f_2, t_{l_2}]}}\{\dist(s, v, G\setminus(E(\gamma_{l_2})\cup \{f_1\})) + \dist(v, t, G)\}$$
This formula can be computed in $O(L)$ time as all terms $\dist(s, v, G\setminus(E(\gamma_{l_2})\cup \{f_1\}))$ are already computed during the preparation phase. Overall, the runtime is bounded by $O(n^2L)$.

\subsubsection*{Case 2: $a\in V(\rho[s, b])$}
In this case, $\rho = \pi[s, x]\circ \rho[x, a]\circ \pi[a, b]\circ \rho[b, y]\circ \pi[y, t]$, where $\rho[x, a], \rho[b, y]$ are paths in $G\setminus E(\pi)$. Therefore, $\pi$ can also be decomposed as $\rho = \rho[s, t_{l_1}]\circ \pi[t_{l_1}, s_{l_2}]\circ \rho[s_{l_2}, t]$. As $\rho[s, t_{l_1}]$ avoids the path $\gamma_{l_2}$, $\wts(\rho[s, t_{l_1}])$ is equal to $\dist(s, t, \dist(s, v, G\setminus(E(\gamma_{l_2})\cup \{f_1\})))$; similarly, as $\rho[s_{l_2}, t]$ avoids the path $\gamma_{l_1}$, $\wts(\rho[s_{l_2}, t])$ is equal to $\dist(s, v, G\setminus(E(\gamma_{l_1})\cup \{f_2\}))$. Hence, $\wts(\rho)$ can be calculated as:
$$\dist(s, t, \dist(s, v, G\setminus(E(\gamma_{l_2})\cup \{f_1\}))) + \dist(t_{l_1}, s_{l_2}, G) + \dist(s, v, G\setminus(E(\gamma_{l_1})\cup \{f_2\}))$$
The overall runtime would be $O(n^2)$.

\subsubsection*{Case 3: $b\in V(\rho[s, a])$, and $a\in V\brac{\alpha^{l_1}_{h_1}}, b\in V\brac{\alpha^{l_2}_{h_2}}$}
Build the sketch graph $H_{f_1, f_2}$ with edge weights $\mu(*, *)$ as following.
\begin{itemize}[leftmargin=*]
	\item \textbf{Vertices.} Add $\{s, t\}\cup V\brac{\alpha^{l_1}_{h_1}}\cup V\brac{\alpha^{l_2}_{h_2}}$ to $H_{f_1, f_2}$.
	
	\item \textbf{Edges.} Add the following types of edges to $H_{f_1, f_2}$.
	
	\begin{enumerate}[(i),leftmargin=*]
		\item Add edges $E\brac{\alpha^{l_1}_{h_1}}\cup E\brac{\alpha^{l_2}_{h_2}}\setminus \{f_1, f_2\}$ to $H_{f_1, f_2}$. Add to $H_{f_1, f_2}$ the edge $\brac{s, s^{l_1}_{h_1}}$ with weight $\mu\brac{s, s^{l_1}_{h_1}} = \dist\brac{s, s^{l_1}_{h_1}, G}$, and the edge $\brac{t^{l_2}_{h_2}, t}$ with weight $\mu\brac{t^{l_2}_{h_2}, t} = \dist\brac{t^{l_2}_{h_2}, t, G}$.
		
		\item Secondly, for any pair of vertices $u\in V\brac{\alpha^{l_1}_{h_1}}, v\in V\brac{\alpha^{l_2}_{h_2}}$, add an edge $(u, v)$ with weight $\mu(u, v) = \dist\brac{u, v, G\setminus E(\pi)}$, an edge $(s, v)$ with weight 
		$$\mu(s, v) = \dist\brac{s, v, G\setminus E\brac{\pi[s^{l_1}_{h_1}, t]}}$$
		and an edge $(u, t)$ with weight $$\mu(u, t) = \dist\brac{u, t, G\setminus E\brac{\pi[s, t^{l_2}_{h_2}]}}$$
		
		\item For each $u\in V\brac{\alpha^{l_1}_{h_1}[*, f_1)}$, add an edge $\brac{u, s^{l_2}_{h_2}}$ with edge weight
		$$\mu\brac{u, s^{l_2}_{h_2}} = \min_{z\in V\brac{\pi(f_1, s^{l_2}_{h_2}]}}\left\{\dist(u, z, G\setminus E(\pi)) + \dist\brac{z, s^{l_2}_{h_2}, G} \right\}$$
		Add edge $\brac{s, s^{l_2}_{h_2}}$ with edge weight
		$$\mu\brac{s, s^{l_2}_{h_2}} = \min_{z\in V\brac{\pi(f_1, s^{l_2}_{h_2}]}}\left\{\dist\brac{s, z, G\setminus E\brac{\pi[s^{l_1}_{h_1}, t]}} + \dist\brac{z, s^{l_2}_{h_2}, G} \right\}$$
		
		Symmetrically, for each $v\in V\brac{\alpha^{l_2}_{h_2}(f_2, *]}$, add an edge $\brac{t^{l_1}_{h_1}, v}$ with edge weight
		$$\mu\brac{t^{l_1}_{h_1}, v} = \min_{z\in V\brac{\pi[t^{l_1}_{h_1}, f_2)}}\left\{\dist(z, v, G\setminus E(\pi)) + \dist\brac{t^{l_1}_{h_1}, z, G} \right\}$$
		Add edge $\brac{t^{l_1}_{h_1}, t}$ with edge weight
		$$\mu\brac{t^{l_1}_{h_1}, t} = \min_{z\in V\brac{\pi[t^{l_1}_{h_1}, f_2)}}\left\{\dist\brac{z, t, G\setminus E\brac{\pi[s, t^{l_2}_{h_2}]}} + \dist\brac{t^{l_1}_{h_1}, z, G} \right\}$$
		
		Each of the minimum clauses can be computed in $\tilde{O}(1)$ time using standard interval data structures.
		
		\item Finally, add an edge $(v, u)$ with weight $\mu(v, u) = \bw(v, u)$ if $u\in V\brac{\pi(f_1, t^{l_1}_{h_1}]}, v\in V\brac{\pi[s^{l_2}_{h_2}, f_2)}$.
	\end{enumerate}
\end{itemize}
After that, apply \Cref{sssp-neg} on $H_{f_1, f_2}$ to compute $\dist(s, t, H_{f_1, f_2})$ and update $\est(s, t, G\setminus \{f_1, f_2\})$ with it, which takes runtime $\tilde{O}(g^2)$.

\begin{claim}
	$\wts(\rho) = \dist(s, t, H_{f_1, f_2})$.
\end{claim}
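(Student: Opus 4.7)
The plan is to establish both inequalities $\dist(s, t, H_{f_1, f_2}) \geq \wts(\rho)$ and $\dist(s, t, H_{f_1, f_2}) \leq \wts(\rho)$.

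For the lower bound, I would go edge-by-edge through the definition of $H_{f_1, f_2}$ and verify that every weighted edge $(u, v)$ has weight equal to the length of some walk from $u$ to $v$ in $G \setminus \{f_1, f_2\}$. The original edges in (i) trivially avoid the failures. The endpoint shortcuts $(s, s^{l_1}_{h_1})$ and $(t^{l_2}_{h_2}, t)$ are realized by the corresponding sub-paths of $\pi$, which are shortest by optimality of $\pi$ and lie entirely before $f_1$ or after $f_2$. The type-(ii) shortcuts use distances in subgraphs that delete $E(\pi)$, $E(\pi[s^{l_1}_{h_1}, t])$, or $E(\pi[s, t^{l_2}_{h_2}])$, each of which contains both $f_1$ and $f_2$. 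The composite shortcut of type (iii), such as $\mu(u, s^{l_2}_{h_2}) = \min_z \left\{ \dist(u, z, G\setminus E(\pi)) + \dist(z, s^{l_2}_{h_2}, G) \right\}$, decomposes into an $E(\pi)$-avoiding sub-walk and the sub-path $\pi[z, s^{l_2}_{h_2}]$ (lying strictly between $f_1$ and $f_2$); the latter realizes $\dist(z, s^{l_2}_{h_2}, G)$ because $\pi$ is a shortest path. Finally, the backward shortcut $\bw(v, u)$ in (iv) uses $G \setminus (E(\pi[s, u]) \cup E(\pi[v, t]))$, which excludes $f_1 \in E(\pi[s, u])$ and $f_2 \in E(\pi[v, t])$. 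Concatenating the walks realizing each edge, any $s$-$t$ path in $H_{f_1, f_2}$ yields an $s$-$t$ walk in $G\setminus \{f_1, f_2\}$ of equal total weight.

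For the upper bound, I would construct an explicit walk $P$ in $H_{f_1, f_2}$ from $s$ to $t$ with $\wts(P) \leq \wts(\rho)$, guided by the decomposition $\rho = \rho[s, b] \circ \rho[b, a] \circ \rho[a, t]$, where $\rho[b, a]$ is the backward middle piece since $b \in V(\rho[s, a])$. For the middle piece, I would first show, via canonicality, simplicity of $\rho$, and the extremal choices of $a$ and $b$, that $\rho[b, a]$ uses no edge in $E(\pi[s, a]) \cup E(\pi[b, t])$, so that $\wts(\rho[b, a]) \geq \bw(b, a)$, which is the weight of the type-(iv) shortcut $(b, a)$. For the prefix $\rho[s, b]$, I would let $z$ be the last vertex on $\rho[s, b]$ that lies on $\pi$ before $b$ (possibly $z = x$ or $z$ does not exist past $x$). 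By canonicality, if such a post-$x$ re-entry exists then necessarily $z \in V(\pi(f_1, b])$ and $\rho[z, b] = \pi[z, b]$, while $\rho[x, z]$ avoids $E(\pi)$; otherwise $\rho[x, b]$ is entirely off $\pi$. The walk in $H_{f_1, f_2}$ is then chosen by sub-cases on the positions of $x$ and $z$: if $x \in V(\alpha^{l_1}_{h_1})$ and $z \in V(\alpha^{l_2}_{h_2})$, use $(s, s^{l_1}_{h_1}) \circ \pi[s^{l_1}_{h_1}, x] \circ (x, z) \circ \pi[z, b]$ with $(x, z)$ a type-(ii) shortcut; if instead $z \in V(\pi(f_1, s^{l_2}_{h_2}))$, replace $(x, z) \circ \pi[z, b]$ by the composite $(x, s^{l_2}_{h_2})$ of type (iii) followed by $\pi[s^{l_2}_{h_2}, b]$; if $x \notin V(\alpha^{l_1}_{h_1})$, replace the prefix ending in $(x, \cdot)$ by the single type-(ii) or type-(iii) shortcut out of $s$. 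A symmetric analysis handles $\rho[a, t]$ on the suffix end.

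The main obstacle will be the canonicality and case-analysis arguments needed to pin down the structure of $\rho[s, b]$ and $\rho[a, t]$ and to justify each weight bound of the form $\wts(\text{walk piece}) \leq \wts(\text{corresponding sub-path of } \rho)$. In particular, for each potential re-entry vertex $z$ of $\rho[s, b]$ into $\pi$, I must rule out positions in $V(\pi[s, f_1)) \cup V(\pi(f_1, a))$ (using, respectively, simplicity of $\rho$, canonicality, and the extremality of $a$), so that every surviving $z$ lies in $V(\pi(f_1, b])$ and canonicality forces $\rho[z, b] = \pi[z, b]$. Once this structural fact is established, matching each valid configuration with the correct shortcut of $H_{f_1, f_2}$ is mostly bookkeeping, but ensuring completeness of the case analysis (including boundary sub-cases such as $z = b$, $\rho[x, b]$ entirely off $\pi$, and the mirror cases on the suffix) is the technical core of the argument.
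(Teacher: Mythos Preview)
Your proposal is correct and follows essentially the same approach as the paper: verify the lower bound edge-by-edge, then decompose $\rho = \rho[s,b]\circ\rho[b,a]\circ\rho[a,t]$, handle the middle by the backward shortcut, and recover the prefix/suffix by a case analysis on the positions of $x$ and the re-entry vertex $z$. One small slip: you define $z$ as the \emph{last} vertex of $\rho[s,b]$ on $\pi$ before $b$, but then use the property ``$\rho[x,z]$ avoids $E(\pi)$'', which only holds if $z$ is the \emph{first} post-$x$ re-entry into $\pi(f_1,f_2)$ (this is how the paper defines it); with that correction your four sub-cases match the paper's exactly.
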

\begin{proof}
	It is clear that $\dist(s, t, H_{f_1, f_2})$ is not an underestimation as all shortcut edges in $H_{f_1, f_2}$ correspond to paths in $G$ avoiding $\{f_1, f_2\}$.
	
	Decompose $\rho$ as $\rho = \rho[s, b]\circ \rho[b, a]\circ\rho[a, t]$. By definition of $a, b$, we know that $\rho[b, a]$ is a shortest backward path in $G\setminus (E(\pi[s, a])\cup E(\pi[b, t]))$, and so $\wts(\rho[b, a]) = \bw(b, a)$. Now it suffices to argue that both $\rho[s, b]$ and $\rho[a, t]$ are captured in the sketch graph $H_{f_1, f_2}$. Let us only focus on the path $\rho[s, b]$; the analysis for $\rho[a, t]$ is symmetric.
	
	Let $z$ be the first vertex on $V(\rho[s, b])$ that lands on  $\pi(f_1, f_2)$. There are several cases depending on the position of $x$ and $z$.
	\begin{itemize}[leftmargin=*]
		\item $z\in V(\alpha^{l_2}_{h_2})$ and $x\in V\brac{\alpha^{l_1}_{h_1}[*, f_1)}$.
		
		In this case, $\rho[x, z]$ is encoded by the shortcut edge $(x, z)$ in $H_{f_1, f_2}$ with weight $\mu(x, z) = \dist(x, z, G\setminus E(\pi))$, and thus $\wts(\rho[s, b]) = \mu(s, s^{l_1}_{h_1}) + \wts\brac{\pi[s^{l_1}_{h_1}, x]} + \mu(x, z) + \dist(z, b, G)$ which corresponds to a path in $H_{f_1, f_2}$.
		
		\item $z\in V(\alpha^{l_2}_{h_2})$ and $x\notin V\brac{\alpha^{l_1}_{h_1}[*, f_1)}$.
		
		For a similar reason, $\wts(\rho[s, b])$ would be equal to $\mu(s, b) = \dist\brac{s, b, G\setminus E\brac{\pi[s^{l_1}_{h_1}, t]}}$. Hence, the prefix $\rho[s, b]$ is still captured in $H_{f_1, f_2}$.
		
		\item $z\notin V(\alpha^{l_2}_{h_2})$ and $x\in V\brac{\alpha^{l_1}_{h_1}[*, f_1)}$.
		
		In this case, as $\rho$ is canonical, we can further decompose $\rho[x, b]$ as:
		$$\rho[x, b] = \rho[x, z]\circ \pi[z, b] = \rho[x, z]\circ \pi[z, s^{l_2}_{h_2}]\circ \pi[s^{l_2}_{h_2}, b]$$
		By the construction of $H_{f_1, f_2}$, there exists a shortcut edge $\brac{x, s^{l_2}_{h_2}}$ with weight $\mu\brac{x, s^{l_2}_{h_2}} = \wts(\rho[x, z]) + \dist\brac{z, s^{l_2}_{h_2}, G}$. As for path $\pi[s^{l_2}_{h_2}, b]$, it belongs to $H_{f_1, f_2}$ by construction. Hence, $\wts(\rho[x, b]) = \dist(x, b, H_{f_1, f_2})$.
		
		Recall that when $x\in V\brac{\alpha^{l_1}_{h_1}[*, f_1)}$, there is a path in $H_{f_1, f_2}$ from $s$ to $x$ with weight $\dist(s, x, G)$. So we know $\wts(\rho[s, b]) = \dist(s, b, H_{f_1, f_2})$.
		
		\item $z\notin V(\alpha^{l_2}_{h_2})$ and $x\notin V\brac{\alpha^{l_1}_{h_1}[*, f_1)}$.
		
		In this case, as $\rho$ is canonical, we can further decompose $\rho[s, b]$ as:
		$$\rho[s, b] = \rho[s, z]\circ \pi[z, b] = \rho[s, z]\circ \pi[z, s^{l_2}_{h_2}]\circ \pi[s^{l_2}_{h_2}, b]$$
		By the construction of $H_{f_1, f_2}$, since $\rho[s, z]$ avoids $\pi[s^{l_1}_{h_1}, t]$ entirely, there exists a shortcut edge $\brac{s, s^{l_2}_{h_2}}$ with weight $\mu\brac{s, s^{l_2}_{h_2}} = \wts(\rho[s, z]) + \dist\brac{z, s^{l_2}_{h_2}, G}$. As for path $\pi[s^{l_2}_{h_2}, b]$, it belongs to $H_{f_1, f_2}$ by construction. Hence, $\wts(\rho[s, b]) = \dist(s, b, H_{f_1, f_2})$.
		
	\end{itemize}
	
\end{proof}

\subsubsection*{Case 4: $b\in V(\rho[s, a])$, plus that $a\notin V\brac{\alpha^{l_1}_{h_1}}$ or $b\notin V\brac{\alpha^{l_2}_{h_2}}$}
Without loss of generality, assume $b\notin V\brac{\alpha^{l_2}_{h_2}}$. Build the sketch graph $H_{f_1, (l_2, h_2)}$ with edge weights $\mu(*, *)$ as following.
\begin{itemize}[leftmargin=*]
	\item \textbf{Vertices.} Add $\{s, t\}\cup V\brac{\gamma_{l_1}}\cup V\brac{\gamma_{l_2}}$ to $H_{f_1, (l_2, h_2)}$.
	
	\item \textbf{Edges.} Add the following types of edges to $H_{f_1, (l_2, h_2)}$.
	
	\begin{enumerate}[(i),leftmargin=*]
		\item Add edges $E\brac{\gamma_{l_1}}\cup E\brac{\gamma_{l_2}}\setminus \brac{\{f_1\}\cup E\brac{\alpha^{l_2}_{h_2}}}$ to $H_{f_1, (l_2, h_2)}$. Add edge $(s, s_{l_1})$ with weight $\mu(s, s_{l_1}) = \dist(s, s_{l_1}, G)$, and edge $(t_{l_2}, t)$ with weight $\mu(t_{l_2}, t) = \dist(t_{l_2}, t, G)$.
		
		\item Secondly, for any $u\in V(\gamma_{l_1}), v\in V(\gamma_{l_2})$, add edge $(u, v)$ with weight $\mu(u, v) = \dist(s, t, G\setminus E(\pi))$, and edge $(s, v)$ with weight $\mu(s, v) = \dist\brac{s, v, G\setminus E\brac{\pi[s^{l_1}_{h_1}, t]}}$, and edge $(u, t)$ with weight $\mu(u, t) = \dist\brac{u, t, G\setminus E\brac{\pi[s, t^{l_2}_{h_2}]}}$.
		
		\item For each $u\in V(\gamma_{l_1}[*, f_1))$, add an edge $(u, s_{l_2})$ with edge weight:
		$$\mu(u, s_{l_2}) = \min_{z\in V(\pi(f_1, s_{l_2}])}\left\{\dist(u, z, G\setminus E(\pi)) + \dist(z, s_{l_2}, G) \right\}$$
		Besides, add an edge $(s, s_{l_2})$ with edge weight:
		$$\mu(s, s_{l_2}) = \min_{z\in V(\pi(f_1, s_{l_2}])}\left\{\dist\brac{s, z, G\setminus E(\pi[s^{l_1}_{h_1}, t])} + \dist(z, s_{l_2}, G) \right\}$$
		
		Similarly, for each $v\in V\brac{\gamma_{l_2}[s^{l_2}_{h_2}, *]}$, add an edge $(t_{l_1}, v)$ with edge weight:
		$$\mu(t_{l_1}, v) = \min_{z\in V\brac{\pi[t_{l_1}, s^{l_2}_{h_2}]}}\left\{\dist\brac{z, v, G\setminus E(\pi)} + \dist(t_{l_1}, z, G)\right\}$$
		
		Besides, add an edge $(t_{l_1}, t)$ with edge weight:
		$$\mu(t_{l_1}, t) = \min_{z\in V\brac{\pi[t_{l_1}, s^{l_2}_{h_2}]}}\left\{ \dist\brac{z, t, G\setminus E(\pi[s, t^{l_2}_{h_2}])} + \dist(t_{l_1}, z, G) \right\}$$
		
		Each of the minimum clauses can be computed in $\tilde{O}(1)$ time using standard interval data structures.
		
		\item Finally, add an edge $(v, u)$ with weight $\mu(v, u) = \bw(v, u)$ if $u\in V\brac{\pi(f_1, t_{l_1}]}, v\in V\brac{\pi[s_{l_2}, f_2)}$.
	\end{enumerate}	
\end{itemize}

After that, apply \Cref{sssp-neg} on $H_{f_1, (l_2, h_2)}$ to compute $\dist(s, v, H_{f_1, (l_2, h_2)})$ for each $v\in \{t\}\cup V(\pi(f_2, t_{l_2}])$. Then, update the value of $\est(s, t, G\setminus \{f_1, f_2\})$ with the following quantity:
$$\min_{v\in V\brac{\pi(f_2, t_{l_2}]}}\{\dist(s, v, H_{f_1, (l_2, h_2)}) + \dist(v, t, G), \dist(s, t, H_{f_1, (l_2, h_2)})\}$$
The total runtime would be $\tilde{O}(n^2L^2/g + n^2L)$.

\begin{claim}
	$\wts(\rho) = \min_{v\in V\brac{\pi(f_2, t_{l_2}]}} \left\{\dist\brac{s, v, H_{f_1, (l_2, h_2)}} + \dist(v, t, G), \dist\brac{s, t, H_{f_1, (l_2, h_2)}}\right\}$.
\end{claim}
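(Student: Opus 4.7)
The plan is to prove the two inequalities separately. For $\min(\cdots)\geq \wts(\rho)$, I will verify by inspection of the four edge types in $H_{f_1,(l_2,h_2)}$ that every edge $(u,v)$ with weight $\mu(u,v)$ corresponds to a walk from $u$ to $v$ in $G\setminus\{f_1,f_2\}$ of weight $\mu(u,v)$: the primary edges in (i) lie in $E(\gamma_{l_1})\cup E(\gamma_{l_2})\setminus(\{f_1\}\cup E(\alpha^{l_2}_{h_2}))$ and so avoid both failures; the shortcuts in (ii) and (iii) are distances in graphs that exclude $E(\pi)$ or a sub-path of $\pi$ containing both $f_1$ and $f_2$; and the backward shortcuts in (iv) are $\bw$-distances which by definition exclude $\pi[s,u]$ (containing $f_1$) and $\pi[v,t]$ (containing $f_2$). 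Consequently any $s$-$t$ path in $H_{f_1,(l_2,h_2)}$ gives an upper bound on $\dist(s,t,G\setminus\{f_1,f_2\})=\wts(\rho)$, and since $v\in V(\pi(f_2,t_{l_2}])$ implies $\dist(v,t,G)=|\pi[v,t]|$ realized by a path avoiding $\{f_1,f_2\}$, the $s$-to-$v$-to-$t$ concatenation is likewise an upper bound on $\wts(\rho)$.

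For the reverse direction $\min(\cdots)\leq \wts(\rho)$, I will split $\rho=\rho[s,b]\circ\rho[b,a]\circ\rho[a,t]$ and exhibit a path in $H_{f_1,(l_2,h_2)}$ of weight at most $\wts(\rho[s,b])+\wts(\rho[b,a])+\wts(\rho[a,t])$, either from $s$ to $t$ (handling the case $y\notin V(\gamma_{l_2})$) or from $s$ to $y\in V(\pi(f_2,t_{l_2}])$ composed with $\pi[y,t]$. The middle piece is immediate: since $b\in V(\pi[s_{l_2},f_2))$ and $a\in V(\pi(f_1,t_{l_1}])$, edge type (iv) gives a shortcut $(b,a)$ of weight $\bw(b,a)$, which equals $\wts(\rho[b,a])$ because otherwise swapping in the shorter backward path contradicts the optimality of $\rho$.

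For the prefix $\rho[s,b]$, I will mimic the sub-case analysis of Case 3, splitting on whether the divergence vertex $x$ lies in $V(\alpha^{l_1}_{h_1}[*,f_1))$ and on the location of the first re-entry $z$ of $\rho[s,b]$ into $\pi(f_1,f_2)$: if $z\in V(\gamma_{l_2})$ (equivalently on $\pi[s_{l_2},b]$, since $b\in V(\pi[s_{l_2},s^{l_2}_{h_2}))$ and canonicity forces $\rho[z,b]=\pi[z,b]$ when no failure separates them), the relevant shortcut $(x,z)$ or $(s,z)$ from (ii) combined with the primary edges of $E(\gamma_{l_2})\setminus E(\alpha^{l_2}_{h_2})$ along $\pi[z,b]$ captures $\rho[s,b]$; if $z\in V(\pi(f_1,s_{l_2}))$, the composite shortcut $(x,s_{l_2})$ or $(s,s_{l_2})$ from (iii), obtained as a minimum over $z$, captures $\rho[s,z]$ followed by $\pi[z,s_{l_2}]$, after which $\pi[s_{l_2},b]$ is traversed by primary edges. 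The symmetric analysis handles $\rho[a,t]$ using edge types (i)--(iii) on the $\gamma_{l_2}$ side, splitting on the position of $y$; the case $y\in V(\pi(f_2,t_{l_2}])$ yields the first term in the min, and $y\notin V(\gamma_{l_2})$ yields the second.

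The main obstacle is keeping the case analysis consistent with canonicity: specifically, ensuring that between two consecutive intersections of $\rho$ with $\pi(f_1,f_2)$ we can legitimately use primary edges of $\pi$ (which requires that segment to contain neither $f_1$ nor $f_2$ and to avoid $\alpha^{l_2}_{h_2}$), and verifying that $b\notin V(\alpha^{l_2}_{h_2})$ genuinely places $b$ in the region $V(\pi[s_{l_2},s^{l_2}_{h_2}))$ where $H_{f_1,(l_2,h_2)}$ retains the $\pi$-edges needed to reach it. Once these structural placements are confirmed, the remaining bookkeeping reduces to picking the correct shortcut from among (ii) and (iii), exactly as in Claim for Case 3 but with $V(\alpha^{l_2}_{h_2})$ replaced by $V(\gamma_{l_2})$ on the right side.
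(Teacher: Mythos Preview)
Your proposal is correct and follows essentially the same approach as the paper: decompose $\rho=\rho[s,b]\circ\rho[b,a]\circ\rho[a,t]$, handle the middle piece via the backward shortcut (iv), and do a case analysis on the positions of $x$ and the first re-entry point $z$ to show that $\rho[s,b]$ (and symmetrically $\rho[a,t]$) is captured by the shortcuts in $H_{f_1,(l_2,h_2)}$. The only cosmetic difference is that the paper splits on whether $x\in V(\gamma_{l_1})$ rather than $x\in V\brac{\alpha^{l_1}_{h_1}[*,f_1)}$; since the type-(ii) shortcut $(s,v)$ has weight $\dist\brac{s,v,G\setminus E(\pi[s^{l_1}_{h_1},t])}$ and the type-(i) primary edges cover all of $E(\gamma_{l_1})\setminus\{f_1\}$, either split works.
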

\begin{proof}
	Let $z$ be the first vertex on $V(\rho[s, b])$ that lands on  $\pi(f_1, f_2)$. We first show that $\wts(\rho[s, b]) = \dist(s, b, H_{f_1, (l_2, h_2)})$. There are several cases depending on the position of $x$ and $z$.
	
	\begin{itemize}[leftmargin=*]
		\item $z\in V(\gamma_{l_2})$ and $x\in V\brac{\gamma_{l_1}}$.
		
		In this case, $\rho[x, b]$ can be decomposed as $\rho[x, b] = \rho[x, z]\circ \pi[z, b]$. As $\rho[x, z]$ lies in $G\setminus E(\pi)$, we know that $\wts(\rho[x, b]) = \mu(x, z) + \dist(z, b, G)$ which corresponds to a path in $H_{f_1, (l_2, h_2)}$.
		
		As for $\rho[s, x]$, as $x\in V\brac{\gamma_{l_1}}$, $\rho[s, x] = \pi[s, s_{l_1}]\circ \pi[s_{l_1}, x]$, it corresponds to a path in $H_{f_1, (l_2, h_2)}$. Therefore, we have $\wts(\rho[s, b]) = \dist(s, b, H_{f_1, (l_2, h_2)})$.
		
		\item $z\in V(\gamma_{l_2})$ and $x\notin V\brac{\gamma_{l_1}}$.
		
		In this case, $\rho[s, b]$ can be decomposed as $\rho[s, b] = \rho[s, z]\circ \pi[z, b]$. As $\rho[s, z]$ lies in $G\setminus E(\pi[s^{l_1}_{h_1}, t])$, we know that $\wts(\rho[s, b]) = \mu(s, z) + \dist(z, b, G)$ which corresponds to a path in $H_{f_1, (l_2, h_2)}$.
		
		\item $z\notin V(\gamma_{l_2})$ and $x\in V\brac{\gamma_{l_1}}$.
		
		In this case, as $\rho$ is canonical, we can decompose $\rho[x, b]$ as $\rho[x, b] = \rho[x, z]\circ \pi[z, s_{l_2}]\circ \pi[s_{l_2}, b]$. By construction of $H_{f_1, (l_2, h_2)}$, it contains a shortcut edge $(x, s_{l_2})$ with weight $\mu(x, s_{l_2}) \leq \wts(\rho[x, z]) + \dist(z, s_{l_2}, G)$.
		
		As for $\rho[s, x]$, as $x\in V\brac{\gamma_{l_1}}$, $\rho[s, x] = \pi[s, s_{l_1}]\circ \pi[s_{l_1}, x]$, it corresponds to a path in $H_{f_1, (l_2, h_2)}$. Therefore, we have $\wts(\rho[s, b]) = \dist(s, b, H_{f_1, (l_2, h_2)})$.
		
		\item $z\notin V(\gamma_{l_2})$ and $x\notin V\brac{\gamma_{l_1}}$.
		
		In this case, as $\rho$ is canonical, we can decompose $\rho[x, b]$ as $\rho[s, b] = \rho[s, z]\circ \pi[z, s_{l_2}]\circ \pi[s_{l_2}, b]$. As $\rho[s, z]$ lies in $G\setminus E(\pi[s^{l_1}_{h_1}, t])$, we know that $\wts(\rho[s, b]) = \mu(s, z) + \dist(z, s_{l_2}, G) +\dist(s_{l_2}, b, G)$ which corresponds to a path in $H_{f_1, (l_2, h_2)}$ for a similar reason as in the previous case.
	\end{itemize}
	With symmetric analysis we can show that $\wts(a, t) = \dist(a, t, H_{f_1, (l_2, h_2)})$ if $y\notin V(\gamma_{l_2})$, and $\wts(a, y) = \dist(a, y, H_{f_1, (l_2, h_2)})$ if $y\in V(\gamma_{l_2})$.
	
	Finally, by construction of $H_{f_1, (l_2, h_2)}$, $(b, a)$ is a shortcut edge with weight $\bw(b, a)$ which is equal to $\rho[b, a]$. Therefore, $\wts(\rho[s, a]) = \dist(s, a, H_{f_1, (l_2, h_2)})$. Now, if $y\notin V(\gamma_{l_2})$, then by a path concatenation we have $\wts(\rho) = \dist(s, t, H_{f_1, (l_2, h_2)})$; otherwise if $y\in V(\gamma_{l_2})$, we can recover the value of $\wts(\rho)$ by the minimum clause (minimum achieved by $v = y$):
	$$\wts(\rho) = \min_{v\in V\brac{\pi(f_2, t_{l_2}]}}\{\dist(s, v, H_{f_1, (l_2, h_2)}) + \dist(v, t, G)\}$$
\end{proof}

\subsection{Proof of \Cref{faster-alge}}
Setting $L = \ceil{n^{\frac{3(\omega-1)}{7}}}$ and $g = \ceil{L^{2/3}}$ (under $\omega = 2.371552$ \cite{williams2024new,duan2023faster,alman2021refined,le2014powers,williams2012multiplying}) and applying \Cref{1fail-alge-alg} \Cref{2fail-alge-alg} concludes the proof.

\section*{Acknowledgment}
This work is part of a project that has received funding from the European Research Council (ERC) under the European Union’s Horizon 2020 research and innovation programme (grant agreement No 803118 UncertainENV).

\vspace{5mm}
\bibliographystyle{alpha}
\bibliography{ref}


\end{document}